\documentclass[journal,onecolumn]{IEEEtran}
\usepackage{amsmath,amsthm,amssymb,mathrsfs,mathtools,bbm}

\usepackage[shortlabels]{enumitem}
\usepackage{color,xcolor,framed,tcolorbox,graphicx}

\theoremstyle{plain}
    \newtheorem{theorem}{Theorem}
    \newtheorem{proposition}{Proposition} 
    \newtheorem{corollary}[theorem]{Corollary}
    \newtheorem{claim}{Claim}
    \newtheorem{lemma}[theorem]{Lemma}

    \newtheorem{fact}{Fact}
\theoremstyle{definition}
    \newtheorem{definition}{Definition}
    \newtheorem{example}{\bf Example}
    \newtheorem{remark}{Remark}

\newcommand{\Conv}{\operatorname{Conv}}
\newcommand{\D}{\mathsf{D}}
\renewcommand{\vec}[1]{\boldsymbol{#1}}
\newcommand{\Bern}{\operatorname{Bern}}

\newcommand{\spec}{\operatorname{spec}}

\newcommand{\HPhi}{H_{\Phi}}

\newcommand{\DPhi}{D_{\Phi}}

\newcommand{\IPhi}{I_{\Phi}}

\newcommand{\supp}{\operatorname{supp}}
\newcommand{\dom}{\operatorname{dom}}
\newcommand{\Unif}{\operatorname{Unif}}

\newcommand{\E}{\mathbb{E}}
\newcommand{\F}{\mathbb{F}}

\newcommand{\R}{\mathbb{R}}

\newcommand{\fR}{\mathfrak{R}}

\newcommand\independent{\protect\mathpalette{\protect\independenT}{\perp}}
\def\independenT#1#2{\mathrel{\rlap{$#1#2$}\mkern2mu{#1#2}}}
\newcommand{\defeq}{\triangleq}

\newcommand{\Var}{\operatorname{Var}}

\makeatletter
\def\rv{\@ifnextchar\bgroup{\rv@i}{\rv@ii}}
\def\rv@i#1{\mathbf{#1}} 
\def\rv@ii#1{\mathbf{#1}} 
\makeatother

\newcommand{\DSBS}{\operatorname{DSBS}}

\newcommand{\diag}{\operatorname{diag}}

\newcommand{\sa}{\mathbb{M}_{\operatorname{sa}}}
\newcommand{\PSD}{\mathbb{M}_{+}}

\makeatletter
\def\mt{\@ifnextchar\bgroup{\mt@i}{\mt@ii}}
\def\mt@i#1{\boldsymbol{#1}}
\def\mt@ii#1{\boldsymbol{#1}}
\makeatother

\newcommand{\tr}{\operatorname{tr}}
\newcommand{\nortr}{\overline{\operatorname{tr}}}

\newcommand{\cD}{\mathcal{D}}
\newcommand{\cE}{\mathcal{E}}

\newcommand{\cH}{\mathcal{H}}
\newcommand{\cI}{\mathcal{I}}

\newcommand{\cM}{\mathcal{M}}

\newcommand{\cP}{\mathcal{P}}

\newcommand{\cU}{\mathcal{U}}
\newcommand{\cV}{\mathcal{V}}

\newcommand{\cX}{\mathcal{X}}
\newcommand{\cY}{\mathcal{Y}}

\newcommand{\wOmega}{\widetilde{\Omega}}
\newcommand{\wPi}{\widetilde{\Pi}}

\newcommand{\wA}{\widetilde{A}}
\newcommand{\wB}{\widetilde{B}}

\newcommand{\wX}{\widetilde{X}}
\newcommand{\wY}{\widetilde{Y}}

\newcommand{\wT}{\widetilde{T}}
\newcommand{\wS}{\widetilde{S}}

\newcommand{\An}{A_{[n]}}
\newcommand{\Xn}{X_{[n]}}
\newcommand{\Pin}{\Pi_{[n]}}
\newcommand{\Bn}{B_{[n]}}
\newcommand{\Yn}{Y_{[n]}}
\newcommand{\Omegan}{\Omega_{[n]}}

\newcommand{\wAn}{\wA_{[n]}}
\newcommand{\wXn}{\wX_{[n]}}
\newcommand{\wPin}{\wPi_{[n]}}
\newcommand{\wBn}{\wB_{[n]}}
\newcommand{\wYn}{\wY_{[n]}}
\newcommand{\wOmegan}{\wOmega_{[n]}}

\newcommand{\wYi}{\wY_{[i]}}
\newcommand{\wOmegai}{\wOmega_{[i]}}

\newcommand{\Tie}{T_i^e}
\newcommand{\Sie}{S_i^e}

\newcommand{\wTie}{\widetilde{T}_i^e}
\newcommand{\wSie}{\widetilde{S}_i^e}

\usepackage[margin=1in]{geometry}
\usepackage{float}
\usepackage{hyperref}
\usepackage{booktabs}
\usepackage{tabularx}
\usepackage{array}
\usepackage{centernot}
\usepackage{xcolor}
\usepackage{thmtools,thm-restate}

\definecolor{DarkRed}{rgb}{0.8, 0.0, 0.0} 
\definecolor{DarkGreen}{rgb}{0.0, 0.7, 0.0}
\hypersetup{
    colorlinks=true,
    linkcolor=DarkRed,
    citecolor=DarkGreen
}

\newcommand{\rank}{\operatorname{rank}}

\newcommand{\interior}{\operatorname{int}}

\newif\ifproofreading
\proofreadingfalse

\ifproofreading
    \newenvironment{unread}
        {\color{red}}
        {}
    \newenvironment{finalized}
        {\color{green!50!black}}
        {}
\else

\fi

\begin{document}
\title{Matrix $\Phi$-Entropic Measures of Dependence
\thanks{
This work was supported by the Hong Kong Research Grants Council (RGC) under grant number 14310025, and by the 1+1+1 CUHK–CUHK(SZ)–GDST Joint Collaboration Fund.}
}

\author{Chenyu Wang, Amin Gohari\\
\emph{\small Department of Information Engineering}\\
\emph{\small The Chinese University of Hong Kong}\\
{\small Email: \{wc025, agohari\}@ie.cuhk.edu.hk}
}

\maketitle

\begin{abstract}
We introduce matrix $\Phi$-ribbons, extending the scalar $\Phi$-ribbons to the matrix case. For $\Phi$ in the subadditivity class, we establish data processing and tensorization properties of matrix $\Phi$-ribbons. We establish scalar and matrix $\Phi$-entropy inequalities and use them to show that both $\Phi$-ribbons characterize the corresponding $\Phi$-strong data processing inequality (SDPI) constants. We obtain exact evaluations of the SDPI constants for several sources. We provide examples showing that the matrix $\Phi$-SDPI constant can exceed the scalar counterpart. Finally, we derive inner bounds for $\Phi$-ribbons from prescribed independence structures and prove that these bounds are optimal when $\Phi(t)=t\log t$.\footnote{This work was presented in part at the IEEE Symposium on Information Theory 2025 \cite{wangphient} and 2026 \cite{wangribbons}.} 

\end{abstract}

\section{Introduction}

A dependence measure is said to have the \emph{tensorization} property if its value remains unchanged under i.i.d. repetition of the underlying joint distribution. Such measures have applications in noninteractive distribution simulation \cite{kamath2012non}, distributed source and channel coding \cite{kang2010new}, and the simulation of nonlocal correlations via wirings \cite{beigi2015monotone}. In this paper, we extend the dependence measures with the tensorization property to the matrix case. 

Previously known dependence measures with the tensorization property include the hypercontractivity (HC) ribbon and its generalization, the $\Phi$-ribbon. 
The HC ribbon was first defined by Ahlswede and G\'acs \cite{ahlswede1976spreading}. The following equivalent form in terms of mutual information is proven by Nair \cite{nair2014equivalent}: the HC ribbon of $(X_1,\dots,X_n)$, denoted as $\fR(X_1;\dots;X_n)$, is defined as the set of all $(\lambda_1,\dots,\lambda_n)\in [0,1]^n$, such that for any $p_{U|X^n}$, the following inequality holds: 
\begin{align*}
    \sum_{i=1}^n \lambda_i I(U;X_i) \leq I(U;X^n).
\end{align*} 
Beigi and the second author generalize the HC ribbon to the $\Phi$-ribbon \cite{beigi2018phi} by replacing the mutual-information formulation
with the scalar $\Phi$-entropy 
\[
    H_\Phi(f)
    \defeq \E[\Phi(f)]-\Phi(\E[f]).
\]
For $\Phi(t)=t\log t$, this construction recovers the
hypercontractivity ribbon; for $\Phi(t)=t^2$, it gives the
maximal-correlation ribbon \cite{hirschfeld1935connection,gebelein1941statistische,renyi1959new,renyi1959measures}. 

To generalize the $\Phi$-entropy, Chen and Tropp
\cite{tropp2014subadditivity} introduced the matrix $\Phi$-entropy
\[
    H_\Phi(\mt F)
    \defeq
    \frac{1}{d}\tr\!\left(
        \E[\Phi(\mt F)]-\Phi(\E[\mt F])
    \right),
\]
where $\mt F$ takes values in the $d\times d$ positive semidefinite matrices and its spectra lie in the domain of $\Phi$. They established subadditivity for a class of functions
that includes $t\log t$ and $t^p$, $1<p\leq2$.
Cheng and Hsieh \cite{cheng2016characterizations} subsequently
gave equivalent characterizations of the matrix entropy subadditivity class. 
Matrix $\Phi$-entropy also has an information-theoretic
interpretation. Cheng and Hsieh remark that, when $\Phi(t)=t\log t$ and each $\mt F(x)$ is a
density matrix describing a quantum system $\mt Q$ conditioned on
a classical variable $X$, we have 
\begin{align*}
    H_\Phi(\mt F)&=
    \frac{1}{d} \big(H(\mt Q)-H(\mt Q|X)\big)\\
    &\defeq\frac{1}{d}I(\mt Q;X)
\end{align*}
where by abuse of notation, $H(\cdot)$ denotes the von Neumann entropy, and $I(\mt Q;X)$ is the Holevo quantity of the classical-quantum system \cite{cheng2019matrix}. Thus, entropy inequalities for
matrix-valued functions include inequalities involving quantum 
auxiliary systems. This connection also motivates studying the exponential decay in the quantum Markov semigroup \cite{cheng2017exponential}. 

In this paper, we define the 
\emph{$d$-dimensional matrix $\Phi$-ribbon},  denoted by $\fR_\Phi^{M,d}(X_1;\ldots;X_n)$, as the set of $(\lambda_1,\ldots,\lambda_n)\in[0,1]^n$ such that the inequality 
\[
    \sum_{i=1}^n
    \lambda_i H_\Phi\!\left(\E[\mt F|X_i]\right)
    \leq H_\Phi(\mt F)
\]
holds for every $d$-dimensional positive semidefinite matrix-valued function $\mt F=\mt F(X^n)$. This definition recovers the scalar $\Phi$-ribbon in dimension one. 
The matrix $\Phi$-ribbon is defined as $\fR_\Phi^M(X_1;\ldots;X_n)=\bigcap_{d\geq 1}\fR_\Phi^{M,d}(X_1;\ldots;X_n)$. 
Throughout the paper, the random variables
take values in finite sets. 

Our contributions are as follows.

\begin{enumerate}[\rm (i)]

    \item We prove the tensorization and data processing properties of matrix $\Phi$-ribbons. We show that the scalar and matrix $\Phi$-ribbons characterize the corresponding $\Phi$-strong data processing inequality (SDPI) constants. Previously, the characterization was proved for the scalar case with the additional assumption that $\Phi$ is defined on a compact interval \cite{beigi2018phi}. We remove this domain assumption and extend it to the matrix $\Phi$-entropy case. The proof is established by proving scalar and matrix $\Phi$-entropy inequalities. The information inequalities we introduce in Corrolaries~\ref{corr6} and \ref{cor:delta-DPhi-bd} relate to the copy lemma and might be of independent interest. As an application, we apply the matrix $\Phi$-ribbon to the wirings of the no-signaling boxes and prove its monotonicity. 

    \item 
    We determine a ribbon region that follows solely from
    specified mutually independent subsets of
    $X_1,\ldots,X_n$.
    Representing these subsets by a downward-closed hypergraph,
    we show that the convex hull of their incidence vectors is
    contained in the matrix $\Phi$-ribbon.
    For $\Phi(t)=t\log t$, this guarantee is sharp in every
    matrix dimension: any point outside the convex hull is
    excluded by some finite joint distribution satisfying the
    prescribed independence relations. This generalizes our recent result in the scalar case \cite{wangscalar}. 

\end{enumerate}

\subsection{Organization}

Section~\ref{sec:scalar} reviews scalar $\Phi$-ribbons and $\Phi$-SDPI
constants, establishes the new scalar $\Phi$-entropy inequalities,
and proves the $\Phi$-ribbon characterization of the $\Phi$-SDPI constants. The scalar proofs introduce the techniques used later in the matrix setting; we therefore recommend reading them before the corresponding matrix proofs. 
Section~\ref{sec:matrix-Phi} develops matrix $\Phi$-ribbons and proves their structural properties. 
Section~\ref{sec:SDPI} studies matrix $\Phi$-SDPI constants, proves their characterization by matrix $\Phi$-ribbons, obtains exact evaluations and provides an example on its dimension dependence. 
Section~\ref{sec:wirings} establishes monotonicity under wirings of no-signaling boxes as an application of the matrix $\Phi$-ribbon.  
Section~\ref{sec:characterization-ribbon} derives the inner bound for the matrix $\Phi$-ribbon guarantee from prescribed
independence relations and proves the optimality when $\Phi(t)=t\log t$.

\subsection{Notations}

We write $\R_+\defeq[0,\infty)$ and $\R_{++}\defeq(0,\infty)$.
For a nonnegative integer $n$, let $[n]\defeq\{1,\ldots,n\}$. For a finite set $S$, $|S|$ denotes its cardinality, and $\cP(S)$ (also written $2^S$) denotes its power set. We use $\Conv(S)$ for the convex hull of a subset $S$ of a real vector space and $\F_q$ for the finite field with $q$ elements, where
$q$ is a prime power. The notation $\rank(L)$ denotes the rank of a matrix or linear map $L$. For $t\in\R$, let $t_+\defeq\max\{t,0\}$. 

All random variables are assumed to take values in finite sets. Capital letters denote random variables, lowercase letters their realizations, and calligraphic letters their alphabets. For $S\subseteq[n]$, set
$X_S\defeq(X_i:i\in S)$, and $\cX_S\defeq\prod_{i\in S}\cX_i$; $x_S$ denotes a realization of $X_S$. In particular, $X_{[n]}=X^n=(X_1,\ldots,X_n)$. 
We use the same indexing convention for other families,
such as $A_{[n]}$, $B_{[n]}$, $\Pi_{[n]}$, and $\Omega_{[n]}$. 
Vectors of parameters are written in bold, e.g., $\vec\lambda=(\lambda_1,\ldots,\lambda_n)$. We write $\vec e_i$ 
for the $i$-th standard basis vector and
$\vec e_S\defeq\sum_{i\in S}\vec e_i$. 
The notation $X\sim p_X$ means that $X$ has law $p_X$, and we denote 
\[
\supp(X)=\supp(p_X)\defeq\{x\in\cX:p_X(x)>0\}. 
\]
We write $p\ll q$ if $\supp(p)\subseteq\supp(q)$, and
$p_X\otimes p_Y$ for the product law with mass function $p(x)p(y)$.
The relation $X\independent Y$ denotes independence,
$X\independent Y|Z$ denotes conditional independence given $Z$, 
and $U\to X\to Y$ denotes a Markov chain, equivalently
$U\independent Y|X$. We use $\Bern(\theta)$  for the distribution on $\{0,1\}$ assigning 
probability $\theta$ to $1$, and $\Unif(S)$ for the uniform distribution on a nonempty finite set $S$. 

Let $X$ be a random variable taking values in the finite set $\cX$.
Then a function $f:\cX\to\R$ can be thought of as a random variable. 
The expectation and variance of $f$ are denoted by 
$\E[f]$ and $\Var(f)$, respectively. 
We sometimes denote the expectation by $\E_X[f]$ to emphasize that it is  computed with respect to $X$.

Let $f=f_{XY}=f(X,Y)$ be a function of two random variables
$(X,Y)$ with the joint distribution $p_{XY}$.
Then the conditional expectation of $f$ over $Y$ given a fixed value for $X$, denoted as $\E[f|X]$, is a function of $X$. To be precise, 
\[
\E[f| X](x)
=
\E[f| X=x]
=
\sum_y p(y| x)f(x,y).
\]
For simplicity, we sometimes denote $f_X=\E[f|X]$. 

We use $\log$ for logarithm of base 2 and $\ln$ for natural logarithm. 

Unless a different domain is specified, $\Phi:(0,\infty)\to\R$
is convex and is extended to zero by
\[
\Phi(0)\defeq\lim_{t\downarrow0}\Phi(t)
\in\R\cup\{+\infty\}.
\]
Convexity ensures that this limit exists. This convention gives a
lower semicontinuous convex extension to $[0,\infty)$, which is
continuous at zero when $\Phi(0)<\infty$. We write $\dom(\Phi)$
for the domain of $\Phi$.

\section{Scalar $\Phi$-ribbons, $\Phi$-SDPI constants and a new $\Phi$-entropic inequality}

\label{sec:scalar}

In this section, we first review the definitions of $\Phi$-ribbons and $\Phi$-SDPI constants. Beigi and the second author showed that, when $\Phi$ is defined on a compact interval, the $\Phi$-SDPI constant equals the boundary slope of the $\Phi$-ribbon. We establish $\Phi$-entropy inequalities that allow us to remove this compactness assumption. Extending the argument to matrix $\Phi$-entropy yields a corresponding matrix $\Phi$-entropy inequality, from which we derive the equality between the matrix $\Phi$-SDPI constant and the boundary slope of the matrix $\Phi$-ribbon. These $\Phi$-entropy inequalities also imply mutual information inequalities that, to our knowledge, are new.

Given a convex function $\Phi:\R_+\to\R$, a function $f:\cX\to \R_+$, and a random variable $X$ on $\cX$, the $\Phi$-entropy of $f$ is defined to be (see \cite{Boucheron2013concentration,chafai2004entropies,chafai2006binomial})
\begin{align*}
    \HPhi(f) &\defeq \E[\Phi(f)] - \Phi(\E[f]).
\end{align*}

Given a function $f$ of two random variables $(X,Y)$, we define the \emph{conditional $\Phi$-entropy} as 
\begin{align}
    H_\Phi(f|Y)& \defeq\E[\Phi(f)]-\mathbb{E}_Y[\Phi(\E[f|Y])]\nonumber\\
    &= \sum_{y}p(y)\big(\E[\Phi(f)|Y=y] - \Phi(\E [f|Y=y])\big)\nonumber\\
    &=\HPhi(f)-\HPhi(\E[f|Y]) \label{Phi-ent-CR}.
\end{align}
Note that \eqref{Phi-ent-CR} is the chain rule for the $\Phi$-entropy:
\begin{align*}
    \HPhi(f)
    &=H_\Phi(f|Y)
    +\HPhi(\E[f|Y]).
\end{align*}
This definition of the conditional $\Phi$-entropy has the chain rule. Since $\IPhi$ can be represented in $\Phi$-entropy, we can analogously define the conditional $\Phi$-mutual information in order to preserve the chain rule.

The $\Phi$-divergence (also known as $f$-divergence in the literature) was originally proposed by Ali and Silvey \cite{ali1966general} as well as Csiszár \cite{csiszar2011information,csiszar1967information} and defined as 
\begin{align*}
    \DPhi(q_X\|p_X)
    &\defeq \sum_{x} p(x) \Phi\left(\frac{q(x)}{p(x)}\right)-\Phi(1)
\end{align*}
when $q_X\ll p_X$. 
Define $f(x)=q(x)/p(x)$. Then the $\Phi$-divergence and the $\Phi$-entropy are related as 
\begin{align*}
    \DPhi(q_X\|p_X)=\HPhi(f). 
\end{align*}

In \cite{ziv1973functionals} and \cite[Eq. 3.10.1]{cohen1998comparisons}, the {$\Phi$-mutual information} is defined as  
\begin{align*}
    I_{\Phi}(A;B)\defeq D_{\Phi}(p_{AB}\|p_A\otimes p_B).
\end{align*}
Then we also have 
\begin{align*}
    I_{\Phi}(U;X)\defeq \E_{U} \left[
        D_{\Phi}(p_{X|U}\|p_X)
    \right].
\end{align*}

Given some arbitrary $p_{XY}$, define $f_x(y)\defeq \frac{p(x,y)}{p(x)p(y)}$. Note that $f_x(y)\geq 0$ for any $x,y$ and $\E_Y[f_x]=1$ for any $x$. Then we have 
\begin{align*}
    \IPhi(X;Y)
    &=\sum_{x,y} p(x) p(y) \Phi \left(\frac{p(x,y)}{p(x)p(y)}\right) -\Phi(1) \nonumber\\
    &=\sum_{x} p(x) \HPhi(f_x),
\end{align*}
where $\HPhi(f_x)$ is calculated using $p_Y$ for every $x$. Following \cite{wangscalar}, we define the conditional $\Phi$-mutual information as 
\begin{align*}
    I_\Phi(X;Y| Z)
\defeq
I_\Phi(X;YZ)-I_\Phi(X;Z).
\end{align*}

\subsection{Scalar $\Phi$-ribbons and $\Phi$-SDPI constants}

\begin{definition}[Exercise 14.2 
\cite{Boucheron2013concentration}] \label{def:class-F}
    We define $\mathscr{F}$ to be the class of all smooth convex functions $\Phi:\cI\to \R$ where $\cI\subseteq \R_+$, that are not affine, and satisfy one of the following equivalent conditions: 
\begin{enumerate}[\rm (i)]
    \itshape

    \item
    $(s,t)\mapsto p\Phi(s)+(1-p)\Phi(t)
    -\Phi\bigl(ps+(1-p)t\bigr)$,
    for any $p\in[0,1]$, is jointly convex.

    \item
    $(s,t)\mapsto \Phi(s)-\Phi(t)-\Phi'(t)(s-t)$
    is jointly convex.

    \item
    $(s,t)\mapsto \bigl(\Phi'(s)-\Phi'(t)\bigr)(s-t)$
    is jointly convex.

    \item
    $(s,t)\mapsto \Phi''(s)t^2$ is jointly convex.

    \item
    $1/\Phi''$ is concave.

    \item
    $\Phi''''\Phi''\geq 2(\Phi''')^2$.
\end{enumerate}
\end{definition}

The class $\mathscr{F}$ corresponds to the subadditivity of $\Phi$-entropy. Suppose $(X_1,\dots,X_n)$ are mutually independent. It is shown in \cite{beigi2018phi} that for any $f$, 
\begin{align*}
    \HPhi(f)\leq \sum_{i=1}^n \HPhi \left(f|X_{-i}\right)
\end{align*}
where $X_{-i}=(X_1,\dots,X_{i-1},X_{i+1},\dots,X_n)$. 

Functions in the class $\mathscr{F}$ have found applications in certain concentration of measure inequalities because they are tightly related to subadditivity of $\Phi$-entropy and are of particular importance \cite{chafai2004entropies}. For example, Raginsky uses it to obtain a unified proof of the tensorization of the strong data processing inequality constant \cite{raginsky2016strong}.

\begin{definition}[\cite{beigi2018phi}] \label{def:Phi-ribbon}
    Let $\Phi\in \mathscr{F}$. For finite random variables $(X_1,\dots,X_n)$, the \emph{$\Phi$-ribbon}, denoted as $\fR_{\Phi}(X_1;\dots;X_n)$, is defined as the set of all $(\lambda_1,\dots,\lambda_n) \in [0,1]^n$ such that for any function $f$ of $(X_1,\dots,X_n)$, 
    \begin{align*}  
        \sum_{i=1}^n \lambda_i \HPhi(\E[f|X_i])  
        &\leq \HPhi(f).
    \end{align*}
    The \emph{$\IPhi$-ribbon}, denoted as $\fR_{\IPhi}(X_1;\dots;X_n)$, is defined as the set of all $(\lambda_1,\dots,\lambda_n) \in [0,1]^n$ such that for any distribution $p_{U|X^n}$, 
    \begin{align*}  
        \sum_{i=1}^n \lambda_i \IPhi(U;X_i)
        &\leq \IPhi(U;X^n).
    \end{align*}
\end{definition}

When we impose that $f\geq 0$ and $\E[f]=1$ for the $\Phi$-ribbon, the above two ribbons are the same \cite[Theorem 14]{beigi2018phi}.

Note that we can recover the HC ribbon when $\Phi(t)=t\log t$. The $\Phi$-entropy and the $\Phi$-ribbon have applications in the concentration inequalities, Markov chain mixing times, etc., as shown in \cite{raginsky2016strong}.

\begin{proposition}
\label{prop:phi-ribbon-properties}
    For $\Phi\in \mathscr{F}$, we have 
    \begin{enumerate}[\rm (i)]
        \item (Example 10 \cite{beigi2018phi}) $\{(\lambda_1,\dots,\lambda_n)\in[0,1]^n:\sum_{i=1}^n \lambda_i \leq 1\}$ is always a subset of the $\Phi$-ribbon for any random variables $(X_1,\dots,X_n)$. Moreover, when $X_1=\dots=X_n$ and these random variables are nonconstant, the above set is exactly the $\Phi$-ribbon.
        \item (Proposition 11 \cite{beigi2018phi}) If $X_1,\dots,X_n$ are mutually independent, the $\Phi$-ribbon is exactly $[0,1]^n$. 
    \end{enumerate}
\end{proposition}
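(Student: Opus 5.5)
For part (i) I will establish the single-function inequality $\HPhi(\E[f|X_i])\le \HPhi(f)$ for each $i$. Summing with weights $\lambda_i$ then gives, whenever $\sum_i\lambda_i\le 1$,
\[
\sum_{i=1}^n \lambda_i \HPhi(\E[f|X_i])\le \Big(\sum_i\lambda_i\Big)\HPhi(f)\le \HPhi(f),
\]
where the last step uses $\HPhi(f)\ge 0$, which is Jensen's inequality. The single-function inequality is immediate from the chain rule \eqref{Phi-ent-CR}:
\[
\HPhi(f)=H_\Phi(f|X_i)+\HPhi(\E[f|X_i]).
\]
Here $H_\Phi(f|X_i)=\sum_{x}p(x)\bigl(\E[\Phi(f)|X_i=x]-\Phi(\E[f|X_i=x])\bigr)\ge 0$ by conditional Jensen.

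For the converse in (i), assume $X_1=\dots=X_n=X$ with $X$ nonconstant. Take any nonconstant $f=g(X)$. Then $\E[f|X_i]=f$ for every $i$, and the ribbon inequality reads $\sum_i\lambda_i\HPhi(f)\le\HPhi(f)$. It remains to ensure $\HPhi(f)>0$. Since $\Phi$ is smooth, convex and not affine, there is an interval on which $\Phi''>0$. I will pick two distinct values $a,b$ in that interval and define $f$ to take value $a$ on one atom of $X$ and $b$ on the others. The two-point Jensen gap is then strictly positive, so dividing gives $\sum_i\lambda_i\le1$. This is the only delicate point. Strict convexity on some interval follows from $1/\Phi''$ being concave: $\Phi''$ is either positive on the interior or identically zero, and the latter is excluded because $\Phi$ is not affine.

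For part (ii), assume $X_1,\dots,X_n$ are mutually independent. It suffices to show $\sum_i \HPhi(\E[f|X_i])\le \HPhi(f)$, since $\lambda_i\le1$ and every term is nonnegative. I will use the subadditivity quoted above for $\Phi\in\mathscr F$, which gives $\HPhi(f)\le\sum_i\HPhi(f|X_{-i})$ for independent variables, in a telescoping form. Set $f_k=\E[f|X_1,\dots,X_k]$, so that $f_0=\E f$ and $f_n=f$. By the chain rule,
\[
\HPhi(f)=\sum_{k=1}^n\bigl(\HPhi(f_k)-\HPhi(f_{k-1})\bigr).
\]
I will then show $\HPhi(f_k)-\HPhi(f_{k-1})\ge \HPhi(\E[f|X_k])$, which I expect to be the main obstacle. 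The left side equals $\E_{X^{k-1}}\bigl[\HPhi^{X_k}(f_k)\bigr]$, the expected $\Phi$-entropy over $X_k$ with $X^{k-1}$ fixed; this uses independence, so that the conditional law of $X_k$ given $X^{k-1}$ is $p_{X_k}$. The right side equals $\HPhi^{X_k}\bigl(\E_{X^{k-1}}[f_k]\bigr)$. The needed inequality is therefore convexity of $g\mapsto \HPhi(g)$ (with respect to $p_{X_k}$) over mixtures, i.e., Jensen applied to the map $g\mapsto\HPhi(g)$. Joint convexity of that map is exactly condition (i) of Definition~\ref{def:class-F}, extended from two points to finitely many by the standard induction (equivalently, condition (ii) via the Bregman representation). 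Summing over $k$ completes the proof.
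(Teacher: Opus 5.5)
Your proposal is correct and takes essentially the same route as the paper. The paper proves this statement as the $d=1$ case of Proposition~\ref{prop:matrix-phi-ribbon-properties} in Appendix~\ref{app:proof-matrix-phi}: part (i) uses the chain rule and nonnegativity. Part (ii) uses the convexity/Jensen inequality $\HPhi(f|X)\ge \HPhi(\E[f|Y])$ for independent $X,Y$ (Lemma~\ref{lem:matrix-Phi-ineq}(i)) together with the chain rule and induction, and your telescoping over $f_k=\E[f|X_1,\dots,X_k]$ just unrolls that induction. Your explicit choice of an $f$ with $\HPhi(f)>0$ in the converse of (i) also fills in a positivity point the paper leaves implicit.
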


\begin{definition}
Suppose $\Phi:\cI\to \R$ is convex where $\cI\subseteq \R_+$. Define
$\eta_{\Phi}(X;Y)$ to be the smallest (the infimum of)
$\lambda \geq 0$ such that, for any function $f=f(X)$
(whose range is in the domain of $\Phi$), we have
\[
\lambda H_{\Phi}(f)
\geq
H_{\Phi}\bigl(\mathbb{E}[f | Y]\bigr).
\]
We call $\eta_{\Phi}(X;Y)$ the \emph{$\Phi$-strong data processing
inequality constant} ($\Phi$-SDPI constant).
\end{definition}
When $X$ is constant or $\Phi$ is affine, we define $\eta_{\Phi}(X;Y)=0$. 

By the chain rule and the nonnegativity of $\Phi$-entropy, 
\begin{align*}
    \HPhi(f)\geq \HPhi \left(\E[f|Y]\right). 
\end{align*}
Thus, $\eta_{\Phi}$ is well defined and $\eta_{\Phi}\in [0,1]$. 

Equivalently, we have 
\begin{align*}
    \eta_{\Phi}(X;Y)=\sup \frac{\HPhi(\E[f|Y])}{\HPhi(f)}
\end{align*}
where the supremum is over all nonnegative $f(X)$ such that $\HPhi(f)>0$. 

For $\Phi\in \mathscr{F}$, the $\Phi$-SDPI constant was first studied by Raginsky \cite{raginsky2016strong} where his version has the constraint $\E[f]=1$. The above version is proposed by Beigi and the second author \cite{beigi2018phi}. 

The following identity was established in
\cite[Theorem 20]{beigi2018phi} under the additional
assumption that the domain of $\Phi$ is compact.
Our proof uses a different technique and removes this assumption.

\begin{theorem}
    \label{thm:equiv-sdpi-ribbon}
    Suppose $\Phi\in \mathscr{F}$ and $X,Y$ are finite
    random variables. Then 
    \begin{align*}
        \eta_{\Phi}(X;Y) &= \inf\frac{1-\lambda_1}{\lambda_2}
    \end{align*}
    where the infimum is over all $(\lambda_1,\lambda_2)\in \fR_{\Phi} (X;Y)$ with $\lambda_2\neq 0$. 
\end{theorem}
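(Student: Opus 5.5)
The inequality $\eta_\Phi(X;Y)\le\inf(1-\lambda_1)/\lambda_2$ should follow directly from the definitions. The reverse inequality needs a new $\Phi$-entropy inequality, and that is where the real work lies.

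\emph{Easy direction.} Let $(\lambda_1,\lambda_2)\in\fR_\Phi(X;Y)$ with $\lambda_2>0$, and apply the defining inequality of Definition~\ref{def:Phi-ribbon} to a test function $f=f(X)$. Since $\E[f|X]=f$, this gives $\lambda_1\HPhi(f)+\lambda_2\HPhi(\E[f|Y])\le\HPhi(f)$, that is,
\[
\HPhi(\E[f|Y])\le\frac{1-\lambda_1}{\lambda_2}\HPhi(f).
\]
Taking the supremum over $f$ yields $\eta_\Phi(X;Y)\le(1-\lambda_1)/\lambda_2$ for every such point, and hence $\eta_\Phi\le\inf(1-\lambda_1)/\lambda_2$.

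\emph{Hard direction.} Set $\eta=\eta_\Phi(X;Y)$. It suffices to exhibit a single $t>0$ with $(1-\eta t,t)\in\fR_\Phi(X;Y)$, because this point has ratio exactly $\eta$. If $\eta=1$, take $t=1$; the point $(0,1)$ lies in the ribbon by Proposition~\ref{prop:phi-ribbon-properties}(i). So assume $\eta<1$. For general $f=f(X,Y)$, the chain rule \eqref{Phi-ent-CR} gives $\HPhi(f)=\HPhi(f|X)+\HPhi(f_X)$, where $f_X=\E[f|X]$. Hence the required ribbon inequality $(1-\eta t)\HPhi(f_X)+t\HPhi(f_Y)\le\HPhi(f)$ is equivalent to
\[
t\bigl(\HPhi(\E[f|Y])-\eta\,\HPhi(f_X)\bigr)\le\HPhi(f|X).
\]
The plan is therefore to prove an inequality of the form
\[
\HPhi(\E[f|Y])\le\eta\,\HPhi(\E[f_X|Y])\cdot\tfrac{1}{\eta}\cdot\eta+C\,\HPhi(f|X),
\]
more precisely $\HPhi(\E[f|Y])\le\HPhi(\E[f_X|Y])+C\,\HPhi(f|X)$, for a constant $C<\infty$ that depends only on $p_{XY}$ and $\Phi$, and not on $f$ or on the range of $f$. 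Combined with $\HPhi(\E[f_X|Y])\le\eta\,\HPhi(f_X)$, which is the definition of $\eta$ applied to the function $f_X$ of $X$, this gives the claim with $t=1/C$. (If one wants $t\le1$ with $1-\eta t\ge0$, one simply shrinks $t$.)

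\emph{Main obstacle.} The key step is the uniform bound $\HPhi(g)-\HPhi(h)\le C\,\HPhi(f|X)$, where $g=\E[f|Y]$ and $h=\E[f_X|Y]$ are two functions of $Y$ with the same mean. This is exactly where the compactness assumption of \cite{beigi2018phi} was used, and it must now be proved for unbounded domains.
\begin{enumerate}[(a)]
\item Write $\HPhi(g)-\HPhi(h)=\E[\Phi(g)-\Phi(h)]$ and split it into a Bregman part $\E[\Phi(g)-\Phi(h)-\Phi'(h)(g-h)]$ and a cross term $\E[\Phi'(h)(g-h)]$.
\item Control the Bregman part using the joint convexity in Definition~\ref{def:class-F}(ii). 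Conditional Jensen over $X$ given $Y$ should give
\[
\E\bigl[\Phi(g)-\Phi(h)-\Phi'(h)(g-h)\bigr]\le\E\bigl[\Phi(f)-\Phi(f_X)-\Phi'(f_X)(f-f_X)\bigr].
\]
The right-hand side equals $\HPhi(f|X)$ because $\E[f-f_X\,|\,X]=0$.
\item Absorb the cross term by a scaling/interpolation trick, using the joint convexity in Definition~\ref{def:class-F}(i). Apply the Bregman bound to $f_X+s(f-f_X)$ and optimize over $s$, paying a factor of order $1/\min_{x,y}p(x|y)$ that accounts for how $\E[\cdot|Y]$ distorts the zero-conditional-mean structure.
\end{enumerate}
If the cross term resists this treatment, the fallback is to prove the inequality first for $f$ whose conditional fluctuations $f-f_X$ are small relative to $f_X$. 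One would then extend to general $f$ by a homogeneity or convexity argument in the direction $f-f_X$. This is the step I expect to require the paper's new inequality.
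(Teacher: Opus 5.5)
\textbf{There is a genuine gap in the hard direction.} Your easy direction and steps (a)--(b) agree with the paper; (b) is exactly the paper's Bregman bound. The problem is that the hard direction rests on a false inequality, and in fact on a false target.

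The cross term $\E[\Phi'(h)(g-h)]$ in (a) is \emph{linear} in the fluctuation $e\defeq f-f_X$, while $\HPhi(f|X)$ is \emph{quadratic} in it. So no $f$-independent constant $C$ can absorb it.

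Concretely, take $\Phi(t)=t^2$ (shifting $f$ by a constant to keep it nonnegative changes nothing) and $(X,Y)\sim\DSBS(\lambda)$ with $0<\lambda<1$. Then $\eta=\lambda^2$ is attained by $\phi(X)=2X-1$. Set $h\defeq\E[\phi(X)|Y]$, $e\defeq h(Y)-\E[h(Y)|X]$ and $f\defeq\phi+\epsilon e$. Then $f_X=\phi$ and
\[
\HPhi(f_Y)-\eta\,\HPhi(f_X)=2\epsilon\,\E\bigl[\Var(h(Y)|X)\bigr]+\epsilon^2\Var\bigl(\E[e|Y]\bigr),\qquad \HPhi(f|X)=\epsilon^2\,\E[e^2],
\]
with $\E[\Var(h(Y)|X)]=\lambda^2(1-\lambda^2)>0$.

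Hence, for every fixed $t>0$, the inequality $t\bigl(\HPhi(f_Y)-\eta\HPhi(f_X)\bigr)\le\HPhi(f|X)$ fails once $\epsilon$ is small. So no point $(1-\eta t,t)$ with $t>0$ lies in $\fR_\Phi(X;Y)$: the infimum in the theorem is in general not attained. Step (c) therefore cannot be carried out by any scaling trick. Your small-fluctuation fallback fails precisely in the regime it targets.

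\textbf{The missing idea} is to keep an $\HPhi(f_X)$ term and aim only for ribbon points with ratio $\eta+\delta$. The paper first subtracts $\Phi'(m)$ from the cross term, which is free since $\E[g-h]=0$. It then bounds the result by $\frac{\pi}{\mu}\sqrt{\HPhi(f_X)\HPhi(f|X)}$ (Theorem~\ref{thm:scalar-DPhi-ineq}) in three steps:
\begin{enumerate}[(i)]
\item Concavity of $1/\Phi''$ gives $\Phi''((1-s)v+su)\le\sqrt{\Phi''(u)\Phi''(v)}/(2\sqrt{s(1-s)})$, hence $|(\Phi'(u)-\Phi'(v))k|\le\frac{\pi}{2}\sqrt{\Phi''(v)(u-v)^2\,\Phi''(u)k^2}$.
\item Cauchy--Schwarz is applied to the resulting expectation.
\item The quadratic lower bound $\HPhi(f)\ge\frac{\mu}{2}\E[\Phi''(m)(f-m)^2]$ of Lemma~\ref{lem:Phi-bd-quadratic} finishes it. This is where $\mu$ replaces the compactness assumption.
\end{enumerate}
Young's inequality turns this into $\HPhi(f_Y)\le\HPhi(\widetilde f_Y)+\delta\HPhi(f_X)+K_\delta\HPhi(f|X)$. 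Combined with the SDPI for $f_X$, this yields $\bigl(1-(\eta+\delta)/n,1/n\bigr)\in\fR_\Phi(X;Y)$ for any integer $n>\max\{1,\eta+\delta,K_\delta\}$. That point has ratio $\eta+\delta$, and letting $\delta\downarrow0$ gives the claim.
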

The proof is given later in Subsection~\ref{subsec:proof-equiv-scalar}. To understand the proof, we need new $\Phi$-entropic inequalities given in the section below.

\subsection{New $\Phi$-entropic inequalities}

The scalar proofs in this subsection introduce the techniques used later to establish the matrix $\Phi$-entropy inequalities. Reading these proofs first will make the matrix arguments easier to follow.

\begin{theorem}
    \label{thm:scalar-DPhi-ineq}
    Suppose $\Phi\in\mathscr{F}$. 
    Let $(X,Y)$ have distribution $p_{XY}$ on the finite set
    $\cX\times\cY$, and denote $\mu\defeq \min\{p_{XY}(x,y):p_{XY}(x,y)>0\}$. 
    For any function $f:\cX\times\cY\to\dom(\Phi)$, write 
    \begin{align}\label{eqndfy}
        f_X\defeq\E[f|X],
        \qquad
        f_Y\defeq\E[f|Y],
        \qquad
        \widetilde f_Y\defeq\E[f_X|Y].
    \end{align}
    Then
    \begin{align*}
        \HPhi(f_Y)
        &\leq
        \HPhi(\widetilde f_Y)+\HPhi(f|X)
        +\frac{\pi}{\mu}
        \sqrt{\HPhi(f_X)\HPhi(f|X)}.
    \end{align*}
\end{theorem}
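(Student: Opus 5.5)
The plan is to compare $\HPhi(f_Y)$ with $\HPhi(\widetilde f_Y)$ through a first-order (Bregman) expansion, and to control the two resulting pieces separately. Since $\E[f_Y]=\E[\widetilde f_Y]=\E[f]$, we have $\HPhi(f_Y)-\HPhi(\widetilde f_Y)=\E[\Phi(f_Y)]-\E[\Phi(\widetilde f_Y)]$. Write $g\defeq f-f_X$, so that $\E[g|X]=0$ and $f_Y-\widetilde f_Y=\E[g|Y]$. With the Bregman divergence $B(s,t)\defeq\Phi(s)-\Phi(t)-\Phi'(t)(s-t)$, this gives
\begin{align*}
\HPhi(f_Y)-\HPhi(\widetilde f_Y)=\E\bigl[B(f_Y,\widetilde f_Y)\bigr]+\E\bigl[\Phi'(\widetilde f_Y)\,g\bigr],
\end{align*}
where in the last term the tower property lets us replace $\E[g|Y]$ by $g$.

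\emph{Bregman term.} Use condition (ii) of Definition~\ref{def:class-F}: $B$ is jointly convex. Since $(f_Y,\widetilde f_Y)=\E[(f,f_X)|Y]$, conditional Jensen gives $B(f_Y,\widetilde f_Y)\le\E[B(f,f_X)|Y]$. Taking expectations,
\begin{align*}
\E[B(f,f_X)]=\E[\Phi(f)]-\E[\Phi(f_X)]-\E\bigl[\Phi'(f_X)\,\E[f-f_X|X]\bigr]=\HPhi(f|X),
\end{align*}
because the last expectation vanishes. This produces exactly the term $\HPhi(f|X)$ in the statement.

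\emph{Cross term.} Since $\E[\Phi'(f_X)g]=0$, we can rewrite $\E[\Phi'(\widetilde f_Y)g]=\E[(\Phi'(\widetilde f_Y)-\Phi'(f_X))g]$. This has the structure of a product of a ``fluctuation of $f_X$'' (because $\widetilde f_Y=\E[f_X|Y]$) and a ``fluctuation of $f$ around $f_X$''. I would bound it by a weighted Cauchy--Schwarz inequality with weight $\Phi''$ along the relevant segments. Specifically, write $\Phi'(\widetilde f_Y)-\Phi'(f_X)=\int\Phi''$ over the segment between the two values. Then pair $\sqrt{\Phi''}\,|\widetilde f_Y-f_X|$ against $\sqrt{\Phi''}\,|g|$, so that each factor becomes a Bregman-type quantity.

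The ingredients are as follows:
\begin{itemize}
\item Pointwise, the representation $B(s,t)=\int_t^s(s-u)\Phi''(u)\,du$ will be combined with the concavity of $1/\Phi''$ (condition (v)) to compare $\Phi''$ at different points of a segment.
\item The first factor should be controlled by $\HPhi(f_X)$ via $\E[B(f_X,\widetilde f_Y)]\le\HPhi(f_X)$, which holds by the same Jensen argument used for the Bregman term.
\item The second factor should be controlled by $\HPhi(f|X)=\E[B(f,f_X)]$.
\item The constant $1/\mu$ enters whenever a conditional expectation such as $\E[g|Y=y]$, or a single value $f(x,y)$, must be compared with an averaged quantity. Every positive atom has mass at least $\mu$, so a pointwise square is at most $\mu^{-1}$ times the corresponding expectation.
\item I expect the factor $\pi$ to come from interpolating linearly, $u=(1-\theta)s+\theta t$, and applying Cauchy--Schwarz in $\theta$ with weights $1/\sqrt{\theta(1-\theta)}$. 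This uses $\int_0^1\frac{d\theta}{\sqrt{\theta(1-\theta)}}=\pi$, while $\int(1-\theta)\Phi''\,d\theta$ and $\int\theta\Phi''\,d\theta$ reproduce the Bregman divergences in the two directions.
\end{itemize}

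The main obstacle is this cross term, in particular getting the weights to match. The weight $\Phi''$ evaluated along the segment $[f_X,\widetilde f_Y]$ is not obviously the same as the weight along $[f_X,f]$, and $\Phi''$ may blow up near $0$ because the domain is not assumed compact; this is exactly the assumption we are removing. My first attempt will use the joint convexity of $(s,t)\mapsto\Phi''(s)t^2$ (condition (iv)). By Jensen it lets us push $\Phi''(\cdot)\,(\E[g|Y])^2$ inside conditional expectations, so that all weights are evaluated at the values of $f_X$ or $f$ themselves. At that point the pointwise-to-average step costs the factor $1/\mu$. If this route loses too much, the fallback is to symmetrize the interpolation parameter as above, which should still produce a constant of the form $\pi/\mu$, possibly with some slack.
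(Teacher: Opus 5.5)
Your decomposition and the Bregman step match the paper's exactly: you write $\HPhi(f_Y)-\HPhi(\widetilde f_Y)=\E[B(f_Y,\widetilde f_Y)]+(\text{cross term})$, and joint convexity of $B$ gives $\E[B(f_Y,\widetilde f_Y)]\le\E[B(f,f_X)]=\HPhi(f|X)$. The gap is the cross term, which carries all the real content, namely the $\pi$ and the $1/\mu$. You name the weight mismatch as the main obstacle but leave it unresolved, and the two mechanisms you sketch do not work.

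First, you cannot bound the segment integral $\int_0^1\Phi''(w_\theta)\,d\theta$ by the two Bregman-type integrals $\int_0^1(1-\theta)\Phi''(w_\theta)\,d\theta$ and $\int_0^1\theta\Phi''(w_\theta)\,d\theta$ along the same segment, with any constant, even inside $\mathscr{F}$. For $\Phi''(w)=1/w$ on $[\epsilon,1]$, the full integral is about $\ln(1/\epsilon)$, while the geometric mean of the other two is about $\sqrt{\ln(1/\epsilon)}$. In any case, $g=f-f_X$ does not live on the segment $[f_X,\widetilde f_Y]$.

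Second, the remark that a pointwise square is at most $\mu^{-1}$ times its expectation misses the actual difficulty. The inequality $B(s,t)\ge c\,\Phi''(t)(s-t)^2$ fails pointwise when $\Phi''$ blows up, for example $\Phi(t)=t\log t$ with $t\downarrow0$ and $s$ fixed. So $\mu$ has to enter through the weight $\Phi''$, not through the square.

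Two estimates are missing, and both follow from concavity of $1/\Phi''$.

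(1) \emph{Endpoint bound.} Concavity plus AM--GM give $\Phi''((1-\theta)v+\theta u)\le\sqrt{\Phi''(u)\Phi''(v)}/(2\sqrt{\theta(1-\theta)})$. Hence $|(\Phi'(u)-\Phi'(v))h|\le\frac{\pi}{2}\sqrt{\Phi''(v)(u-v)^2\cdot\Phi''(u)h^2}$, which is the paper's \eqref{eq:scalar-Phi-divergence-mixed-gradient}. This produces the $\pi$ and moves all weights to the endpoints.

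(2) \emph{Quadratic form at the mean} (Lemma~\ref{lem:Phi-bd-quadratic}). Let $m$ be the mean of $f$ under a law whose atoms are at least $\mu$. For every $u$ in the convex hull of the values, $m=\mu u+(1-\mu)v$ with $v$ also in the hull. So $1/\Phi''(m)\ge\mu/\Phi''(u)$, i.e.\ $\Phi''(u)\ge\mu\Phi''(m)$ on the whole hull. Inserting this into Taylor's formula gives $\Phi''(m)(f(z)-m)^2\le\frac{2}{\mu}B(f(z),m)$ pointwise, hence $\E[\Phi''(m)(f-m)^2]\le\frac{2}{\mu}\HPhi(f)$.

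The paper subtracts the constant $\Phi'(m)$ rather than $\Phi'(f_X)$. This keeps $f_Y-\widetilde f_Y=\E[g|Y]$, so your ``first attempt'' (Jensen for $\Phi''(s)t^2$) reduces both factors to (2).

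Your own subtraction of $\Phi'(f_X)$ also closes, and your target quantities are the right ones for it:
\begin{enumerate}[\rm (a)]
\item Apply (1) pointwise with $u=\widetilde f_Y(y)$, $v=f_X(x)$, $h=g(x,y)$.
\item Pair $\Phi''(\widetilde f_Y)(f_X-\widetilde f_Y)^2$ with $\Phi''(f_X)g^2$.
\item Use (2) under $p_{X|Y=y}$ and under $p_{Y|X=x}$; both have atoms at least $\mu$.
\item Use $\E[B(f_X,\widetilde f_Y)]=\HPhi(f_X)-\HPhi(\widetilde f_Y)$ and $\E[B(f,f_X)]=\HPhi(f|X)$.
\end{enumerate}
Cauchy--Schwarz then yields the theorem, even with $\HPhi(f_X)-\HPhi(\widetilde f_Y)$ in place of $\HPhi(f_X)$. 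As in the paper, first perturb $f$ into $\interior(\dom(\Phi))$.
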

Before giving the proof of this theorem, we need some definitions and lemmas.
For $u\in\interior\left(\dom(\Phi)\right)$ and $h\in\R$, write 
\[
    Q_u(h)\defeq\Phi''(u)h^2.
\]
Since $\Phi$ is convex, $\Phi''\geq 0$ and then $h\mapsto Q_u(h)$ is convex. 
The following lemma provides the quadratic estimate used in the proof of Theorem~\ref{thm:scalar-DPhi-ineq}. 

\begin{lemma}
    \label{lem:Phi-bd-quadratic}
    With the same assumptions and notations in Theorem~\ref{thm:scalar-DPhi-ineq}, for every $f:\cX\times \cY\to \interior(\dom(\Phi))$ we have: 
    \begin{align}
        \E_{XY}\left[Q_m (f(X,Y)-m)\right]=\E_{XY} \left[\Phi''(m)(f(X,Y)-m)^2\right]
        \leq
        \frac{2}{\mu}\HPhi(f),\label{uncond}
    \end{align}
    where $m=\E[f]$, and the conditional form:  
    \begin{align*}
        \E\left[Q_{\widetilde f_Y}\left(f_Y-\widetilde f_Y\right)\right]
        &=\E \left[\Phi'' \left(\widetilde f_Y\right) \left(f_Y-\widetilde f_Y\right)^2\right]
        \leq 
        \frac{2}{\mu}\HPhi(f|X)
    \end{align*}
    where $f_Y$ and $\widetilde f_Y$ are defined in \eqref{eqndfy}.
\end{lemma}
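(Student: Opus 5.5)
The plan is to write the $\Phi$-entropy as an averaged Bregman divergence and bound $\Phi''$ from below along the segments that occur. Set $m=\E[f]$. Since $\E[f-m]=0$,
\[
\HPhi(f)=\E\bigl[\Phi(f)-\Phi(m)-\Phi'(m)(f-m)\bigr]
=\E\Bigl[\int_m^{f}(f-t)\,\Phi''(t)\,dt\Bigr].
\]
So \eqref{uncond} will follow once we show
\[
\Phi''(t)\geq \mu\,\Phi''(m)\qquad\text{for every $t$ between $m$ and any value $f(x,y)$ with $p(x,y)>0$.}
\]
Indeed, plugging this in gives $\HPhi(f)\geq \frac{\mu}{2}\E[\Phi''(m)(f-m)^2]$. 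The input is $\Phi\in\mathscr{F}$ in the form (v): $g\defeq 1/\Phi''$ is concave and positive on the domain $\subseteq\R_+$.

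\textbf{Case $t\geq m$.} Every atom has mass at least $\mu$ and $f\geq0$, so $\mu f(x,y)\leq \E[f]=m$, i.e.\ $f\leq m/\mu$. Apply concavity of $g$ between $0$ (or the left end of the domain, where $g\geq0$) and $t$:
\[
g(m)\geq \tfrac{m}{t}\,g(t)\geq \mu\, g(t),
\]
using $t\leq m/\mu$. This gives $\Phi''(t)\geq\mu\Phi''(m)$.

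\textbf{Case $t\leq m$.} A concave function that stays positive on an unbounded-to-the-right domain must be nondecreasing, so $g(t)\leq g(m)$ and $\Phi''(t)\geq\Phi''(m)\geq\mu\Phi''(m)$.

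\textbf{Main obstacle.} I expect the real difficulty to be making these two one-sided bounds rigorous when $\dom(\Phi)$ is a general subinterval of $\R_+$ rather than $[0,\infty)$ or $(0,\infty)$. I would first handle the case where the domain contains $(0,\infty)$. For a bounded domain $[a,b]$ I would replace the endpoint $0$ by $a$ (where $g\geq0$) in the first case. In the second case I would use concavity between $t$ and $M=\max f$, exploiting that $m$ is a $\mu$-weighted average of the values of $f$, so $m$ cannot sit too close to $M$ relative to the spread. If this does not give the clean constant, I would fall back on the constant-$2/\mu$ bound for domains containing $(0,\infty)$, which is what is needed later. Boundary points where $\Phi''$ is undefined are excluded by the hypothesis $f\in\interior(\dom\Phi)$.

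\textbf{Conditional form.} Write
\[
f_Y(y)-\widetilde f_Y(y)=\sum_x p(x|y)\bigl(f(x,y)-f_X(x)\bigr),\qquad \widetilde f_Y(y)=\sum_x p(x|y)f_X(x).
\]
By condition (iv) of Definition~\ref{def:class-F}, $(s,h)\mapsto\Phi''(s)h^2$ is jointly convex, so Jensen's inequality with weights $p(x|y)$ gives
\[
\Phi''(\widetilde f_Y(y))\bigl(f_Y(y)-\widetilde f_Y(y)\bigr)^2\leq\sum_x p(x|y)\,\Phi''(f_X(x))\bigl(f(x,y)-f_X(x)\bigr)^2 .
\]
Taking $\E_Y$ of both sides yields
\[
\E\bigl[Q_{\widetilde f_Y}(f_Y-\widetilde f_Y)\bigr]\leq\sum_x p(x)\,\E_{Y|X=x}\bigl[\Phi''(f_X(x))(f(x,Y)-f_X(x))^2\bigr].
\]
For each fixed $x$, apply \eqref{uncond} to $f(x,\cdot)$ under $p_{Y|X=x}$; its mean is $f_X(x)$ and its minimal positive atom is $\min_y p(y|x)\geq p(x,y)\geq\mu$. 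Each summand is therefore at most $\frac{2}{\mu}\HPhi\bigl(f(x,\cdot)\bigr)$. Summing with weights $p(x)$ gives $\frac{2}{\mu}\HPhi(f|X)$ by the definition of conditional $\Phi$-entropy.
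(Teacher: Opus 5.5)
Your proof is correct, and it follows a genuinely different route from the paper only in the unconditional part. The conditional part (joint convexity of $(s,h)\mapsto\Phi''(s)h^2$, Jensen given $Y$, then \eqref{uncond} under each $p_{Y|X=x}$ using $p(y|x)\geq\mu$) is exactly the paper's argument.

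\textbf{Comparison for the unconditional part.} Both arguments use the same Taylor-integral step and reduce to the pointwise bound $\Phi''(t)\geq\mu\Phi''(m)$; they differ in how this bound is obtained.
\begin{itemize}
\item \emph{The paper} uses no case split. For any $u$ in the convex hull of the values of $f$, it writes $m=\mu u+(1-\mu)v$, where $v$ is another convex combination of those values; this is possible because every atom has mass at least $\mu$. Concavity and positivity of $1/\Phi''$ then give $1/\Phi''(m)\geq\mu/\Phi''(u)$ at once. This uses neither $f\geq0$ nor anything about the endpoints of $\dom(\Phi)$.
\item \emph{Your argument} splits into two cases. For $t\geq m$ you use the left endpoint together with $f\leq m/\mu$. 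For $t\leq m$ you use that a positive concave function on a right-unbounded domain is nondecreasing, which even gives constant $1$ there. As written, this second case only covers domains unbounded to the right.
\end{itemize}

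\textbf{The bounded-domain step.} The step you hedged on does go through with the clean constant $\mu$. Take $f(x,y)\leq t<m$, set $M\defeq\max f$, and let $g\defeq1/\Phi''$; note $g(M)>0$ since $M$ lies in the interior of the domain.
\begin{itemize}
\item Concavity of $g$ on $[t,M]$ gives $g(m)\geq\frac{M-m}{M-t}\,g(t)$.
\item Since $M-f\geq0$, we have $\mu(M-t)\leq\mu\bigl(M-f(x,y)\bigr)\leq\E[M-f]=M-m$, so the ratio $\frac{M-m}{M-t}$ is at least $\mu$.
\item Symmetrically, using $\min f$ in place of the endpoint $0$ (or $a$) handles $t\geq m$ on any interval.
\end{itemize}
You should commit to this argument rather than your fallback. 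The fallback restricts to domains containing $(0,\infty)$, but the lemma, and Theorem~\ref{thm:equiv-sdpi-ribbon}, which relies on it, are stated for arbitrary intervals $\cI\subseteq\R_+$, including compact ones. So proving the bound only for domains containing $(0,\infty)$ would not establish the statement as given.
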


\begin{proof}
Recall that 
\[
    Q_u(h)\defeq\Phi''(u)h^2,
    \qquad u>0,\ h\in\R.
\]

We first prove the unconditional inequality. Set $Z=(X,Y)$, and write
$p_Z(z)=p_{XY}(x,y)$ and $f(z)=f(x,y)$ for $z=(x,y)$.
In this part of the proof, all sums are over $\supp(p_Z)$.
If $\mu=1$, then $Z$ is constant and the inequality is immediate.
Hence, assume $0<\mu<1$.

Let $u\in \R$ be any convex combination of $\{f(z):z\in \supp(p_Z)\}$, i.e., 
\[
    u=\sum_z r(z)f(z),
\]
where $r$ is a probability distribution supported on
$\operatorname{supp}(p_Z)$. Since $p_Z(z)\geq\mu$ on this support, we can write
\[
    m=\mu u+(1-\mu)v,
    \qquad
    v\defeq
    \sum_z
    \frac{p_Z(z)-\mu r(z)}{1-\mu}f(z).
\]
The coefficients defining $v$ are nonnegative and sum to one.
Since $\Phi\in\mathscr{F}$, $1/\Phi''$ is concave and positive. Thus, by Jensen's inequality, 
\[
    \frac{1}{\Phi''(m)}
    \geq
    \frac{\mu}{\Phi''(u)}
    +\frac{1-\mu}{\Phi''(v)}
    \geq
    \frac{\mu}{\Phi''(u)}.
\]
Consequently, 
\begin{align} \label{eq:u-m-Phi''}
    \Phi''(u)\geq\mu\Phi''(m)
\end{align}
holds for any $u$ that is a convex combination
of the values of $\{f(z):z\in \supp(p_Z)\}$.
Given $0\leq t\leq 1$, $(1-t)m+tf(z)$ is a convex combination
of the values of $\{f(z):z\in \supp(p_Z)\}$ (as $m$ itself is the expected value of $f$). By \eqref{eq:u-m-Phi''}, we have 
\begin{align} \label{eq:m-f-mu}
    Q_{(1-t)m+tf(z)}(f(z)-m)
    \geq
    \mu Q_m(f(z)-m)
\end{align}
for each $z\in\supp(p_Z)$. Then 
\begin{align*}
    \HPhi(f)
    &=
    \E\left[
        \Phi(f)-\Phi(m)
    \right]\\
    &=
    \E\left[
        \Phi(f)-\Phi(m)-\Phi'(m)(f-m)
    \right]\\
    &=
    \int_0^1(1-t)
    \E\left[
        Q_{(1-t)m+tf}(f-m)
    \right]dt\\
    &\geq \int_0^1(1-t)
    \mu \E \left[Q_m(f-m)\right] dt\\
    &=
    \frac{\mu}{2}\E\left[Q_m(f-m)\right]
\end{align*}
where the second line follows from $\E \left[\Phi'(m)(f-m)\right]=\Phi'(m)\E \left[f-m\right]=0$, the third line follows from the integral Taylor formula, and the fourth line follows from \eqref{eq:m-f-mu}. 
This completes the proof of \eqref{uncond}.

For the conditional form, recall that
$(u,h)\mapsto Q_u(h)$ is jointly convex for
$\Phi\in\mathscr F$. By Jensen's inequality,
\begin{align*}
    \E\left[
        Q_{\widetilde f_Y}(f_Y-\widetilde f_Y)
    \right]
    &=
    \E\left[
        Q_{\E[f_X|Y]}
        \left(\E[f-f_X|Y]\right)
    \right]\\
    &\leq
    \E\left[Q_{f_X}(f-f_X)\right].
\end{align*}

Moreover, since
\[
    p_{Y|X}(y|x)
    =\frac{p_{XY}(x,y)}{p_X(x)}
    \geq\mu
\]
whenever $p_{XY}(x,y)>0$, applying the 
inequality in \eqref{uncond} under each conditional law given $X=x$
and averaging over $X$ gives
\begin{align*}
    \E\left[Q_{f_X}(f-f_X)\right]
    &\leq
    \frac{2}{\mu}
    \E\left[
        \Phi(f)-\Phi(f_X)
    \right]\\
    &=
    \frac{2}{\mu}\HPhi(f|X).
\end{align*}
This proves the conditional form.

For general $f$, consider $f^{(\epsilon)}=(1-\epsilon)f+\epsilon c$ where $c\in \interior(\dom(\Phi))$. 
Then $f^{(\epsilon)}$ takes values in $\interior(\dom(\Phi))$, so the preceding inequalities apply to $f^{(\epsilon)}$. Letting $\epsilon\downarrow0$ and using the continuity of $\Phi$ and the finiteness of the underlying probability space yields the desired inequalities for $f$. This completes the proof. 
\end{proof}

\begin{proof}[Proof of Theorem~\ref{thm:scalar-DPhi-ineq}]
All expectations and conditional expectations below are taken with
respect to $p_{XY}$.
If $\mu=1$, then $(X,Y)$ is deterministic and all the
$\Phi$-entropies in the inequality vanish. Hence, assume $0<\mu<1$.

We first prove the inequalities for $f$ taking values in the interior of $\dom(\Phi)$. 

Write 
$m\defeq\E[f]$. Then $\E[f_X]=\E[f_Y]=\E\left[\widetilde f_Y\right]=m$. 

Since $\Phi\in\mathscr{F}$, the map $(u,h)\mapsto Q_u(h)$ is
jointly convex, as is the map
$(u,v)\mapsto\Phi(u)-\Phi(v)-\Phi'(v)(u-v)$. 
Applying Jensen's inequality given $Y$ to
$(u,v)=(f,f_X)$, and then taking expectation, we obtain
\begin{align}
    \E\left[
        \Phi(f_Y)-\Phi\left(\widetilde f_Y\right)-\Phi'\left(\widetilde f_Y\right)\left(f_Y-\widetilde f_Y\right)
    \right]
    &\leq
    \E\left[
        \Phi(f)-\Phi(f_X)-\Phi'(f_X)(f-f_X)
    \right] \nonumber\\
    &=\HPhi(f|X) \label{eq:scaler-quadratic-Phi}
\end{align}
where the last equality follows from $\E[f-f_X|X]=0$ and the definition of $\Phi$-entropy. 
Since $\E[f_Y]=\E\left[\widetilde f_Y\right]=m$, it follows that 
\begin{align}
    \HPhi(f_Y)-\HPhi\left(\widetilde f_Y\right)
    &= \E \left[\Phi(f_Y)-\Phi \left(\widetilde f_Y\right)\right]\nonumber\\
    &=\E \left[\Phi(f_Y)
    -\Phi \left(\widetilde f_Y\right)
    -\Phi'\left(\widetilde f_Y\right)\left(f_Y-\widetilde f_Y\right)
    \right]
    +\E \left[\Phi'\left(\widetilde f_Y\right)\left(f_Y-\widetilde f_Y\right)\right]\nonumber\\
    &\leq
    \HPhi(f|X)
    +\E \left[\Phi'\left(\widetilde f_Y\right)\left(f_Y-\widetilde f_Y\right)\right]\nonumber\\
    &=
    \HPhi(f|X)
    +\E\left[
        \left(\Phi'\left(\widetilde f_Y\right)-\Phi'(m)\right)\left(f_Y-\widetilde f_Y\right)
    \right]
    \label{eq:scalar-Phi-divergence-linear-term}
\end{align}
where the third line follows from \eqref{eq:scaler-quadratic-Phi} and the last line follows from $\E \left[\Phi'(m)\left(f_Y-\widetilde f_Y\right)\right]=0$. 

It remains to bound the
last term on the right side of \eqref{eq:scalar-Phi-divergence-linear-term}. 

By Jensen's inequality and Lemma~\ref{lem:Phi-bd-quadratic}, we have 
\begin{align}
    \E\left[Q_{m}(\widetilde f_Y-m)\right]
    &\leq 
    \E\left[Q_{m}(f_X-m)\right]
    \nonumber\\
    &\leq
    \frac{2}{\mu}\HPhi(f_X),
    \label{eq:scalar-Phi-divergence-first-quadratic}
\end{align}
and 
\begin{align}
    \E\left[Q_{\widetilde f_Y}\left(f_Y-\widetilde f_Y\right)\right]
    \leq
    \frac{2}{\mu}\HPhi(f|X).
    \label{eq:scalar-Phi-divergence-second-quadratic}
\end{align}

Fix $u,v\in\interior\left(\dom(\Phi)\right)$ and $t\in(0,1)$. We have 
\begin{align*}
    \frac{1}{\Phi''\left((1-t)v+tu\right)}
    &\geq
    \frac{1-t}{\Phi''(v)}
    +\frac{t}{\Phi''(u)}\\
    &\geq
    \frac{2\sqrt{t(1-t)}}
         {\sqrt{\Phi''(v)\Phi''(u)}}.
\end{align*}
where the first line follows from the concavity of $1/\Phi''$ and the second line follows from the AM-GM inequality. Taking the reciprocal, we have 
\begin{align} 
    \label{eq:Phi''-t-v-u}
    \Phi''\left((1-t)v+tu\right)
    \leq
    \frac{\sqrt{\Phi''(v)\Phi''(u)}}
         {2\sqrt{t(1-t)}}.
\end{align}
For every $h\in\R$,
\begin{align}
    \left|
        \left(\Phi'(u)-\Phi'(v)\right)h
    \right|
    &=
    \left|
        \int_0^1
        \Phi''\left((1-t)v+tu\right)(u-v)h\,dt
    \right|\nonumber\\
    &\leq 
    \left|
        \int_0^1
        \frac{\sqrt{\Phi''(v)\Phi''(u)}}
         {2\sqrt{t(1-t)}}(u-v)h\,dt
    \right|\nonumber\\
    &=\sqrt{\Phi''(v)(u-v)^2 \cdot\Phi''(u)h^2}
    \int_0^1\frac{1}{2\sqrt{t(1-t)}}\,dt
    \nonumber\\
    &=
    \frac{\pi}{2}
    \sqrt{Q_v(u-v)Q_u(h)}
    \label{eq:scalar-Phi-divergence-mixed-gradient}
\end{align}
where the first line follows from the fundamental theorem of calculus, the second line follows from \eqref{eq:Phi''-t-v-u} and the last line follows from $\int_0^1\frac{dt}{\sqrt{t(1-t)}}=\pi$. 

Applying \eqref{eq:scalar-Phi-divergence-mixed-gradient}
pointwise with $u=\widetilde f_Y(y)$, $v=m$, and $h=f_Y(y)-\widetilde f_Y(y)$ for each $y$, and then taking expectation, we have 
\begin{align*}
    \left|
        \E\left[
            \left(\Phi'\left(\widetilde f_Y\right)-\Phi'(m)\right)\left(f_Y-\widetilde f_Y\right)
        \right]
    \right|
    &\leq
    \frac{\pi}{2}
    \E\left[
        \sqrt{Q_m\left(\widetilde f_Y-m\right)Q_{\widetilde f_Y}\left(f_Y-\widetilde f_Y\right)}
    \right]\\
    &\leq
    \frac{\pi}{2}
    \sqrt{
        \E\left[Q_m\left(\widetilde f_Y-m\right)\right]
        \E\left[Q_{\widetilde f_Y}\left(f_Y-\widetilde f_Y\right)\right]
    }\\
    &\leq
    \frac{\pi}{\mu}
    \sqrt{\HPhi(f_X)\HPhi(f|X)}
\end{align*}
where the second line follows from the Cauchy--Schwarz inequality, and the last line follows from 
\eqref{eq:scalar-Phi-divergence-first-quadratic}
and
\eqref{eq:scalar-Phi-divergence-second-quadratic}. 
Substituting this into
\eqref{eq:scalar-Phi-divergence-linear-term}
finishes the proof.

For general $f$, consider $f^{(\epsilon)}=(1-\epsilon)f+\epsilon c$ where $c\in \interior(\dom(\Phi))$. 
Then $f^{(\epsilon)}$ takes values in $\interior(\dom(\Phi))$, so the preceding inequalities apply to $f^{(\epsilon)}$. Letting $\epsilon\downarrow0$ and using the continuity of $\Phi$ and the finiteness of the underlying probability space yields the desired inequalities for $f$. This completes the proof. 

\end{proof}

\begin{theorem}
    \label{thm:scalar-DPhi-ineq-delta}
    Suppose $\Phi\in\mathscr{F}$. 
    Let $(X,Y)$ have distribution $p_{XY}$ on the finite set
    $\cX\times\cY$, and denote $\mu\defeq \min\{p_{XY}(x,y):p_{XY}(x,y)>0\}$. 
    For any function $f:\cX\times\cY\to\dom(\Phi)$, write 
    \[
        f_X\defeq\E[f|X],
        \qquad
        f_Y\defeq\E[f|Y],
        \qquad
        \widetilde f_Y\defeq\E[f_X|Y].
    \]
    Then for each $\delta>0$, we have 
    \begin{align*}
        \HPhi(f_Y)
        &\leq
        \HPhi(\widetilde f_Y)
        +\delta \HPhi(f_X)
        +K_{\delta} 
        \HPhi(f|X)
    \end{align*}
    where $K_\delta\defeq 1+\frac{\pi^2}{4\delta\mu^2}$. 
\end{theorem}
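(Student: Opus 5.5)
This follows directly from Theorem~\ref{thm:scalar-DPhi-ineq} by an AM--GM splitting of the cross term. Start from the conclusion of that theorem,
\[
    \HPhi(f_Y)\leq \HPhi(\widetilde f_Y)+\HPhi(f|X)+\frac{\pi}{\mu}\sqrt{\HPhi(f_X)\HPhi(f|X)}.
\]
Both $\Phi$-entropies under the square root are nonnegative by Jensen's inequality, so the square root is well defined.

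The only step is to bound the mixed term. For $a,b\geq 0$ and $\delta>0$, the inequality $2\sqrt{ab}\leq \delta a+ b/\delta$ holds, since it is equivalent to $(\sqrt{\delta a}-\sqrt{b/\delta})^2\geq 0$. Take
\[
    a=\HPhi(f_X), \qquad b=\frac{\pi^2}{4\mu^2}\HPhi(f|X).
\]
Then
\[
    \frac{\pi}{\mu}\sqrt{\HPhi(f_X)\HPhi(f|X)}=2\sqrt{ab}\leq \delta\HPhi(f_X)+\frac{\pi^2}{4\delta\mu^2}\HPhi(f|X).
\]

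Substituting this into the inequality from Theorem~\ref{thm:scalar-DPhi-ineq} gives a coefficient of $1+\frac{\pi^2}{4\delta\mu^2}=K_\delta$ on $\HPhi(f|X)$. This is exactly the claimed bound.

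The reduction to $f$ with values in $\interior(\dom(\Phi))$ and the degenerate case $\mu=1$ were already handled inside Theorem~\ref{thm:scalar-DPhi-ineq}. No extra limiting argument is needed, because the AM--GM step is purely pointwise in the three scalar quantities. There is no real obstacle here; the only thing to check is the choice of constants in the AM--GM split so that they reproduce $K_\delta$ exactly.
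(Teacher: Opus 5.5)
Your proposal is correct and matches the paper's proof. Both start from Theorem~\ref{thm:scalar-DPhi-ineq} and split the cross term $\frac{\pi}{\mu}\sqrt{\HPhi(f_X)\HPhi(f|X)}$ by Young's (AM--GM) inequality into $\delta\HPhi(f_X)+\frac{\pi^2}{4\delta\mu^2}\HPhi(f|X)$, which gives exactly $K_\delta$; your explicit choice of $a,b$ simply makes the constant bookkeeping visible.
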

\begin{proof}
    By Theorem~\ref{thm:scalar-DPhi-ineq}, we have 
    \begin{align*}
        \HPhi(f_Y)
        &\leq
        \HPhi(\widetilde f_Y)+\HPhi(f|X)
        +\frac{\pi}{\mu}
        \sqrt{\HPhi(f_X)\HPhi(f|X)}.
    \end{align*}
    By Young's inequality, we have 
    \begin{align*}
        \frac{\pi}{\mu}
        \sqrt{\HPhi(f_X)\HPhi(f|X)}
        \leq \delta \HPhi(f_X) +\frac{\pi^2}{4\delta\mu^2}\HPhi(f|X).
    \end{align*}
    Substituting this bound into the preceding inequality gives the claimed inequality with \(K_\delta=1+\pi^2/(4\delta\mu^2)\). 
\end{proof}

\begin{corollary}
    \label{cor:sqrt-DPhi-bd}
    Suppose $\Phi\in\mathscr{F}$ is defined on $\R_+$. 
    Let $p_{XY}$ and $q_{XY}$ be distributions on $\cX\times \cY$ such that $q_{XY}\ll p_{XY}$. Denote $\mu\defeq \min\{p_{XY}(x,y):p_{XY}(x,y)>0\}$. Then 
    \begin{align*}
        \DPhi(q_Y\|p_Y)
        &\leq
        \DPhi(\widetilde q_Y\|p_Y)
        +\DPhi(q_{XY}\|p_{XY})
        -\DPhi(q_X\|p_X)\\
        &\quad 
        +\frac{\pi}{\mu}
        \sqrt{\DPhi(q_X\|p_X)\big(\DPhi(q_{XY}\|p_{XY})
        -\DPhi(q_X\|p_X)\big)}
    \end{align*}
    where $\widetilde q_Y(y)=\sum_{x}q(x)p(y|x)$ for each $y$. 
\end{corollary}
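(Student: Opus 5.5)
The plan is to specialize Theorem~\ref{thm:scalar-DPhi-ineq} to the likelihood ratio $f(x,y)\defeq q_{XY}(x,y)/p_{XY}(x,y)$ on $\supp(p_{XY})$. Since $q_{XY}\ll p_{XY}$, $f$ is well defined and nonnegative, so it takes values in $\R_+=\dom(\Phi)$. All expectations are taken under $p_{XY}$, and $\mu$ is the same constant as in the theorem. The work then reduces to identifying each $\Phi$-entropy in the theorem as a $\Phi$-divergence.

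First I compute the three conditional expectations. We have
\[
f_X(x)=\sum_y p(y|x)\frac{q(x,y)}{p(x,y)}=\frac{q(x)}{p(x)},
\qquad
f_Y(y)=\frac{q(y)}{p(y)}.
\]
For the third,
\[
\widetilde f_Y(y)=\sum_x p(x|y)\frac{q(x)}{p(x)}=\frac{1}{p(y)}\sum_x q(x)p(y|x)=\frac{\widetilde q_Y(y)}{p(y)}.
\]
Note that $\widetilde q_Y$ is a probability distribution with $\widetilde q_Y\ll p_Y$.

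Next I use the identity $\DPhi(q\|p)=\HPhi(q/p)$, which gives
\[
\HPhi(f)=\DPhi(q_{XY}\|p_{XY}),\quad
\HPhi(f_X)=\DPhi(q_X\|p_X),\quad
\HPhi(f_Y)=\DPhi(q_Y\|p_Y),\quad
\HPhi(\widetilde f_Y)=\DPhi(\widetilde q_Y\|p_Y).
\]
Each of these entropies depends only on the marginal law of its argument, so computing them under $p_{XY}$ agrees with computing them under $p_X$ or $p_Y$. By the chain rule \eqref{Phi-ent-CR},
\[
\HPhi(f|X)=\HPhi(f)-\HPhi(f_X)=\DPhi(q_{XY}\|p_{XY})-\DPhi(q_X\|p_X).
\]
Substituting these expressions into Theorem~\ref{thm:scalar-DPhi-ineq} yields exactly the claimed inequality.

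The only point needing care is the support convention. Outside $\supp(p_{XY})$ the ratio is undefined, but those points carry no $p$-mass, so I define $f$ arbitrarily there (say $f=0$). Similarly, conditional expectations are only needed on $\supp(p_X)$ and $\supp(p_Y)$. The theorem already permits boundary values such as $0$ through its $\epsilon$-approximation step, so no further argument should be needed.
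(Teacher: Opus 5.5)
Your proposal is correct and follows the paper's proof: substitute the likelihood ratio $f=q_{XY}/p_{XY}$ into Theorem~\ref{thm:scalar-DPhi-ineq}, identify each $\Phi$-entropy as the corresponding $\Phi$-divergence, and use the chain rule to rewrite $\HPhi(f|X)$. You also supply the computations of $f_X$, $f_Y$, $\widetilde f_Y$ and the support bookkeeping, which the paper's one-line proof leaves implicit.
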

\begin{proof}
    Consider $f_{XY}(x,y)=\frac{q(x,y)}{p(x,y)}$. Substituting it into Theorem~\ref{thm:scalar-DPhi-ineq} gives the desired inequality. 
\end{proof}
Here $\widetilde q_Y$ can be interpreted as the output distribution obtained by passing \(q_X\) through the channel $p_{Y|X}$.

\begin{corollary}\label{corr6}
    Suppose $\Phi\in\mathscr{F}$ is defined on $\R_+$. 
    For random variables $UXY$, define $\wY$ on the same alphabet as $Y$ by the joint distribution $p_{UX\wY}=p_{UX}p_{Y|X}$. We have 
    \begin{align*}
        \IPhi(U;Y)
        \leq 
        \IPhi(U;\wY)+\IPhi(U;Y|X)+\frac{\pi}{\mu} \sqrt{\IPhi(U;X) \IPhi(U;Y|X)}
    \end{align*}
    where $\mu\defeq \min\{p_{XY}(x,y):p_{XY}(x,y)>0\}$.
\end{corollary}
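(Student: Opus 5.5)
The plan is to condition on $U$. For each fixed $u$ we apply Corollary~\ref{cor:sqrt-DPhi-bd}, and then we average over $p_U$, controlling the square-root term with Cauchy--Schwarz.

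Fix $u$ with $p_U(u)>0$. Set $q_{XY}\defeq p_{XY|U=u}$ and take $p_{XY}$ as the reference law; note $q_{XY}\ll p_{XY}$. Then:
\begin{itemize}
\item $\DPhi(q_Y\|p_Y)=\DPhi(p_{Y|U=u}\|p_Y)$ and $\DPhi(q_X\|p_X)=\DPhi(p_{X|U=u}\|p_X)$.
\item $\DPhi(q_{XY}\|p_{XY})=\DPhi(p_{XY|U=u}\|p_{XY})$.
\item $\widetilde q_Y(y)=\sum_x p(x|u)p(y|x)=p_{\wY|U=u}(y)$, by the construction $p_{UX\wY}=p_{UX}p_{Y|X}$.
\item $\wY$ has marginal $p_Y$, because $\sum_u p(u)\sum_x p(x|u)p(y|x)=p_Y(y)$. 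Hence $\DPhi(\widetilde q_Y\|p_Y)=\DPhi(p_{\wY|U=u}\|p_{\wY})$.
\end{itemize}

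Corollary~\ref{cor:sqrt-DPhi-bd} now gives, for each $u$,
\begin{align*}
a_u\le b_u + (c_u-d_u) + \frac{\pi}{\mu}\sqrt{d_u(c_u-d_u)},
\end{align*}
where
\begin{align*}
a_u&=\DPhi(p_{Y|U=u}\|p_Y), & b_u&=\DPhi(p_{\wY|U=u}\|p_{\wY}),\\
c_u&=\DPhi(p_{XY|U=u}\|p_{XY}), & d_u&=\DPhi(p_{X|U=u}\|p_X).
\end{align*}
The quantity $c_u-d_u$ is nonnegative by the chain rule/data processing: it equals $\HPhi(f|X)$ for $f=q_{XY}/p_{XY}$, so the square root is well defined. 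The constant $\mu$ depends only on $p_{XY}$, so it is the same for every $u$.

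Average over $u\sim p_U$ and use $\IPhi(U;X)=\E_U[\DPhi(p_{X|U}\|p_X)]$ and its analogues. This gives $\E[a_U]=\IPhi(U;Y)$, $\E[b_U]=\IPhi(U;\wY)$, $\E[d_U]=\IPhi(U;X)$ and $\E[c_U]=\IPhi(U;XY)$. By the definition of conditional $\Phi$-mutual information (with the symmetric form $I_\Phi(A;B)=D_\Phi(p_{AB}\|p_Ap_B)$ used so that the roles of the arguments are interchangeable), $\E[c_U-d_U]=\IPhi(U;XY)-\IPhi(U;X)=\IPhi(U;Y|X)$. For the last term, Cauchy--Schwarz gives
\[
\E\sqrt{d_U(c_U-d_U)}\le\sqrt{\E[d_U]\,\E[c_U-d_U]}=\sqrt{\IPhi(U;X)\IPhi(U;Y|X)}.
\]
Combining these yields the claimed inequality.

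The main point needing care is this Cauchy--Schwarz step together with the nonnegativity of $c_u-d_u$. The rest is identification: checking that $\widetilde q_Y$ coincides with the conditional law of $\wY$ given $U=u$, and that $p_{\wY}=p_Y$, so that the reference measures match.
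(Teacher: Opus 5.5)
Your proposal is correct and follows the paper's proof. You apply Corollary~\ref{cor:sqrt-DPhi-bd} with $q_{XY}=p_{XY|U=u}$, identify $\widetilde q_Y=p_{\wY|U=u}$, average over $U$, and bound the averaged square-root term; the paper uses joint concavity of $(a,b)\mapsto\sqrt{ab}$ there, which is equivalent to your Cauchy--Schwarz step, and your explicit checks that $p_{\wY}=p_Y$ and $c_u-d_u\geq 0$ fill in details the paper leaves implicit.
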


\begin{proof}
We use Corollary \ref{cor:sqrt-DPhi-bd}. Consider the joint distribution $p_{UXY}$. 
Fixing $u\in \cU$, set $q_Y(y)=p_{Y|U}(y|u)$. Then $p_{\wY|U=u}=\widetilde q_Y$. Then the above inequality becomes 
\begin{align*}
    \DPhi(p_{Y|U=u}\|p_Y)
    &\leq
    \DPhi(p_{\wY|U=u}\|p_Y)
    +\DPhi(p_{XY|U=u}\|p_{XY})
    -\DPhi(p_{X|U=u}\|p_X)\\
    &\quad 
    +\frac{\pi}{\mu}
    \sqrt{\DPhi(p_{X|U=u}\|p_X)\big(\DPhi(p_{XY|U=u}\|p_{XY})
    -\DPhi(p_{X|U=u}\|p_X)\big)}.
\end{align*}
Taking the expectation over $u$, we obtain 
\begin{align*}
    \IPhi(U;Y)
    &\leq
    \IPhi(U;\wY)+\IPhi(U;Y|X) 
    +\sum_u p(u)\frac{\pi}{\mu}
    \sqrt{\DPhi(p_{X|U=u}\|p_X)\big(\DPhi(p_{XY|U=u}\|p_{XY})
    -\DPhi(p_{X|U=u}\|p_X)\big)}\\
    &\leq \IPhi(U;\wY)+\IPhi(U;Y|X) \\
    &\quad
    +\frac{\pi}{\mu}
    \sqrt{ \left(\sum_u p(u)\DPhi(p_{X|U=u}\|p_X)\right) \left(\sum_u p(u)\big(\DPhi(p_{XY|U=u}\|p_{XY})
    -\DPhi(p_{X|U=u}\|p_X)\big)\right)}\\
    &=\IPhi(U;\wY)+\IPhi(U;Y|X)+\frac{\pi}{\mu} \sqrt{\IPhi(U;X) \IPhi(U;Y|X)}
\end{align*}
where the second inequality follows from the joint concavity of $(a,b)\mapsto \sqrt{ab}$. It finishes the proof. 
\end{proof}

\begin{corollary}
    \label{cor:delta-DPhi-bd}
    Suppose $\Phi\in \mathscr{F}$ is defined on $\R_+$. 
    Let $p_{XY}$ and $q_{XY}$ be distributions on $\cX\times \cY$ such that $q_{XY}\ll p_{XY}$. Denote $\mu\defeq \min\{p_{XY}(x,y):p_{XY}(x,y)>0\}$. Then for any $\delta>0$, 
    \begin{align*}
        \DPhi(q_Y\|p_Y)
        &\leq
        \DPhi(\widetilde q_Y\|p_Y)
        +\delta \DPhi(q_X\|p_X)
        +\left(1+\frac{\pi^2}{4\delta \mu^2}\right) \big(\DPhi(q_{XY}\|p_{XY})
        -\DPhi(q_X\|p_X)\big)
    \end{align*}
    where $\widetilde q_Y(y)=\sum_{x}q(x)p(y|x)$ for each $y$. 

    Consequently, for random variables $UXY$, define $\wY$ by the joint distribution $p_{UX\wY}=p_{UX}p_{Y|X}$. For any $\delta>0$, we have 
    \begin{align}
        \IPhi(U;Y)
        \leq 
        \IPhi(U;\wY)+\delta\IPhi(U;X)+\left(1+\frac{\pi^2}{4\delta \mu^2}\right) \IPhi(U;Y|X).\label{eqnIP1}
    \end{align}
\end{corollary}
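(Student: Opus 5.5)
The divergence inequality is Theorem~\ref{thm:scalar-DPhi-ineq-delta} applied to the likelihood ratio, in the same way Corollary~\ref{cor:sqrt-DPhi-bd} follows from Theorem~\ref{thm:scalar-DPhi-ineq}. Define $f(x,y)\defeq q(x,y)/p(x,y)$ on $\supp(p_{XY})$ and set $f\defeq 0$ elsewhere; since $q_{XY}\ll p_{XY}$, this changes no expectation under $p_{XY}$. Because $\Phi$ is defined on $\R_+$ and $f\geq 0$, the function $f$ takes values in $\dom(\Phi)$. Under $p_{XY}$ we then identify:
\begin{itemize}
\item $f_X(x)=q(x)/p(x)$ and $f_Y(y)=q(y)/p(y)$;
\item $\widetilde f_Y(y)=\sum_x p(x|y)\,q(x)/p(x)=\sum_x q(x)p(y|x)/p(y)=\widetilde q_Y(y)/p(y)$.
\end{itemize}
The relation $\DPhi=\HPhi$ of the likelihood ratio then gives
\[
\HPhi(f_Y)=\DPhi(q_Y\|p_Y),\qquad \HPhi(\widetilde f_Y)=\DPhi(\widetilde q_Y\|p_Y),\qquad \HPhi(f_X)=\DPhi(q_X\|p_X),
\]
and by the chain rule \eqref{Phi-ent-CR},
\[
\HPhi(f|X)=\HPhi(f)-\HPhi(f_X)=\DPhi(q_{XY}\|p_{XY})-\DPhi(q_X\|p_X).
\]
Substituting these four identities into Theorem~\ref{thm:scalar-DPhi-ineq-delta}, with $K_\delta=1+\pi^2/(4\delta\mu^2)$, gives the first claim.

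For \eqref{eqnIP1} we follow the proof of Corollary~\ref{corr6}. Fix $u$ with $p(u)>0$ and apply the divergence inequality with $p_{XY}$ as given and $q_{XY}=p_{XY|U=u}$. Absolute continuity holds because $p_{XY|U=u}(x,y)>0$ forces $p_{XY}(x,y)>0$. With this choice $q_Y=p_{Y|U=u}$ and $q_X=p_{X|U=u}$. By the construction $p_{UX\wY}=p_{UX}p_{Y|X}$, we have $\widetilde q_Y=\sum_x p(x|u)p(y|x)=p_{\wY|U=u}$. We also need $p_{\wY}=p_Y$, which holds since $p_{X\wY}=p_Xp_{Y|X}=p_{XY}$; hence $\DPhi(\widetilde q_Y\|p_Y)=\DPhi(p_{\wY|U=u}\|p_{\wY})$.

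Now average over $u\sim p_U$. Using $\IPhi(U;X)=\E_U[\DPhi(p_{X|U}\|p_X)]$ and the analogous identities for $Y$, $\wY$, and $XY$, the averaged terms become:
\begin{itemize}
\item $\IPhi(U;Y)$ on the left;
\item $\IPhi(U;\wY)$ and $\delta\,\IPhi(U;X)$ for the first two terms on the right;
\item $K_\delta\big(\IPhi(U;XY)-\IPhi(U;X)\big)=K_\delta\,\IPhi(U;Y|X)$ for the last term, by the definition of conditional $\Phi$-mutual information.
\end{itemize}
Unlike Corollary~\ref{corr6}, no concavity step is needed, because every term is linear in the divergences.

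There is no real obstacle. The one point needing care is that $\mu$ depends only on $p_{XY}$ and not on $u$, which is true because $p_{XY}$ is held fixed in every application. The only other bookkeeping is checking that $p_{\wY}=p_Y$, so that the $\wY$-term is a genuine $\Phi$-mutual information.
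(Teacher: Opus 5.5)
Your proposal is correct and is essentially the argument the paper has in mind. It omits the proof, noting only that the corollary follows from Theorem~\ref{thm:scalar-DPhi-ineq-delta} in the same way Corollary~\ref{cor:sqrt-DPhi-bd} follows from Theorem~\ref{thm:scalar-DPhi-ineq}: substitute the likelihood ratio $f=q/p$, then condition on $U=u$ and average over $u$ as in Corollary~\ref{corr6}. You are also right that this last averaging step needs no concavity argument, because the bound is linear in the divergences.
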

Corollary~\ref{cor:delta-DPhi-bd} is a direct consequence of Theorem~\ref{thm:scalar-DPhi-ineq-delta}, 
following the same idea in Corollary~\ref{cor:sqrt-DPhi-bd}. Therefore, we do not repeat the proof here. 

\begin{remark}Consider \eqref{eqnIP1} for Shannon's mutual information. We can think of $\wY$ as a copy of $Y$. What separates \eqref{eqnIP1} from ordinary Shannon-type inequalities is the use of $\mu$ to write an information inequality. 
\end{remark}

\subsection{Proof of Theorem~\ref{thm:equiv-sdpi-ribbon}}
\label{subsec:proof-equiv-scalar}

If $X$ is constant, then $\eta_\Phi(X;Y)=0$ by convention. In this case, $\fR_\Phi(X;Y)=[0,1]^2$. Taking $(\lambda_1,\lambda_2)=(1,1)$, we have $(1-\lambda_1)/\lambda_2=0$. So the identity is immediate. 
Assume henceforth that $X$ is nonconstant. Write
\[
\eta:=\eta_\Phi(X;Y),
\qquad
\alpha:=
\inf_{\substack{
(\lambda_1,\lambda_2)\in\fR_\Phi(X;Y)\\
\lambda_2>0
}}
\frac{1-\lambda_1}{\lambda_2}.
\]

First, let $f=f(X)$ with $\HPhi(f)>0$. For $(\lambda_1,\lambda_2)\in \fR_{\Phi}(X;Y)$ such that $\lambda_2>0$, we have 
\[
\lambda_1\HPhi(f)
+\lambda_2\HPhi\bigl(\E[f| Y]\bigr)
\leq\HPhi(f).
\]
Then 
\begin{align*}
    \frac{\HPhi\bigl(\E[f| Y]\bigr)}{\HPhi(f)}
    \leq \frac{1-\lambda_1}{\lambda_2}. 
\end{align*}
Taking the supremum over $f(X)$ and then the infimum over
ribbon points, we have $\eta\leq\alpha$.

For the reverse direction, for every admissible $f=f(X,Y)$, set
$f_X:=\E[f| X]$, $f_Y=\E[f|Y]$ and $\widetilde f_Y=\E[f_X|Y]$. Fix $\delta>0$ and take $K_\delta=1+\pi^2/(4\delta \mu^2)$. By Theorem~\ref{thm:scalar-DPhi-ineq-delta}, we have 
\begin{align}
    \HPhi\bigl(f_Y\bigr)
    &\leq\HPhi\left(\widetilde f_Y\right)
    +\delta\HPhi(f_X)+K_\delta\HPhi(f| X)\nonumber\\
    &\leq(\eta+\delta)\HPhi(f_X)
    +K_\delta\HPhi(f| X), \label{eq:apply-lemma-f-X-Y}
\end{align}
where the second inequality follows from the SDPI inequality $\HPhi\left(\widetilde f_Y\right)=\HPhi \left(\E[f_X|Y]\right)\leq \eta \HPhi(f_X)$.

Choose an integer $n>\max\{1,\eta+\delta,K_\delta\}$. Dividing both sides of \eqref{eq:apply-lemma-f-X-Y} by $n$, and adding $(1-(\eta+\delta)/n)\HPhi(f_X)$ to both sides, we obtain 
\begin{align*}
    \left(1-\frac{\eta+\delta}{n}\right)
    \HPhi\bigl(f_X\bigr)
    +\frac1n\HPhi\bigl(f_Y\bigr)
    &\leq\HPhi(f_X)
    +\frac{K_\delta}{n}\HPhi(f| X)\\
    &\leq\HPhi(f_X)+\HPhi(f| X)\\
    &=\HPhi(f),
\end{align*}
where the second inequality follows from $n>K_{\delta}$ and the last equality follows from the chain rule. 
Hence, 
\[
\left(1-\frac{\eta+\delta}{n},\frac1n\right)
\in\fR_\Phi(X;Y),
\]
and then by the definition of $\alpha$, we obtain $\alpha\leq\eta+\delta$. Since $\delta>0$ is arbitrary, $\alpha\leq\eta$. Together with the previously established inequality $\eta\leq \alpha$, it proves the identity.

\section{Matrix \texorpdfstring{$\Phi$}{Phi}-ribbon} \label{sec:matrix-Phi}

\subsection{Preliminaries on matrix functions}

We briefly recall the matrix-function notation used throughout the
paper. For further background, we refer the readers to
\cite{bhatia2013matrix,hiai2014introduction}.

Let $\Phi:\cD\to\R$, where $\cD\subseteq\R$ is an interval and let $d\geq 1$ be an integer. Every $\mt A\in \sa^d$ admits a spectral decomposition $\mt A=\mt U\mt\Lambda\mt U^*$ where $\mt\Lambda=\diag(\lambda_1,\ldots,\lambda_d)$ with $\lambda_i\in \R$ for each $i\in [d]$ 
and $\mt U$ is a unitary matrix. Assume $\lambda_i\in\cD$ for each $i$. Define
\[
\Phi(\mt A)
\defeq
\mt U\Phi(\mt\Lambda)\mt U^*,
\]
where
\[
\Phi(\mt\Lambda)
\defeq
\diag\bigl(\Phi(\lambda_1),\ldots,\Phi(\lambda_d)\bigr).
\]

The map $\Phi$ is called Fr\'echet differentiable at 
$\mt A$ if there exists a linear transformation $\D \Phi(\mt A)$ on $\sa^d$ such that for all $\mt H\in \sa^d$, 
\begin{align*}
    \|\Phi(\mt A+\mt H)-\Phi(\mt A)-\D \Phi(\mt A)[\mt H]\|
    =o(\|\mt H\|),
\end{align*}
where $\|\cdot\|$ is a matrix norm. If it exists, the mapping $\D \Phi(\mt A)[\cdot]$ is called the \emph{Fr\'echet derivative} of $\Phi$ at $\mt A$. Similarly, we write
$\D^2\Phi(\mt A)[\cdot,\cdot]$ for the second-order Fr\'echet
derivative.

For a differentiable function $f:\R\to \R$, define its \emph{first divided
difference} by 
\[
f^{[1]}(\lambda,\mu)
\defeq
\begin{cases}
\dfrac{f(\lambda)-f(\mu)}{\lambda-\mu},
& \lambda\neq\mu,\\[1.2ex]
f'(\lambda),
& \lambda=\mu.
\end{cases}
\]
For $\mt\Lambda=\diag(\lambda_1,\ldots,\lambda_d)$, let
$f^{[1]}(\mt\Lambda)$ denote the matrix whose $(i,j)$-th entry is
$f^{[1]}(\lambda_i,\lambda_j)$.

\begin{fact}[\cite{bhatia2013matrix}]
    \label{fact:Daleckii--Krein formula}
    Suppose $\Phi$ is continuously differentiable. 
Given the spectral decomposition $\mt A=\mt U\mt\Lambda\mt U^*$, 
we have 
\[
\D\Phi(\mt A)[\mt H]
=
\mt U
\left(
\Phi^{[1]}(\mt\Lambda)
\circ
(\mt U^*\mt H\mt U)
\right)
\mt U^*,
\]
where $\circ$ denotes the Schur product (elementwise product).
\end{fact}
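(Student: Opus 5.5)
The plan is to prove the formula first for polynomials by direct computation, and then pass to a general continuously differentiable $\Phi$ by approximating it in $C^1$ on a compact interval. Throughout we work with the Frobenius norm $\|\cdot\|_F$. This is harmless because all norms on $\sa^d$ are equivalent, so the notion of Fr\'echet derivative does not depend on the choice.

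\emph{Step 1 (polynomials).} Take $f(t)=t^k$. Expanding $(\mt A+\mt H)^k$ and keeping the terms linear in $\mt H$ gives
\[
\D f(\mt A)[\mt H]=\sum_{j=0}^{k-1}\mt A^j\mt H\mt A^{k-1-j}.
\]
The remaining terms are $O(\|\mt H\|^2)$, so this is indeed the Fr\'echet derivative. Conjugate by $\mt U$ and write $\mt G=\mt U^*\mt H\mt U$. The $(i,l)$ entry becomes
\[
\sum_j\lambda_i^j\lambda_l^{k-1-j}\,\mt G_{il}=f^{[1]}(\lambda_i,\lambda_l)\,\mt G_{il}.
\]
This holds both when $\lambda_i\neq\lambda_l$ (geometric sum) and when $\lambda_i=\lambda_l$ (value $k\lambda_i^{k-1}$). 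Both sides are linear in $f$, so the formula holds for every polynomial $p$.

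\emph{Step 2 (a uniform Lipschitz bound).} For any $C^1$ function $g$ on an interval $J$ and any $\mt A,\mt B\in\sa^d$ with spectra in $J$, I will show
\[
\|g(\mt A)-g(\mt B)\|_F\le\sup_J|g'|\,\|\mt A-\mt B\|_F .
\]
Write $\mt A=\mt U\mt\Lambda\mt U^*$ and $\mt B=\mt V\mt M\mt V^*$. A direct check gives
\[
\mt U^*\bigl(g(\mt A)-g(\mt B)\bigr)\mt V=g^{[1]}(\lambda_i,\mu_l)\circ\bigl(\mt U^*(\mt A-\mt B)\mt V\bigr),
\]
using $g(\lambda_i)-g(\mu_l)=g^{[1]}(\lambda_i,\mu_l)(\lambda_i-\mu_l)$. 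By the mean value theorem, $|g^{[1]}|\le\sup_J|g'|$. Schur multiplication by a matrix whose entries are bounded by $c$ contracts the Frobenius norm by at most $c$, and unitary invariance of $\|\cdot\|_F$ then gives the bound.

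\emph{Step 3 (approximation).} Fix a compact interval $J\subseteq\dom(\Phi)$ whose interior contains the spectrum of $\mt A$. For small $\mt H$, Weyl's inequality keeps the spectrum of $\mt A+\mt H$ inside $J$. By Weierstrass, choose polynomials $q_n\to\Phi'$ uniformly on $J$. Integrate them to obtain polynomials $p_n$ with $p_n\to\Phi$ and $p_n'\to\Phi'$ uniformly on $J$.

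From the representation $p^{[1]}(a,b)=\int_0^1p'(a+t(b-a))\,dt$ we get $p_n^{[1]}\to\Phi^{[1]}$ uniformly on $J\times J$. Let $L[\mt H]$ denote the right-hand side of the claimed formula, and $L_n$ the same expression with $p_n$ in place of $\Phi$. Then $\|L-L_n\|\to0$ as operators.

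Split the error as
\begin{align*}
\Phi(\mt A+\mt H)-\Phi(\mt A)-L[\mt H]
&=\bigl[(\Phi-p_n)(\mt A+\mt H)-(\Phi-p_n)(\mt A)\bigr]\\
&\quad+\bigl[p_n(\mt A+\mt H)-p_n(\mt A)-L_n[\mt H]\bigr]+(L_n-L)[\mt H].
\end{align*}
By Step 2 with $g=\Phi-p_n$, the first bracket has norm at most $\sup_J|\Phi'-p_n'|\,\|\mt H\|_F$. The third term has norm at most $\|L_n-L\|\,\|\mt H\|_F$. Given $\varepsilon>0$, fix $n$ so that both coefficients are below $\varepsilon$. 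Step 1 then makes the middle bracket at most $\varepsilon\|\mt H\|_F$ for all sufficiently small $\mt H$. Hence the total error is at most $3\varepsilon\|\mt H\|_F$, which is the required $o(\|\mt H\|)$ estimate.

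The main obstacle is Step 2. Scalar Lipschitz bounds do not in general transfer to matrix functions in operator norm, and the argument here relies on the Frobenius norm together with the two-sided divided-difference (Schur product) identity. I would therefore verify that identity carefully, entrywise, before relying on it.
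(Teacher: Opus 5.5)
Your proof is correct and complete. The paper does not prove this statement; it quotes it as a known fact from Bhatia. The meaningful comparison is therefore with the standard textbook proof, and yours follows the same skeleton:
\begin{itemize}
\item verify the formula for monomials;
\item choose polynomials $p_n$ with $p_n\to\Phi$ and $p_n'\to\Phi'$ uniformly on a compact interval $J$;
\item split the remainder $\Phi(\mt A+\mt H)-\Phi(\mt A)-L[\mt H]$ into the same three terms and close with an $\varepsilon/3$ estimate.
\end{itemize}

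The difference is in how you control $(\Phi-p_n)(\mt A+\mt H)-(\Phi-p_n)(\mt A)$. The usual argument cannot differentiate $\Phi-p_n$ as a matrix function, since that is what is being proved. Instead it applies the mean value inequality along the segment from $\mt A$ to $\mt A+\mt H$ to the polynomial differences $p_m-p_n$. At every point of that segment it uses the already-established polynomial formula, together with the entrywise bound on Schur multipliers in Frobenius norm, and then lets $m\to\infty$.

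You bypass this limiting step with the two-sided identity $\mt U^*\bigl(g(\mt A)-g(\mt B)\bigr)\mt V=\bigl[g^{[1]}(\lambda_i,\mu_l)\bigr]\circ\mt U^*(\mt A-\mt B)\mt V$. It gives a Hilbert--Schmidt Lipschitz bound with constant $\sup_J|g'|$ for any $C^1$ function $g$ in one stroke, with no a priori differentiability needed. Your identity check is sound: when $\lambda_i=\mu_l$ both sides vanish.

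Each route has its advantage:
\begin{itemize}
\item Yours is slightly more self-contained and works only at the point $\mt A$. As a by-product it shows that $\mt A\mapsto\Phi(\mt A)$ is Frobenius-Lipschitz with constant $\sup_J|\Phi'|$.
\item The textbook route reuses only the polynomial case and a general calculus tool (the mean value inequality for Fr\'echet-differentiable maps).
\end{itemize}
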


\begin{lemma} \label{lem:sa-psd-DPhi'}
    Let $d\geq 1$, and $\Phi:\mathcal I\to\mathbb R$ be convex and twice continuously differentiable on an interval $\mathcal I$.
    If $\mt A\in\PSD^d$ has spectrum contained in $\mathcal I$, then 
    \begin{align} \label{DPhi'-sa}
        \nortr \left(\mt B \D\Phi'(\mt A)[\mt C]\right)=\nortr \left(\D\Phi'(\mt A)[\mt B] \mt C\right),\qquad
        \forall \mt B,\mt C\in \sa^d,
    \end{align}
    and 
    \begin{align}\label{DPhi'-pos}
        \nortr\!\left(
        \mt H\,\D\Phi'(\mt A)[\mt H]
        \right)\geq0,
        \qquad
        \forall\,\mt H\in\sa^d.
    \end{align}
    Moreover, $\mt H\mapsto \nortr\!\left(
        \mt H\,\D\Phi'(\mt A)[\mt H]
    \right)$ is convex. 
\end{lemma}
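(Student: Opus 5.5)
The plan is to work entirely in the eigenbasis of $\mt A$ via the Daleckii--Krein formula (Fact~\ref{fact:Daleckii--Krein formula}) applied to $\Phi'$. Since $\Phi$ is $C^2$, $\Phi'$ is $C^1$. Writing $\mt A=\mt U\mt\Lambda\mt U^*$ with $\mt\Lambda=\diag(\lambda_1,\ldots,\lambda_d)$, $\lambda_i\in\mathcal I$, we get
\[
\D\Phi'(\mt A)[\mt C]=\mt U\left(\Gamma\circ(\mt U^*\mt C\mt U)\right)\mt U^*,
\qquad \Gamma_{ij}\defeq(\Phi')^{[1]}(\lambda_i,\lambda_j).
\]
Set $\widehat{\mt B}=\mt U^*\mt B\mt U$ and $\widehat{\mt C}=\mt U^*\mt C\mt U$, which are again self-adjoint. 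Unitary invariance of the trace then gives
\[
\tr\left(\mt B\,\D\Phi'(\mt A)[\mt C]\right)=\sum_{i,j}\widehat B_{ji}\,\Gamma_{ij}\,\widehat C_{ij}.
\]

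For \eqref{DPhi'-sa}, the analogous expression for the right-hand side is $\sum_{i,j}\Gamma_{ji}\widehat B_{ji}\widehat C_{ij}$. Since the divided difference is symmetric in its two arguments, $\Gamma_{ij}=\Gamma_{ji}$, so the two sums coincide. Dividing by $d$ gives the claim.

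For \eqref{DPhi'-pos}, take $\mt B=\mt C=\mt H$. Because $\widehat{\mt H}$ is self-adjoint, $\widehat H_{ji}\widehat H_{ij}=|\widehat H_{ij}|^2$, so
\[
\nortr\left(\mt H\,\D\Phi'(\mt A)[\mt H]\right)=\frac1d\sum_{i,j}\Gamma_{ij}|\widehat H_{ij}|^2 .
\]
Convexity of $\Phi$ makes $\Phi'$ nondecreasing. Hence $\Gamma_{ij}=(\Phi'(\lambda_i)-\Phi'(\lambda_j))/(\lambda_i-\lambda_j)\ge0$ when $\lambda_i\neq\lambda_j$, and $\Gamma_{ii}=\Phi''(\lambda_i)\ge0$ otherwise. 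Every term is nonnegative, which proves \eqref{DPhi'-pos}.

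For convexity, note that $\mt H\mapsto \nortr(\mt H\,\D\Phi'(\mt A)[\mt H])$ is the quadratic form $q(\mt H)=\langle \mt H,\mathcal L\mt H\rangle$ of the linear map $\mathcal L=\D\Phi'(\mt A)$ on the real inner-product space $(\sa^d,\langle\mt B,\mt C\rangle=\nortr(\mt B\mt C))$. By \eqref{DPhi'-sa}, $\mathcal L$ is self-adjoint, so expanding $q(t\mt H_1+(1-t)\mt H_2)$ gives
\[
tq(\mt H_1)+(1-t)q(\mt H_2)-q\left(t\mt H_1+(1-t)\mt H_2\right)=t(1-t)\,q(\mt H_1-\mt H_2)\ge0,
\]
where the final inequality is \eqref{DPhi'-pos}. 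Alternatively, convexity follows directly from the sum formula, since each term $|\widehat H_{ij}|^2$ is a convex function of the linear map $\mt H\mapsto\widehat H_{ij}$ and carries a nonnegative weight.

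There is no real obstacle. The only points needing care are that Daleckii--Krein is applied to $\Phi'$ rather than $\Phi$ (which requires the $C^2$ hypothesis), that the diagonal divided differences equal $\Phi''\ge0$, and that eigenvalues of $\mt A$ on the boundary of $\mathcal I$ are handled by one-sided derivatives, which the $C^2$-on-$\mathcal I$ assumption allows.
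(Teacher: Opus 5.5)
Your proposal is correct and follows the paper's proof essentially step for step. Like the paper, you apply the Daleckii--Krein formula to $\Phi'$ in the eigenbasis of $\mt A$, use the symmetry of $\Phi'^{[1]}$ for the trace identity and the monotonicity of $\Phi'$ for nonnegativity, and obtain convexity from $t\,q(\mt H_1)+(1-t)\,q(\mt H_2)-q(t\mt H_1+(1-t)\mt H_2)=t(1-t)\,q(\mt H_1-\mt H_2)$, which relies on the symmetry identity.
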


\begin{proof}
Property~\eqref{DPhi'-sa} follows from the symmetry
\[
\Phi'^{[1]}(s,t)=\Phi'^{[1]}(t,s).
\]
Indeed, let
\[
\mt A=\mt U\mt\Lambda\mt U^*,
\qquad
\mt\Lambda=\operatorname{diag}(\lambda_1,\ldots,\lambda_d),
\]
and denote
\[
\widetilde{\mt B}:=\mt U^*\mt B\mt U,
\qquad
\widetilde{\mt C}:=\mt U^*\mt C\mt U.
\]
By Fact~\ref{fact:Daleckii--Krein formula},
\[
\D\Phi'(\mt A)[\mt C]
=
\mt U
\left(
\Phi'^{[1]}(\mt\Lambda)
\circ
\widetilde{\mt C}
\right)
\mt U^*.
\]
Hence
\begin{align*}
\nortr\left(
\mt B\D\Phi'(\mt A)[\mt C]
\right)
&=
\frac{1}{d}\sum_{i,j}
\Phi'^{[1]}(\lambda_i,\lambda_j)
\widetilde B_{ji}\widetilde C_{ij}\\
&=
\frac{1}{d}\sum_{i,j}
\Phi'^{[1]}(\lambda_i,\lambda_j)
\widetilde B_{ij}\widetilde C_{ji}\\
&=
\nortr\left(
\D\Phi'(\mt A)[\mt B]\mt C
\right),
\end{align*}
where in the second equality we interchange \(i\) and \(j\) and use
\(\Phi'^{[1]}(\lambda_i,\lambda_j)
=
\Phi'^{[1]}(\lambda_j,\lambda_i)\).

For Property~\eqref{DPhi'-pos}, let \(\mt H\in\sa^d\) and denote 
\[
\widetilde{\mt H}:=\mt U^*\mt H\mt U.
\]
Again by Fact~\ref{fact:Daleckii--Krein formula},
\[
\D\Phi'(\mt A)[\mt H]
=
\mt U
\left(
\Phi'^{[1]}(\mt\Lambda)
\circ
\widetilde{\mt H}
\right)
\mt U^*.
\]
Therefore,
\begin{align*}
\nortr\left(
\mt H\D\Phi'(\mt A)[\mt H]
\right)
&=
\nortr\left(
\widetilde{\mt H}
\left(
\Phi'^{[1]}(\mt\Lambda)
\circ
\widetilde{\mt H}
\right)
\right)\\
&=
\frac{1}{d}\sum_{i,j}
\Phi'^{[1]}(\lambda_i,\lambda_j)
\left|\widetilde H_{ij}\right|^2.
\end{align*}
Since \(\Phi\) is convex, \(\Phi'\) is nondecreasing, and hence
\[
\Phi'^{[1]}(s,t)\geq 0,
\qquad
s,t\in\mathcal I.
\]
Thus,
\[
\nortr\left(
\mt H\D\Phi'(\mt A)[\mt H]
\right)\geq 0,
\qquad
\forall \mt H\in\sa^d.
\]

Finally, define
\[
Q(\mt H)
:=
\nortr\left(
\mt H\D\Phi'(\mt A)[\mt H]
\right).
\]
For any \(\mt H_0,\mt H_1\in\sa^d\) and \(\theta\in[0,1]\), by the
linearity of \(\D\Phi'(\mt A)\) and Property~\eqref{DPhi'-sa}, we have 
\begin{align*}
&(1-\theta)Q(\mt H_0)
+\theta Q(\mt H_1)
-Q\left((1-\theta)\mt H_0+\theta\mt H_1\right)\\
=&
\theta(1-\theta)
\nortr\left(
(\mt H_1-\mt H_0)
\D\Phi'(\mt A)[\mt H_1-\mt H_0]
\right)\\
\geq& 0,
\end{align*}
where the last inequality follows from Property~\eqref{DPhi'-pos}. Therefore, $\mt H
\mapsto
\nortr\left(
\mt H\D\Phi'(\mt A)[\mt H]
\right)$
is convex.
\end{proof}

\begin{lemma}[Theorem 3.23 \cite{hiai2014introduction}]
    
\label{lem:trace-spectral-derivatives}
Let $\Phi$ be twice continuously differentiable on an open interval. For $g(t)\defeq\tr\Phi(\mt A+t\mt B)$ where $\mt A,\mt B\in\sa^d$, 
we have
\[
g'(t)
=
\tr\left(
\Phi'(\mt A+t\mt B)\mt B
\right).
\]
If $\mt A+t\mt B
=
\mt U\diag(\lambda_1,\ldots,\lambda_d)\mt U^*$, 
then
\[
g''(t)
=
\tr\left(
\D\Phi'(\mt A+t\mt B)[\mt B]\mt B
\right)
=
\sum_{i,j=1}^d
\Phi'^{[1]}(\lambda_i,\lambda_j)
\left|
(\mt U^*\mt B\mt U)_{ij}
\right|^2.
\]
\end{lemma}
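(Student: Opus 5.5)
The plan is a direct computation: write $\Phi(\mt A+t\mt B)$ through the spectral decomposition of $\mt A+t\mt B$ and differentiate under the trace. Set $\mt A_t\defeq\mt A+t\mt B$.

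\textbf{First derivative.} The map $s\mapsto\mt A_{s}$ is affine, so the chain rule for Fr\'echet derivatives gives
\[
g'(t)=\tr\bigl(\D\Phi(\mt A_t)[\mt B]\bigr).
\]
Write $\mt A_t=\mt U\mt\Lambda\mt U^*$ with $\mt\Lambda=\diag(\lambda_1,\ldots,\lambda_d)$. By Fact~\ref{fact:Daleckii--Krein formula}, and because the trace is unitarily invariant, only the diagonal entries of $\Phi^{[1]}(\mt\Lambda)\circ(\mt U^*\mt B\mt U)$ contribute. Those entries are $\Phi^{[1]}(\lambda_i,\lambda_i)=\Phi'(\lambda_i)$, so
\[
g'(t)=\sum_i\Phi'(\lambda_i)(\mt U^*\mt B\mt U)_{ii}=\tr\bigl(\mt U\Phi'(\mt\Lambda)\mt U^*\mt B\bigr)=\tr\bigl(\Phi'(\mt A_t)\mt B\bigr).
\]
For this step $\Phi\in C^1$ suffices, because Fact~\ref{fact:Daleckii--Krein formula} already guarantees the Fr\'echet differentiability we need.

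\textbf{Second derivative.} Since $\Phi'$ is $C^1$, Fact~\ref{fact:Daleckii--Krein formula} applies to $\Phi'$ as well. Differentiating $g'(t)=\tr(\Phi'(\mt A_t)\mt B)$ in $t$ then gives
\[
g''(t)=\tr\bigl(\D\Phi'(\mt A_t)[\mt B]\,\mt B\bigr).
\]
Substitute $\D\Phi'(\mt A_t)[\mt B]=\mt U(\Phi'^{[1]}(\mt\Lambda)\circ\widetilde{\mt B})\mt U^*$ with $\widetilde{\mt B}=\mt U^*\mt B\mt U$. Since $\widetilde{\mt B}$ is Hermitian, $\widetilde B_{ji}=\overline{\widetilde B_{ij}}$, and expanding the trace of the product yields
\[
g''(t)=\sum_{i,j}\Phi'^{[1]}(\lambda_i,\lambda_j)\,\widetilde B_{ij}\widetilde B_{ji}=\sum_{i,j}\Phi'^{[1]}(\lambda_i,\lambda_j)\,|\widetilde B_{ij}|^2.
\]
This is exactly the index computation already done in the proof of Lemma~\ref{lem:sa-psd-DPhi'}.

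\textbf{Main obstacle.} The delicate point is justifying that the matrix function $\mt A\mapsto\Phi(\mt A)$ is Fr\'echet differentiable, uniformly enough to apply the chain rule. This matters especially when eigenvalues of $\mt A_t$ cross as $t$ varies, since the eigenvector matrix $\mt U$ need not then depend smoothly on $t$. The argument above avoids differentiating $\mt U$: it only uses Fact~\ref{fact:Daleckii--Krein formula}, which holds at every $\mt A$ with spectrum in the open interval, whether or not eigenvalues are repeated. If a self-contained justification were needed, I would first prove the formulas for polynomials $\Phi$ by direct expansion. I would then pass to general $C^2$ functions by approximating $\Phi$, $\Phi'$ and $\Phi''$ uniformly on a compact interval containing the spectra of $\mt A_s$ for $s$ near $t$. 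The divided differences $\Phi'^{[1]}$ converge uniformly under this approximation, so the derivative identities pass to the limit.
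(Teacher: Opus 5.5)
\textbf{Verdict: your proof is correct.} The paper gives no proof of this lemma; it simply cites Hiai--Petz. So the comparison here is between a citation and your derivation.

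\textbf{What your route does.} You obtain both identities as pointwise consequences of Fact~\ref{fact:Daleckii--Krein formula}, which the paper itself imports from Bhatia.
\begin{enumerate}[(1)]
\item Apply the fact to $\Phi$ at $\mt A_t$. Unitary invariance of the trace and $\Phi^{[1]}(\lambda,\lambda)=\Phi'(\lambda)$ give $g'(s)=\tr(\Phi'(\mt A_s)\mt B)$ for every $s$ in the open set where the spectrum of $\mt A_s$ stays in the interval.
\item Apply the fact once more to $\Phi'\in C^1$. The chain rule at the single point $t$ then gives the second identity.
\item The remaining Hermitian index computation is the same one used in Lemma~\ref{lem:sa-psd-DPhi'}.
\end{enumerate}
Everything is evaluated at one point, using whatever eigendecomposition $\mt A_t$ has. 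So eigenvalue crossings cause no trouble, and you need only Fr\'echet differentiability at $\mt A_t$, not any uniformity.

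\textbf{What each approach buys.} Your route makes the lemma a short corollary of a tool the paper already relies on. It also shows explicitly that $C^1$ suffices for the first-derivative formula. It does not remove the analytic content, which lives entirely inside Fact~\ref{fact:Daleckii--Krein formula}: differentiability of the matrix function at matrices with repeated eigenvalues. The citation, by contrast, simply defers all of this to the textbook.

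\textbf{Your fallback.} The polynomial-approximation sketch supplies that analytic content correctly.
\begin{enumerate}[(1)]
\item If $p_n''\to\Phi''$ uniformly on a compact interval, then $p_n'^{[1]}\to\Phi'^{[1]}$ uniformly, because $\Phi'^{[1]}(a,b)=\int_0^1\Phi''(a+\theta(b-a))\,d\theta$.
\item The quantity $\sum_{i,j}|(\mt U^*\mt B\mt U)_{ij}|^2=\tr(\mt B^2)$ does not depend on $t$ or on $\mt U$. Hence the second derivatives of $\tr p_n(\mt A_t)$ converge uniformly in $t$.
\item The standard theorem on uniform convergence of derivatives then finishes the argument.
\end{enumerate}
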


\begin{corollary} \label{cor:integral-HPhi}
    Suppose $\Phi:\cI\to \R$ is twice continuously differentiable for $\cI\subseteq \R_+$ and $\Phi'$ is Fr\'echet differentiable. Let $\mt F:\cX\to \PSD^d$ where $\spec(\mt F(x))\subseteq \cI$ for all $x$ and denote $\mt C=\E[\mt F]$. Then 
    \begin{align*}
        \HPhi(\mt F)
        &=\int_0^1(1-t)\E\left[\nortr\left(
        \D\Phi'\bigl(\mt C+t(\mt F-\mt C)\bigr)
        [\mt F-\mt C]\cdot(\mt F-\mt C)
        \right)\right]dt
    \end{align*}
\end{corollary}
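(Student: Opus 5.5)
The plan is to reduce the identity to the one-dimensional Taylor formula with integral remainder, applied pointwise along the segment from $\mt C$ to $\mt F(x)$, and then average over $x$. Fix $x\in\cX$ and set $g_x(t)\defeq\nortr\,\Phi\bigl(\mt C+t(\mt F(x)-\mt C)\bigr)$ for $t\in[0,1]$. Since $\mt C$ is a convex combination of the matrices $\mt F(x)$, and $\cI$ is an interval, every point of the segment $\mt C+t(\mt F(x)-\mt C)$ has its spectrum in $\cI$. This holds because the eigenvalues of a convex combination of Hermitian matrices lie between the minimum of the smallest eigenvalues and the maximum of the largest eigenvalues of the summands. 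Hence $g_x$ is well defined. By Lemma~\ref{lem:trace-spectral-derivatives}, with $\mt A=\mt C$ and $\mt B=\mt F(x)-\mt C$ and after dividing by $d$, we get
\[
g_x'(t)=\nortr\bigl(\Phi'(\mt C+t\mt B)\mt B\bigr),\qquad
g_x''(t)=\nortr\bigl(\D\Phi'(\mt C+t\mt B)[\mt B]\,\mt B\bigr).
\]

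Next, apply the scalar Taylor formula $g_x(1)=g_x(0)+g_x'(0)+\int_0^1(1-t)g_x''(t)\,dt$. This gives
\[
\nortr\,\Phi(\mt F(x))-\nortr\,\Phi(\mt C)-\nortr\bigl(\Phi'(\mt C)(\mt F(x)-\mt C)\bigr)
=\int_0^1(1-t)\,\nortr\bigl(\D\Phi'(\mt C+t(\mt F(x)-\mt C))[\mt F(x)-\mt C]\,(\mt F(x)-\mt C)\bigr)dt.
\]
Now take expectation over $X$. The linear term vanishes because $\E[\mt F-\mt C]=0$ and $\nortr(\Phi'(\mt C)\,\cdot)$ is linear with $\Phi'(\mt C)$ deterministic. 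The left side becomes $\nortr\bigl(\E[\Phi(\mt F)]-\Phi(\E[\mt F])\bigr)=\HPhi(\mt F)$. Since $\cX$ is finite, the expectation is a finite sum and can be interchanged with the integral, which yields the claimed formula.

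The main obstacle is justifying the use of Lemma~\ref{lem:trace-spectral-derivatives} when $\cI$ is not open, for example when eigenvalues lie at $0$ and $\cI=\R_+$. The lemma requires an open interval, and $g_x$ must be twice continuously differentiable on the closed interval $[0,1]$ for Taylor's formula to apply. I would first handle the case where all spectra lie in $\interior(\cI)$. There the lemma applies on an open neighbourhood of the segment, and $g_x''$ is continuous in $t$ because the Daleckii--Krein expression depends continuously on the matrix when $\Phi'$ is $C^1$.

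For the general case I would argue in one of two ways. The first option is to apply the Taylor identity on $[\epsilon,1-\epsilon']$ and let the endpoints tend to their limits, using monotone convergence since the integrand is nonnegative by Lemma~\ref{lem:sa-psd-DPhi'}. The second option mirrors the approach in Lemma~\ref{lem:Phi-bd-quadratic}: replace $\mt F$ by $(1-\epsilon)\mt F+\epsilon c\mt I$ with $c\in\interior(\cI)$, and pass to the limit $\epsilon\downarrow0$ using continuity of $\Phi$, together with Fatou's lemma or continuity of the integrand where needed. If $\Phi'$ is assumed Fr\'echet differentiable on all of $\cI$, as in the statement, the direct argument should already go through with one-sided derivatives at the endpoints.
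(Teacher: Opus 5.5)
Your argument is correct and is essentially the paper's proof: define $g_x(t)=\nortr\,\Phi(\mt C+t(\mt F(x)-\mt C))$, obtain $g_x''$ from Lemma~\ref{lem:trace-spectral-derivatives}, apply Taylor's formula with integral remainder, and average over $x$ using $\E[\mt F-\mt C]=\mt 0$. The paper applies the lemma directly under the stated smoothness hypotheses and has no boundary discussion. If you keep yours, note that your first option (truncating to $[0,1-\epsilon']$) only works when $\mt C$ has spectrum in $\interior(\cI)$. A boundary eigenvalue of $\mt C$ forces a common eigenvector of all $\mt F(x)$ in the support, so that eigenvalue stays on the boundary along the whole segment.
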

\begin{proof}
For each $x$, define  
\[
g_x(t)\defeq
\nortr\Phi\bigl(\mt C+t(\mt F(x)-\mt C)\bigr).
\]
By Lemma~\ref{lem:trace-spectral-derivatives},
\[
g_x''(t)
=
\nortr\left(
\D\Phi'\bigl(\mt C+t(\mt F(x)-\mt C)\bigr)
[\mt F(x)-\mt C](\mt F(x)-\mt C)
\right).
\]
By Taylor's theorem \cite[Theorem 12.48]{hunter2014introduction}, we have 
\[
g_x(1)-g_x(0)-g_x'(0)
=
\int_0^1(1-t)g_x''(t)\,dt.
\]
Taking expectations and using $\E[\mt F-\mt C]=\mt0$, we obtain the equality.
\end{proof}

\subsection{Matrix \texorpdfstring{\(\Phi\)}{Phi}-entropy and matrix \texorpdfstring{\(\Phi\)}{Phi}-ribbon}

Tropp and Chen define the matrix $\Phi$-entropy as follows. 
\begin{definition}[\cite{tropp2014subadditivity}]
    Suppose $\Phi:\R_+\to \R$ is convex. The \emph{matrix $\Phi$-entropy} of the $d$-dimensional PSD matrix-valued function $\mt F: \cX\to \PSD^d$ is defined as 
    \begin{align*}
        \HPhi(\mt{F})\defeq \nortr(\E[\Phi(\mt{F})]-\Phi(\E[\mt{F}])),
    \end{align*}
    where $\nortr\defeq \tr/d$ is the normalized trace for $d\times d$ matrices and $\spec(\mt F(x))\subseteq \dom(\Phi)$ for all $x\in \cX$. 

    The \emph{conditional matrix $\Phi$-entropy} of the $d$-dimensional PSD matrix-valued function $\mt F: \cX\times \cY\to \PSD^d$ is defined as 
    \begin{align*}
        \HPhi(\mt F|X)\defeq \sum_{x} p(x)\HPhi(\mt F|X=x)
    \end{align*}
    where $\spec(\mt F(x,y))\subseteq \dom(\Phi)$ for all $x\in \cX,y\in \cY$. 
\end{definition}

The matrix $\Phi$-entropy has applications in matrix concentration inequalities (see \cite{tropp2014subadditivity,tropp2015introduction,paulin2016efron}). 

Throughout, matrix-valued functions have spectra contained in $\dom(\Phi)$. 

Since $\Phi$ is convex, the matrix $\Phi$-entropy is nonnegative. Analogous to the scalar case, we have the chain rule for the $\Phi$-entropy, i.e.,  
\begin{align} \label{matrix-Phi:chain-rule}
    \HPhi(\mt F)=\HPhi(\E[\mt F|X])+\HPhi(\mt F|X).
\end{align}
Using induction, the above equation implies that for any function $\mt F(X^n)$, we have 
\begin{align}
    \HPhi(\E[\mt F|\Xn]) &= \sum_{i=1}^n \HPhi \left(\E[\mt F|X_{[i]} ]| X_{[i-1]}\right) \label{eqn:matrix-CRn}.
\end{align}
Moreover, if $\mt F$ is a function of $(X,Y,Z)$, then
\begin{align} \label{eqn:matrix-CR2}
    \HPhi(\mt F|X) &= \HPhi(\mt F|XY) + \HPhi(\E[\mt F|XY]|X).
\end{align}
By the nonnegativity of the matrix $\Phi$-entropy, 
\begin{align}\label{matrix-Phi:cond-reduce}
    \HPhi(\mt F|X)\geq \HPhi(\mt F|XY).
\end{align}
Equation \eqref{matrix-Phi:cond-reduce} means that conditioning reduces the matrix $\Phi$-entropy, analogous to Shannon entropy. By the chain rule, \eqref{matrix-Phi:cond-reduce} is equivalent to 
\begin{align}
    \HPhi(\E[\mt F|XY])\geq \HPhi(\E[\mt F|X]).
\end{align}

\begin{lemma}
\label{lem:Taylor-Phi-Ent}
Let $\Phi:\cD\to\R$ be twice Fr\'echet differentiable on an open
interval $\cD\subseteq\R_+$. Let 
$\mt C\in\sa^d$ be a deterministic matrix satisfying $\operatorname{spec}(\mt C)\subseteq\cD$, 
and let $\mt F=\mt F(X)\in\sa^d$ satisfy $\E[\mt F]=\mt 0$. 
Assume that, for all sufficiently small $|\epsilon|$, $\spec(\mt C+\epsilon\mt F(x))
\subseteq\cD$ for all $x\in\cX$. 
Define $\mt G\defeq\mt C+\epsilon\mt F$. 
Then
\begin{align}
\HPhi(\mt G)
&=
\frac{\epsilon^2}{2}
\E\left[
\nortr\left(
\D^2\Phi(\mt C)[\mt F,\mt F]
\right)
\right]
+
o(\epsilon^2).
\label{eq:matrix-Phi-Taylor}
\end{align}
More explicitly, given the spectral decomposition $\mt C=\mt U\diag(\lambda_1,\ldots,\lambda_d)\mt U^\ast$, 
we have 
\begin{align}
\HPhi(\mt G)
&=
\frac{\epsilon^2}{2d}
\E\left[
\sum_{i,j=1}^d
\Phi'^{[1]}(\lambda_i,\lambda_j)
\left|
(\mt U^\ast\mt F\mt U)_{ij}
\right|^2
\right]
+
o(\epsilon^2).
\label{eq:matrix-Phi-Taylor-explicit}
\end{align}
\end{lemma}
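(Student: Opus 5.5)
The plan is to reduce the statement to a scalar second-order Taylor expansion along each realization, just as in Corollary~\ref{cor:integral-HPhi}. Since $\E[\mt F]=\mt 0$, we have $\E[\mt G]=\mt C$, and hence
\[
\HPhi(\mt G)=\E\left[\nortr\Phi(\mt C+\epsilon\mt F)\right]-\nortr\Phi(\mt C).
\]
For each fixed $x$, define $g_x(s)\defeq\nortr\Phi(\mt C+s\mt F(x))$ for $s$ in a neighborhood of $0$. By the assumption on the spectra, $\mt C+s\mt F(x)$ has spectrum in the open interval $\cD$ for all small $|s|$, so Lemma~\ref{lem:trace-spectral-derivatives} applies. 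It gives
\[
g_x'(s)=\nortr\left(\Phi'(\mt C+s\mt F(x))\mt F(x)\right)
\]
and
\[
g_x''(s)=\nortr\left(\D\Phi'(\mt C+s\mt F(x))[\mt F(x)]\mt F(x)\right).
\]
At $s=0$, the Daleckii--Krein formula (Fact~\ref{fact:Daleckii--Krein formula}) shows that $g_x''(0)=\frac1d\sum_{i,j}\Phi'^{[1]}(\lambda_i,\lambda_j)|(\mt U^*\mt F(x)\mt U)_{ij}|^2$. I would also identify this with $\nortr\,\D^2\Phi(\mt C)[\mt F(x),\mt F(x)]$, using the standard fact that $\tr\,\D^2\Phi(\mt C)[\mt H,\mt H]=\tr\left(\D\Phi'(\mt C)[\mt H]\mt H\right)$. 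To see this, differentiate $\tr\Phi(\mt C+s\mt H)$ twice, once using the Fréchet expansion of $\Phi$ and once using Lemma~\ref{lem:trace-spectral-derivatives}, and compare.

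Next I would apply Taylor's theorem with integral remainder to each $g_x$:
\[
g_x(\epsilon)=g_x(0)+\epsilon g_x'(0)+\epsilon^2\int_0^1(1-t)g_x''(t\epsilon)\,dt.
\]
Taking the expectation over $x$ kills the linear term, because $\E[g_x'(0)]=\nortr(\Phi'(\mt C)\E[\mt F])=0$. This gives
\[
\HPhi(\mt G)=\epsilon^2\,\E\left[\int_0^1(1-t)g_x''(t\epsilon)\,dt\right].
\]
It then remains to show that $\int_0^1(1-t)g_x''(t\epsilon)\,dt\to\tfrac12 g_x''(0)$ as $\epsilon\to0$ for each $x$. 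Since $\cX$ is finite, this pointwise limit commutes with the expectation, and the resulting $o(\epsilon^2)$ error is uniform over $x$. Substituting the value of $g_x''(0)$ then yields \eqref{eq:matrix-Phi-Taylor} and \eqref{eq:matrix-Phi-Taylor-explicit} at the same time.

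The main obstacle is the continuity of $s\mapsto g_x''(s)$ at $s=0$. I would obtain it from the Daleckii--Krein representation, writing $g_x''(s)=\tr\left(\Phi'^{[1]}(\mt\Lambda_s)\circ(\mt U_s^*\mt F\mt U_s)\,\mt U_s^*\mt F\mt U_s\right)$, but the eigenvectors $\mt U_s$ need not vary continuously when eigenvalues are repeated. To avoid this, I would use the fact that $\mt A\mapsto\D\Phi'(\mt A)$ is continuous on matrices with spectrum in a compact subset of $\cD$ whenever $\Phi'$ is $C^1$ (Bhatia, Ch.~V). This holds because $\Phi'^{[1]}$ is continuous, so the map can be approximated uniformly by polynomial divided differences, and for those the continuity is clear. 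Alternatively, twice Fréchet differentiability of $\Phi$ on $\cD$ makes $s\mapsto\Phi(\mt C+s\mt F)$ twice differentiable with second derivative $\D^2\Phi(\mt C+s\mt F)[\mt F,\mt F]$. Then the Peano form of Taylor's theorem gives
\[
g_x(\epsilon)=g_x(0)+\epsilon g_x'(0)+\tfrac{\epsilon^2}{2}g_x''(0)+o(\epsilon^2)
\]
directly, without needing continuity of $g_x''$, and I would take this as the fallback route.
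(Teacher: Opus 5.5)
Your proposal is correct. Your fallback route is essentially the paper's proof. The paper writes the second-order Fr\'echet expansion $\Phi(\mt C+\epsilon\mt F)=\Phi(\mt C)+\epsilon\D\Phi(\mt C)[\mt F]+\frac{\epsilon^2}{2}\D^2\Phi(\mt C)[\mt F,\mt F]+o(\epsilon^2)$ at the matrix level. It then takes expectation and normalized trace, using $\E[\mt G]=\mt C$ and the vanishing of the linear term by linearity and $\E[\mt F]=\mt 0$. Finally it reads off the divided-difference form from Lemma~\ref{lem:trace-spectral-derivatives}. You do the same thing after first tracing down to the scalar functions $g_x$.

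Your primary route is different in that it uses an integral remainder, as in Corollary~\ref{cor:integral-HPhi}. That costs you an extra ingredient: continuity of $\mt A\mapsto\D\Phi'(\mt A)$, and hence a $C^2$ hypothesis. The Peano form needs only second-order differentiability at $\mt C$, which makes it the more economical argument for a purely asymptotic $o(\epsilon^2)$ statement. In both routes, the $C^2$ assumption is still needed for the explicit formula, since it enters through the hypothesis of Lemma~\ref{lem:trace-spectral-derivatives}.
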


\begin{proof}
By the definition of the Fr\'echet derivative, we have the following second-order Fr\'echet expansion 
\begin{align*}
\Phi(\mt C+\epsilon\mt F)
&=
\Phi(\mt C)
+
\epsilon\D\Phi(\mt C)[\mt F]
+
\frac{\epsilon^2}{2}
\D^2\Phi(\mt C)[\mt F,\mt F]
+
o(\epsilon^2).
\end{align*}
Moreover, $\E[\mt G]=\mt C$.
And by linearity,
\[
\E\left[\D\Phi(\mt C)[\mt F]\right]
=
\D\Phi(\mt C)[\E[\mt F]]
=
\mt 0.
\]
Taking expectation and normalized trace proves
\eqref{eq:matrix-Phi-Taylor}.

Finally, by Lemma~\ref{lem:trace-spectral-derivatives}, 
\[
\nortr\left(
\D^2\Phi(\mt C)[\mt F,\mt F]
\right)
=
\frac1d
\sum_{i,j=1}^d
\Phi'^{[1]}(\lambda_i,\lambda_j)
\left|
(\mt U^\ast\mt F\mt U)_{ij}
\right|^2.
\]
Substituting this identity into \eqref{eq:matrix-Phi-Taylor} proves
\eqref{eq:matrix-Phi-Taylor-explicit}. 

\end{proof}

\begin{definition} \label{def:class-FM}
    For $d\geq 1$, define $\mathscr{F}_M^{d}$ to be the set of $\Phi:\cI\rightarrow \R$ where $\cI\subseteq \R_+$ such that $\Phi$ is continuous and convex, is twice continuously differentiable in the interior of $\cI$, is not affine, and satisfies one of the following equivalent conditions (see \cite[Theorem 3.3]{cheng2016characterizations}): 
    \setlist{label=\rm (\roman*)}
    \begin{enumerate}
        \item Define $\Psi(t)=\Phi'(t)$. The Fr\'echet derivative $\D\Psi$ is invertible and the mapping $\mt A\mapsto (\D\Psi(\mt A))^{-1}$ is concave. 
        \item $(\mt A,\mt B)\mapsto \tr(\Phi(\mt A+\mt B)-\Phi(\mt A)-\D\Phi(\mt A)[\mt B])$ is jointly convex. 
        \item Define $\Psi(t)=\Phi'(t)$. The mapping $(\mt A,\mt B)\mapsto \tr(\D^2\Phi(\mt A)[\mt B,\mt B])=\tr(\mt B \D \Psi(\mt A)[\mt B])$ is jointly convex. 
    \end{enumerate}

    Then define $\mathscr{F}_M\defeq \bigcap_{d\geq 1}\mathscr{F}_M^d$. 
\end{definition}

Since it is required that $\D \Psi$ is invertible, for any $\Phi\in \mathscr{F}_M^d$, $\Phi''(t)>0$ for all $t\in \interior(\cI)$.

As remarked by Cheng and Hsieh \cite{cheng2016characterizations}, the above equivalent conditions can be seen as the non-commutative version of $\mathscr{F}$ and correspond to the subadditivity of the matrix $\Phi$-entropy \cite{tropp2014subadditivity}. That is, for any $\Phi\in \mathscr F_M$ and mutually independent $X_1,\dots,X_n$, we have 
\begin{align*}
    \HPhi(\mt F) \leq \sum_{i=1}^n \HPhi \left(\mt F|X_{\widehat{i}}\right),
\end{align*}
where $X_{\widehat{i}}=(X_1,\dots,X_{i-1},X_{i+1},\dots,X_n)$. 

By definition, note that $\mathscr{F}=\mathscr{F}_M^1$. 

\begin{example}
    The functions $t\log t$ and $t^p/p$ for $p\in (1,2]$ belong to $\mathscr{F}_M$ \cite{tropp2014subadditivity}. When affine functions are allowed in $\mathscr F_M^d$, 
    Hansen and Zhang prove that $\mathscr F_M^d$ is a convex cone for every $d$ and conjecture that if $\Phi$ is defined on $[0,\infty)$, the aforementioned two functions, their shifts and their nonnegative combinations exhaust $\mathscr F_M$ based on numerical calculations \cite{hansen2015characterisation}. 
\end{example}
\begin{remark}\label{remark-matrix-mi-form}
    Assume $\Phi(t)=\psi(t)\defeq t\log t$. Consider a mapping $\mt F: \cX\to \PSD^d$ such that $\tr(\mt F(x))=1$ for all $x\in\cX$. Then, one can view $\mt F(x)$ as the state of a $d$-dimensional quantum system $\mt U$, conditioned on $X=x$. Cheng and Hsieh find that in this setting, $(X,\mt U)$ is a classical-quantum ensemble \cite{cheng2019matrix}. Then, $\E[\mt{F}]$ is the average state of $\mt U$ and the matrix $\Phi$-entropy is 
    \begin{align*}
        H_{\psi}(\mt{F}) &=\nortr(\E[\mt{F}\log\mt{F}]-\E[\mt{F}]\log(\E[\mt{F}]))\\
        &=\frac{1}{d}(-H(\mt U|X)+H(\mt U)),
    \end{align*}
    where, by abuse of notation, $H(\cdot)$ denotes the von Neumann entropy and the definition of quantum mutual information is standard. For the classical-quantum ensemble $(X,\mt U)$, define 
    \begin{align*}
        I(\mt U;X) 
        \defeq H(\mt U)-H(\mt U|X).
    \end{align*}
    Then 
    \begin{align*}
        H_{\psi}(\mt{F})=\frac{1}{d}I(\mt U;X).
    \end{align*}
\end{remark}
Inspired by the above, we define  the matrix $\Phi$-mutual information as 
\begin{align*}
    \IPhi(\mt F;X)
    \defeq \HPhi(\E[\mt F|X])
\end{align*}
where $\mt F$ is a PSD matrix-valued function of $(X,Y)$. By the chain rule, we have  
\begin{align*}
    \IPhi(\mt F;X)
    =\HPhi(\mt F)-\HPhi(\mt F|X). 
\end{align*}

We define the matrix $\Phi$-ribbon for the first time as follows: 
\begin{definition} \label{def:matrix-Phi-ribbon}
    For $\Phi\in \mathscr{F}_M^d$, the \emph{$d$-dimensional matrix $\Phi$-ribbon} for the random variables $(X_1,\dots,X_n)$ taking values in $\prod_{i=1}^n \cX_i$, denoted as $\fR_{\Phi}^{M,d}(X_1;\dots;X_n)$, is the set of all $(\lambda_1,\dots,\lambda_n)\in [0,1]^n$ such that 
    \begin{align} \label{ineq:matrix-Phi-ribbon-ineq}
        \sum_{i=1}^n \lambda_i \HPhi(\E[\mt{F}|X_i])\leq \HPhi(\mt{F}),
    \end{align}
    for all PSD matrix-valued $\mt F: \prod_{i=1}^n \cX_i\to \PSD^d$. 

    Suppose $\Phi\in \mathscr{F}_M$. 
    The \emph{matrix $\Phi$-ribbon} for $(X_1,\dots,X_n)$ is defined as 
    \begin{align} \label{eq:matrix-Phi-ribbon}
        \fR_{\Phi}^{M}(X_1;\dots;X_n) \defeq\bigcap_{d\geq 1} \fR_{\Phi}^{M,d}(X_1;\dots;X_n). 
    \end{align}
\end{definition}

\begin{proposition}\label{propdpd}
    Suppose $X_1,\dots,X_n$ are finite random variables. Then 
    $\fR_{\Phi}^{M,1}(X_1;\dots;X_n)=\fR_{\Phi}(X_1;\dots;X_n)$. Let $d'\geq d\geq 1$. Then 
    \begin{align*}
        \fR_{\Phi}^{M}(X_1;\dots;X_n) 
        \subseteq 
        \fR_{\Phi}^{M,d'}(X_1;\dots;X_n) 
        \subseteq 
        \fR_{\Phi}^{M,d}(X_1;\dots;X_n). 
    \end{align*} 
\end{proposition}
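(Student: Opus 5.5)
The proof has two parts: the identification at $d=1$, and an embedding argument that gives monotonicity in the dimension.

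For $d=1$, $\PSD^1=\R_+$ and the normalized trace $\nortr$ on $1\times 1$ matrices is the identity. So a $1$-dimensional PSD matrix-valued function $\mt F$ is just a nonnegative function $f$ of $X^n$, and the matrix $\Phi$-entropy $\HPhi(\mt F)$ equals the scalar $\HPhi(f)$. Conditional expectations also agree. Moreover $\mathscr F_M^1=\mathscr F$. Hence the defining inequality \eqref{ineq:matrix-Phi-ribbon-ineq} is literally the scalar ribbon inequality of Definition~\ref{def:Phi-ribbon}, and the two sets coincide.

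The first inclusion, $\fR_\Phi^M\subseteq\fR_\Phi^{M,d'}$, is immediate from the definition \eqref{eq:matrix-Phi-ribbon} as an intersection over all dimensions. For the second inclusion, take $\vec\lambda\in\fR_\Phi^{M,d'}$ and an arbitrary $\mt F:\prod_i\cX_i\to\PSD^d$. I would embed $\mt F$ into dimension $d'$ using a block construction with $k=d'/d$ copies. If $d\mid d'$, set
\[
\mt G(x)\defeq \mt I_{k}\otimes \mt F(x),
\]
that is, the block diagonal matrix with $k$ copies of $\mt F(x)$. The map $\mt A\mapsto \mt I_k\otimes\mt A$ commutes with matrix functions and with conditional expectations, and it preserves normalized traces: $\nortr_{d'}(\mt I_k\otimes\mt A)=\nortr_d(\mt A)$. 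It follows that $\HPhi(\mt G)=\HPhi(\mt F)$ and $\HPhi(\E[\mt G|X_i])=\HPhi(\E[\mt F|X_i])$ for every $i$. Applying the $d'$-dimensional ribbon inequality to $\mt G$ then gives the $d$-dimensional inequality for $\mt F$.

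The main obstacle is that $d$ need not divide $d'$. Plain zero-padding does not work, because the normalized trace rescales by $d/d'$ and the padded block contributes $\Phi(0)$-type terms. I would pad with a constant block instead:
\[
\mt G(x)=\diag\bigl(\mt I_k\otimes\mt F(x),\,c\,\mt I_{r}\bigr),\qquad d'=kd+r,\ 0\le r<d,
\]
with $c\in\dom(\Phi)$ fixed. A constant block contributes zero to every $\Phi$-entropy, whether unconditional or after conditioning on $X_i$, since its expectation equals itself. Because $\mt G$ is block diagonal, each entropy therefore becomes
\[
\HPhi(\mt G)=\frac{kd}{d'}\,\HPhi(\mt F),\qquad \HPhi(\E[\mt G|X_i])=\frac{kd}{d'}\,\HPhi(\E[\mt F|X_i]).
\]
The ribbon inequality for $\mt G$ then reads $\frac{kd}{d'}\sum_i\lambda_i\HPhi(\E[\mt F|X_i])\le \frac{kd}{d'}\HPhi(\mt F)$. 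Since $k\ge1$, we can cancel the common factor $kd/d'$ to recover the inequality for $\mt F$. In fact this padding argument already handles every case, with $k=1$ and $r=d'-d$ sufficing.

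One remaining check is membership in the function class. We need $\Phi\in\mathscr F_M^{d'}$ for $\fR_\Phi^{M,d'}$ to be defined, which is the standing assumption. For the comparison we also need $\Phi\in\mathscr F_M^{d}$, and this follows from the $d'$ case by restricting the jointly convex trace functional in Definition~\ref{def:class-FM}(ii) to block-diagonal matrices of the form above. I would note this in passing.
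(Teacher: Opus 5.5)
Your proposal is correct and, with $k=1$, $r=d'-d$ (which you note suffices), it is exactly the paper's argument: the $d=1$ case is the scalar ribbon verbatim, the first inclusion holds by definition, and the second pads $\mt F$ with a constant block $c\mt I_{d'-d}$ so that every entropy is multiplied by $d/d'$, a factor that then cancels. One small correction to your motivation: zero-padding fails only when $0\notin\dom(\Phi)$ or $\Phi(0)=+\infty$, because a constant block contributes nothing and the $d/d'$ rescaling is harmless (indeed your choice $c\in\dom(\Phi)$ already allows $c=0$ when that value is admissible).
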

\begin{proof}
    Note that $\mathscr{F}_M^1=\mathscr{F}$, and in dimension 1, it reduces to the scalar case. Thus $\fR_{\Phi}^{M,1}(X_1;\dots;X_n)=\fR_{\Phi}(X_1;\dots;X_n)$. By the definition in \eqref{eq:matrix-Phi-ribbon}, we immediately have $\fR_{\Phi}^{M}(X_1;\dots;X_n) 
        \subseteq 
        \fR_{\Phi}^{M,d'}(X_1;\dots;X_n)$ for any $d'$. Since $d'\geq d$, we can embed an arbitrary $d$-dimensional matrix $\mt F$ into a $d'$-dimensional matrix as 
        \begin{align*}
            \mt F'=\begin{bmatrix}
                \mt F & \mt 0\\
                \mt 0 & c\mt I_{d'-d}
            \end{bmatrix}
        \end{align*}
        where $c\in \dom(\Phi)$. By the definition of the matrix $\Phi$-entropy, $\HPhi(\mt F')=\frac{d}{d'}\HPhi(\mt F)$. 
        
        Now take $(\lambda_1,\dots,\lambda_n)\in \fR_{\Phi}^{M,d'}(X_1;\dots;X_n)$. We have  
        \begin{align*}
            \sum_{i=1}^n \lambda_i \HPhi(\E[\mt{F}'|X_i])\leq \HPhi(\mt{F}').
        \end{align*}
        Cancelling the common factor \(d/d'\) from both sides, we obtain  
        \begin{align*}
            \sum_{i=1}^n \lambda_i \HPhi(\E[\mt{F}|X_i])\leq \HPhi(\mt{F}).
        \end{align*}
        Therefore, $(\lambda_1,\dots,\lambda_n)\in\fR_{\Phi}^{M,d}(X_1;\dots;X_n)$. Thus $\fR_{\Phi}^{M,d'}(X_1;\dots;X_n) 
        \subseteq 
        \fR_{\Phi}^{M,d}(X_1;\dots;X_n)$. 
\end{proof}

Suppose $(X^n,\mt F)$ is a classical-quantum ensemble. 
In the settings of Remark \ref{remark-matrix-mi-form}, Equation \eqref{ineq:matrix-Phi-ribbon-ineq} becomes
\begin{align}
    \sum_{i=1}^n \lambda_i I(\mt F;X_i)  
    \leq I(\mt F;X^n).\label{mi-form-u-quantum}
\end{align}
When $\Phi(t)=t\log t$, we will also call $\fR_{\Phi}^{M,d}(X_1;\dots;X_n)$ the $d$-dimensional matrix hypercontractivity ribbon.

\begin{remark}
Consider the special case of $n=3$ and $\lambda_1=\lambda_2=\lambda_3=1/2$. Then, the condition in \eqref{mi-form-u-quantum}  is related to the key agreement problem in information theoretic security. It is equivalent to the condition that a sophisticated upper bound on the  secrecy capacity involving a quantum auxiliary system meets a trivial upper bound, see \cite[Remark 1\&4]{abin2025source}.
\end{remark}

The following $\Phi$-entropy inequalities are the key ingredients in the tensorization and data-processing proofs. The proofs follow similar lines as in \cite{beigi2018phi}. 
\begin{lemma} \label{lem:matrix-Phi-ineq}
    Suppose $\Phi\in \mathscr{F}_M^d$. 
    \begin{enumerate}[\rm (i)]
        \item \label{matrix-Phi-ineq-1} Assume $X$ and $Y$ are independent random variables and $\mt F=\mt F(X,Y)$ is a $d$-dimensional PSD matrix-valued function. Then 
        \begin{align*}
            \HPhi(\mt F|X)\geq \HPhi(\E[\mt F|Y]).
        \end{align*}
        \item \label{matrix-Phi-ineq-2} More generally, suppose $X,Y,Z$ are three random variables satisfying the Markov chain condition $X\rightarrow Z \rightarrow Y$ and $\mt F=\mt F(X,Y,Z)$ is a $d$-dimensional PSD matrix-valued function. Then 
        \begin{align*}
             \HPhi(\mt F|XZ)\geq \HPhi(\E[\mt F|YZ]|Z).
        \end{align*}
    \end{enumerate}
\end{lemma}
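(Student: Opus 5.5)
The plan is to prove part~\ref{matrix-Phi-ineq-1} directly from the joint convexity in Definition~\ref{def:class-FM}(ii), applying Jensen's inequality over $X$. This avoids the full subadditivity theorem of \cite{tropp2014subadditivity}, which is stated for $\mathscr{F}_M$ rather than $\mathscr{F}_M^d$. Part~\ref{matrix-Phi-ineq-2} will then follow by conditioning on $Z$.

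\textbf{Step 1 (a convex functional).} Fix the law $p_Y$. For a tuple $\mathbf A=(\mt A_y)_{y\in\cY}$ of PSD matrices with spectra in $\dom(\Phi)$, set $\mt M(\mathbf A)\defeq\sum_y p(y)\mt A_y$ and
\[
J(\mathbf A)\defeq\nortr\Big(\sum_y p(y)\Phi(\mt A_y)-\Phi(\mt M(\mathbf A))\Big).
\]
Since $\sum_y p(y)(\mt A_y-\mt M)=\mt 0$ and $\D\Phi(\mt M)$ is linear, the first-order term drops out and
\[
J(\mathbf A)=\sum_y p(y)\,\nortr\bigl(\Phi(\mt M+(\mt A_y-\mt M))-\Phi(\mt M)-\D\Phi(\mt M)[\mt A_y-\mt M]\bigr).
\]
By condition (ii) of Definition~\ref{def:class-FM}, each summand is jointly convex in the pair $(\mt M,\mt A_y-\mt M)$. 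That pair is a linear function of $\mathbf A$, so each summand is convex in $\mathbf A$, and hence $J$ is convex.

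\textbf{Step 2 (Jensen over $X$).} Because $X\independent Y$, we have $p(y|x)=p(y)$. Hence for each fixed $x$,
\[
\HPhi(\mt F|X=x)=J\bigl((\mt F(x,y))_y\bigr),
\]
with the entropy computed under $p_Y$. Averaging over $x$ and applying Jensen's inequality to the convex $J$ gives
\[
\HPhi(\mt F|X)=\E_X\bigl[J((\mt F(X,y))_y)\bigr]\geq J\bigl((\E_X\mt F(X,y))_y\bigr).
\]
Independence also gives $\E_X[\mt F(X,y)]=\E[\mt F|Y=y]$. Therefore the right-hand side equals $\HPhi(\E[\mt F|Y])$, which is part~\ref{matrix-Phi-ineq-1}.

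\textbf{Step 3 (domain issue).} Condition (ii) and the Fréchet derivative $\D\Phi(\mt M)$ need spectra in the interior of $\dom(\Phi)$. As in the scalar proofs, I would first treat $\mt F$ with spectra in $\interior(\dom(\Phi))$. For general $\mt F$, I would apply the result to $(1-\epsilon)\mt F+\epsilon c\mt I$ with $c\in\interior(\dom(\Phi))$ and let $\epsilon\downarrow0$, using continuity of $\Phi$ and the finiteness of the alphabets. I expect this approximation step, together with checking that (ii) really applies in the form used in Step~1, to be the only delicate point.

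\textbf{Step 4 (part~\ref{matrix-Phi-ineq-2}).} The Markov chain $X\to Z\to Y$ means $X\independent Y$ conditionally on $Z=z$ for each $z$. Fix $z$ and apply part~\ref{matrix-Phi-ineq-1} to the function $(x,y)\mapsto\mt F(x,y,z)$ under the law $p_{XY|Z=z}=p_{X|Z=z}\otimes p_{Y|Z=z}$. Under this law, the conditional entropy given $X$ is $\HPhi(\mt F|X,Z=z)$. The conditional expectation given $Y$ is $\E[\mt F|Y,Z=z]$, and its entropy is $\HPhi(\E[\mt F|YZ]\,|\,Z=z)$. Averaging the resulting inequality over $z$ with weights $p(z)$ gives
\[
\HPhi(\mt F|XZ)\geq\HPhi(\E[\mt F|YZ]\,|\,Z),
\]
by the definition of conditional matrix $\Phi$-entropy.
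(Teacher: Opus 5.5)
Your proof is correct and follows the paper's own argument. The paper likewise takes convexity of the tuple functional $(\mt F(x))_{x}\mapsto\nortr\bigl(\sum_x p(x)\Phi(\mt F(x))-\Phi(\sum_x p(x)\mt F(x))\bigr)$ from Definition~\ref{def:class-FM}(ii), applies Jensen over $X$ using independence, and obtains part (ii) by fixing $Z=z$ and averaging. Your Step~1 (cancelling the first-order term to write $J$ as an average of the jointly convex Bregman-type terms) and Step~3 (perturbing spectra into the interior) spell out details the paper leaves implicit.
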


We remark that Lemma~\ref{lem:matrix-Phi-ineq} (i) was previously established in \cite[Lemma 4.3]{tropp2014subadditivity}. For completeness, we also provide the proof later.

Then we have the following results for the matrix $\Phi$-ribbon. 
\begin{proposition}
\label{prop:matrix-phi-ribbon-properties}
    For $\Phi\in \mathscr{F}_M^d$, we have 
    \begin{enumerate}[\rm (i)]
        \item $\{(\lambda_1,\dots,\lambda_n)\in[0,1]^n:\sum_{i=1}^n \lambda_i \leq 1\}$ is always a subset of the $d$-dimensional matrix $\Phi$-ribbon for any $p_{X^n}$. Moreover, when $X_1=\dots=X_n$ and these random variables are nonconstant, the above set is exactly the $d$-dimensional matrix $\Phi$-ribbon. 
        \item If $X_1,\dots,X_n$ are mutually independent, the $d$-dimensional matrix $\Phi$-ribbon is exactly $[0,1]^n$. 
    \end{enumerate}
\end{proposition}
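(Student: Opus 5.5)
For part (i), I will use the chain rule \eqref{matrix-Phi:chain-rule} together with the fact that conditioning reduces matrix $\Phi$-entropy. Since $\E[\mt F|X_i]$ is a function of $\E[\mt F|X^n]=\mt F$, we have $\HPhi(\E[\mt F|X_i])\leq \HPhi(\mt F)$ for every $i$. This follows from \eqref{matrix-Phi:chain-rule} and the nonnegativity of $\HPhi(\mt F|X_i)$. If $\sum_i\lambda_i\leq 1$ with $\lambda_i\geq 0$, then
\[
\sum_i\lambda_i\HPhi(\E[\mt F|X_i])\leq \Big(\sum_i\lambda_i\Big)\HPhi(\mt F)\leq \HPhi(\mt F),
\]
so the simplex-type set lies in $\fR_\Phi^{M,d}$.

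For the converse when $X_1=\dots=X_n=X$ is nonconstant, I will take $\mt F=f(X)\mt I_d$ for a scalar $f$ with values in the interior of $\dom(\Phi)$ and $\HPhi(f)>0$. Such an $f$ exists because $\Phi$ is not affine and $\Phi''>0$ on the interior, so $\Phi$ is strictly convex, and $X$ takes at least two values with positive probability. In this case $\E[\mt F|X_i]=\mt F$ and the matrix $\Phi$-entropy equals $\HPhi(f)$. The ribbon inequality then reads $(\sum_i\lambda_i)\HPhi(f)\leq\HPhi(f)$, which forces $\sum_i\lambda_i\leq1$.

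For part (ii), it suffices to show $(1,\dots,1)$ lies in the ribbon. Since each $\HPhi(\E[\mt F|X_i])\geq0$, the ribbon is then downward closed in $[0,1]^n$, and the claim follows. The plan is to prove
\[
\sum_{i=1}^n\HPhi(\E[\mt F|X_i])\leq\HPhi(\mt F)
\]
by the chain rule \eqref{eqn:matrix-CRn}:
\[
\HPhi(\mt F)=\sum_{i=1}^n\HPhi\bigl(\E[\mt F|X_{[i]}]\,\big|\,X_{[i-1]}\bigr).
\]
I will then bound each summand below by $\HPhi(\E[\mt F|X_i])$. Set $\mt G_i=\E[\mt F|X_{[i]}]$, a function of $(X_{[i-1]},X_i)$. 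Because $X_{[i-1]}$ and $X_i$ are independent, Lemma~\ref{lem:matrix-Phi-ineq}(i), applied with $X=X_{[i-1]}$ and $Y=X_i$, gives $\HPhi(\mt G_i|X_{[i-1]})\geq\HPhi(\E[\mt G_i|X_i])$. Mutual independence also gives $\E[\mt G_i|X_i]=\E[\mt F|X_i]$ by the tower property. Summing over $i$ yields the inequality.

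The main work is thus Lemma~\ref{lem:matrix-Phi-ineq}(i), which I take as given (it is \cite[Lemma 4.3]{tropp2014subadditivity}). The only delicate point is the existence of a nonconstant admissible $f$ with positive entropy in part (i), which strict convexity handles.
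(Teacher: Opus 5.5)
Your proposal is correct and follows essentially the paper's argument. Part (i) uses the chain rule together with nonnegativity of the conditional matrix $\Phi$-entropy, plus a test function with $\E[\mt F|X_i]=\mt F$ for the converse. Part (ii) applies Lemma~\ref{lem:matrix-Phi-ineq}(i) repeatedly; the paper phrases this as an induction on pairs, and you unroll it through the chain rule \eqref{eqn:matrix-CRn}.

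Your choice $\mt F=f(X)\mt I_d$ with $\HPhi(f)>0$, justified by $\Phi''>0$ on the interior, makes explicit a positivity the paper leaves implicit when it just says ``nonconstant $\mt F$''. The one small slip is that $\E[\mt G_i|X_i]=\E[\mt F|X_i]$ is the plain tower property and does not need independence.
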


\begin{theorem} \label{thm:tensor-dpi}
    For any $\Phi\in \mathscr{F}_M^d$, the $d$-dimensional matrix $\Phi$-ribbon satisfies tensorization and data processing as follows:
    \begin{enumerate}[\rm (i)]
        \item {\rm (Tensorization)} If $(X_1,Y_1)$ is independent of $(X_2,Y_2)$, then
        \begin{align*}
            \fR_{\Phi}^{M,d}(X_1X_2;Y_1Y_2) =\fR_{\Phi}^{M,d}(X_1;Y_1)\cap \fR_{\Phi}^{M,d}(X_2;Y_2).
        \end{align*}
        \item {\rm (Data processing)} If $(X_2,Y_2)$ is generated from $(X_1,Y_1)$ by $p(x_2,y_2|x_1,y_1)=p(x_2|x_1)p(y_2|y_1)$, then 
        \begin{align*}
            \fR_{\Phi}^{M,d}(X_1;Y_1) \subseteq \fR_{\Phi}^{M,d}(X_2;Y_2). 
        \end{align*}
    \end{enumerate} 
\end{theorem}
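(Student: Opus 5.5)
Both parts follow the scalar argument of \cite{beigi2018phi}. The only tools needed are the chain rules \eqref{matrix-Phi:chain-rule}, \eqref{eqn:matrix-CR2}, conditioning reduces entropy \eqref{matrix-Phi:cond-reduce}, and Lemma~\ref{lem:matrix-Phi-ineq}.

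\emph{Tensorization.} For ``$\subseteq$'', take $(\lambda_1,\lambda_2)\in\fR_{\Phi}^{M,d}(X_1X_2;Y_1Y_2)$ and any $\mt F=\mt F(X_1,Y_1)$. Viewed as a function of all four variables, $\E[\mt F|X_1X_2]=\E[\mt F|X_1]$ because $(X_1,Y_1)\independent X_2$, and similarly $\E[\mt F|Y_1Y_2]=\E[\mt F|Y_1]$. Hence $(\lambda_1,\lambda_2)\in\fR_{\Phi}^{M,d}(X_1;Y_1)$, and the same argument gives membership in $\fR_{\Phi}^{M,d}(X_2;Y_2)$.

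For ``$\supseteq$'', fix $\mt F=\mt F(X_1,X_2,Y_1,Y_2)$ and split $\HPhi(\mt F)$ by the chain rule:
\begin{align*}
\HPhi(\mt F)=\HPhi(\E[\mt F|X_1Y_1])+\HPhi(\mt F|X_1Y_1).
\end{align*}
The first term is the entropy of a function of $(X_1,Y_1)$, so membership in the first ribbon gives
\begin{align*}
\HPhi(\E[\mt F|X_1Y_1])\ge \lambda_1\HPhi(\E[\mt F|X_1])+\lambda_2\HPhi(\E[\mt F|Y_1]).
\end{align*}
For the second term, apply the ribbon inequality for $(X_2,Y_2)$ conditionally on each value $(x_1,y_1)$. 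This is valid because, by independence, the conditional law of $(X_2,Y_2)$ is $p_{X_2Y_2}$. Averaging the conditional entropies gives
\begin{align*}
\HPhi(\mt F|X_1Y_1)\ge \lambda_1\HPhi(\E[\mt F|X_1X_2Y_1]\,|\,X_1Y_1)+\lambda_2\HPhi(\E[\mt F|Y_1Y_2X_1]\,|\,X_1Y_1).
\end{align*}
It then suffices to show
\begin{align*}
\HPhi(\E[\mt F|X_1X_2]\,|\,X_1)\le \HPhi(\E[\mt F|X_1X_2Y_1]\,|\,X_1Y_1),
\end{align*}
together with the symmetric statement for $Y$. Granted this, the chain rule $\HPhi(\E[\mt F|X_1X_2])=\HPhi(\E[\mt F|X_1])+\HPhi(\E[\mt F|X_1X_2]|X_1)$ completes the proof.

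This last comparison is the main obstacle. Set $\mt G=\E[\mt F|X_1X_2Y_1]$. Because $X_2\to X_1\to Y_1$ holds (and in fact $X_2\independent(X_1,Y_1)$), Lemma~\ref{lem:matrix-Phi-ineq}(ii) should apply with $Z=X_1$, ``$X$''$=Y_1$ and ``$Y$''$=X_2$, giving
\begin{align*}
\HPhi(\mt G|X_1Y_1)\ge \HPhi(\E[\mt G|X_1X_2]\,|\,X_1)=\HPhi(\E[\mt F|X_1X_2]\,|\,X_1).
\end{align*}
I will check carefully that the roles of the variables in Lemma~\ref{lem:matrix-Phi-ineq}(ii) match this Markov chain.

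\emph{Data processing.} Realize the channel in two steps, $(X_1,Y_1)\to(X_2,Y_1)\to(X_2,Y_2)$, and by symmetry prove only that applying a channel on the $X$-side preserves membership. Let $\mt F=\mt F(X_2,Y_1)$ with $X_2\to X_1\to Y_1$, and put $\mt G=\E[\mt F|X_1Y_1]$. Then:
\begin{itemize}
\item $\E[\mt G|X_1]=\E[\mt F|X_1]$ and $\E[\mt G|Y_1]=\E[\mt F|Y_1]$, so the ribbon inequality for $(X_1;Y_1)$ gives $\lambda_1\HPhi(\E[\mt F|X_1])+\lambda_2\HPhi(\E[\mt F|Y_1])\le\HPhi(\mt G)$.
\item $\HPhi(\E[\mt F|X_1])\le\HPhi(\E[\mt F|X_1X_2])$ by conditioning reduces entropy.
\end{itemize}
What remains is
\begin{align*}
\lambda_1\HPhi(\E[\mt F|X_2])+\lambda_2\HPhi(\E[\mt F|Y_1])\le\HPhi(\mt F).
\end{align*}
I would obtain this from the chain rule $\HPhi(\mt F)=\HPhi(\mt G)+\HPhi(\mt F|X_1Y_1)$ together with a comparison of $\HPhi(\E[\mt F|X_2])$ against $\HPhi(\E[\mt F|X_1])+\HPhi(\mt F|X_1Y_1)$. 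This comparison would again come from Lemma~\ref{lem:matrix-Phi-ineq}(ii) applied to the chain $Y_1\to X_1\to X_2$, combined with the chain rule on $\HPhi(\E[\mt F|X_1X_2])$.
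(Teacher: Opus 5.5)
Your proposal is correct. The tensorization part follows the paper's proof step for step: the chain rule at $X_1Y_1$, the first ribbon applied to $\E[\mt F|X_1Y_1]$, and the second ribbon applied conditionally on $(x_1,y_1)$. It then uses Lemma~\ref{lem:matrix-Phi-ineq}(ii) with exactly the variable roles you propose ($Z=X_1$, $X=Y_1$, $Y=X_2$, chain $Y_1\to X_1\to X_2$, and symmetrically $X_1\to Y_1\to Y_2$).

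For data processing you take a different route. The paper does not factor the channel; it works directly with $\mt F=\mt F(X_2,Y_2)$:
\begin{itemize}
\item It first splits $\HPhi(\mt F|X_1Y_1)\ge\HPhi(\E[\mt F|X_1X_2Y_1]|X_1Y_1)+\HPhi(\E[\mt F|X_1Y_1Y_2]|X_1Y_1)$, using Lemma~\ref{lem:matrix-Phi-ineq}(ii) on the chain $X_2\to X_1Y_1\to Y_2$.
\item It weakens this with $\lambda_1,\lambda_2\le 1$.
\item It reuses the two comparisons from the tensorization proof to obtain $\HPhi(\mt F)\ge\lambda_1\HPhi(\E[\mt F|X_1X_2])+\lambda_2\HPhi(\E[\mt F|Y_1Y_2])$.
\item Only then does it drop $X_1$ and $Y_1$ by monotonicity.
\end{itemize}
Your factorization $(X_1,Y_1)\to(X_2,Y_1)\to(X_2,Y_2)$ plus symmetry is more modular. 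Each one-sided step needs a single application of Lemma~\ref{lem:matrix-Phi-ineq}(ii) and no three-block Markov chain. The cost is the trivial check that composing the two one-sided channels reproduces the joint law $p(x_1,y_1)p(x_2|x_1)p(y_2|y_1)$. The paper's version is a single pass that mirrors the tensorization computation.

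Two small corrections complete your sketch. First, the monotonicity step you actually need is $\HPhi(\E[\mt F|X_2])\le\HPhi(\E[\mt F|X_1X_2])$, not your bullet with $X_1$ on the left. With it, the chain rule and Lemma~\ref{lem:matrix-Phi-ineq}(ii) (applied to $\mt F$ viewed as a function of $(X_1,X_2,Y_1)$) give
\[
\HPhi(\E[\mt F|X_2])\le\HPhi(\E[\mt F|X_1])+\HPhi(\E[\mt F|X_1X_2]|X_1)\le\HPhi(\E[\mt F|X_1])+\HPhi(\mt F|X_1Y_1).
\]
Second, you must multiply this by $\lambda_1$ and use $\lambda_1\le1$ to absorb $\lambda_1\HPhi(\mt F|X_1Y_1)\le\HPhi(\mt F|X_1Y_1)$. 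Adding the ribbon inequality for $\mt G$ and using $\HPhi(\mt F)=\HPhi(\mt G)+\HPhi(\mt F|X_1Y_1)$ then finishes the proof.
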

By taking intersections, the tensorization and the data processing also hold for the matrix $\Phi$-ribbon. 

The proofs can be found in Appendix \ref{app:proof-matrix-phi}.

\begin{proposition}[Supermodularity of matrix $\Phi$-entropy]
\label{lem:matrix-Phi-entropy-supermodularity}
Let \(X_1,\ldots,X_n\) be mutually independent finite random variables, let
\(\Phi\in\mathscr F_M^d\), and let
\begin{align*}
\mt F
=
\mt F(X_1,\ldots,X_n)
\in
\PSD^d.
\end{align*}
For each \(A\subseteq[n]\), define
\begin{align*}
h_{\mt F}(A)
\defeq
\HPhi\left(
\E[\mt F| X_A]
\right).
\end{align*}
Then \(h_{\mt F}\) is supermodular. That is, for every
\(A,B\subseteq[n]\),
\begin{align}
h_{\mt F}(A)
+
h_{\mt F}(B)
\leq
h_{\mt F}(A\cup B)
+
h_{\mt F}(A\cap B).
\label{eq:matrix-Phi-entropy-supermodularity}
\end{align}

\end{proposition}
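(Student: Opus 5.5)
The plan is to reduce supermodularity to Lemma~\ref{lem:matrix-Phi-ineq}~(\ref{matrix-Phi-ineq-2}) together with the chain rule. Set
\[
C\defeq A\cap B,\qquad A'\defeq A\setminus C,\qquad B'\defeq B\setminus C,
\]
so that $A\cup B=A'\sqcup B'\sqcup C$. The three blocks $X_{A'}$, $X_{B'}$, $X_C$ are mutually independent, and in particular $X_{A'}\to X_C\to X_{B'}$ is a Markov chain.

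Next, replace $\mt F$ by $\mt G\defeq \E[\mt F| X_{A\cup B}]$, which is a PSD function of $(X_{A'},X_{B'},X_C)$ with spectrum in $\dom(\Phi)$, since conditional expectation preserves this. By the tower property, $\E[\mt G|X_S]=\E[\mt F|X_S]$ for every $S\subseteq A\cup B$. Hence all four quantities in \eqref{eq:matrix-Phi-entropy-supermodularity} can be computed from $\mt G$.

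Rewrite the desired inequality $h(A)+h(B)\le h(A\cup B)+h(C)$ as
\[
h(A\cup B)-h(A)\ \ge\ h(B)-h(C).
\]
By the chain rule \eqref{matrix-Phi:chain-rule}, applied conditionally (equation \eqref{eqn:matrix-CR2}), each side becomes a conditional entropy.
\begin{itemize}
\item The left side is $\HPhi(\mt G)-\HPhi(\E[\mt G|X_{A'}X_C])=\HPhi(\mt G\mid X_{A'}X_C)$.
\item The right side is $\HPhi(\E[\mt G|X_{B'}X_C])-\HPhi(\E[\mt G|X_C])=\HPhi\bigl(\E[\mt G|X_{B'}X_C]\bigm| X_C\bigr)$. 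This uses the tower property $\E[\E[\mt G|X_{B'}X_C]|X_C]=\E[\mt G|X_C]$ together with the chain rule for $\E[\mt G|X_{B'}X_C]$.
\end{itemize}

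The resulting inequality
\[
\HPhi(\mt G|X_{A'}X_C)\ \ge\ \HPhi\bigl(\E[\mt G|X_{B'}X_C]\bigm|X_C\bigr)
\]
is exactly Lemma~\ref{lem:matrix-Phi-ineq}~(\ref{matrix-Phi-ineq-2}) with $X=X_{A'}$, $Y=X_{B'}$, $Z=X_C$. Its Markov hypothesis holds by the mutual independence of the blocks.

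The degenerate cases need a brief check. When $C=\emptyset$ we take $Z$ constant and use part~(\ref{matrix-Phi-ineq-1}). When $A'$ or $B'$ is empty, the inequality is trivially an equality, or it follows from conditioning reducing entropy \eqref{matrix-Phi:cond-reduce}. Note also that $h(\emptyset)=\HPhi(\E[\mt F])=0$.

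The main obstacle is the bookkeeping in the chain-rule step. I need the conditional chain rule in the form $\HPhi(\mt G)-\HPhi(\E[\mt G|W])=\HPhi(\mt G|W)$, and also its conditioned version for the right side, which requires $\E[\mt G|X_C]$ to coincide with the conditional expectation of $\E[\mt G|X_{B'}X_C]$. Both follow directly from the definitions and the tower property. Beyond that, the substantive content lies entirely in Lemma~\ref{lem:matrix-Phi-ineq}, which I take as given.
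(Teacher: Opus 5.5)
Your proposal is correct and is essentially the paper's proof. It uses the same split into $A\cap B$, $A\setminus B$, $B\setminus A$, and applies Lemma~\ref{lem:matrix-Phi-ineq}~(ii) to $\E[\mt F|X_{A\cup B}]$ with the Markov chain $X_{A\setminus B}\to X_{A\cap B}\to X_{B\setminus A}$. The chain rule then identifies the two conditional entropies with the increments $h_{\mt F}(A\cup B)-h_{\mt F}(A)$ and $h_{\mt F}(B)-h_{\mt F}(A\cap B)$. Your separate handling of degenerate cases is harmless but unnecessary, since Lemma~\ref{lem:matrix-Phi-ineq}~(ii) already covers empty (constant) blocks.
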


\begin{proof}
Fix \(A,B\subseteq[n]\), and define the pairwise disjoint sets
\begin{align*}
C
\defeq
A\cap B,
\qquad
U
\defeq
A\setminus B,
\qquad
V
\defeq
B\setminus A.
\end{align*}
Thus,
\begin{align*}
A=C\cup U,
\qquad
B=C\cup V,
\qquad
A\cup B=C\cup U\cup V.
\end{align*}
Since \(X_1,\ldots,X_n\) are mutually independent, $X_U\to X_C\to X_V$ is a Markov chain. Since \(\Phi\in\mathscr F_M^d\), applying Lemma~\ref{lem:matrix-Phi-ineq} (ii) with $X=X_U$, $Z=X_C$, $Y=X_V$ and $\mt F=\E[\mt F|X_CX_UX_V]$, we obtain  
\begin{align*}
    \HPhi(\E[\mt F|X_CX_UX_V]|X_UX_C)
    \geq \HPhi(\E[\mt F|X_CX_V]|X_C). 
\end{align*}
Applying the chain rule to both sides, we have 
\begin{align*}
    \HPhi(\E[\mt F|X_CX_UX_V]) -\HPhi(\E[\mt F|X_CX_U])
    \geq 
    \HPhi(\E[\mt F|X_CX_V])-\HPhi(\E[\mt F|X_C]), 
\end{align*}
which is exactly \eqref{eq:matrix-Phi-entropy-supermodularity}.

\end{proof}

\begin{theorem}
\label{thm:matrix-t2-ribbon-dimension-independent}
For any \(X_1,\ldots,X_n\) and any \(d\geq1\),
\begin{align*}
\fR_{t^2}^{M,d}(X_1;\ldots;X_n)
=
\fR_{t^2}^{M,1}(X_1;\ldots;X_n).
\end{align*} 
\end{theorem}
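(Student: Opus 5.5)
For $\Phi(t)=t^2$ the matrix $\Phi$-entropy is just a sum of scalar variances, so the theorem should follow by decomposing the matrix problem into scalar ones. The inclusion $\fR_{t^2}^{M,d}\subseteq\fR_{t^2}^{M,1}$ is already given by Proposition~\ref{propdpd}. It therefore remains to show that every $\vec\lambda\in\fR_{t^2}^{M,1}(X_1;\ldots;X_n)$ satisfies \eqref{ineq:matrix-Phi-ribbon-ineq} for every $d$-dimensional $\mt F$.

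\textbf{Step 1: reduce to entrywise variances.} For $\mt F$ with $\mt C=\E[\mt F]$,
\begin{align*}
H_{t^2}(\mt F)=\nortr\left(\E[\mt F^2]-\mt C^2\right)=\nortr\,\E\left[(\mt F-\mt C)^2\right]=\frac1d\sum_{j,k}\Var(F_{jk}),
\end{align*}
where $\Var(Z)=\E|Z-\E Z|^2$ for complex $Z$. The last equality uses that $\mt F-\mt C$ is Hermitian, so $\tr(\mt F-\mt C)^2=\sum_{j,k}|F_{jk}-C_{jk}|^2$. Conditional expectation acts entrywise, so likewise
\begin{align*}
H_{t^2}(\E[\mt F|X_i])=\frac1d\sum_{j,k}\Var\left(\E[F_{jk}|X_i]\right).
\end{align*}
Splitting each entry into real and imaginary parts, $\Var(Z)=\Var(\Re Z)+\Var(\Im Z)$, and the same holds after conditioning. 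Both sides of \eqref{ineq:matrix-Phi-ribbon-ineq} thus become sums over finitely many real-valued functions $g$ of $X^n$ of $\Var(g)$ and $\Var(\E[g|X_i])$, respectively.

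\textbf{Step 2: the scalar inequality for arbitrary real $g$.} It suffices to show that for every real $g(X^n)$,
\begin{align*}
\sum_i\lambda_i\Var(\E[g|X_i])\leq\Var(g).
\end{align*}
Summing this over the real and imaginary parts of all entries then gives \eqref{ineq:matrix-Phi-ribbon-ineq}.

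The scalar ribbon only constrains $g$ with values in $\dom(\Phi)$, which I take to be $\R_+$. This restriction is the only real obstacle, and shift invariance handles it. Replacing $g$ by $g+c$ with $c\geq -\min g$ makes the function nonnegative, and it changes neither $\Var(g)$ nor $\Var(\E[g|X_i])$. Since $H_{t^2}(f)=\Var(f)$ for scalar $f$, the scalar ribbon condition applied to $g+c$ yields the desired inequality for $g$.

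If the paper intends $\dom(\Phi)$ to be some other interval, the same argument works after an affine rescaling $g\mapsto a g+c$ with $a>0$ that maps the range of $g$ into the interval. Both sides of the inequality scale by $a^2$, so the inequality is preserved. This proves $\fR_{t^2}^{M,1}\subseteq\fR_{t^2}^{M,d}$ and hence the claimed equality.
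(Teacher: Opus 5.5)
Your proof is correct and follows essentially the same route as the paper. The paper expands $\mt F-\E[\mt F]$ in a real orthonormal basis of $\sa^d$ (normalized Hilbert--Schmidt inner product), writes both sides as sums of scalar variances, shifts each real coefficient function to be nonnegative, and applies the scalar ribbon; your decomposition into the real and imaginary parts of the matrix entries is just a non-normalized instance of that same basis expansion.
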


\begin{proof}
By Proposition \ref{propdpd}, we have \begin{align*}
\fR_{t^2}^{M,d}(X_1;\ldots;X_n)
\subseteq
\fR_{t^2}^{M,1}(X_1;\ldots;X_n).
\end{align*}
We prove the reverse inclusion.
Suppose that
\begin{align*}
\lambda
\in
\fR_{t^2}^{M,1}(X_1;\ldots;X_n).
\end{align*}
Let
\begin{align*}
\mt F:
\prod_{i=1}^n\cX_i
\longrightarrow
\PSD^d
\end{align*}
be arbitrary, and define
\begin{align*}
\mt G
\defeq
\mt F-\E[\mt F].
\end{align*}
Let \(\{\mt B_r\}_{r=1}^{d^2}\) be an orthonormal basis of the real vector space \(\sa^d\) with respect to the normalized Hilbert--Schmidt inner product 
\begin{align*}
\langle\mt A,\mt B\rangle_{\mathrm{HS}}
\defeq
\frac{1}{d}\operatorname{tr}(\mt A\mt B).
\end{align*}
Write
\begin{align*}
\mt G(X_1,\ldots,X_n)
=
\sum_{r=1}^{d^2}
g_r(X_1,\ldots,X_n)\mt B_r,
\qquad
g_r
\defeq
\langle\mt G,\mt B_r\rangle_{\mathrm{HS}}.
\end{align*}
Each \(g_r=g_r(X_1,\ldots,X_n)\) is real-valued. Since
\(\E[\mt G]=\mt 0\), we have \(\E[g_r]=0\) for every \(r\in[d^2]\).

Note that 
\begin{align*}
H_{t^2}(\mt F)
&=
\E\left[
\frac{1}{d}
\operatorname{tr}(\mt G^2)
\right]
\\
&=
\sum_{r=1}^{d^2}
\E[g_r^2]
\\
&=
\sum_{r=1}^{d^2}
\Var(g_r).
\end{align*}
Moreover, for every \(i\in[n]\),
\begin{align*}
\E[\mt G| X_i]
=
\sum_{r=1}^{d^2}
\E[g_r| X_i]\mt B_r,
\end{align*}
and hence
\begin{align*}
H_{t^2}
\left(
\E[\mt F| X_i]
\right)
=
\sum_{r=1}^{d^2}
\Var\left(
\E[g_r| X_i]
\right).
\end{align*}

Since the alphabets are finite, for every \(r\in[d^2]\), there exists a constant \(c_r\) such that \(g_r+c_r\geq0\). Moreover,
\begin{align*}
\Var(g_r+c_r)
&=
\Var(g_r),
\\
\Var\left(
\E[g_r+c_r| X_i]
\right)
&=
\Var\left(
\E[g_r| X_i]
\right).
\end{align*}
Since
\(\lambda\in\fR_{t^2}^{M,1}(X_1;\ldots;X_n)\), it follows that, for every \(r\in[d^2]\),
\begin{align*}
\sum_{i=1}^n
\lambda_i
\Var\left(
\E[g_r| X_i]
\right)
\leq
\Var(g_r).
\end{align*}
Summing over \(r\) gives
\begin{align*}
&
\sum_{i=1}^n
\lambda_i
H_{t^2}
\left(
\E[\mt F| X_i]
\right)
\\
&\qquad=
\sum_{r=1}^{d^2}
\sum_{i=1}^n
\lambda_i
\Var\left(
\E[g_r| X_i]
\right)
\\
&\qquad\leq
\sum_{r=1}^{d^2}
\Var(g_r)
\\
&\qquad=
H_{t^2}(\mt F).
\end{align*}
Since \(\mt F\) is arbitrary,
\begin{align}
\fR_{t^2}^{M,1}(X_1;\ldots;X_n)
\subseteq
\fR_{t^2}^{M,d}(X_1;\ldots;X_n).
\label{eq:t2-scalar-to-matrix}
\end{align}

\end{proof}

It is known in \cite[Theorem 23]{beigi2018phi} that 
\begin{align} \label{HC-MC}
    \fR_{\Phi}^{M,1}(X_1;\dots;X_n)
        \subseteq \fR_{t^2}^{M,1}(X_1;\dots;X_n). 
\end{align}
We extend it to the matrix version. 
\begin{corollary}
        Suppose $d\geq 1$ and $\Phi\in \mathscr{F}_M^d$. Then for any $X_1,\dots,X_n$, 
    \begin{align*}
        \fR_{\Phi}^{M,d}(X_1;\dots;X_n)
        \subseteq \fR_{t^2}^{M,d}(X_1;\dots;X_n). 
    \end{align*}
\end{corollary}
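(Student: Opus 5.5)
The plan is to linearize around a constant matrix: perturb a constant matrix $\mt C$ by a small mean-zero fluctuation and take $\epsilon\to0$. By Lemma~\ref{lem:Taylor-Phi-Ent}, the matrix $\Phi$-entropy of $\mt C+\epsilon\mt G$ then reduces at second order to a weighted quadratic form. Choosing $\mt C=c\mt I$ with scalar $c$ makes the weights uniform, so the quadratic form becomes $\Phi''(c)$ times the $t^2$-entropy.

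Fix $\vec\lambda\in\fR_{\Phi}^{M,d}(X_1;\dots;X_n)$ and an arbitrary $\mt F:\prod_i\cX_i\to\PSD^d$. Set $\mt G\defeq\mt F-\E[\mt F]$, which takes values in $\sa^d$ and has $\E[\mt G]=\mt 0$. Pick $c\in\interior(\dom\Phi)$; this is possible since $\Phi$ is not affine, so the interior is nonempty. Put $\mt G_\epsilon\defeq c\mt I+\epsilon\mt G$. The alphabet is finite, so for all sufficiently small $|\epsilon|$ every $\mt G_\epsilon(x)$ is PSD with spectrum in $\interior(\dom\Phi)$. Hence $\mt G_\epsilon$ is admissible in \eqref{ineq:matrix-Phi-ribbon-ineq}. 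Since $\E[\mt G_\epsilon|X_i]=c\mt I+\epsilon\E[\mt G|X_i]$, the same bound also lets us apply Lemma~\ref{lem:Taylor-Phi-Ent} to each conditional expectation.

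Now apply Lemma~\ref{lem:Taylor-Phi-Ent} with $\mt U=\mt I$ and all $\lambda_j=c$, using $\Phi'^{[1]}(c,c)=\Phi''(c)$:
\begin{align*}
\HPhi(\mt G_\epsilon)=\frac{\epsilon^2}{2}\Phi''(c)\,\E\bigl[\nortr(\mt G^2)\bigr]+o(\epsilon^2)=\frac{\epsilon^2}{2}\Phi''(c)H_{t^2}(\mt F)+o(\epsilon^2).
\end{align*}
In the same way, $\HPhi(\E[\mt G_\epsilon|X_i])=\frac{\epsilon^2}{2}\Phi''(c)H_{t^2}(\E[\mt F|X_i])+o(\epsilon^2)$. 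Here we use $H_{t^2}(\mt F)=\E[\nortr(\mt G^2)]$, which is the identity noted in the proof of Theorem~\ref{thm:matrix-t2-ribbon-dimension-independent}. The $t^2$-entropy does not depend on whether the normalization $1/2$ is included, since such a constant cancels in the ribbon inequality.

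Substitute these expansions into the ribbon inequality $\sum_i\lambda_i\HPhi(\E[\mt G_\epsilon|X_i])\le\HPhi(\mt G_\epsilon)$ and divide by $\epsilon^2\Phi''(c)/2$. This uses $\Phi''(c)>0$, which holds for $\Phi\in\mathscr{F}_M^d$ as remarked after Definition~\ref{def:class-FM}. Letting $\epsilon\to0$ gives $\sum_i\lambda_iH_{t^2}(\E[\mt F|X_i])\le H_{t^2}(\mt F)$, so $\vec\lambda\in\fR_{t^2}^{M,d}$.

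The main technical point is the smoothness hypothesis in Lemma~\ref{lem:Taylor-Phi-Ent}, which asks for twice Fréchet differentiability. If only $C^2$ is available, one can instead argue pointwise with Lemma~\ref{lem:trace-spectral-derivatives}. Set $g(t)=\nortr\Phi(c\mt I+t\mt H)$ for a fixed realization $\mt H$. Then $g''(t)\to\Phi''(c)\nortr(\mt H^2)$ as $t\to0$, by continuity of $\Phi''$ and of the divided differences, and the integral Taylor formula of Corollary~\ref{cor:integral-HPhi} gives the same expansion. Finiteness of the alphabet makes the $o(\epsilon^2)$ terms uniform.
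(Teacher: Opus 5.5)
Your proof is correct, but it takes a different route from the paper. The paper proves the corollary by chaining three inclusions,
$\fR_{\Phi}^{M,d}\subseteq\fR_{\Phi}^{M,1}\subseteq\fR_{t^2}^{M,1}=\fR_{t^2}^{M,d}$. These come, respectively, from the dimension monotonicity of Proposition~\ref{propdpd}, the scalar inclusion \eqref{HC-MC} imported from Beigi--Gohari, and the dimension independence of the $t^2$-ribbon (Theorem~\ref{thm:matrix-t2-ribbon-dimension-independent}).

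You instead linearize directly in dimension $d$ around $c\mt I$. At $d=1$ your argument reproves \eqref{HC-MC} itself.

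Your route is self-contained: it relies on neither the external scalar result nor the $t^2$ dimension-independence theorem. About $\Phi$ it uses only that $\Phi$ is $C^2$ near some interior point $c$ with $\Phi''(c)>0$. The paper's route is a three-line consequence of results already in place. Its chain also gives the stronger inclusion $\fR_{\Phi}^{M,1}\subseteq\fR_{t^2}^{M,d}$ at no extra cost.

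The smoothness concern in your last paragraph is unnecessary. The base point $c\mt I$ commutes with everything, so $\nortr\Phi(c\mt I+\epsilon\mt G(x))=\frac{1}{d}\sum_k\Phi(c+\epsilon\gamma_k(x))$, where $\gamma_k(x)$ are the eigenvalues of $\mt G(x)$. The linear term vanishes because $\E[\nortr\mt G]=0$. What remains is scalar Taylor expansion applied eigenvalue by eigenvalue, which needs no Fr\'echet calculus.
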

\begin{proof}
    First, by Proposition \ref{propdpd} we have 
    \begin{align*}
        \fR_{\Phi}^{M,d}(X_1;\dots;X_n)
        \subseteq \fR_{\Phi}^{M,1}(X_1;\dots;X_n). 
    \end{align*}
    By \eqref{HC-MC}, 
    \begin{align*}
        \fR_{\Phi}^{M,1}(X_1;\dots;X_n)
        \subseteq \fR_{t^2}^{M,1}(X_1;\dots;X_n). 
    \end{align*}
    By Theorem~\ref{thm:matrix-t2-ribbon-dimension-independent}, we have 
    \begin{align*}
        \fR_{t^2}^{M,1}(X_1;\dots;X_n)
        =\fR_{t^2}^{M,d}(X_1;\dots;X_n). 
    \end{align*}
    Combining all above, we obtain the desired inclusion. 
\end{proof}

\section{Strong data processing inequality constants for scalar and matrix entropies} \label{sec:SDPI}
    The scalar strong data processing inequalities are well studied \cite{raginsky2016strong,beigi2018phi}. In this section, we show that many scalar results can be generalized to the matrix case. 
\subsection{Basic properties}
\begin{definition}
    Suppose $\Phi:\cI\to \R$ is convex where $\cI\subseteq \R_+$. Define the \emph{$d$-dimensional matrix $\Phi$-SDPI constant} to be 
    \begin{align*}
        \eta_{\Phi}^{M,d}(X;Y) &\defeq \sup \frac{\HPhi(\E[\mt F(X)|Y])}{\HPhi(\mt F(X))}.
    \end{align*}
    where the supremum is over all $d$-dimensional PSD matrix-valued $\mt F(X)$ such that $\spec(\mt F(x))\subseteq \dom(\Phi)$ for all $x$ and $\HPhi(\mt F(X))>0$. Then define the \emph{matrix $\Phi$-SDPI constant} to be 
    \begin{align*}
        \eta_{\Phi}^{M}(X;Y) &\defeq \sup_{d\geq 1} \eta_{\Phi}^{M,d}(X;Y).
    \end{align*}
    When $X$ is constant or $\Phi$ is affine, we define $\eta_{\Phi}^{M,d}(X;Y)=0$. 
\end{definition}

By the chain rule and the nonnegativity of matrix $\Phi$-entropy, 
\begin{align*}
    \HPhi \left(\mt F\right)\geq \HPhi \left(\E[\mt F|Y]\right). 
\end{align*}
Thus, $\eta_{\Phi}^{M,d}$ and $\eta_{\Phi}^{M}$ are well defined and $\eta_{\Phi}^{M}\in [0,1]$. By definition and using the manipulation in the proof of Proposition \ref{propdpd}, it is immediate that for $d'\geq d\geq 1$, we have 
\begin{align} \label{eq:order-Phi-SDPI}
    \eta_{\Phi}^{M,d}
    \leq 
    \eta_{\Phi}^{M,d'}
    \leq 
    \eta_{\Phi}^{M}.
\end{align}

The matrix SDPI constants have applications in studying the exponential decay in the quantum Markov semigroup \cite{cheng2017exponential}

The maximal correlation of $X$ and $Y$ is defined as (see \cite{hirschfeld1935connection,gebelein1941statistische,renyi1959new,renyi1959measures,gacs1973common})
\begin{align*}
    \rho(X;Y)
    \defeq \sup \E[f(X)g(Y)]
\end{align*}
where the supremum is over all $f,g$ such that $\E[f]=\E[g]=0$ and $\Var(f)=\Var(g)=1$. If $X$ or $Y$ is constant, define $\rho(X;Y)=0$. Alternatively, the maximal correlation is defined as 
\begin{align} \label{MC2}
    \rho(X;Y)
    =
    \sup \sqrt{\frac{\E[\E[f|Y]^2]}{\E[f^2]}} 
\end{align}
where the supremum is over all nonconstant $f(X)$ such that $\E[f]=0$. It is known that the squared maximal correlation is a $\Phi$-SDPI constant when $\Phi(t)=t^2$ , i.e., $\rho(X;Y)^2=\eta_{t^2}(X;Y)$ \cite{beigi2018phi}. 

In the scalar case, it is known that the squared maximal correlation, $\rho^2(X;Y)$, lower bounds every $\Phi$-SDPI constant for $\Phi\in \mathscr F$, i.e., $\eta_{\Phi}(X;Y)\geq \rho^2(X;Y)$ \cite{raginsky2016strong}. We have similar results in the matrix case.

\begin{theorem}
\label{thm:matrix-Phi-SDPI-data-processing-tensorization}
Fix \(d\geq 1\) and let \(\Phi\) be convex on $[0,\infty)$. 

Then the following properties hold.
\begin{enumerate}[\rm (i)]

\item \textbf{Data processing.}
Suppose that $X\to Y\to Z$ 
forms a Markov chain. Then
\begin{align}
\eta_{\Phi}^{M,d}(X;Z)
\leq
\eta_{\Phi}^{M,d}(X;Y)
\eta_{\Phi}^{M,d}(Y;Z).
\label{eq:matrix-Phi-SDPI-data-processing}
\end{align}
In particular,
\begin{align}
\eta_{\Phi}^{M,d}(X;Z)
\leq
\min\left\{
\eta_{\Phi}^{M,d}(X;Y),
\eta_{\Phi}^{M,d}(Y;Z)
\right\}. \label{eq:matrix-Phi-SDPI-data-processing-2}
\end{align}

\item \textbf{Tensorization.} Suppose in addition that \(\Phi\in\mathscr F_M^d\). 
Let $(X_1,Y_1),\ldots,(X_n,Y_n)$ 
be mutually independent pairs. Then
\begin{align}
\eta_{\Phi}^{M,d}
\left(
X_1\cdots X_n;
Y_1\cdots Y_n
\right)
=
\max_{i\in[n]}
\eta_{\Phi}^{M,d}(X_i;Y_i).
\label{eq:matrix-Phi-SDPI-tensorization}
\end{align}

\end{enumerate}
\end{theorem}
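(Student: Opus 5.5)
\textbf{Data processing.} I would use the tower property together with the Markov chain. Fix a $d$-dimensional $\mt F=\mt F(X)$ with $\HPhi(\mt F)>0$ and set $\mt G\defeq\E[\mt F|Y]$, a function of $Y$. Its values are convex combinations of the $\mt F(x)$, so they are PSD with spectra in the (interval) domain of $\Phi$. Since $X\to Y\to Z$, we have $\E[\mt F|Z]=\E[\mt G|Z]$. If $\HPhi(\mt G)=0$, then $\mt G$ is a.s. constant, so $\E[\mt G|Z]$ is constant and the bound is trivial. Otherwise, applying the definitions of the two constants gives
\begin{align*}
\HPhi(\E[\mt F|Z])=\HPhi(\E[\mt G|Z])\leq \eta_{\Phi}^{M,d}(Y;Z)\HPhi(\mt G)\leq \eta_{\Phi}^{M,d}(Y;Z)\,\eta_{\Phi}^{M,d}(X;Y)\HPhi(\mt F).
\end{align*}
Taking the supremum over $\mt F$ proves \eqref{eq:matrix-Phi-SDPI-data-processing}. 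Then \eqref{eq:matrix-Phi-SDPI-data-processing-2} follows because both constants lie in $[0,1]$. The degenerate conventions (constant $X$ or affine $\Phi$) are immediate.

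\textbf{Tensorization, lower bound.} Take $\mt F$ depending only on $X_i$. By independence of the pairs, $\E[\mt F|Y_1\cdots Y_n]=\E[\mt F|Y_i]$, so the left side of \eqref{eq:matrix-Phi-SDPI-tensorization} is at least each $\eta_{\Phi}^{M,d}(X_i;Y_i)$.

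\textbf{Tensorization, upper bound.} By induction it suffices to treat $n=2$, grouping $(X_1\cdots X_{n-1},Y_1\cdots Y_{n-1})$ as one pair. Let $\eta=\max\{\eta_1,\eta_2\}$ and $\mt F=\mt F(X_1,X_2)$. By the chain rule \eqref{matrix-Phi:chain-rule} conditioned on $Y_1$,
\begin{align*}
\HPhi(\E[\mt F|Y_1Y_2])=\HPhi(\E[\mt F|Y_1])+\HPhi(\E[\mt F|Y_1Y_2]\,|\,Y_1).
\end{align*}
I would bound the two terms separately.
\begin{enumerate}[\rm (a)]
\item \emph{First term.} Since $Y_1\independent X_2|X_1$, we have $\E[\mt F|Y_1]=\E[\E[\mt F|X_1]|Y_1]$, so the first term is at most $\eta_1\HPhi(\E[\mt F|X_1])$.
\item \emph{Second term.} For each fixed $y_1$, the independence of the pairs gives $\E[\mt F|Y_1=y_1,Y_2]=\E[\mt G_{y_1}(X_2)|Y_2]$, where $\mt G_{y_1}(x_2)\defeq\E[\mt F|X_2=x_2,Y_1=y_1]$. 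Moreover, the law of $(X_2,Y_2)$ given $Y_1=y_1$ is the original $p_{X_2Y_2}$. Hence the SDPI for $(X_2,Y_2)$ applied pointwise in $y_1$ and averaged gives the bound $\eta_2\HPhi(\E[\mt F|X_2Y_1]\,|\,Y_1)$.
\end{enumerate}

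The key step is then to show
\begin{align*}
\HPhi(\E[\mt F|X_2Y_1]\,|\,Y_1)\leq\HPhi(\mt F|X_1).
\end{align*}
Here I would invoke Lemma~\ref{lem:matrix-Phi-ineq}(ii) with $X=X_1$, $Z=Y_1$, $Y=X_2$. The required Markov chain $X_1\to Y_1\to X_2$ holds because $X_2$ is independent of $(X_1,Y_1)$. The lemma yields $\HPhi(\mt F|X_1Y_1)\geq\HPhi(\E[\mt F|X_2Y_1]|Y_1)$. Finally, $\HPhi(\mt F|X_1Y_1)=\HPhi(\mt F|X_1)$, since $\mt F$ depends only on $(X_1,X_2)$ and the conditional law of $X_2$ given $(X_1,Y_1)$ equals that given $X_1$.

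Combining (a) and (b) with this step and the chain rule,
\begin{align*}
\HPhi(\E[\mt F|Y_1Y_2])\leq\eta\bigl(\HPhi(\E[\mt F|X_1])+\HPhi(\mt F|X_1)\bigr)=\eta\HPhi(\mt F).
\end{align*}
I expect the main obstacle to be the second-term bookkeeping: justifying the pointwise-in-$y_1$ application of the SDPI and the identification of the conditional entropies under the independence structure. The membership $\Phi\in\mathscr F_M^d$ enters only through Lemma~\ref{lem:matrix-Phi-ineq}(ii).
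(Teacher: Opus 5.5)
Your proof is correct and is essentially the paper's argument with the roles of the two pairs interchanged. The paper applies the chain rule on $Y_2$, uses the $(X_1,Y_1)$-SDPI pointwise in $y_2$, and closes with Jensen's inequality for the convex matrix $\Phi$-entropy functional, which is exactly Lemma~\ref{lem:matrix-Phi-ineq}(ii) for the chain $X_2\to Y_2\to X_1$; you instead condition on $Y_1$ and invoke that lemma for $X_1\to Y_1\to X_2$. One minor correction: for merely convex $\Phi$, $\HPhi(\mt G)=0$ does not force $\mt G$ to be constant, but that case is still trivial because $\HPhi(\E[\mt G|Z])\le\HPhi(\mt G)$ by the chain rule and nonnegativity.
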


The proof is in Appendix \ref{app:proof-matrix-Phi-SDPI-data-processing-tensorization}.

\begin{theorem} \label{thm:seperation-SDPI}
    Suppose 
    \begin{align*}
    \Phi(t)
    \defeq
    \frac{1}{2}t\sqrt{1+t^2}
    +
    t\ln\left(
    \frac{t}{1+\sqrt{1+t^2}}
    \right)
    -
    \frac{1}{2}\operatorname{arsinh}(t),
    \qquad
    t>0, 
    \end{align*}
    and 
    \begin{align*}
    p_{XY}
    =
    \frac{1}{8}
    \begin{pmatrix}
    1 & 0 & 1 & 0\\
    1 & 0 & 0 & 1\\
    0 & 1 & 1 & 0\\
    0 & 1 & 0 & 1
    \end{pmatrix}.
    \end{align*}
    Then $\Phi\in \mathscr{F}\setminus \mathscr{F}_M^2$ and $\eta_{\Phi}^{M,2}>1/2=\eta_{\Phi}^{M,1}$. 
\end{theorem}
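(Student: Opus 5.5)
The plan has three parts. First, show $\Phi\in\mathscr F$. Second, compute $\eta_\Phi^{M,1}=1/2$ from the structure of $p_{XY}$. Third, exhibit a $2\times 2$ matrix-valued $\mt F(X)$ whose ratio exceeds $1/2$. Since $\mathscr F_M^2$ forces supermodularity (Proposition~\ref{lem:matrix-Phi-entropy-supermodularity}), the third part will also show $\Phi\notin\mathscr F_M^2$.

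\emph{Step 1: $\Phi\in\mathscr F$.} I would differentiate $\Phi$ twice. The expected result is $\Phi''(t)=\sqrt{1+t^2}/t$; the $\operatorname{arsinh}$ and logarithm terms are designed to cancel down to this. Then $1/\Phi''(t)=t(1+t^2)^{-1/2}$, with derivative $(1+t^2)^{-3/2}$ and second derivative $-3t(1+t^2)^{-5/2}<0$. So $1/\Phi''$ is concave, condition (v) of Definition~\ref{def:class-F} holds, and $\Phi$ is smooth, convex and not affine on $(0,\infty)$.

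\emph{Step 2: structure of $p_{XY}$ and the scalar value.} Reading rows as $x$ and columns as $y$, write $X=(A,B)$ with $A,B$ i.i.d.\ uniform bits; row $x$ has support $\{y_A,y_B\}$ determined by $A$ and $B$ respectively. Thus $Y=(J,X_J)$ with $J$ uniform on $\{A,B\}$ and independent of $X$. For any $\mt F=\mt F(A,B)$, $\E[\mt F|Y]$ equals $\E[\mt F|A]$ on $\{J=A\}$ and $\E[\mt F|B]$ on $\{J=B\}$, and both have mean $\E[\mt F]$. Hence
\[
\HPhi(\E[\mt F|Y])=\tfrac12\bigl(\HPhi(\E[\mt F|A])+\HPhi(\E[\mt F|B])\bigr).
\]
In dimension one, $\mathscr F=\mathscr F_M^1$, so Proposition~\ref{lem:matrix-Phi-entropy-supermodularity} with $h(\emptyset)=0$ gives $h(\{A\})+h(\{B\})\le h(\{A,B\})$. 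Therefore the ratio is at most $1/2$. Taking $f=f(A)$ nonconstant attains $1/2$, so $\eta_\Phi^{M,1}=1/2$.

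\emph{Step 3: the matrix example.} By the identity in Step 2, $\eta_\Phi^{M,2}>1/2$ is equivalent to finding $\mt F:\{0,1\}^2\to\PSD^2$ with
\[
\HPhi(\E[\mt F|A])+\HPhi(\E[\mt F|B])>\HPhi(\mt F),
\]
that is, a failure of supermodularity in dimension $2$. Such an $\mt F$ immediately shows $\Phi\notin\mathscr F_M^2$, because Proposition~\ref{lem:matrix-Phi-entropy-supermodularity} would otherwise forbid it.

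This step is the main obstacle. A local perturbation $\mt C+\epsilon\mt G$ will not work. By Lemma~\ref{lem:Taylor-Phi-Ent}, at second order each entropy is the same quadratic form $\E[\nortr(\D^2\Phi(\mt C)[\mt G,\mt G])]$. The ANOVA decomposition of $\mt G$ into $A$-part, $B$-part and interaction is orthogonal under $\E$ for independent $A,B$, so the inequality becomes an equality at order $\epsilon^2$. The violation must therefore come from non-commutativity at finite amplitude.

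What I would try first is the structure of the failure of condition (iii) of Definition~\ref{def:class-FM}. I would find a $2\times2$ pair $(\mt A,\mt B)$ and a midpoint where $\tr(\mt B\,\D\Phi'(\mt A)[\mt B])$ fails joint convexity, using the Daleckii--Krein form of Lemma~\ref{lem:trace-spectral-derivatives} with divided differences of $\Phi'$. I would then use that geometry to build $\mt F(a,b)=\mt C+a\mt P+b\mt Q$ with non-commuting $\mt P,\mt Q$, for instance $\mt P$ diagonal and $\mt Q$ off-diagonal, at scales where $\Phi''$ varies strongly (near $0$, where $\Phi''\sim 1/t$, versus $t\gtrsim1$, where $\Phi''\to1$). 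The matrix functions of $2\times 2$ matrices can be evaluated in closed form from their eigenvalues, so $\HPhi$ of each of the three random matrices is explicit. The final verification is a concrete numerical or closed-form evaluation showing the strict inequality. If this ansatz fails, I would add an interaction term $ab\,\mt R$ and search over the few remaining parameters.
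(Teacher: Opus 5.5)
\paragraph{Steps 1 and 2.} These are correct and follow the paper. The paper computes $\Phi''(t)=\sqrt{1+t^2}/t$ and shows that $1/\Phi''$ is concave, as you do. It writes $X=(U,V)$ and $Y=(J,Z)$, and derives $\HPhi(\E[\mt F|Y])=\tfrac12\HPhi(\E[\mt F|U])+\tfrac12\HPhi(\E[\mt F|V])$. It then bounds the scalar ratio by $1/2$ using the independence of $U$ and $V$. The paper cites Proposition~\ref{prop:phi-ribbon-properties}~(ii) for this; your supermodularity argument has the same content. Your reduction of $\eta_\Phi^{M,2}>1/2$ to a single $2\times 2$ violation of $\HPhi(\E[\mt F|A])+\HPhi(\E[\mt F|B])\le\HPhi(\mt F)$ is also right, as is your remark that such a violation gives $\Phi\notin\mathscr F_M^2$.

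\paragraph{The gap is Step 3.} All of the theorem's content is in this step, and you never produce the violating $\mt F$ or show that one exists.
\begin{itemize}
\item Your first move, finding a pair where condition (iii) of Definition~\ref{def:class-FM} fails, presupposes $\Phi\notin\mathscr F_M^2$. That is one of the claims you are supposed to prove.
\item The ansatz $\mt C+a\mt P+b\mt Q$, with an optional $ab\,\mt R$, is a search plan with no argument that it succeeds.
\item The violation must come from noncommuting values, since commuting values reduce to scalar inequalities that hold. It is also tiny. The paper exhibits explicit positive definite $\mt F_{00},\mt F_{01},\mt F_{10},\mt F_{11}$ and finds $\HPhi(\E[\mt F|U])+\HPhi(\E[\mt F|V])-\HPhi(\mt F)\approx 2.4\times 10^{-5}$, against entropies of order $0.7$. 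Only a concrete, checked evaluation of this kind closes the proof.
\end{itemize}

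\paragraph{How your idea could be completed.} By the chain rule, the inequality is equivalent to $\HPhi(\E[\mt F|B])\le\HPhi(\mt F|A)$, which is midpoint convexity of the two-point Jensen gap. Suppose you find positive definite $\mt A_0,\mt A_1$ and self-adjoint $\mt B_0,\mt B_1$ with midpoints $\mt A_m,\mt B_m$ such that $\nortr\bigl(\D^2\Phi(\mt A_m)[\mt B_m,\mt B_m]\bigr)>\tfrac12\sum_{a}\nortr\bigl(\D^2\Phi(\mt A_a)[\mt B_a,\mt B_a]\bigr)$. Then $\mt F(a,b)=\mt A_a+(-1)^b\epsilon\mt B_a$ violates the inequality for small $\epsilon$, because the odd-order terms cancel. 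You would still have to exhibit such matrices explicitly, for example from the divided differences of $\Phi'(t)=\sqrt{1+t^2}+\ln\frac{t}{1+\sqrt{1+t^2}}$. Note that this construction needs an interaction term unless $\mt B_0=\mt B_1$.
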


The proof is in Appendix \ref{app:proof-seperation-SDPI}.

\begin{remark} \label{rmk:separation}
    In Theorem~\ref{thm:seperation-SDPI}, the function $\Phi$ belongs to $\mathscr{F}$ but does not belong to $\mathscr{F}_M^2$. It remains open whether there exists $\Phi\in\mathscr{F}_M^{d'}$ and $p_{XY}$ such that $\eta_{\Phi}^{M,d'}>\eta_{\Phi}^{M,d}$ where $d'>d$. 
\end{remark}

\subsection{Matrix $\Phi$-entropy inequalities}

We extend the scalar $\Phi$-entropy bounds in Theorem~\ref{thm:scalar-DPhi-ineq} and Theorem~\ref{thm:scalar-DPhi-ineq-delta} to matrix-valued functions. These estimates are the key ingredients in proving the equality between the $d$-dimensional matrix $\Phi$-SDPI constant and the boundary slope of the $d$-dimensional matrix $\Phi$-ribbon.

\begin{restatable}{theorem}{matrixPhiineq}
\label{thm:matrix-Phi-conditional-stability}
    Let $d\geq 1$ be an integer. 
    Suppose $\Phi\in\mathscr{F}_M^d$. 
    Let $(X,Y)$ have distribution $p_{XY}$ on the finite set
    $\cX\times\cY$, and denote $\mu\defeq \min\{p_{XY}(x,y):p_{XY}(x,y)>0\}$. 
    For any function $\mt F:\cX\times\cY\to\sa^d$ with spectra
    inside $\dom(\Phi)$, write $\mt F_X\defeq\E[\mt F|X]$, $\mt F_Y\defeq\E[\mt F|Y]$ and $\widetilde{\mt F}_Y\defeq\E[\mt F_X|Y]$. 
    Then
    \begin{align*}
        \HPhi(\mt F_Y)
        &\leq
        \HPhi(\widetilde{\mt F}_Y)+\HPhi(\mt F|X)
        +\frac{\pi}{\mu}
        \sqrt{\HPhi(\mt F_X)\HPhi(\mt F|X)}.
    \end{align*}
    Consequently, for any $\delta>0$, we have 
    \begin{align*}
        \HPhi(\mt F_Y)
        &\leq
        \HPhi(\widetilde{\mt F}_Y)
        +\delta \HPhi(\mt F_X)
        +K_{\delta} 
        \HPhi(\mt F|X)
    \end{align*}
    where $K_\delta\defeq 1+\frac{\pi^2}{4\delta\mu^2}$. 
\end{restatable}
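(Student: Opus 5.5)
We mirror the scalar proof of Theorem~\ref{thm:scalar-DPhi-ineq} step by step, with the scalar quadratic form $Q_u(h)=\Phi''(u)h^2$ replaced by the matrix form
\[
Q_{\mt A}(\mt H)\defeq\nortr\left(\mt H\,\D\Phi'(\mt A)[\mt H]\right).
\]
By Definition~\ref{def:class-FM}(iii), $Q$ is jointly convex in $(\mt A,\mt H)$ for $\Phi\in\mathscr F_M^d$. By Lemma~\ref{lem:sa-psd-DPhi'}, it is nonnegative and symmetric. As in the scalar case, we first treat $\mt F$ with spectra in $\interior(\dom\Phi)$ and remove this restriction at the end by replacing $\mt F$ with $(1-\epsilon)\mt F+\epsilon c\mt I$. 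The ``consequently'' part follows from the first inequality by Young's inequality, exactly as in Theorem~\ref{thm:scalar-DPhi-ineq-delta}.

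\textbf{Step 1 (Bregman part).} By Definition~\ref{def:class-FM}(ii), the trace Bregman divergence $(\mt A,\mt B)\mapsto\nortr(\Phi(\mt A)-\Phi(\mt B)-\D\Phi(\mt B)[\mt A-\mt B])$ is jointly convex. We apply conditional Jensen given $Y$ to the pair $(\mt F,\mt F_X)$ and use $\nortr\D\Phi(\mt B)[\mt H]=\nortr(\Phi'(\mt B)\mt H)$. This gives
\[
\HPhi(\mt F_Y)-\HPhi(\widetilde{\mt F}_Y)\le \HPhi(\mt F|X)+\E\,\nortr\bigl((\Phi'(\widetilde{\mt F}_Y)-\Phi'(\mt C))(\mt F_Y-\widetilde{\mt F}_Y)\bigr),
\]
where $\mt C=\E\mt F$. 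Subtracting $\Phi'(\mt C)$ is allowed because $\E[\mt F_Y-\widetilde{\mt F}_Y]=\mt 0$.

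\textbf{Step 2 (matrix quadratic lemma).} We prove the matrix analogue of Lemma~\ref{lem:Phi-bd-quadratic}:
\[
\E\, Q_{\mt C}(\mt F-\mt C)\le \tfrac{2}{\mu}\HPhi(\mt F),\qquad \E\, Q_{\widetilde{\mt F}_Y}(\mt F_Y-\widetilde{\mt F}_Y)\le\tfrac2\mu\HPhi(\mt F|X).
\]
For the unconditional bound we use Corollary~\ref{cor:integral-HPhi}, so it suffices to show
\[
Q_{\mt U}(\mt H)\ge \mu\, Q_{\mt C}(\mt H)
\]
for every convex combination $\mt U$ of the values $\mt F(z)$. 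Write $\mt C=\mu\mt U+(1-\mu)\mt V$. Definition~\ref{def:class-FM}(i) says $\mt A\mapsto(\D\Phi'(\mt A))^{-1}$ is operator concave, as a map into positive operators on $\sa^d$ with the HS inner product. Hence $(\D\Phi'(\mt C))^{-1}\succeq \mu(\D\Phi'(\mt U))^{-1}$. Operator-monotonicity of inversion then gives $\D\Phi'(\mt U)\succeq\mu\D\Phi'(\mt C)$, which is the claim. The conditional bound follows by joint convexity of $Q$ (Jensen given $Y$), followed by applying the unconditional bound under each law $p_{Y|X=x}$, whose atoms are $\ge\mu$.

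\textbf{Step 3 (cross term, main obstacle).} We need the matrix version of \eqref{eq:scalar-Phi-divergence-mixed-gradient}:
\[
\left|\nortr\bigl((\Phi'(\mt U)-\Phi'(\mt V))\mt H\bigr)\right|\le\tfrac\pi2\sqrt{Q_{\mt V}(\mt U-\mt V)\,Q_{\mt U}(\mt H)}.
\]
Write the left side as $\int_0^1\nortr(\D\Phi'(\mt W_t)[\mt U-\mt V]\,\mt H)\,dt$ with $\mt W_t=(1-t)\mt V+t\mt U$. Then apply Cauchy--Schwarz in the positive bilinear form $\langle\cdot,\D\Phi'(\mt W_t)\cdot\rangle$ to get the bound $\sqrt{Q_{\mt W_t}(\mt U-\mt V)\,Q_{\mt W_t}(\mt H)}$. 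To control this, we use concavity of $(\D\Phi')^{-1}$:
\[
(\D\Phi'(\mt W_t))^{-1}\succeq (1-t)(\D\Phi'(\mt V))^{-1}+t(\D\Phi'(\mt U))^{-1}.
\]
The difficulty is that the scalar AM--GM step $\Phi''(w)\le\sqrt{\Phi''(u)\Phi''(v)}/(2\sqrt{t(1-t)})$ has no direct operator analogue. What makes it work is that we only need a mixed bound. From $\D\Phi'(\mt W_t)\preceq(t\,(\D\Phi'(\mt U))^{-1})^{-1}=\D\Phi'(\mt U)/t$, we get $Q_{\mt W_t}(\mt H)\le Q_{\mt U}(\mt H)/t$. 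Similarly, $Q_{\mt W_t}(\mt U-\mt V)\le Q_{\mt V}(\mt U-\mt V)/(1-t)$. The product is then at most $Q_{\mt U}(\mt H)\,Q_{\mt V}(\mt U-\mt V)/(t(1-t))$, and integrating gives $\int_0^1 dt/\sqrt{t(1-t)}=\pi$, i.e.\ the constant $\pi$ rather than $\pi/2$. To recover $\pi/2$, we would instead apply the harmonic-mean inequality $(aP^{-1}+bR^{-1})^{-1}$ directly inside the bilinear Cauchy--Schwarz. Should $\pi/2$ fail, the looser constant $\pi$ is still fine for the downstream ribbon/SDPI characterization.

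\textbf{Step 4 (assembly).} Apply Step 3 pointwise with $\mt U=\widetilde{\mt F}_Y$, $\mt V=\mt C$, $\mt H=\mt F_Y-\widetilde{\mt F}_Y$, then take expectation and use Cauchy--Schwarz. For the first factor, Jensen (convexity of $Q_{\mt C}$) gives $\E Q_{\mt C}(\widetilde{\mt F}_Y-\mt C)\le\E Q_{\mt C}(\mt F_X-\mt C)\le\frac2\mu\HPhi(\mt F_X)$, applying Step 2 to $\mt F_X$ with $p_X\ge\mu$. The second factor is controlled by the conditional bound of Step 2. Combining these with Step 1 yields the stated inequality.
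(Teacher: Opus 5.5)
Your Steps 1, 2 and 4 are sound and follow the paper. In Step 2 you obtain $\D\Phi'(\mt U)\succeq\mu\,\D\Phi'(\mt C)$ from concavity of $(\D\Phi')^{-1}$ and order-reversal of inversion. The paper gets the same comparison from joint convexity of $Q$ applied to the pair $(\mt H/\mu,\mt 0)$; either works.

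The gap is in Step 3. What you actually prove there is the pointwise bound with constant $\pi$. Carrying that through Step 4 yields only
\[
\HPhi(\mt F_Y)\le\HPhi(\widetilde{\mt F}_Y)+\HPhi(\mt F|X)+\frac{2\pi}{\mu}\sqrt{\HPhi(\mt F_X)\HPhi(\mt F|X)},\qquad K_\delta=1+\frac{\pi^2}{\delta\mu^2}.
\]
This is weaker than the statement, which fixes the constants $\pi/\mu$ and $1+\pi^2/(4\delta\mu^2)$. That the weaker version would suffice for Theorem~\ref{thm:matrix-SDPI-ribbon} does not prove the theorem as stated.

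Your suggested repair is not an argument, and the factor $2$ is lost structurally. Once you bound the cross term by $\sqrt{Q_{\mt W_t}(\mt U-\mt V)\,Q_{\mt W_t}(\mt H)}$, the harmonic-mean inequality only controls the two diagonal factors separately. On its own it does not bound their product by $\frac{1}{4t(1-t)}Q_{\mt V}(\mt U-\mt V)\,Q_{\mt U}(\mt H)$: when the two arguments lie along different eigendirections, the scalar AM--GM gain disappears.

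The missing idea is to bound the cross term directly by polarization, as the paper does. Set $\mt a=\mt U-\mt V$, $\mt b=\mt H$ and take $r>0$. The symmetry in Lemma~\ref{lem:sa-psd-DPhi'} gives
\[
4\,\nortr\bigl(\D\Phi'(\mt W_t)[\mt a]\,\mt b\bigr)=Q_{\mt W_t}(r\mt a+r^{-1}\mt b)-Q_{\mt W_t}(r\mt a-r^{-1}\mt b).
\]
Write $(\mt W_t,r\mt a\pm r^{-1}\mt b)=(1-t)\bigl(\mt V,\tfrac{r}{1-t}\mt a\bigr)+t\bigl(\mt U,\pm\tfrac{1}{rt}\mt b\bigr)$. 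Joint convexity of $Q$ along this decomposition bounds both terms by $\frac{r^2}{1-t}Q_{\mt V}(\mt a)+\frac{1}{r^2t}Q_{\mt U}(\mt b)$. Both terms are nonnegative, so their difference is bounded by the same quantity in absolute value. Optimizing over $r$ gives
\[
\bigl|\nortr\bigl(\D\Phi'(\mt W_t)[\mt a]\,\mt b\bigr)\bigr|\le\frac{1}{2\sqrt{t(1-t)}}\sqrt{Q_{\mt V}(\mt a)\,Q_{\mt U}(\mt b)},
\]
which integrates to the $\pi/2$ you wanted. The paper chooses $r$ after taking expectations, with the same outcome.

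This is within reach of your own tools. Let $P\defeq\D\Phi'(\mt V)/(1-t)$ and $R\defeq\D\Phi'(\mt U)/t$. The joint-convexity inequality needed here is exactly your harmonic-mean bound $\D\Phi'(\mt W_t)\preceq(P^{-1}+R^{-1})^{-1}$, read through the parallel-sum formula $\langle\mt z,(P^{-1}+R^{-1})^{-1}\mt z\rangle=\min_{\mt x+\mt y=\mt z}\bigl(\langle\mt x,P\mt x\rangle+\langle\mt y,R\mt y\rangle\bigr)$. The point is to apply it to the mixed vectors $r\mt a\pm r^{-1}\mt b$ rather than to $\mt a$ and $\mt b$ separately.
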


The proof is in Appendix~\ref{app:proof-matrix-SDPI-ribbon}. The proof follows exactly the idea in Theorem~\ref{thm:scalar-DPhi-ineq} and Theorem~\ref{thm:scalar-DPhi-ineq-delta}.

\begin{theorem}
\label{thm:matrix-SDPI-ribbon}
Fix \(d\geq 1\). Let \(\Phi\in\mathscr F_M^d\). 
Then
\[
\eta_\Phi^{M,d}(X;Y)
=\inf\frac{1-\lambda_1}{\lambda_2}
\]
where the infimum is over all $(\lambda_1,\lambda_2)\in\fR_\Phi^{M,d}(X;Y)$ with $\lambda_2>0$. 
\end{theorem}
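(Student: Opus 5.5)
The plan is to mirror the proof of Theorem~\ref{thm:equiv-sdpi-ribbon} in Subsection~\ref{subsec:proof-equiv-scalar}, replacing scalar functions by $d$-dimensional PSD matrix-valued functions. The scalar inequality of Theorem~\ref{thm:scalar-DPhi-ineq-delta} is replaced by its matrix analogue, Theorem~\ref{thm:matrix-Phi-conditional-stability}. The chain rule \eqref{matrix-Phi:chain-rule} is used in place of its scalar counterpart. If $X$ is constant, then $\eta_\Phi^{M,d}(X;Y)=0$ by convention. In that case every $\mt F(X,Y)$ satisfies $\HPhi(\E[\mt F|X])=0$. Moreover, $\HPhi(\E[\mt F|Y])\leq\HPhi(\mt F)$ by the chain rule and nonnegativity, so $(1,1)\in\fR_\Phi^{M,d}(X;Y)$ and the infimum is $0$. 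From now on assume $X$ is nonconstant, and write $\eta\defeq\eta_\Phi^{M,d}(X;Y)$ and $\alpha$ for the infimum.

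For $\eta\leq\alpha$, take any $(\lambda_1,\lambda_2)\in\fR_\Phi^{M,d}(X;Y)$ with $\lambda_2>0$. Apply the ribbon inequality \eqref{ineq:matrix-Phi-ribbon-ineq} to $\mt F=\mt F(X)$ with $\HPhi(\mt F)>0$. Since $\E[\mt F|X]=\mt F$, this gives $\lambda_1\HPhi(\mt F)+\lambda_2\HPhi(\E[\mt F|Y])\leq\HPhi(\mt F)$. Rearranging yields $\HPhi(\E[\mt F|Y])/\HPhi(\mt F)\leq(1-\lambda_1)/\lambda_2$. Taking the supremum over $\mt F$ and then the infimum over ribbon points gives $\eta\leq\alpha$.

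For $\alpha\leq\eta$, fix $\delta>0$ and an arbitrary $\mt F(X,Y)$ with spectra in $\dom(\Phi)$. Let $\mt F_X,\mt F_Y,\widetilde{\mt F}_Y$ be as in Theorem~\ref{thm:matrix-Phi-conditional-stability}; that theorem gives
\[
\HPhi(\mt F_Y)\leq\HPhi(\widetilde{\mt F}_Y)+\delta\HPhi(\mt F_X)+K_\delta\HPhi(\mt F|X).
\]
Since $\mt F_X$ is a PSD function of $X$ alone with spectra in $\dom(\Phi)$ (conditional expectations preserve this, as $\dom(\Phi)$ is an interval), the definition of $\eta$ gives $\HPhi(\widetilde{\mt F}_Y)=\HPhi(\E[\mt F_X|Y])\leq\eta\HPhi(\mt F_X)$. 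This also holds trivially when $\HPhi(\mt F_X)=0$, because then $\mt F_X$ is a.s. constant by strict convexity ($\Phi''>0$ in the interior), or more simply by the data-processing bound $\HPhi(\E[\mt F_X|Y])\leq\HPhi(\mt F_X)=0$. Hence
\[
\HPhi(\mt F_Y)\leq(\eta+\delta)\HPhi(\mt F_X)+K_\delta\HPhi(\mt F|X).
\]
Now choose an integer $n>\max\{1,\eta+\delta,K_\delta\}$, which depends only on $\delta$ and $\mu$ and not on $\mt F$. Divide by $n$ and add $(1-(\eta+\delta)/n)\HPhi(\mt F_X)$ to both sides. Using $K_\delta/n<1$ and the chain rule $\HPhi(\mt F_X)+\HPhi(\mt F|X)=\HPhi(\mt F)$, we get
\[
\Bigl(1-\tfrac{\eta+\delta}{n}\Bigr)\HPhi(\mt F_X)+\tfrac1n\HPhi(\mt F_Y)\leq\HPhi(\mt F).
\]
Because $n$ is uniform in $\mt F$, this shows $\bigl(1-\frac{\eta+\delta}{n},\frac1n\bigr)\in\fR_\Phi^{M,d}(X;Y)$, with both coordinates in $[0,1]$. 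Therefore $\alpha\leq\eta+\delta$, and letting $\delta\downarrow0$ gives $\alpha\leq\eta$.

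All the analytic difficulty sits in Theorem~\ref{thm:matrix-Phi-conditional-stability}, which is assumed here. What remains to check carefully is bookkeeping: the constant $n$ must be independent of $\mt F$, and the matrix chain rule and the SDPI bound must apply to $\mt F_X$ without extra domain restrictions. In particular, nothing beyond $\Phi\in\mathscr F_M^d$ and finiteness of the alphabets (through $\mu>0$) is needed, which is how the compactness assumption of the scalar proof in \cite{beigi2018phi} is avoided.
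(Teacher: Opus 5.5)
Your proposal is correct and follows the paper's proof essentially step for step: the same trivial case for constant $X$, the same rearrangement of the ribbon inequality for $\mt F(X)$ to get $\eta\le\alpha$, and, for $\alpha\le\eta$, the same combination of Theorem~\ref{thm:matrix-Phi-conditional-stability}, the SDPI bound applied to $\E[\mt F|X]$, and an integer $n>\max\{1,\eta+\delta,K_\delta\}$ that produces the ribbon point $\bigl(1-\frac{\eta+\delta}{n},\frac1n\bigr)$. The extra bookkeeping you add (the case $\HPhi(\mt F_X)=0$ and the fact that $n$ does not depend on $\mt F$) is exactly what the paper leaves implicit.
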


\begin{proof}
If $X$ is constant, then $\eta_\Phi^{M,d}(X;Y)=0$ by convention
and $\fR_\Phi^{M,d}(X;Y)=[0,1]^2$, so the identity is immediate.
Assume henceforth that $X$ is nonconstant. Write
\[
\eta:=\eta_\Phi^{M,d}(X;Y),
\qquad
\alpha:=
\inf_{\substack{
(\lambda_1,\lambda_2)\in\fR_\Phi^{M,d}(X;Y)\\
\lambda_2>0
}}
\frac{1-\lambda_1}{\lambda_2}.
\]

First, let $\mt F=\mt F(X)$ with $\HPhi(\mt F)>0$. For $(\lambda_1,\lambda_2)\in \fR_{\Phi}^{M,d}$ with $\lambda_2>0$, we have 
\[
\lambda_1\HPhi(\mt F)
+\lambda_2\HPhi\bigl(\E[\mt F| Y]\bigr)
\leq\HPhi(\mt F).
\]
Then 
\begin{align*}
    \HPhi\bigl(\E[\mt F| Y]\bigr)
    \leq \frac{1-\lambda_1}{\lambda_2} \HPhi(\mt F). 
\end{align*}
Taking the supremum over $\mt F(X)$ and then the infimum over
ribbon points, we have $\eta\leq\alpha$.

For the reverse direction, fix $\delta>0$ and take $K_\delta$
from Theorem~\ref{thm:matrix-Phi-conditional-stability}.
For every admissible $\mt F=\mt F(X,Y)$, set
$\mt A:=\E[\mt F| X]$. 
Then we have 
\begin{align}
\HPhi\bigl(\E[\mt F| Y]\bigr)
&\leq\HPhi\bigl(\E[\mt A| Y]\bigr)
+\delta\HPhi(\mt A)+K_\delta\HPhi(\mt F| X)\nonumber\\
&\leq(\eta+\delta)\HPhi(\mt A)
+K_\delta\HPhi(\mt F| X), \label{eq:apply-lemma-A-B}
\end{align}
where the first inequality follows from Theorem~\ref{thm:matrix-Phi-conditional-stability}, and the second inequality follows from the definition of $\eta$.

Choose an integer $n>\max\{1,\eta+\delta,K_\delta\}$. Dividing both sides of \eqref{eq:apply-lemma-A-B} by $n$, and adding $(1-(\eta+\delta)/n)\HPhi(\mt A)$ to both sides, we obtain 
\begin{align*}
    \left(1-\frac{\eta+\delta}{n}\right)
    \HPhi\bigl(\mt A\bigr)
    +\frac1n\HPhi\bigl(\E[\mt F| Y]\bigr)
    &\leq\HPhi(\mt A)
    +\frac{K_\delta}{n}\HPhi(\mt F| X)\\
    &\leq\HPhi(\mt A)+\HPhi(\mt F| X)\\
    &=\HPhi(\mt F),
\end{align*}
where the second inequality follows from $n>K_{\delta}$ and the last equality follows from the chain rule. 
Hence, 
\[
\left(1-\frac{\eta+\delta}{n},\frac1n\right)
\in\fR_\Phi^{M,d}(X;Y),
\]
and then by the definition of $\alpha$, we obtain $\alpha\leq\eta+\delta$.
Letting $\delta\downarrow0$ gives $\alpha\leq\eta$. It finishes the proof. 
\end{proof}

\begin{corollary} \label{cor:Phi-SDPI-MC}
    Let $d\geq 1$. For every $\Phi\in\mathscr F_M^d$, 
    \begin{align} \label{sdpi-mc}
        \eta_\Phi^{M,d}(X;Y)\geq \rho^2(X;Y).
    \end{align}
    Moreover, 
    \begin{align} \label{sdpi-mc-dim-ind}
        \eta_{t^2}^{M,d}(X;Y)=\rho^2(X;Y). 
    \end{align}
\end{corollary}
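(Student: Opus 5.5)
We prove the two claims in order: first the lower bound \eqref{sdpi-mc}, then the equality \eqref{sdpi-mc-dim-ind}.

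\textbf{Lower bound.} By \eqref{eq:order-Phi-SDPI} we have $\eta_\Phi^{M,d}\geq\eta_\Phi^{M,1}=\eta_\Phi$. Since $\Phi\in\mathscr F_M^d\subseteq\mathscr F_M^1=\mathscr F$, the scalar bound $\eta_\Phi(X;Y)\geq\rho^2(X;Y)$ of \cite{raginsky2016strong} applies and gives \eqref{sdpi-mc}.

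If one wants a self-contained argument, a local perturbation works. Take a nonconstant $g(X)$ with $\E[g]=0$ that nearly attains the supremum in \eqref{MC2}. Set $\mt F=(c+\epsilon g)\mt I_d$ with $c\in\interior(\dom(\Phi))$. By Lemma~\ref{lem:Taylor-Phi-Ent},
\[
\HPhi(\mt F)=\tfrac{\epsilon^2}{2}\Phi''(c)\E[g^2]+o(\epsilon^2),
\qquad
\HPhi(\E[\mt F|Y])=\tfrac{\epsilon^2}{2}\Phi''(c)\E[\E[g|Y]^2]+o(\epsilon^2).
\]
Here $\Phi''(c)>0$ because $\Phi\in\mathscr F_M^d$. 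Dividing the second expansion by the first and letting $\epsilon\to0$ shows the ratio approaches $\E[\E[g|Y]^2]/\E[g^2]$, which yields \eqref{sdpi-mc}.

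\textbf{Equality for $t^2$.} Since $t^2\in\mathscr F_M$, the lower bound already gives $\eta_{t^2}^{M,d}\geq\rho^2$, so only the upper bound remains. We mimic the proof of Theorem~\ref{thm:matrix-t2-ribbon-dimension-independent}.

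Given $\mt F(X)$, write $\mt G=\mt F-\E[\mt F]=\sum_r g_r(X)\mt B_r$, where $\{\mt B_r\}$ is an orthonormal basis of $\sa^d$ for the normalized Hilbert--Schmidt inner product. Each coefficient satisfies $\E[g_r]=0$. Orthonormality gives
\[
H_{t^2}(\mt F)=\sum_r\E[g_r^2],
\qquad
H_{t^2}(\E[\mt F|Y])=\sum_r\E\bigl[\E[g_r|Y]^2\bigr].
\]
By \eqref{MC2}, each term satisfies $\E[\E[g_r|Y]^2]\leq\rho^2\E[g_r^2]$; this also holds trivially when $g_r$ is constant, since then $g_r\equiv0$. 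Summing over $r$ gives $H_{t^2}(\E[\mt F|Y])\leq\rho^2H_{t^2}(\mt F)$, hence $\eta_{t^2}^{M,d}\leq\rho^2$.

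The degenerate case where $X$ is constant is covered by the convention: both sides equal $0$.

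The only delicate point is the perturbation argument in the lower bound, which needs the $o(\epsilon^2)$ control and $\Phi''(c)>0$. Both are supplied by Lemma~\ref{lem:Taylor-Phi-Ent} and the invertibility of $\D\Phi'$ required in Definition~\ref{def:class-FM}.
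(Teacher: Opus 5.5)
Your proposal is correct. For the lower bound you do exactly what the paper does: monotonicity in $d$ from \eqref{eq:order-Phi-SDPI}, then Raginsky's scalar bound. Your optional perturbation with $\mt F=(c+\epsilon g)\mt I_d$ is a valid self-contained substitute. It reduces to a scalar second-order Taylor expansion and needs only $\Phi''(c)>0$.

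For the equality \eqref{sdpi-mc-dim-ind} your route is genuinely different. The paper never works with the SDPI ratio directly. It combines Theorem~\ref{thm:matrix-t2-ribbon-dimension-independent}, namely $\fR_{t^2}^{M,d}=\fR_{t^2}^{M,1}$, with the boundary-slope characterization of Theorem~\ref{thm:matrix-SDPI-ribbon}. This makes $\eta_{t^2}^{M,d}$ and $\eta_{t^2}^{M,1}$ infima of the same quantity over the same set. It then implicitly invokes the known scalar identity $\eta_{t^2}=\rho^2$.

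You instead apply the Hilbert--Schmidt basis decomposition directly to the SDPI ratio. You bound each coefficient $g_r$ by the variational formula \eqref{MC2}, and the constant case $g_r\equiv 0$ is handled correctly.

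Your argument is more elementary and self-contained. It bypasses Theorem~\ref{thm:matrix-SDPI-ribbon}, and with it the $\mu$-dependent matrix entropy inequality of Theorem~\ref{thm:matrix-Phi-conditional-stability} on which that theorem rests. Combined with your perturbation lower bound at $d=1$, it even re-derives the scalar identity $\eta_{t^2}=\rho^2$ instead of assuming it.

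The paper's route, in exchange, illustrates a structural point. Any dimension-independence of the matrix ribbons transfers automatically to the SDPI constants through the ribbon characterization, with no further computation.
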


\begin{proof}
By \eqref{eq:order-Phi-SDPI}, we have $\eta_\Phi^{M,d}(X;Y)\geq \eta_\Phi^{M,1}(X;Y)$. 
Moreover, it is shown in \cite[Theorem 3.6]{raginsky2016strong} that $\eta_\Phi^{M,1}(X;Y)\geq \rho^2(X;Y)$. Then we obtain \eqref{sdpi-mc}. 

For the second part, note that by  Theorem~\ref{thm:matrix-t2-ribbon-dimension-independent}, 
\begin{align*}
\fR_{t^2}^{M,d}(X_1;\ldots;X_n)
=
\fR_{t^2}^{M,1}(X_1;\ldots;X_n).
\end{align*} 
Moreover, by Theorem~\ref{thm:matrix-SDPI-ribbon}, 
\[
\eta_{t^2}^{M,d}(X;Y)
=\inf\frac{1-\lambda_1}{\lambda_2},
\]
where the infimum is over all $(\lambda_1,\lambda_2)\in\fR_{t^2}^{M,d}(X;Y)$ with $\lambda_2>0$. 
Thus, we obtain \eqref{sdpi-mc-dim-ind}. 
\end{proof}

\subsection{Doubly symmetric binary source}
Suppose $(X,Y)$ is a doubly symmetric binary source (DSBS) with parameter $\lambda$, denoted as $\DSBS(\lambda)$, where the joint distribution is given by 
\begin{align*}
    p_{XY}=\begin{pmatrix}
        \frac{1+\lambda}{4} &\frac{1-\lambda}{4}\\
        \frac{1-\lambda}{4} &\frac{1+\lambda}{4}
    \end{pmatrix},
\end{align*}
where $\lambda\in [0,1]$. 

For the scalar case, it is shown that $\eta_{\Phi}(X;Y)=\lambda^2$ for $\Phi\in \mathscr{F}$ \cite{beigi2018phi}. We extend the proof to the matrix case.  

\begin{proposition} \label{prop:M-Phi-SDPI}
    Let $d\geq 1$ and let $\Phi \in \mathscr{F}_M^d$ be three times Fr\'echet differentiable. 
    If $(X,Y)$ are distributed according to $\DSBS(\lambda)$, we have
    \begin{align*}
        \eta_{\Phi}^{M,d}(X;Y)=\lambda^2. 
    \end{align*}
\end{proposition}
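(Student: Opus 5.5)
The proof will establish the two inequalities $\eta_{\Phi}^{M,d}(X;Y)\geq\lambda^2$ and $\eta_{\Phi}^{M,d}(X;Y)\leq\lambda^2$ separately.

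\textbf{Lower bound.} By Corollary~\ref{cor:Phi-SDPI-MC}, $\eta_\Phi^{M,d}(X;Y)\geq\rho^2(X;Y)$. For $\DSBS(\lambda)$ the maximal correlation is $\lambda$, which can be checked with the test function $f(x)=(-1)^x$ in \eqref{MC2}. This gives $\eta_\Phi^{M,d}\geq\lambda^2$ immediately. The three-times-differentiability assumption is not needed here.

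\textbf{Upper bound: reduction.} Write $\mt F(0)=\mt A$ and $\mt F(1)=\mt B$, so that $\E[\mt F|Y=y]$ is the matrix average with weights $\tfrac{1\pm\lambda}{2}$. Set $\mt C=\tfrac{\mt A+\mt B}{2}$ and $\mt D=\tfrac{\mt A-\mt B}{2}$. Then $\mt F(X)=\mt C\pm\mt D$, and $\E[\mt F|Y]=\mt C\pm\lambda\mt D$, each sign occurring with probability $1/2$. Define
\[
g(s)\defeq\tfrac12\nortr\bigl(\Phi(\mt C+s\mt D)+\Phi(\mt C-s\mt D)\bigr)-\nortr\Phi(\mt C),\qquad s\in[0,1].
\]
The goal becomes $g(\lambda)\leq\lambda^2 g(1)$. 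Since $g(0)=0$ and $g$ is even, it suffices to show that $s\mapsto g(s)/s^2$ is nondecreasing on $(0,1]$.

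\textbf{Upper bound: monotonicity.} By Lemma~\ref{lem:trace-spectral-derivatives},
\[
g''(s)=\tfrac12\bigl(q(\mt C+s\mt D)+q(\mt C-s\mt D)\bigr),\qquad q(\mt M)\defeq\nortr\bigl(\mt D\,\D\Phi'(\mt M)[\mt D]\bigr).
\]
Condition (iii) of Definition~\ref{def:class-FM} says $(\mt M,\mt H)\mapsto\nortr(\mt H\D\Phi'(\mt M)[\mt H])$ is jointly convex, so $\mt M\mapsto q(\mt M)$ is convex. Hence $s\mapsto q(\mt C+s\mt D)$ is convex, and $g''$, being its even symmetrization, is even and convex. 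An even convex function is nondecreasing on $[0,1]$, so $g''$ is nondecreasing there. Now write $g(s)=\int_0^s(s-u)g''(u)\,du=s^2\int_0^1(1-v)g''(sv)\,dv$, using $g(0)=g'(0)=0$. Since $g''(sv)$ is nondecreasing in $s$, so is $g(s)/s^2$.

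This yields $\HPhi(\E[\mt F|Y])=g(\lambda)\leq\lambda^2g(1)=\lambda^2\HPhi(\mt F)$. The third-derivative hypothesis can be used, if one prefers, to verify monotonicity of $g''$ directly through the sign of $g'''$. The convexity argument above avoids this, but it requires $g''$ to be defined along the segment.

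\textbf{Main obstacle: boundary spectra.} The main technical point is when some spectra of $\mt C\pm s\mt D$ touch the boundary of $\dom(\Phi)$, for instance at $0$. Then the derivatives may blow up and the formula $g(s)=s^2\int_0^1(1-v)g''(sv)\,dv$ is not directly justified. I would handle this as in Lemma~\ref{lem:Phi-bd-quadratic}: replace $\mt F$ by $(1-\epsilon)\mt F+\epsilon c\mt I$ with $c$ in the interior of $\dom(\Phi)$, prove the inequality there, and pass to the limit $\epsilon\downarrow0$ using continuity of $\Phi$.
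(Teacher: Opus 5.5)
Your proof is correct, and the strategy is the same as the paper's. Both reduce to the two-point function $s\mapsto\nortr\bigl(\Phi(\mt C+s\mt D)+\Phi(\mt C-s\mt D)\bigr)$. Both then use condition (iii) of Definition~\ref{def:class-FM} to see that its second derivative is an even convex function of $s$, hence nondecreasing on $[0,1]$.

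The only difference is the final calculus step. The paper substitutes $t=s^2$ and proves that $\psi(t)=\tr\bigl(\Phi(\mt M+\sqrt t\mt Z)+\Phi(\mt M-\sqrt t\mt Z)-2\Phi(\mt M)\bigr)$ is convex in $t$. It reduces this to convexity of $\xi(s)=\tr\bigl(\D\Phi(\mt M+s\mt Z)[\mt Z]-\D\Phi(\mt M-s\mt Z)[\mt Z]\bigr)$. It then shows $\xi''\geq 0$ by differentiating the convex function $\xi'$ once more and using $\xi''(0)=0$. That is the only place the three-times Fr\'echet differentiability hypothesis enters.

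Your identity $g(s)/s^2=\int_0^1(1-v)g''(sv)\,dv$ reads the conclusion directly off the monotonicity of $g''$. So you only need $\Phi$ to be $C^2$ on the interior, and the third-derivative assumption in the statement becomes superfluous. The paper's route gives the stronger intermediate fact that $\psi$ is convex in $t$, not just that $\psi(t)/t$ is nondecreasing, but this is not needed for the bound.

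You also make the lower bound $\eta_\Phi^{M,d}\geq\lambda^2$ explicit via Corollary~\ref{cor:Phi-SDPI-MC}; the paper's appendix leaves it implicit. Your perturbation argument for boundary spectra matches the paper's.
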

The proof is in Appendix \ref{app:proof-SDPI}.

\subsection{Sums of i.i.d. random variables}

Let \(X_1,\ldots,X_n\) be nondegenerate i.i.d.\ finite real-valued random variables. For \(0\leq \ell\leq n\), define
\begin{align*}
S_\ell
\defeq
\sum_{i=1}^{\ell}X_i,
\end{align*}
where \(S_0\defeq 0\). It is shown in \cite{dembo2001remarks} that for any $1\leq m\leq n$, $\rho^2(S_n,S_m)=m/n$ and in \cite{kamath2015strong} that $\eta_{t\log t}=m/n$. Moreover, it is shown in \cite{beigi2018phi} that for all $\Phi\in \mathscr{F}$, $\eta_{\Phi}=m/n$. Here we give the matrix version, mimicking the proof idea in \cite{beigi2018phi}. 

\begin{theorem}
\label{thm:matrix-Phi-SDPI-partial-sums}
For every \(1\leq m\leq n\), every \(d\geq 1\), and every 
\(\Phi\in\mathscr F_M^d\), we have
\begin{align}
\eta_\Phi^{M,d}(S_n;S_m)
=
\frac{m}{n}.
\label{eq:matrix-Phi-SDPI-partial-sums}
\end{align}
\end{theorem}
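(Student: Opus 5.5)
Lower bound: by Corollary~\ref{cor:Phi-SDPI-MC}, $\eta_\Phi^{M,d}(S_n;S_m)\geq\rho^2(S_n;S_m)=m/n$, using the result of \cite{dembo2001remarks}. It therefore suffices to prove the upper bound
\[
\HPhi\bigl(\E[\mt F(S_n)| S_m]\bigr)\leq\frac{m}{n}\HPhi\bigl(\mt F(S_n)\bigr)
\]
for every PSD $\mt F$ with spectra in $\dom(\Phi)$.

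Set $\mt G\defeq\mt F(S_n)$, viewed as a function of the independent $X_1,\dots,X_n$, and use the set function $h(A)\defeq\HPhi(\E[\mt G| X_A])$. By Proposition~\ref{lem:matrix-Phi-entropy-supermodularity}, $h$ is supermodular on $2^{[n]}$, and $h(\emptyset)=0$. Since $\mt G$ is a symmetric function of the i.i.d.\ variables, $h(A)$ depends only on $|A|$; write $h(A)=g(|A|)$. Supermodularity with $A=[k]\cup\{k+1\}$ and $B=[k]\cup\{k+2\}$ gives
\[
g(k+1)-g(k)\leq g(k+2)-g(k+1),
\]
so $g$ is convex on $\{0,\dots,n\}$ with $g(0)=0$. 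Convexity gives $g(m)\leq\frac{m}{n}g(n)+\bigl(1-\frac mn\bigr)g(0)=\frac{m}{n}\HPhi(\mt G)$. Hence
\[
\HPhi\bigl(\E[\mt G| X_1,\dots,X_m]\bigr)\leq\frac{m}{n}\HPhi(\mt G).
\]

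Next, pass from conditioning on $X_{[m]}$ to conditioning on $S_m$. Because $\mt G$ depends on $X_{[m]}$ only through $S_m$ and the $X_i$ are independent, $S_n=S_m+\sum_{i>m}X_i$ with $\sum_{i>m}X_i$ independent of $X_{[m]}$. Therefore $\E[\mt G| X_{[m]}]$ is a function of $S_m$ alone, and it equals $\E[\mt G| S_m]$. Substituting this into the previous display gives the upper bound, and hence \eqref{eq:matrix-Phi-SDPI-partial-sums}.

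The main obstacle is confirming that Proposition~\ref{lem:matrix-Phi-entropy-supermodularity} applies as stated. It needs $\Phi\in\mathscr F_M^d$, which we assume, and mutual independence of the $X_i$, which holds. The only remaining care is the degenerate case $\HPhi(\mt G)=0$, which is excluded by definition, and the convention when $S_n$ is constant, which cannot occur since the $X_i$ are nondegenerate.
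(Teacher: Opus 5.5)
Your proposal is correct and follows the paper's proof essentially step for step. The lower bound comes from Corollary~\ref{cor:Phi-SDPI-MC} with $\rho^2(S_n;S_m)=m/n$. The upper bound uses supermodularity of $A\mapsto\HPhi(\E[\mt F(S_n)|X_A])$, exchangeability to reduce to a convex sequence $g(k)$ with $g(0)=0$, and the identity $\E[\mt F(S_n)|X_{[m]}]=\E[\mt F(S_n)|S_m]$; your phrasing $g(m)\leq\frac{m}{n}g(n)$ is just the paper's comparison of averages of nondecreasing increments.
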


\begin{proof}

By Corollary~\ref{cor:Phi-SDPI-MC}, we know that 
\begin{align*}
    \eta_\Phi^{M,d}(S_n;S_m)
    \geq \rho^2(S_n;S_m)=\frac{m}{n}.
\end{align*}
Thus, it remains to prove the reverse inequality. 

Fix a matrix-valued function $\mt F:\supp(S_n)\to \PSD^d$ 
such that \(\HPhi(\mt F)>0\). For \(A\subseteq[n]\), define
\begin{align*}
g(A)
\defeq
\HPhi\left(
\E[\mt F(S_n)| X_A]
\right).
\end{align*}
Because \(X_1,\ldots,X_n\) are independent and
\(\Phi\in\mathscr F_M^d\), by Proposition~\ref{lem:matrix-Phi-entropy-supermodularity}, \(g\) is supermodular. In particular, for every
\(A\subseteq B\subseteq[n]\) and every \(i\notin B\),
\begin{align}
g(A\cup\{i\})-g(A)
\leq
g(B\cup\{i\})-g(B).
\label{eq:supermodularity-marginal-increments}
\end{align}

For \(0\leq\ell\leq n\), define
\begin{align*}
c_\ell
\defeq
\HPhi\left(
\E[\mt F(S_n)| X_1,\ldots,X_\ell]
\right), 
\end{align*}
with the convention $c_0=\HPhi\left(
\E[\mt F(S_n)]\right)$. 
Since \(\mt F(S_n)\) is invariant under permutations of
\(X_1,\ldots,X_n\), the quantity \(g(A)\) depends only on \(\lvert A\rvert\). Moreover, writing
\begin{align*}
S_n
=
S_\ell
+
\sum_{i=\ell+1}^{n}X_i
\end{align*}
and using the independence of \(S_\ell\) and
\(\sum_{i=\ell+1}^{n}X_i\), we obtain
\begin{align*}
\E[\mt F(S_n)| X_1,\ldots,X_\ell]
=
\E[\mt F(S_n)| S_\ell].
\end{align*}
Consequently,
\begin{align*}
c_\ell
=
\HPhi\left(
\E[\mt F(S_n)| S_\ell]
\right).
\end{align*}

Applying \eqref{eq:supermodularity-marginal-increments} to $A,B$ such that $|A|=\ell$ and $|B|=\ell+1$, we obtain 
\begin{align*}
    c_{\ell+1}-c_{\ell}\leq c_{\ell+2}-c_{\ell+1}. 
\end{align*}
Thus $\{c_{\ell+1}-c_\ell\}_{0\leq\ell\leq n-1}$ 
is a nondecreasing sequence. Since $c_0=0$ and $c_n=\HPhi(\mt F(S_n))$, 
it follows that 
\begin{align*}
\frac{c_m}{m}
&=
\frac{1}{m}
\sum_{\ell=0}^{m-1}
\left(
c_{\ell+1}-c_\ell
\right)
\\
&\leq
\frac{1}{n}
\sum_{\ell=0}^{n-1}
\left(
c_{\ell+1}-c_\ell
\right)
=
\frac{c_n}{n}.
\end{align*}
Therefore,
\begin{align*}
\HPhi\left(
\E[\mt F(S_n)| S_m]
\right)
\leq
\frac{m}{n}
\HPhi(\mt F(S_n)).
\end{align*}
Taking the supremum over all admissible \(\mt F\) gives
\begin{align}
\eta_\Phi^{M,d}(S_n;S_m)
\leq
\frac{m}{n}.
\label{eq:matrix-Phi-SDPI-partial-sums-upper-bound}
\end{align}

\end{proof}

\begin{corollary}[Contraction of cq mutual information]
\label{cor:cq-mutual-information-partial-sums}
Let \(Q\) be a finite-dimensional quantum system, and suppose that, conditionally on \(X_1,\ldots,X_n\), the state of \(Q\) depends only on \(S_n\). Thus, for each \(s\in\operatorname{supp}(S_n)\), let
\(\mt \rho_Q(s)\) be a density matrix. 
Then
\begin{align}
I(S_m;\mt \rho_Q)
\leq
\frac{m}{n}I(S_n;\mt \rho_Q).
\label{eq:cq-mutual-information-partial-sums}
\end{align}
\end{corollary}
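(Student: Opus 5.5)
This follows from Theorem~\ref{thm:matrix-Phi-SDPI-partial-sums} with $\Phi(t)=t\log t$, which lies in $\mathscr F_M$ and hence in $\mathscr F_M^d$ for every $d$. The dictionary is the one in Remark~\ref{remark-matrix-mi-form}. Let $d$ be the dimension of $Q$ and define $\mt F(S_n)\defeq\mt \rho_Q(S_n)$. This is a PSD matrix-valued function of $S_n$ with unit trace, and its spectrum lies in $[0,1]\subseteq\dom(\Phi)$ under the convention $0\log 0=0$. Remark~\ref{remark-matrix-mi-form} then gives $H_\Phi(\mt F(S_n))=\frac1d I(S_n;\mt\rho_Q)$.

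For the left side, I will identify $\E[\mt F(S_n)|S_m]$ with the conditional state of $Q$ given $S_m$. By hypothesis the state of $Q$ given $X_1,\ldots,X_n$ is $\mt\rho_Q(S_n)$. Averaging over $X_{m+1},\ldots,X_n$ given $X_1,\ldots,X_m$ produces the state of $Q$ given $X_{[m]}$, which is $\E[\mt F(S_n)|X_{[m]}]$. As shown in the proof of Theorem~\ref{thm:matrix-Phi-SDPI-partial-sums}, independence gives $\E[\mt F(S_n)|X_{[m]}]=\E[\mt F(S_n)|S_m]$. So the cq state of $(S_m,Q)$ has conditional states $\E[\mt F(S_n)|S_m=s]$, each of unit trace, with average state $\E[\mt F]$. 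Applying Remark~\ref{remark-matrix-mi-form} to the ensemble $(S_m,Q)$ gives $H_\Phi(\E[\mt F(S_n)|S_m])=\frac1d I(S_m;\mt\rho_Q)$.

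Now apply Theorem~\ref{thm:matrix-Phi-SDPI-partial-sums}, which gives $H_\Phi(\E[\mt F|S_m])\le \frac mn H_\Phi(\mt F)$. If $H_\Phi(\mt F)=0$, the supremum definition of $\eta$ does not cover this case directly. Instead, the chain rule and nonnegativity give $H_\Phi(\E[\mt F|S_m])\le H_\Phi(\mt F)=0$, so the inequality still holds. Multiplying through by $d$ yields \eqref{eq:cq-mutual-information-partial-sums}.

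The main point needing care is the base of the logarithm. The paper uses $\log$ base 2, and changing base only rescales $\Phi$ by a positive constant. That rescaling preserves membership in $\mathscr F_M^d$ and leaves $\eta$ unchanged. A second point is that $\Phi(t)=t\log t$ is continuous at $0$, so singular density matrices are admissible. The rest is bookkeeping.
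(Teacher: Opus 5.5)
Your proposal is correct and matches the intended argument. The paper gives this corollary no separate proof; it is meant as the specialization of Theorem~\ref{thm:matrix-Phi-SDPI-partial-sums} to $\Phi(t)=t\log t$, with $\mt F(S_n)=\mt\rho_Q(S_n)$ and the Holevo-quantity identification of Remark~\ref{remark-matrix-mi-form}. Your handling of the case $H_\Phi(\mt F)=0$, the base of the logarithm, and singular density matrices fills in details the paper leaves implicit. The detour through $X_{[m]}$ is harmless but unnecessary: the conditional state of $Q$ given $S_m=s$ is $\E[\mt\rho_Q(S_n)|S_m=s]$ directly by the tower property.
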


\section{Wirings of no-signaling boxes} \label{sec:wirings}

We will define the matrix $\Phi$-ribbon for no-signaling boxes and then prove it has the tensorization and the data processing properties. This generalizes the results of \cite{beigi2015monotone} and \cite{beigi2018phi}. While the matrix $\Phi$-entropy shares some properties with Shannon's entropy, it does not permit all the same manipulations, and the repetitive use of Lemma \ref{lem:matrix-Phi-ineq} is necessary. 
\begin{definition}
    Given a no-signaling box $p_{AB|XY}$, define the corresponding \emph{$d$-dimensional matrix $\Phi$-ribbon} to be the intersection of the $d$-dimensional matrix $\Phi$-ribbons of its outputs conditioned on all possible inputs, i.e., 
    \begin{align*}
        \fR_{\Phi}^{M,d}(A;B|X;Y)\defeq \bigcap_{x,y} \fR_{\Phi}^{M,d}(A;B|X=x,Y=y).
    \end{align*}

    The \emph{matrix $\Phi$-ribbon} of the no-signaling box $p_{AB|XY}$ is defined as 
    \begin{align*}
        \fR_{\Phi}^{M}(A;B|X;Y)\defeq \bigcap_{d\ge 1} \fR_{\Phi}^{M,d}(A;B|X;Y). 
    \end{align*}
\end{definition}

If a no-signaling box $p_{AB|XY}$ can be simulated by the box with $p_{A'B'|XY}$, given any $x,y$, we have 
$$p_{AA'BB'|X=xY=y}=p_{A'B'|X=xY=y}q_{A|A'X=x}q_{B|B'Y=y}.$$ 
By the data processing property of the matrix $\Phi$-ribbon, 
\begin{align*}
    \fR_{\Phi}^{M,d} (A';B'|X=x,Y=y) \subseteq \fR_{\Phi}^{M,d} (A;B|X=x,Y=y).
\end{align*}
By the definition of the $\Phi$-ribbon for the no-signaling boxes, we then have  
\begin{align*}
    \fR_{\Phi}^{M,d} (A';B'|X;Y) \subseteq \fR_{\Phi}^{M,d} (A;B|X;Y).
\end{align*}
Thus, the $\Phi$-ribbon for no-signaling boxes has the data processing inequality.

Suppose the two parties, Alice and Bob, have the inputs $X_i,Y_i$ for the $i$-th box respectively and the associated outputs $A_i,B_i$ for $i\in [n]$. The $i$-th box is associated with the conditional probability $p_{A_iB_i|X_iY_i}$. The wiring of boxes allows one party to choose which box to use and the corresponding input based on all the past information, including all the boxes that have been used and the corresponding inputs and outputs. At each action, each party chooses a box and an input using only its own local transcript and private randomness. The wiring can be arbitrary, which means that in the same action, the two parties can use different boxes, as shown in Fig.~\ref{fig2}. 

Given a no-signaling box $p_{A'B'|X'Y'}$, Alice and Bob use $n$ no-signaling boxes, i.e., $p_{A_iB_i|X_iY_i}$ for $i\in[n]$, to simulate it, which means that the outputs $A'B'$ depend on all the information generated in the actions with the $n$ boxes $p_{A_iB_i|X_iY_i}$. 

\begin{figure}
\begin{center}
\includegraphics[width=3in]{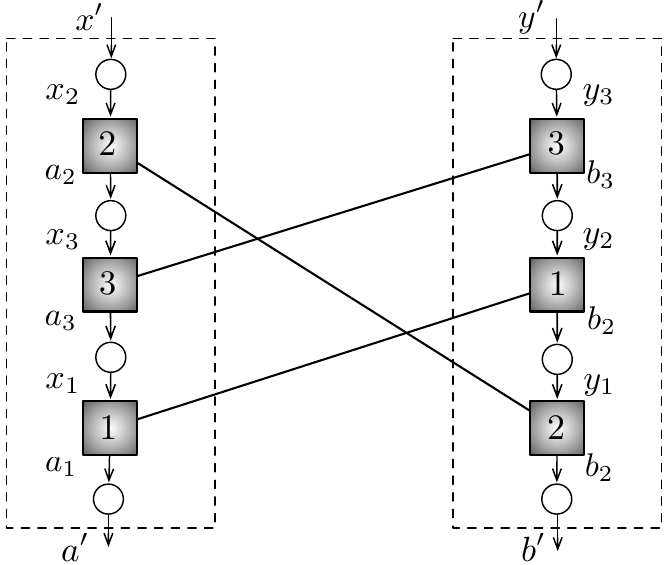}
\caption{\small (\cite{beigi2015monotone}) Due to the no-signaling property, Alice and Bob can choose boxes in different orders. For example, Alice can use the order $2,3,1$ to simulate the result given $x'$. Alice simulates the box $2$ with input $x_2$ determined by $x'$ and gets the result $a_2$. She then uses the box $3$ with input $x_3$ determined by the former output $a_2$ and gets the result $a_3$. Then she uses the box $1$ with input $x_1$ determined by $a_3$ and gets the result $a_1$. Finally she simulates the result $a'$ by applying a stochastic map. Bob can use the order of boxes as $3,1,2$ given $y'$ and follow the same steps. }
\label{fig2}
\end{center}
\end{figure}

Now we state the main theorem for the tensorization of the wiring of no-signaling boxes. 

\begin{theorem} \label{thm:tensorization-for-boxes}
    Let $\Phi\in \mathscr{F}_M^d$. 
    Suppose a no-signaling box $p_{A'B'|X'Y'}$ can be generated from $n$ no-signaling boxes $\{p_{A_iB_i|X_iY_i}\}$ where $i\in[n]$, under wirings. Then we have 
    \begin{align*}
        \bigcap_{i=1}^n \fR_{\Phi}^{M,d} (A_i;B_i|X_i;Y_i)\subseteq \fR_{\Phi}^{M,d} (A';B'|X';Y').
    \end{align*}
    If $\Phi\in \mathscr{F}_M$, we have  
    \begin{align*}
        \bigcap_{i=1}^n \fR_{\Phi}^{M} (A_i;B_i|X_i;Y_i)\subseteq \fR_{\Phi}^{M} (A';B'|X';Y'). 
    \end{align*}
\end{theorem}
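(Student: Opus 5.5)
The plan follows the structure of the wiring monotonicity proof of Beigi and Gohari, with Lemma~\ref{lem:matrix-Phi-ineq} replacing the Shannon-type manipulations.

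\textbf{Step 1: fix the setting.} Fix $(\lambda_1,\lambda_2)\in\bigcap_i\fR_\Phi^{M,d}(A_i;B_i|X_i;Y_i)$ and fix inputs $(x',y')$. By the data processing property of the ribbon (Theorem~\ref{thm:tensor-dpi}(ii)), the final stochastic maps producing $A'$ from Alice's transcript and $B'$ from Bob's transcript cannot shrink the ribbon. It therefore suffices to show
\[
(\lambda_1,\lambda_2)\in\fR_\Phi^{M,d}(T_A;T_B),
\]
where $T_A=(\wX_{[n]},\wA_{[n]})$ and $T_B=(\wY_{[n]},\wB_{[n]})$ are the full local transcripts of the boxes in Alice's and Bob's chosen orders. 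Private randomness can be absorbed by conditioning on it, since it is independent of everything else.

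\textbf{Step 2: induct on the number of boxes.} I will prove, for every $\mt F=\mt F(T_A,T_B)$,
\[
\lambda_1\HPhi(\E[\mt F|T_A])+\lambda_2\HPhi(\E[\mt F|T_B])\le\HPhi(\mt F).
\]
The base case $n=1$ is the ribbon inequality for a single box conditioned on its inputs, averaged over the inputs. Because the inputs are determined by the parties' own pasts, one needs the conditional form of the ribbon inequality: if $(\lambda_1,\lambda_2)$ lies in the ribbon of $p_{AB|X=x,Y=y}$ for every $(x,y)$, then
\[
\lambda_1 \HPhi(\E[\mt F|ZA]|Z)+\lambda_2\HPhi(\E[\mt F|ZB]|Z)\le \HPhi(\mt F|Z)
\]
for any conditioning $Z$ with $A,B$ drawn from the box given inputs that are functions of $Z$. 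This follows by applying the ribbon inequality conditionally on each $Z=z$ and averaging.

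For the inductive step, consider the first box Alice uses, say box $i$ with input $X_i$ determined by $x'$. By no-signaling, the joint law of the remaining transcripts, conditioned on Alice's first output, still factors as a wiring of the $n-1$ remaining boxes with a modified initial input. So I peel off one interaction, using the chain rule~\eqref{eqn:matrix-CR2} to split $\HPhi(\E[\mt F|T_A])$ into a contribution of $(X_i,A_i)$ and a conditional remainder, and do the same on Bob's side.

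\textbf{Step 3: the main obstacle.} The hard step is that Alice's $k$-th box and Bob's $k$-th box may differ. The conditional remainders are therefore not aligned box by box, and the cross terms have to be controlled by Markov chains. Here I would use Lemma~\ref{lem:matrix-Phi-ineq}(ii) repeatedly. If one party's portion of a box is conditionally independent of the other party's current information given a common variable $Z$, then
\[
\HPhi(\mt F|XZ)\ge \HPhi(\E[\mt F|YZ]|Z),
\]
which allows swapping the order of conditioning. My first attempt follows Beigi–Gohari: order the events by a global sequence in which, at each point, the next box used by either party has its partner output independent of everything that party has seen (by no-signaling). Then, at each stage, I apply the per-box ribbon inequality with conditioning $Z$ equal to the combined past, and use Lemma~\ref{lem:matrix-Phi-ineq}(ii) to replace the conditioning on the other party's not-yet-revealed outputs.

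Summing over stages telescopes by the chain rule~\eqref{eqn:matrix-CRn} to the claim for fixed $(x',y')$. Intersecting over $(x',y')$ gives the first inclusion. Intersecting over $d$ gives the $\mathscr F_M$ statement.
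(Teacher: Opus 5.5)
Your Step~1 matches the paper's reduction, and Lemma~\ref{lem:matrix-Phi-ineq}(ii) is the right tool. However, the core inequality is never established, and both mechanisms you propose for it fail as described.

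The induction of Step~2 does not close. Let $W=(X_i,A_i)$ be Alice's first box. Combine the chain rule $\HPhi(\E[\mt F|T_A])=\HPhi(\E[\mt F|W])+\HPhi(\E[\mt F|T_A]|W)$ with the hypothesis applied conditionally on $W$, i.e.\ $\lambda_1\HPhi(\E[\mt F|T_A]|W)+\lambda_2\HPhi(\E[\mt F|T_BW]|W)\le\HPhi(\mt F|W)$. This leaves you needing
\[
\frac{\lambda_1+\lambda_2-1}{\lambda_2}\,\HPhi(\E[\mt F|W])+\HPhi(\E[\mt F|T_B])\le\HPhi(\E[\mt F|T_BW]).
\]
For $\lambda_1+\lambda_2>1$ and $\mt F=\mt F(T_B)$, this forces $\HPhi(\E[\mt F|W])=0$, i.e.\ $W\independent T_B$. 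That fails as soon as Bob also uses the correlated box $i$. The correlation between $A_i$ and $B_i$ surfaces on Bob's side at a different time, so peeling off one party's box never lets you invoke that box's ribbon. Also, given $W$, Bob still has to use box $i$, so the remainder is not simply a wiring of the other $n-1$ boxes.

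Step~3 does not repair this. The per-box ribbon inequality needs a conditioning $Z$ that fixes both inputs and satisfies $A_iB_i\to X_iY_i\to Z$. The paper takes $Z=\Tie\Sie$: Alice's information just before \emph{she} uses box $i$, together with Bob's just before \emph{he} uses it (Lemma~\ref{lem:mc}(i)). These are taken at different times, and in general they are not the combined past at any moment of a single global sequence (e.g.\ Alice uses the order $1,2$ and Bob $2,1$).

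Nor do the resulting terms telescope. $\HPhi(\E[\mt F|\An\Xn\Pin])$ and $\HPhi(\E[\mt F|\Bn\Yn\Omegan])$ expand in each party's own order into terms conditioned on one party's information only (Lemma~\ref{lem:expand}). Linking these to terms conditioned on $\Tie\Sie$ requires the three specific inequalities of Corollary~\ref{cor:mc}, coming from the chains $B_i\to\Sie\to\Tie$, $B_i\to A_i\Tie\Sie\to\An\Xn\Pin$ and $\wY_i\wOmega_i\to\wS_i\to\An\Xn\Pin$.

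Even then, the cancellations only appear once one of the coefficients equals $1$. That is why the paper discards the trivial case $\lambda_1+\lambda_2\le1$, uses the box inequality to pass to an expression $\chi'$ affine in $(\lambda_1,\lambda_2)$, and checks $\chi'\ge0$ only at the corners $(1,0),(0,1),(1,1)$. Your sketch contains none of these ingredients (the box-local conditioning, the specific Markov chains, the corner reduction), and ``summing over stages telescopes'' is precisely the step that needs them.
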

The proof is in Appendix \ref{apx:proof-main-theorem}. The proof is similar to \cite{beigi2015monotone}, but it has to be carefully adapted because the original proof used Shannon's mutual information, and not the matrix $\Phi$ entropy.

\section{Ribbons from independence structures} \label{sec:characterization-ribbon}

\subsection{General independence structures}

Given a hypergraph $\cH=(\cV,\cE)$ with $\cV=[n]$ and $\cE\subseteq \cP([n])$, we say that $\cH$ specifies the independence structure of the random variables $(X_1,\dots,X_n)$ if 
\begin{align*}
    \cE
    =
    \left\{S\in \cP([n]): (X_i)_{i\in S} \textrm{ are mutually independent}
    \right\}.
\end{align*}
Conversely, every independence structure can be represented by hypergraphs. Note that $\{i\}\in \cE$ for $i\in [n]$. Moreover, $\cE$ is downward closed: if $S\in \cE$, then for any $T\subseteq S$, $T\in \cE$. When $\cE=\cP([n])$, we say that $\cH$ is complete.

\begin{theorem} \label{thm:Phi-ribbon-general-hypergraph}
    Suppose $\Phi\in \mathscr{F}_M^d$. Let $\cH=([n],\cE)$ be the hypergraph that specifies the independence structure of $X_1,\dots,X_n$. Define 
    $$J_{\cE}\defeq\{\vec{0},\vec{e}_1,\dots,\vec{e}_n\}\cup \{\vec{e}_{S}\}_{S\in \cE}$$
    where $\vec{e}_i\in\mathbb{R}^n$ stands for the $i$-th standard basis vector and $\vec{e}_S=\sum_{i\in S}\vec{e}_i$. Then 
    $$\Conv(J_{\cE})\subseteq \fR_{\Phi}^{M,d}(X_1;\dots;X_n).$$ 
\end{theorem}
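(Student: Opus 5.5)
The plan has three steps. First, show that the ribbon $\fR_{\Phi}^{M,d}(X_1;\dots;X_n)$ is convex. Second, show that every point of $J_{\cE}$ lies in it. Convexity then gives $\Conv(J_{\cE})\subseteq\fR_{\Phi}^{M,d}(X_1;\dots;X_n)$.

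Convexity is immediate from the definition. For fixed $\mt F$, the constraint
\[
\sum_i \lambda_i \HPhi(\E[\mt F|X_i])\leq \HPhi(\mt F)
\]
is linear in $\vec\lambda$. The ribbon is therefore the intersection of $[0,1]^n$ with a family of closed half-spaces, one for each admissible $\mt F$, and such an intersection is convex.

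Next, the trivial points of $J_{\cE}$. The point $\vec 0$ lies in the ribbon because $\HPhi(\mt F)\geq 0$. Each $\vec e_i$ lies in it by Proposition~\ref{prop:matrix-phi-ribbon-properties}(i), or directly from the chain rule \eqref{matrix-Phi:chain-rule}, which gives $\HPhi(\E[\mt F|X_i])\leq \HPhi(\mt F)$.

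The main step is $\vec e_S$ for $S\in\cE$. Write $S=\{i_1,\dots,i_k\}$. We must show
\[
\sum_{j=1}^k \HPhi(\E[\mt F|X_{i_j}])\leq \HPhi(\mt F)\quad\text{for every } \mt F=\mt F(X^n).
\]
Set $\mt G\defeq \E[\mt F|X_S]$. This is a function of the mutually independent variables $X_{i_1},\dots,X_{i_k}$. By the tower property, $\E[\mt G|X_{i_j}]=\E[\mt F|X_{i_j}]$. Conditioning reduces matrix $\Phi$-entropy, so $\HPhi(\mt G)\leq \HPhi(\mt F)$. It therefore suffices to prove
\[
\sum_j \HPhi(\E[\mt G|X_{i_j}])\leq \HPhi(\mt G).
\]
This is exactly the statement that $\vec e_{[k]}=(1,\dots,1)$ lies in the matrix ribbon of the independent variables $X_{i_1},\dots,X_{i_k}$, which is Proposition~\ref{prop:matrix-phi-ribbon-properties}(ii).

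To make this step self-contained, I would prove it through supermodularity, Proposition~\ref{lem:matrix-Phi-entropy-supermodularity}, applied to $h(A)=\HPhi(\E[\mt G|X_A])$ for $A\subseteq S$. Expand $h(S)$ as a telescoping sum,
\[
h(S)=h(\emptyset)+\sum_{j=1}^k\bigl(h(\{i_1,\dots,i_j\})-h(\{i_1,\dots,i_{j-1}\})\bigr).
\]
By supermodularity, each increment is at least $h(\{i_j\})-h(\emptyset)$. Since $h(\emptyset)=0$, this gives $h(S)\geq \sum_j h(\{i_j\})$, as required.

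The one point needing care is that supermodularity uses Lemma~\ref{lem:matrix-Phi-ineq}(ii), which requires $\Phi\in\mathscr F_M^d$ and the mutual independence of the variables indexed by $S$. Both hold by hypothesis: $\Phi\in\mathscr F_M^d$ is assumed, and $S\in\cE$ means $(X_i)_{i\in S}$ are mutually independent. Finally, since $\vec e_S\in[0,1]^n$, all points of $J_{\cE}$ lie in $[0,1]^n$, and convexity completes the proof.
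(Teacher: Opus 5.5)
Your proof is correct and follows the paper's argument. The paper likewise gets $\vec e_i$ from the chain rule, gets $\vec e_S$ for $S\in\cE$ by applying Proposition~\ref{prop:matrix-phi-ribbon-properties}(ii) to $\E[\mt F|X_S]$ together with $\HPhi(\E[\mt F|X_S])\le\HPhi(\mt F)$, and then takes convex combinations of these linear inequalities; your supermodularity derivation of the $\vec e_S$ step is a valid self-contained alternative that rests on the same Lemma~\ref{lem:matrix-Phi-ineq}.
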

\begin{proof}
    Let $\mt F=\mt F(X^n)$ be a $d$-dimensional PSD matrix-valued function. 
    For each $i\in [n]$, we have
    \begin{align} \label{ineq:proof-i}
        \HPhi(\E[\mt F|X_i]) \leq \HPhi(\mt F),
    \end{align}
    which follows from the chain rule and the nonnegativity of the matrix entropy. 
    By Proposition \ref{prop:matrix-phi-ribbon-properties} (ii), for each $S\in \cE$, since $(X_i)_{i\in S}$ are mutually independent, we have 
    \begin{align} \label{ineq:proof-S}
        \sum_{i\in S} \HPhi(\E[\mt F|X_i]) \leq \HPhi(\E[\mt F|X_S])\leq \HPhi(\mt F).
    \end{align}
    Taking the convex combination of \eqref{ineq:proof-i} and  \eqref{ineq:proof-S} for each $i$ and $S$, we obtain the desired inequality. 
\end{proof}

The above convex hull characterization is tight when $\Phi(t)=t\log t$. 

\begin{theorem}
\label{thm:tlogt-arbitrary-independence-converse}
Suppose $\Phi(t)=t\log t$ and $d\geq 1$. Let $\cH=([n],\cE)$ be a hypergraph that is not complete where $\cE$ contains all singletons and is downward closed. Define 
$$J_{\cE}\defeq\{\vec{0},\vec{e}_1,\dots,\vec{e}_n\}\cup \{\vec{e}_{S}\}_{S\in \cE}$$
If $\vec\lambda=(\lambda_1,\dots,\lambda_n)\in [0,1]^n\setminus \Conv(J_{\cE})$, 
then there exist finite random variables $X_1,\ldots,X_n$ satisfying the independence structure of $\cH$, but 
\[
    \vec\lambda\notin \fR_{\Phi}^{M,d}(X_1;\ldots;X_n).
\]
\end{theorem}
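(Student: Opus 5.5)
We work entirely in dimension one. By Proposition~\ref{propdpd}, $\fR_{\Phi}^{M,d}(X_1;\ldots;X_n)\subseteq\fR_{\Phi}^{M,1}(X_1;\ldots;X_n)=\fR_{\Phi}(X_1;\ldots;X_n)$, so it suffices to find $X_1,\ldots,X_n$ with independence structure exactly $\cH$ and $\vec\lambda\notin\fR_{\Phi}(X_1;\ldots;X_n)$ for $\Phi(t)=t\log t$. For this $\Phi$ the scalar ribbon is the HC ribbon. By \cite[Theorem 14]{beigi2018phi} and the mutual-information form of the HC ribbon recalled in the introduction, it is enough to exhibit some $p_{U|X^n}$ with $\sum_i\lambda_i I(U;X_i)>I(U;X^n)$. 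We take $U=X^n$, so the goal becomes
\[
\sum_{i=1}^n\lambda_i H(X_i)>H(X^n).
\]

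\textbf{Step 1 (separation).} Since $\cE$ is downward closed and contains all singletons and $\emptyset$, $\Conv(J_{\cE})\cap[0,1]^n$ is down-closed. Hence a separating hyperplane for $\vec\lambda\notin\Conv(J_{\cE})$ can be taken with nonnegative normal. This gives $\vec w\in\R_+^n$ with
\[
\vec w\cdot\vec\lambda>M\defeq\max_{S\in\cE}w(S),\qquad w(S)\defeq\sum_{i\in S}w_i.
\]
The inequality is strict, so we may perturb $\vec w$ to have positive rational entries and then scale it. After this we assume $w_i\in\mathbb{Z}_{>0}$ and keep a slack $\vec w\cdot\vec\lambda-M\geq\gamma>0$.

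\textbf{Step 2 (threshold construction).} Fix an integer $k$ and a large prime power $q$. Let $Z$ be uniform on $\F_q^{kM}$. Let $X_i'$ consist of $kw_i$ linear forms in $Z$, where the $\sum_i kw_i$ forms are columns of an MDS (e.g.\ Reed--Solomon/Vandermonde) generator matrix of dimension $kM$; such a matrix exists once $q$ is large. Then
\[
H(X_S')=k\min\{w(S),M\}\log q\quad\text{for every }S .
\]
In particular, $(X_i')_{i\in S}$ are mutually independent whenever $w(S)\leq M$, which includes every $S\in\cE$. We also get $H(X'^n)=kM\log q$ and $\sum_i\lambda_iH(X_i')=k(\vec w\cdot\vec\lambda)\log q$.

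\textbf{Step 3 (forcing the exact structure).} Some $S\notin\cE$ may still be independent under Step 2, so we add dependence. For each minimal non-edge $T\notin\cE$, take independent uniform bits $(B^T_i)_{i\in T}$ conditioned on $\sum_{i\in T}B^T_i=0$, and set $X_i=(X_i',(B^T_i)_{T\ni i})$, with all gadgets independent of each other and of $Z$.
\begin{itemize}
\item For $S\in\cE$, no minimal non-edge is contained in $S$. Each gadget restricted to $S$ is therefore a set of independent bits, so $X_S$ is a product of independent pieces and is mutually independent.
\item For $S\notin\cE$, $S$ contains some minimal non-edge $T$, and the parity constraint of gadget $T$ makes $X_S$ dependent.
\end{itemize}
So $\cH$ specifies exactly the independence structure of $(X_1,\ldots,X_n)$. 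The gadgets change each entropy by at most a constant $c$ independent of $k$, so
\[
\sum_i\lambda_iH(X_i)-H(X^n)\geq k\gamma\log q-c'>0
\]
for $k$ large, and hence $\vec\lambda\notin\fR_{\Phi}$.

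\textbf{Main obstacle.} The delicate step is Step 3 combined with the mutual-independence check: we must verify that combining independent components keeps mutual independence exactly on $\cE$ and nowhere else. We must also make sure the MDS threshold code is invoked with integer parameters, which needs the rational perturbation of $\vec w$ in Step 1 to preserve the strict inequality. If the gadget argument is fragile, the fallback is to cite the scalar characterization of \cite{wangscalar} directly and combine it with the dimension monotonicity of Proposition~\ref{propdpd}.
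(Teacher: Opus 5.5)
Your proposal is correct and follows essentially the same route as the paper: a separating hyperplane with nonnegative integer normal obtained via downward closure, a Reed--Solomon/linear-code threshold construction, zero-sum gadgets on the minimal non-edges, and reduction from dimension $d$ to the scalar hypercontractivity ribbon. The only difference worth noting is cosmetic. The paper takes $U$ to be the code seed, which is independent of the gadgets, so the gadgets contribute nothing to the mutual informations. Your choice $U=X^n$ instead requires absorbing the bounded gadget entropy by taking $k$ (or $q$) large, which you handle correctly.
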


The proof is in Appendix \ref{app:proof-tlogt-arbitrary-independence-converse}. The assumption that $\cH$ is not complete is to exclude the trivial case. If $\cH$ is complete, $\Conv(J_{\cE})=[0,1]^n$. Thus there is nothing to prove.

\section{Future work}

\textbf{Separation of matrix SDPI constants.} As discussed in Remark~\ref{rmk:separation}, the function $\Phi$ in Theorem~\ref{thm:seperation-SDPI} belongs to $\mathscr{F}$ but not to $\mathscr{F}_M^2$. It remains open whether the matrix SDPI constants can strictly increase in some matrix dimensions when $\Phi$ is always in the subadditivity class, that is, whether there exists $d'>d$, $\Phi\in\mathscr{F}_M^{d'}$ and $p_{XY}$ for which $\eta_{\Phi}^{M,d'}>\eta_{\Phi}^{M,d}$.

\section*{Acknowledgement}

Theorems~\ref{thm:seperation-SDPI} and \ref{thm:tlogt-arbitrary-independence-converse}  are developed with the assistance of ChatGPT 5.6 Sol. The authors have verified AI-generated content. The authors also use ChatGPT 5.6 Sol for proofreading.

\bibliographystyle{IEEEtran}
\bibliography{refs}

\appendices

\section{Proof of the properties of the matrix \texorpdfstring{$\Phi$}{Phi}-ribbon} \label{app:proof-matrix-phi}

\begin{proof}[Proof of Lemma \ref{lem:matrix-Phi-ineq}]
    (i) By Definition \ref{def:class-FM} (ii), the mapping
    \begin{align*}
        (\mt F(x))_{x\in \cX} \mapsto \nortr \left(\sum_{x}p(x)\Phi(\mt F(x)) -\Phi\left(\sum_{x}p(x) \mt F(x)\right)\right) 
    \end{align*}
    is convex. Then by Jensen's inequality, for any $q_X$ and $\mt F(X,Y)$, we have 
    \begin{align*}
        &\sum_{x}q(x)
        \nortr\Bigg(
        \sum_{y}p(y)\Phi(\mt F(x,y)) -\Phi\left(\sum_{y}p(y) \mt F(x,y)\right)
        \Bigg)\\
        \geq&
        \nortr \left(\sum_{y}p(y)\Phi\left(\sum_x q(x) \mt F(x,y)\right) -\Phi\left(\sum_{x,y}q(x)p(y) \mt F(x,y)\right)\right)
        ,
    \end{align*}
    which is equivalent to
    \begin{align*}
        \HPhi(\mt F|X)\geq \HPhi(\E[\mt F|Y]).
    \end{align*}

    (ii) By (i), fixing $Z=z$, we have the inequalities for the function $\mt G_{XY}^{(z)}(x, y) = \mt F(x,y,z)$. Then taking the average over $z$, we obtain the result. 
\end{proof}

\begin{proof}[Proof of Proposition \ref{prop:matrix-phi-ribbon-properties}]
    \begin{enumerate}[(i)]
        \item By \eqref{matrix-Phi:cond-reduce}, we have 
        \begin{align*}
            \HPhi(\E[\mt F|X_i])\leq \HPhi(\mt F). 
        \end{align*}
        Then for $\sum_{i=1}^n \lambda_i\leq 1$, we have
        \begin{align*}
            \sum_{i=1}^n \lambda_i \HPhi(\E[\mt F|X_i])
            \leq \sum_{i=1}^n \lambda_i \HPhi(\mt F)\leq\HPhi(\mt F).
        \end{align*}
        Thus $\{(\lambda_1,\dots,\lambda_n)\in [0,1]^n:\sum_{i=1}^n \lambda_i\leq 1\}$ is always in the ribbon. 

        Now suppose $X_1=\dots=X_n$ and these random variables are nonconstant. Let $(\lambda_1,\dots,\lambda_n)\in [0,1]^n$ such that $\sum_{i=1}^n \lambda_i>1$. Then for nonconstant $\mt F(X^n)$, we have
        \begin{align*}
            \HPhi(\mt F) = \HPhi(\E[\mt F|X_i])
        \end{align*}
        for all $i\in [n]$. Then 
        \begin{align*}
            \sum_{i=1}^n \lambda_i \HPhi(\E[\mt F|X_i])
            &= \sum_{i=1}^n \lambda_i \HPhi(\mt F)\\
            &>\HPhi(\mt F).
        \end{align*}
        Therefore, this choice of $(\lambda_1,\dots,\lambda_n)$ is not in the matrix $\Phi$-ribbon of $X^n$. Thus 
        \begin{align*}
            \fR_{\Phi}^{M,d}(X_1;\dots;X_n)=
            \left\{(\lambda_1,\dots,\lambda_n)\in [0,1]^n:
            \sum_{i=1}^n \lambda_i\leq 1\right\}.
        \end{align*}
        \item Since $X_1$ and $X_2$ are independent, by Lemma \ref{lem:matrix-Phi-ineq} (i), 
        \begin{align*}
            \HPhi(\E[\mt F|X_1])
            \leq
            \HPhi(\E[\mt F|X_1X_2]|X_2)
        \end{align*}
        By the chain rule \eqref{eqn:matrix-CR2} and rearranging the terms, we get 
        \begin{align*}
            \HPhi(\E[\mt F|X_1])
            +\HPhi(\E[\mt F|X_2])
            \leq \HPhi(\E[\mt F|X_1X_2]). 
        \end{align*}
        By induction, we obtain that 
        \begin{align*}
            \sum_{i=1}^n \HPhi(\E[\mt F|X_i])
            \leq
            \HPhi(\mt F). 
        \end{align*}
    \end{enumerate}
    Thus the matrix $\Phi$-ribbon is $[0,1]^n$. 
\end{proof}

\begin{proof}[Proof of Theorem \ref{thm:tensor-dpi}]
    By letting $\mt F$ be an arbitrary mapping of $(X_1,Y_1)$ or $(X_2,Y_2)$, we have
    \begin{align*}
        \fR_{\Phi}^{M,d}(X_1X_2;Y_1Y_2) \subseteq \fR_{\Phi}^{M,d}(X_1;Y_1)\cap \fR_{\Phi}^{M,d}(X_2;Y_2).
    \end{align*}
    To prove the other direction, let $(\lambda_1,\lambda_2)\in \fR_{\Phi}^{M,d}(X_1;Y_1)\cap \fR_{\Phi}^{M,d}(X_2;Y_2)$.  Then it suffices to show that 
    \begin{align} \label{eq:tensor-direction}
        \HPhi(\mt F)\geq \lambda_1 \HPhi(\E[\mt F|X_1X_2]) + \lambda_2 \HPhi(\E[\mt F|Y_1Y_2]).
    \end{align}
    Since $\E[\mt F|X_1 Y_1]$ is a PSD matrix-valued mapping of $(X_1,Y_1)$ and $(\lambda_1,\lambda_2)\in \fR_{\Phi}^{M,d}(X_1;Y_1)$, we have
    \begin{align} \label{eq:tens-EX1Y1}
         \HPhi(\E[\mt F|X_1 Y_1])\geq \lambda_1 \HPhi(\E[\mt F|X_1]) + \lambda_2 \HPhi(\E[\mt F|Y_1]).
    \end{align}
    Moreover, by fixing $X_1=x$, $Y_1=y$, $\mt F$ is a mapping of $(X_2,Y_2)$. 
    Since $(\lambda_1,\lambda_2)\in \fR_{\Phi}^{M,d}(X_2;Y_2)$, taking the expectation over $x,y$, we have
    \begin{align} \label{eq:tens-X1Y1}
        \HPhi(\mt F|X_1 Y_1)\geq \lambda_1 \HPhi(\E[\mt F|X_1 X_2 Y_1]|X_1 Y_1) +\lambda_2 \HPhi(\E[\mt F|X_1 Y_1 Y_2]|X_1 Y_1).
    \end{align}
    Combining \eqref{eq:tens-EX1Y1} and \eqref{eq:tens-X1Y1}, we have
    \begin{align} \label{eq:F-la12}
        \HPhi(\mt F) 
        &=\HPhi(\E[\mt F|X_1 Y_1])+\HPhi(\mt F|X_1 Y_1) \nonumber\\
        &\geq \lambda_1 \left(\HPhi(\E[\mt F|X_1])+\HPhi(\E[\mt F|X_1 X_2 Y_1]|X_1 Y_1)\right)
        +\lambda_2 \left(\HPhi(\E[\mt F|Y_1])+\HPhi(\E[\mt F|X_1 Y_1 Y_2]|X_1 Y_1)\right). 
    \end{align}
    Note that we have the Markov chains $X_2\rightarrow X_1\rightarrow Y_1$ and $X_1\rightarrow Y_1\rightarrow Y_2$. By Lemma \ref{lem:matrix-Phi-ineq} (ii), we have
    \begin{align} \label{eq:X1X2Y1}
        \HPhi(\E[\mt F|X_1 X_2 Y_1]|X_1 Y_1)\geq \HPhi(\E[\mt F|X_1 X_2]|X_1),
    \end{align}
    and
    \begin{align} \label{eq:X1Y1Y2}
        \HPhi(\E[\mt F|X_1 Y_1 Y_2]|X_1 Y_1)\geq \HPhi(\E[\mt F|Y_1Y_2]|Y_1),
    \end{align}
    Combining \eqref{eq:F-la12}, \eqref{eq:X1X2Y1} and \eqref{eq:X1Y1Y2}, we obtain 
    \begin{align*}
        \HPhi(\mt F) 
        &\geq \lambda_1 \left(\HPhi(\E[\mt F|X_1])+\HPhi(\E[\mt F|X_1 X_2 Y_1]|X_1 Y_1)\right)
        +\lambda_2 \left(\HPhi(\E[\mt F|Y_1])+\HPhi(\E[\mt F|X_1 Y_1 Y_2]|X_1 Y_1)\right)\\
        &\geq \lambda_1 \left(\HPhi(\E[\mt F|X_1])+\HPhi(\E[\mt F|X_1 X_2]|X_1)\right)
        +\lambda_2 \left(\HPhi(\E[\mt F|Y_1])+\HPhi(\E[\mt F|Y_1Y_2]|Y_1)\right)\\
        &=\lambda_1 \HPhi(\E[\mt F|X_1X_2]) + \lambda_2 \HPhi(\E[\mt F|Y_1Y_2]).
    \end{align*}
    Thus \eqref{eq:tensor-direction} is proved.

    (ii) Suppose $(\lambda_1,\lambda_2)\in \fR_{\Phi}^{M,d}(X_1;Y_1)$. Let $\mt F=\mt F(X_2,Y_2)$. We then have
    \begin{align} \label{dpi-matrix-1}
        \HPhi(\E[\mt F|X_1Y_1])\geq \lambda_1 \HPhi(\E[\mt F|X_1]) +\lambda_2 \HPhi(\E[\mt F|Y_1]).
    \end{align} 
    By the chain rule, 
    \begin{align} \label{eq:dp-FX1Y1}
        \HPhi(\mt F|X_1Y_1)
        &= \HPhi(\E[\mt F|X_1X_2Y_1]|X_1Y_1)
        +\HPhi(\mt F|X_1X_2Y_1) \nonumber\\
        &\geq \HPhi(\E[\mt F|X_1X_2Y_1]|X_1Y_1)
        +\HPhi(\E[\mt F|X_1Y_1Y_2]|X_1Y_1) \nonumber\\
        &\geq \lambda_1 \HPhi(\E[\mt F|X_1X_2Y_1]|X_1Y_1)
        +\lambda_2\HPhi(\E[\mt F|X_1Y_1Y_2]|X_1Y_1), 
    \end{align}
    where the second line follows from applying Lemma~\ref{lem:matrix-Phi-ineq}~(ii) with the Markov chain $X_2\to X_1Y_1\to Y_2$, and the last line follows from $\lambda_1,\lambda_2\leq 1$ and the nonnegativity of the matrix $\Phi$-entropy.  
    Note that we have the Markov chains $X_2\rightarrow X_1\rightarrow Y_1$ and $X_1\rightarrow Y_1\rightarrow Y_2$. Applying Lemma \ref{lem:matrix-Phi-ineq} (ii) with $\E[\mt F|X_1 X_2 Y_1]$ and $Y_1\rightarrow X_1\rightarrow X_2$, we have
    \begin{align} \label{eq:dp-X1X2Y1}
        \HPhi(\E[\mt F|X_1 X_2 Y_1]|X_1 Y_1)\geq \HPhi(\E[\mt F|X_1 X_2]|X_1).
    \end{align}
    Similarly, applying Lemma \ref{lem:matrix-Phi-ineq} (ii) with $\E[\mt F|X_1 Y_1 Y_2]$ and $X_1\rightarrow Y_1\rightarrow Y_2$
    \begin{align} \label{eq:dp-X1Y1Y2}
        \HPhi(\E[\mt F|X_1 Y_1 Y_2]|X_1 Y_1)\geq \HPhi(\E[\mt F|Y_1Y_2]|Y_1),
    \end{align}
    Combining \eqref{eq:dp-FX1Y1}, \eqref{eq:dp-X1X2Y1} and \eqref{eq:dp-X1Y1Y2}, we obtain 
    \begin{align} \label{eq:F|X1X1}
        \HPhi(\mt F|X_1Y_1)
        \geq \lambda_1 \HPhi(\E[\mt F|X_1 X_2]|X_1)
        +\lambda_2 \HPhi(\E[\mt F|Y_1Y_2]|Y_1).
    \end{align}
    Then 
    \begin{align*}
        \HPhi(\mt F)  
        &\overset{\rm (a)}{=}\HPhi(\E[\mt F|X_1 Y_1])+\HPhi(\mt F|X_1 Y_1)\\
        &\overset{\rm (b)}{\geq} 
        \lambda_1 \left(\HPhi(\E[\mt F|X_1])+\HPhi(\E[\mt F|X_1 X_2]|X_1)\right)  +\lambda_2 \left(\HPhi(\E[\mt F|Y_1])+\HPhi(\E[\mt F|Y_1Y_2]|Y_1)\right) \\
        &\overset{\rm (c)}{=} \lambda_1 \HPhi(\E[\mt F|X_1X_2]) +\lambda_2 \HPhi(\E[\mt F|Y_1Y_2])\\
        &\overset{\rm (d)}{\geq} \lambda_1 \HPhi(\E[\mt F|X_2]) +\lambda_2 \HPhi(\E[\mt F|Y_2])
    \end{align*}
    where (a) follows from the chain rule, (b) follows from \eqref{dpi-matrix-1} and \eqref{eq:F|X1X1}, (c) follows from the chain rule, and (d) follows from the fact that conditioning reduces the matrix $\Phi$-entropy. 

    Thus, $(\lambda_1,\lambda_2)\in \fR_{\Phi}^{M,d}(X_2;Y_2)$, which finishes the proof of the data processing property for the matrix $\Phi$-ribbon. 
\end{proof}

\section{Proof of Theorem \ref{thm:tensorization-for-boxes}} \label{apx:proof-main-theorem}
\begin{table*}
\begin{center}
\begin{small}
\caption{\small (\cite{beigi2015monotone}) Alice and Bob interact with the  $n$ no-signaling boxes in sequences that may differ and could be chosen randomly. The table outlines the notations used to represent the random variables related to these sequences, as well as the inputs and outputs of the boxes. The term “Alice’s transcript” refers to the information Alice has gathered through her observations up to a given point in time.
}
\begin{tabular}{|c|c|c|}
\hline
Notation & Description   & { Corresponding variable of Bob}\\
\hline
$\Pi_i$ & Alice uses the $i$-th box in her $\Pi_i$-th action& $\Omega_i$\\ \hline   
 & Index of the box Alice uses in her $i$-th action: & \\
$\wPi_i$&$~\Pi_{\wPi_i}=i, \qquad \wPi_{\Pi_i}=i$&$\wOmega_i$\\
\hline
$X_i$ & Alice's input of the $i$-th box & $Y_i$\\\hline
$A_i$ & Alice's output of the $i$-th box & $B_i$\\\hline
& Alice's input in her $i$-th action:  & \\
$\wX_i$ &$~\wX_i=X_{\wPi_i}$&$\wY_i$\\
\hline
 & Alice's output in her $i$-th action: & \\
$\wA_i$&$~\wA_i=A_{\wPi_i}$&$\wB_i$\\
\hline
$T_i$ & Alice's transcript before using the $i$-th box & $S_i$\\\hline
 & Alice's transcript before her $i$-th action: & \\
$\wT_i$ & $\wT_i=T_{\wPi_i}$& $\wS_i$
\\
\hline
& Alice's information before getting the output of the $i$-th box: & \\
$\Tie$ & $\Tie=T_iX_i\Pi_i$ & $\Sie$
\\
\hline
 & Alice's information before getting the output of the $i$-th action: & \\
$\wTie$ & $\wTie\defeq \wT_i\wX_i\wPi_i$ & $\wSie$ 
\\
\hline
\end{tabular}
\end{small}
\end{center}

\label{table:summary}
\end{table*}

We will introduce some notations to formulate the wiring. Let $(\Pi_1,...,\Pi_n)$ denote the random variables for the order of boxes used by Alice. For example, in the $\Pi_i$-th action, Alice uses the $i$-th box. Next, define the inverse permutation of $(\Pi_1,...,\Pi_n)$ as $(\wPi_1,...,\wPi_n)$, i.e., 
\begin{align*}
    \wPi_{\Pi_i}=i, \qquad \qquad \Pi_{\wPi_i}=i
\end{align*}
which means that in the $j$-th action, Alice uses the $\wPi_j$-th box. The corresponding input is then $X_{\wPi_j}$, and the output is $A_{\wPi_j}$. For simplicity, define
\begin{align*}
    \wX_j \defeq X_{\wPi_j}, \qquad \qquad \wA_j \defeq A_{\wPi_j}.
\end{align*}

Similarly, define $(\Omega_1,...,\Omega_n)$ to be the random variables for the order of boxes used by Bob, and define the inverse permutation of $(\Omega_1,...,\Omega_n)$ as $(\wOmega_1,...,\wOmega_n)$, i.e.,  
\begin{align*}
    \wOmega_{\Omega_i}=i, \qquad \qquad \Omega_{\wOmega_i}=i.
\end{align*}
We use 
\begin{align*}
    \wY_i\defeq Y_{\wOmega_i}, \qquad \qquad \wB_i \defeq B_{\wOmega_i}.
\end{align*}

Denote $T_i$ as the \textit{transcript} of the information that Alice has before using the $i$-th box (before the $\Pi_i$-th action), i.e., 
\begin{align*}
    T_i &\defeq \wPi_1\dots \wPi_{\Pi_{i}-1}X_{\wPi_1}\dots X_{\wPi_{\Pi_{i}-1}}A_{\wPi_1}\dots A_{\wPi_{\Pi_{i}-1}}\\
    &= \wPi_{[\Pi_{i}-1]} \wX_{[\Pi_{i}-1]} \wA_{[\Pi_{i}-1]}.
\end{align*}
Denote $\wT_i$ as the transcript of Alice before her $i$-th action, i.e., 
\begin{align*}
    \wT_i &\defeq T_{\wPi_i} = \wPi_1\dots \wPi_{i-1}X_{\wPi_1} \dots X_{\wPi_{i-1}} A_{\wPi_1} \dots A_{\wPi_{i-1}}\\
    &= \wPi_{[i-1]} \wX_{[i-1]} \wA_{[i-1]}.
\end{align*}
Similarly, define $S_i$ and $\wS_i$ as transcripts for Bob, i.e., 
\begin{align*}
S_i \defeq \wOmega_{[\Omega_{i}-1]} \wY_{[\Omega_{i}-1]} \wB_{[\Omega_{i}-1]}, \qquad \wS_i \defeq \wOmega_{[i-1]} \wY_{[i-1]} \wB_{[i-1]}.
\end{align*}
{Moreover, define $\Tie\defeq T_iX_i\Pi_i$, which can be interpreted as all the information at hand for Alice before she gets the output for the $i$-th box. Then define $\wTie\defeq \wT_i\wX_i\wPi_i$. Similarly, for Bob, define $\Sie\defeq S_i Y_i\Omega_i$ and $\wSie\defeq \wS_i\wY_i\wOmega_i$. }  

Hence, we can formulate the wiring in this way. In the $i$-th action, Alice has the information $\wT_i$ at hand and Bob has the information $\wS_i$. Then Alice chooses the $\wPi_i$-th box and the corresponding input $\wX_i$ based on $\wT_i$, and Bob chooses the $\wOmega_i$-th box and the corresponding input $\wY_i$ based on $\wS_i$ using only their respective local transcripts and private randomness. The $\wPi_i$-th box generates the output $\wA_i$ for Alice, and the $\wOmega_i$-th box generates the output $\wB_i$ for Bob. 

Based on the wiring and the no-signaling property, we can state the following lemma. 
\begin{lemma}[Lemma 2 in \cite{beigi2015monotone}] \label{lem:mc} 
    For the wiring of no-signaling boxes, given $x'y'$, we have the following Markov chains: 
    \begin{enumerate} [label=\rm{(\roman*)}]
        \item $A_iB_i\rightarrow X_iY_i\rightarrow T_iS_i\Pi_i\Omega_i$.
        \item $B_i \rightarrow S_i^e \rightarrow T_i^e$.
        \item $B_i \rightarrow A_i T_i^e S_i^e \rightarrow \An \Xn \Pin$.
        \item $\wY_i\wOmega_i \rightarrow \wS_i \rightarrow \An\Xn\Pin$.
    \end{enumerate}
\end{lemma}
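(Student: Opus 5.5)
Each of the four Markov chains will be verified by writing out the joint law of all variables, given $x'y'$, as the product of factors that the wiring induces. Alice's protocol contributes factors of the form $p(\wPi_j,\wX_j\mid \wT_j)$. Bob's contributes $p(\wOmega_j,\wY_j\mid \wS_j)$. The boxes contribute $p_{A_iB_i|X_iY_i}$. These factors use private randomness and the local transcripts only. The key tool is the no-signaling property, $\sum_{b}p(a,b|x,y)=p(a|x)$ and symmetrically for Bob, which allows a box to be ``split'' into its marginal on one side. Throughout I condition on $x'y'$ and suppress it.

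For (i), I would argue by causality. The variables $T_iS_i\Pi_i\Omega_i$, together with the input pair $X_iY_i$, are produced by the actions that precede the use of box $i$ on each side. The output pair $A_iB_i$ is drawn afterwards from $p_{A_iB_i|X_iY_i}$, using fresh box randomness that is independent of everything else. The difficulty is that Alice and Bob may use box $i$ at different times. So I would first use no-signaling to reorder the actions into a single sequential interleaving. In this interleaving Alice's and Bob's uses of box $i$ are adjacent, and the output pair $(A_i,B_i)$ is generated jointly once both inputs are fixed. Because of no-signaling, each side's earlier actions do not depend on the other side's outputs except through the joint law, so the reordering leaves the joint distribution unchanged. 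After the reordering, (i) follows because the box randomness is independent of the past.

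For (ii) and (iii), I would sum out the variables in a careful order. For (ii), $B_i\to S_i^e\to T_i^e$, I would marginalize all variables generated after box $i$ is used. I would then use no-signaling to replace $p(a_i,b_i|x_i,y_i)$ by $p(b_i|y_i)p(a_i|b_i,x_i,y_i)$ and sum over $a_i$. This shows that $B_i$ depends on the rest only through $Y_i$, which is contained in $S_i^e$. Statement (iii), $B_i\to A_iT_i^eS_i^e\to \An\Xn\Pin$, needs more care. Alice's later actions are functions of her transcript, which includes $A_i$ and $T_i^e$ but not $B_i$. The other boxes' Alice-outputs depend on $B_i$ only through Bob's inputs to those boxes, and no-signaling lets me sum those out once Bob's part is marginalized. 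So I would marginalize all of Bob's variables except $B_i$ and $S_i^e$, box by box in Bob's action order from last to first. Each such sum uses $\sum_{b_j}p(a_j,b_j|x_j,y_j)=p(a_j|x_j)$, which removes the dependence on $y_j$. The result is a factorization $p(b_i|a_i,t_i^e,s_i^e)\cdot g(\an,\xn,\pin,a_i,t_i^e,s_i^e)$, which gives the Markov chain.

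For (iv), $\wY_i\wOmega_i\to\wS_i\to\An\Xn\Pin$, the argument is direct. Bob chooses his $i$-th box and its input from $\wS_i$ and his private randomness alone. I would again marginalize Bob's later variables, so that the factor $p(\wOmega_i,\wY_i|\wS_i)$ is separated from everything that is Alice's.

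I expect (iii) to be the main obstacle, together with the reordering step in (i). Both require ordering the summations so that every use of no-signaling deletes a cross-dependence without creating a new one. The proof of Lemma 2 in \cite{beigi2015monotone} carries this out. Since the statement involves only classical random variables and does not depend on $\Phi$ or on matrices, I would largely cite that lemma and only check the notational correspondence given in Table~\ref{table:summary}.
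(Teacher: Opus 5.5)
Your proposal agrees with the paper. The paper gives no argument of its own and simply imports Lemma~2 of \cite{beigi2015monotone}, which is where you end up. Your sketch is a sound outline of why (i)--(iv) hold: write the joint law as the product of the two parties' strategy factors and the box factors, then sum out outputs from the last action backwards, so that each sum uses no-signaling to remove the other party's input. The reordering step in (i), which you flag as an obstacle, is actually unnecessary. Summing out both parties' actions after their respective uses of box~$i$ in the same backward way leaves $p(a_ib_i|x_iy_i)$ times a function of $T_i^eS_i^e$. Part (ii) then follows from (i) by summing over $a_i$ and using $p(b_i|x_i,y_i)=p(b_i|y_i)$.
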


Combining Lemmas \ref{lem:matrix-Phi-ineq} and \ref{lem:mc}, we immediately get the following corollary, which is important in the proof of Theorem \ref{thm:tensorization-for-boxes}. 
\begin{corollary} \label{cor:mc}
    Suppose $\Phi\in \mathscr{F}_M^d$. For any $\mt F:\cX\to \PSD^d$, with Lemma \ref{lem:matrix-Phi-ineq} (ii), conditioned on $x',y'$, we have the following inequalities: 
    \begin{enumerate} [label=\rm{(\roman*)}]
        \item from Lemma \ref{lem:mc} (ii), 
        \begin{align*}
            \HPhi(\E[\mt F|B_i\Tie\Sie]|\Tie\Sie)\geq
                \HPhi(\E[\mt F|B_i\Sie]|\Sie).
        \end{align*}
        \item from Lemma \ref{lem:mc} (iii), 
        \begin{align*}
            \HPhi(\E[\mt F|\An\Xn\Pin B_i\Sie]|\An\Xn\Pin \Sie)
            \geq 
            \HPhi(\E[\mt F|A_iB_i\Tie\Sie]|A_i\Tie\Sie).
        \end{align*}
        \item from Lemma \ref{lem:mc} (iv), 
        \begin{align*}
            \HPhi(\E[\mt F|\An\Xn\Pin\wSie]|\An\Xn\Pin\wS_i)
            \geq
            \HPhi(\E[\mt F|\wS_i^e]|\wS_i). 
        \end{align*}
    \end{enumerate}
\end{corollary}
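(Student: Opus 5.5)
Each of the three inequalities comes from a single application of Lemma~\ref{lem:matrix-Phi-ineq}~(ii). In every case I would read it in the form: if $X\to Z\to Y$ is a Markov chain and $\mt G=\mt G(X,Y,Z)$, then $\HPhi(\mt G|XZ)\geq \HPhi(\E[\mt G|YZ]|Z)$. The work is to choose, for each item, the right $\mt G$ and the right triple $(X,Y,Z)$ from the Markov chains of Lemma~\ref{lem:mc}. Everything is done conditionally on $x',y'$, so I fix $X'=x',Y'=y'$ throughout. I take $\mt F$ to be a function of all the variables generated by the wiring, with every expectation and entropy computed under the conditional law given $x',y'$. Then $\mt G\defeq\E[\mt F|\cdots]$ is a PSD function of the conditioning variables, and the tower property lets me rewrite $\E[\mt G|\cdot]=\E[\mt F|\cdot]$ whenever the inner conditioning set contains the outer one.

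For (i), Lemma~\ref{lem:mc}~(ii) gives $B_i\to\Sie\to\Tie$. I take $X=B_i$, $Z=\Sie$, $Y=\Tie$ and $\mt G=\E[\mt F|B_i\Tie\Sie]$. The lemma then gives $\HPhi(\mt G|B_i\Sie)\geq\HPhi(\E[\mt G|\Tie\Sie]|\Sie)$. This orientation seems to produce conditioning sets that do not match the claim, so I also try the reversed chain $\Tie\to\Sie\to B_i$ with $X=\Tie$, $Y=B_i$. That yields
\[
\HPhi(\E[\mt F|B_i\Tie\Sie]\,|\,\Tie\Sie)\geq \HPhi(\E[\mt F|B_i\Sie]\,|\,\Sie),
\]
which is exactly (i).

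For (ii), I use Lemma~\ref{lem:mc}~(iii), $B_i\to A_i\Tie\Sie\to\An\Xn\Pin$, and read it as $X=\An\Xn\Pin$, $Z=A_i\Tie\Sie$, $Y=B_i$. Before doing so I have to check that $A_i\Tie$ is a function of $\An\Xn\Pin$. This should hold because Alice's transcript $T_i$, together with $X_i$ and $\Pi_i$, is determined by $\An\Xn\Pin$. Once that is established, $\An\Xn\Pin\, Z=\An\Xn\Pin\Sie$, and with $\mt G=\E[\mt F|\An\Xn\Pin B_i\Sie]$ the lemma gives the left-hand conditioning set exactly. On the right, $\E[\mt G|B_iZ]=\E[\mt F|A_iB_i\Tie\Sie]$ conditioned on $A_i\Tie\Sie$, which is the claim.

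For (iii), Lemma~\ref{lem:mc}~(iv) gives $\wY_i\wOmega_i\to\wS_i\to\An\Xn\Pin$. I take $X=\An\Xn\Pin$, $Z=\wS_i$, $Y=\wY_i\wOmega_i$ and $\mt G=\E[\mt F|\An\Xn\Pin\wSie]$, using $YZ=\wSie$. The lemma gives $\HPhi(\mt G|\An\Xn\Pin\wS_i)\geq\HPhi(\E[\mt F|\wSie]|\wS_i)$.

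I expect the main obstacle to be bookkeeping rather than analysis. In (i) I must orient the Markov chain correctly, using that a Markov chain can be reversed. In (ii) I must verify the functional-dependence claim that $\Tie$ and $A_i$ are determined by $\An\Xn\Pin$, which is needed so the conditioning sets collapse as required.
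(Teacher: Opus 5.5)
Your proposal is correct and follows the paper's own route. The paper gets Corollary~\ref{cor:mc} by applying Lemma~\ref{lem:matrix-Phi-ineq}~(ii) to the (reversed) Markov chains of Lemma~\ref{lem:mc}, with the same choices of $X,Y,Z$ and the same tower-property simplifications you describe. The functional-dependence check you flag in (ii) is immediate from the definitions: $T_i=\wPi_{[\Pi_i-1]}\wX_{[\Pi_i-1]}\wA_{[\Pi_i-1]}$, and $\wPi$ is the inverse permutation of $\Pin$.
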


We will also use the following lemma to prove the tensorization for the matrix $\Phi$-ribbon of no-signaling boxes. 
\begin{lemma} \label{lem:expand}
   For any $\PSD^d$-valued $\mt F$, 
   \begin{enumerate}[\rm (i)]
       \item \begin{align*}
            \HPhi(\E[\mt F|\An\Xn\Pin]) 
            =& \sum_{i=1}^n \left(
            \HPhi(\E[\mt F|\wT_i^e]|\wT_i)
            +\HPhi(\E[\mt F|A_i \Tie]|\Tie)
            \right),
        \end{align*}
        and similarly, 
        \begin{align*}
            \HPhi(\E[\mt F|\Bn\Yn\Omegan])
            =& \sum_{i=1}^n
            \left(\HPhi(\E[\mt F|\wS_i^e]|\wS_i)
            +\HPhi(\E[\mt F|B_i \Sie]|\Sie)\right).
        \end{align*}
       \item \begin{align*}
           &\HPhi(\E[\mt F|\An\Xn\Pin\Bn\Yn\Omegan]|\An\Xn\Pin) \\
           =& \sum_{i=1}^n \bigg( \HPhi(\E[\mt F|\An\Xn\Pin\wSie]|\An\Xn\Pin\wS_i)
           +\HPhi(\E[\mt F|\An\Xn\Pin B_i \Sie]|\An\Xn\Pin S_i^e) \bigg).
       \end{align*}
   \end{enumerate}
\end{lemma}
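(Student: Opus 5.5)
Both identities in Lemma~\ref{lem:expand} follow from repeated use of the matrix chain rules \eqref{matrix-Phi:chain-rule}, \eqref{eqn:matrix-CRn} and \eqref{eqn:matrix-CR2}. The σ-algebras are refined one step at a time along Alice's (resp.\ Bob's) chronological order of actions. After that, the summation index over actions is converted into an index over boxes by means of the permutation $\wPi$.

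For part (i), I would first note that $\An\Xn\Pin$ and the final transcript $\wT_{n+1}\defeq\wPi_{[n]}\wX_{[n]}\wA_{[n]}$ generate the same information. Indeed, $\Pin$ and $\wPin$ determine each other, and $\An\Xn$ is recovered from $\wX,\wA$ together with $\wPi$. Hence $\HPhi(\E[\mt F|\An\Xn\Pin])=\HPhi(\E[\mt F|\wT_{n+1}])$. The chain $\wT_1\subseteq\wT_1^e\subseteq\wT_2\subseteq\cdots\subseteq\wT_{n+1}$ is increasing, since $\wT_i^e=\wT_i\wX_i\wPi_i$ and $\wT_{i+1}=\wT_i^e\wA_i$, and $\wT_1$ is trivial. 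Applying \eqref{eqn:matrix-CRn} to this filtration gives
\[
\HPhi(\E[\mt F|\wT_{n+1}])=\sum_{i=1}^n\Big(\HPhi(\E[\mt F|\wT_i^e]|\wT_i)+\HPhi(\E[\mt F|\wT_{i}^e\wA_i]|\wT_i^e)\Big).
\]
The remaining step is to reindex the second sum by boxes, i.e., to show
\[
\sum_j \HPhi(\E[\mt F|\wT_j^e\wA_j]|\wT_j^e)=\sum_i\HPhi(\E[\mt F|A_i\Tie]|\Tie).
\]
The subtlety is that $\wPi_j$ is random, so this is not a deterministic relabeling. The way to handle it is to write each conditional entropy as an expectation over the conditioning variable and split on the event $\{\wPi_j=i\}=\{\Pi_i=j\}$. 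That event is measurable with respect to $\wT_j^e$, because it is determined by $\wPi_j$, and also with respect to $\Tie$, because it is determined by $\Pi_i$. On this event $\wT_j^e$ and $\Tie$ carry the same information, and $\wA_j=A_i$.

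I would make this precise by expanding $\HPhi(\mt G|\cdot)=\E[\nortr\Phi(\mt G)]-\E[\nortr\Phi(\E[\mt G|\cdot])]$ and using $\E[\mt G|\wT_j^e]\mathbbm 1\{\wPi_j=i\}=\E[\mt G|\Tie]\mathbbm 1\{\Pi_i=j\}$. This identity requires care about what exactly the transcripts encode. Summing over $i,j$ then gives the reindexed identity. This bookkeeping is where I expect the only real obstacle: checking that the definitions of $T_i$ and $\wT_j$ make the conditional expectations agree on these events, which is a measure-theoretic rather than an inequality issue. Bob's identity is identical with $(\wS,\wY,\wOmega,\wB)$.

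Part (ii) is the same argument carried out conditionally on $\An\Xn\Pin$. By \eqref{eqn:matrix-CR2} iterated, the left side decomposes along the filtration $\An\Xn\Pin\wS_1\subseteq\An\Xn\Pin\wS_1^e\subseteq\An\Xn\Pin\wS_2\subseteq\cdots$, whose top element carries the same information as $\An\Xn\Pin\Bn\Yn\Omegan$. This yields the terms $\HPhi(\E[\mt F|\An\Xn\Pin\wSie]|\An\Xn\Pin\wS_i)$ and $\HPhi(\E[\mt F|\An\Xn\Pin\wSie\wB_i]|\An\Xn\Pin\wSie)$. The same event-splitting reindexing, using $\{\wOmega_j=i\}$, converts the latter into $\HPhi(\E[\mt F|\An\Xn\Pin B_i\Sie]|\An\Xn\Pin\Sie)$.
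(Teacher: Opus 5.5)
Your proposal is correct and follows essentially the same route as the paper. Both arguments identify $\An\Xn\Pin$ with $\wAn\wXn\wPin$, apply the chain rule along $\wT_1\subseteq\wT_1^e\subseteq\wT_2\subseteq\cdots$ (conditionally on $\An\Xn\Pin$ along Bob's transcripts for part (ii)), and convert the output terms from actions to boxes by splitting on the events $\{\wPi_j=i\}=\{\Pi_i=j\}$ (resp.\ $\{\wOmega_j=i\}=\{\Omega_i=j\}$). Your identity $\E[\mt G|\wT_j^e]\mathbbm 1\{\wPi_j=i\}=\E[\mt G|\Tie]\mathbbm 1\{\Pi_i=j\}$ follows directly from the definitions $\wT_j=T_{\wPi_j}$ and $\wX_j=X_{\wPi_j}$, and it is simply a more explicit version of the paper's reindexing step (d).
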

We postpone the proof of Lemma \ref{lem:expand} to the end of this appendix. 

\begin{proof}[Proof of Theorem \ref{thm:tensorization-for-boxes}]
    By the data processing property, if the box $p_{A'B'|X'Y'}$ can be generated by the box 
    $$p_{\An\Xn\Pin\Bn\Yn\Omegan|X'Y'},$$ 
    then for any $x',y'$, 
    \begin{align*}
        \fR_{\Phi}^{M,d}(\An\Xn\Pin;\Bn\Yn\Omegan|x';y') \subseteq \fR_{\Phi}^{M,d}(A';B'|x',y').
    \end{align*}

    Thus, it is sufficient to prove
    \begin{equation*}
        \bigcap_{i=1}^n \fR_{\Phi}^{M,d} (A_i;B_i|X_i;Y_i)\subseteq \fR_{\Phi}^{M,d} (A_{[n]}X_{[n]}\Pi_{[n]};B_{[n]}Y_{[n]}\Omega_{[n]}|x',y')
    \end{equation*}
    for any $x',y'$.  That is to say, for any $(\lambda_1,\lambda_2)\in \bigcap_{i=1}^n \fR_{\Phi}^{M,d} (A_i;B_i|X_i,Y_i)$, we want to show 
    $$\lambda_1 \HPhi(\E[\mt F|A_{[n]}X_{[n]}\Pi_{[n]}]) + \lambda_2 \HPhi(\E[\mt F|B_{[n]}Y_{[n]}\Omega_{[n]}])\leq \HPhi(\mt F)$$
    for any PSD matrix-valued function $\mt F(\An,\Xn,\Pin,\Bn,\Yn,\Omegan)$. If  $\lambda_1+\lambda_2\leq 1$, the inequality is trivially satisfied because of the data processing inequality for the matrix $\Phi$-entropy. Thus, we only need to show the inequality for pairs $(\lambda_1,\lambda_2)\in \bigcap_{i=1}^n \fR_{\Phi}^{M,d}(A_i;B_i|X_i,Y_i)$ such that $\lambda_1+\lambda_2\geq 1$. 

    Define 
    \begin{align*}
        \chi(\lambda_1,\lambda_2) &\triangleq -\lambda_1 \HPhi(\E[\mt F|\An\Xn\Pin]) - \lambda_2 \HPhi(\E[\mt F|\Bn\Yn\Omegan]) + \HPhi(\mt F).
    \end{align*}
    We need to show that $\chi(\lambda_1,\lambda_2)\geq 0$. 
    By Lemma \ref{lem:expand}, we have 
    \begin{align*}
        \HPhi(\E[\mt F|\An\Xn\Pin])
        &= \sum_{i=1}^n \left(\HPhi(\E[\mt F|\wT_i^e]|\wT_i)
        +\HPhi(\E[\mt F|A_i \Tie]|\Tie) \right) 
        \\
        &=\sum_{i=1}^{n}
        \bigg(\HPhi(\E[\mt F|\wA_{[i-1]}\wX_{[i]}\wPi_{[i]}]|\wA_{[i-1]}\wX_{[i-1]}\wPi_{[i-1]})
        +\HPhi(\E[\mt F|A_{i}\Tie S_i^e]|T_i^eS_i^e)\nonumber \\
        &\qquad +\HPhi(\E[\mt F|A_i \Tie]|\Tie)
        -\HPhi(\E[\mt F|A_{i}\Tie S_i^e]|T_i^eS_i^e)\bigg)
    \end{align*}
    and
    \begin{align*}
        \HPhi(\E[\mt F|\Bn\Yn\Omegan]) 
        &=\sum_{i=1}^n \bigg(\HPhi(\E[\mt F|\wS_i^e]|\wS_i)
        +\HPhi(\E[\mt F|B_i\Sie]|\Sie) \bigg)\\
        &=\sum_{i=1}^{n} \bigg( \HPhi(\E[\mt F|\wB_{[i-1]}\wY_{[i]}\wOmega_{[i]}]|\wB_{[i-1]}\wY_{[i-1]}\wOmega_{[i-1]})
        +\HPhi(\E[\mt F|B_{i}\Tie S_i^e]|T_i^eS_i^e)\nonumber\\
        &\quad +\HPhi(\E[\mt F|B_i \Sie]|\Sie)
        -\HPhi(\E[\mt F|B_{i}\Tie S_i^e]|T_i^eS_i^e) \bigg). 
    \end{align*}

    Using the fact that $\mt F$ is a function of $(\An,\Xn,\Pin,\Bn,\Yn,\Omegan)$, we can rewrite the above expression as follows:
    \begin{align}
        \chi(\lambda_1,\lambda_2) 
        &= -\sum_{i=1}^{n} \bigg( \lambda_1 \HPhi(\E[\mt F|\wA_{[i-1]}\wX_{[i]}\wPi_{[i]}]|\wA_{[i-1]}\wX_{[i-1]}\wPi_{[i-1]})
        +\lambda_1 \HPhi(\E[\mt F|A_{i}\Tie S_i^e]|T_i^eS_i^e)\nonumber \\
        &\quad +\lambda_1 \HPhi(\E[\mt F|A_i \Tie]|\Tie)
        -\lambda_1 \HPhi(\E[\mt F|A_{i}\Tie S_i^e]|T_i^eS_i^e)\nonumber \\
        &\quad 
        +\lambda_2 \HPhi(\E[\mt F|\wB_{[i-1]}\wY_{[i]}\wOmega_{[i]}]|\wB_{[i-1]}\wY_{[i-1]}\wOmega_{[i-1]})
        +\lambda_2 \HPhi(\E[\mt F|B_{i}\Tie S_i^e]|T_i^eS_i^e)\nonumber\\
        &\quad +\lambda_2 \HPhi(\E[\mt F|B_i \Sie]|\Sie)
        -\lambda_2\HPhi(\E[\mt F|B_{i}\Tie S_i^e]|T_i^eS_i^e)
        \bigg) + \HPhi(\E[\mt F|\An\Xn\Pin\Bn\Yn\Omegan])\label{eqnchif2}.
    \end{align} 
    Next, we claim that
    \begin{equation}
    \lambda_1\HPhi(\E[\mt F|A_i\Tie\Sie]|\Tie\Sie)
        +\lambda_2\HPhi(\E[\mt F|B_i\Tie\Sie]|\Tie\Sie)\leq \HPhi(\E[\mt F|A_iB_i\Tie\Sie]|\Tie\Sie).\label{eqnStep}
    \end{equation}
    To show this inequality, observe that from Lemma \ref{lem:mc} (i), we have the Markov chain $A_iB_i\rightarrow X_iY_i \rightarrow T_iS_i\Pi_i\Omega_i$. Fix $t_i,s_i,\pi_i,\omega_i$, then we have $p_{A_iB_i|X_iY_it_is_i\pi_i\omega_i}=p_{A_iB_i|X_iY_i}$. Thus, $p(A_iB_i|X_iY_it_is_i\pi_i\omega_i)$ is the same box as $p_{A_iB_i|X_iY_i}$. 

    As $(\lambda_1,\lambda_2)\in \bigcap_{i=1}^n \fR_{\Phi}^{M,d} (A_i;B_i|X_i,Y_i)$, we have
    \begin{align*}
        &\lambda_1\HPhi(\E[\mt F|A_iX_iY_i t_i s_i \pi_i \omega_i]|X_iY_i t_i s_i \pi_i \omega_i)
        +\lambda_2\HPhi(\E[\mt F|B_iX_iY_i t_i s_i \pi_i \omega_i]|X_iY_i t_i s_i \pi_i \omega_i)\\
        &\quad \leq\HPhi(\E[\mt F|A_iB_iX_iY_i t_i s_i \pi_i \omega_i]|X_iY_i t_i s_i \pi_i \omega_i).
    \end{align*}
    By averaging over $t_i,s_i,\pi_i,\omega_i$, we obtain \eqref{eqnStep}.

    Next, observe that \eqref{eqnStep} and \eqref{eqnchif2} imply that
    $$\chi(\lambda_1,\lambda_2)\geq \chi'(\lambda_1,\lambda_2)$$
    where
    \begin{align*}
    \chi'(\lambda_1,\lambda_2)
        &= -\sum_{i=1}^{n} \bigg[ \lambda_1 \HPhi(\E[\mt F|\wA_{[i-1]}\wX_{[i]}\wPi_{[i]}]|\wA_{[i-1]}\wX_{[i-1]}\wPi_{[i-1]})
        \nonumber \\
        &\quad +\lambda_1 \HPhi(\E[\mt F|A_i \Tie]|\Tie)
        -\lambda_1 \HPhi(\E[\mt F|A_{i}\Tie S_i^e]|T_i^eS_i^e)\nonumber \\
        &\quad 
        +\lambda_2 \HPhi(\E[\mt F|\wB_{[i-1]}\wY_{[i]}\wOmega_{[i]}]|\wB_{[i-1]}\wY_{[i-1]}\wOmega_{[i-1]})
        \nonumber\\
        &\quad +\lambda_2 \HPhi(\E[\mt F|B_i \Sie]|\Sie)
        -\lambda_2\HPhi(\E[\mt F|B_{i}\Tie S_i^e]|T_i^eS_i^e) + \HPhi(\E[\mt F|A_iB_i\Tie\Sie]|\Tie\Sie)
        \bigg]\nonumber\\
        &\quad + \HPhi(\E[\mt F|\An\Xn\Pin\Bn\Yn\Omegan]).
    \end{align*}
    To show that $\chi(\lambda_1,\lambda_2)\geq 0$, it suffices to establish that
    $\chi'(\lambda_1,\lambda_2)\geq 0$ for any $\lambda_1, \lambda_2\in[0,1]$ satisfying $\lambda_1+\lambda_2\geq 1$. Since $\chi'(\lambda_1,\lambda_2)$ is affine in $\lambda_1$ and $\lambda_2$, it suffices to show this for the corner points of the set, i.e., for the three points $(1,0),(0,1)$ and $(1,1)$. By symmetry, we show this when 
    \begin{align*}
        \lambda_1=1,\qquad \lambda_2\in\{0,1\}.
    \end{align*}
    The proof for the other corner point is similar. 
    We have
   \begin{align}
    \chi'(1,\lambda_2)
        =& -\sum_{i=1}^{n} \bigg[ \HPhi(\E[\mt F|\wA_{[i-1]}\wX_{[i]}\wPi_{[i]}]|\wA_{[i-1]}\wX_{[i-1]}\wPi_{[i-1]})
        \nonumber \\
        &\quad + \HPhi(\E[\mt F|A_i \Tie]|\Tie)
        - \HPhi(\E[\mt F|A_{i}\Tie S_i^e]|T_i^eS_i^e)\nonumber \\
        &\quad 
        +\lambda_2 \HPhi(\E[\mt F|\wB_{[i-1]}\wY_{[i]}\wOmega_{[i]}]|\wB_{[i-1]}\wY_{[i-1]}\wOmega_{[i-1]})
        \nonumber\\
        &\quad +\lambda_2 \HPhi(\E[\mt F|B_i \Sie]|\Sie)
        -\lambda_2\HPhi(\E[\mt F|B_{i}\Tie S_i^e]|T_i^eS_i^e) + \HPhi(\E[\mt F|A_iB_i\Tie\Sie]|\Tie\Sie)
        \bigg]\nonumber\\
        & + \HPhi(\E[\mt F|\An\Xn\Pin\Bn\Yn\Omegan]).\label{eqnn14}
    \end{align}
    Next, consider the following expansion by the chain rule, 
    \begin{align*}
        \HPhi(\E[\mt F|\An\Xn\Pin\Bn\Yn\Omegan])=\HPhi(\E[\mt F|\An\Xn\Pin]) + \HPhi(\E[\mt F|\An\Xn\Pin\Bn\Yn\Omegan]|\An\Xn\Pin)
    \end{align*}
    The first term on the right hand side can be expanded using part (i) of Lemma \ref{lem:expand} as follows:
    \begin{align*}
        \HPhi(\E[\mt F|\An\Xn\Pin]) &=\sum_{i=1}^{n} \bigg( \HPhi(\E[\mt F|\wA_{[i-1]}\wX_{[i]}\wPi_{[i]}]|\wA_{[i-1]}\wX_{[i-1]}\wPi_{[i-1]})
            + \HPhi(\E[\mt F|A_i \Tie]|\Tie) \bigg).
    \end{align*}
    Therefore, the above two equations imply
    \begin{align*}
    \HPhi(\E[\mt F|\An\Xn\Pin\Bn\Yn\Omegan]|\An\Xn\Pin)
    &=
    \HPhi(\E[\mt F|\An\Xn\Pin\Bn\Yn\Omegan])\\&\quad-\sum_{i=1}^{n}  \HPhi(\E[\mt F|\wA_{[i-1]}\wX_{[i]}\wPi_{[i]}]|\wA_{[i-1]}\wX_{[i-1]}\wPi_{[i-1]})
            \\&\quad-\sum_{i=1}^{n} \HPhi(\E[\mt F|A_i \Tie]|\Tie).
    \end{align*}

    Therefore, we can rewrite \eqref{eqnn14} as
    \begin{align*}
    \chi'(1,\lambda_2)
        &= -\sum_{i=1}^{n} \bigg[-\HPhi(\E[\mt F|A_{i}\Tie S_i^e]|T_i^eS_i^e) +\lambda_2 \HPhi(\E[\mt F|\wB_{[i-1]}\wY_{[i]}\wOmega_{[i]}]|\wB_{[i-1]}\wY_{[i-1]}\wOmega_{[i-1]})
        \nonumber\\
        &\quad +\lambda_2 \HPhi(\E[\mt F|B_i \Sie]|\Sie)
        -\lambda_2\HPhi(\E[\mt F|B_{i}\Tie S_i^e]|T_i^eS_i^e) + \HPhi(\E[\mt F|A_iB_i\Tie\Sie]|\Tie\Sie)
        \bigg]\nonumber\\
        &\quad + \HPhi(\E[\mt F|\An\Xn\Pin\Bn\Yn\Omegan]|\An\Xn\Pin).
    \end{align*}
Next, note that by the chain rule \eqref{eqn:matrix-CR2}, we have
\begin{align*}
\HPhi(\E[\mt F|A_iB_i\Tie\Sie]|\Tie\Sie) &= \HPhi(\E[\mt F|A_{i}\Tie S_i^e]|T_i^eS_i^e) + \HPhi(\E[\mt F|A_iB_i\Tie\Sie]|A_i \Tie\Sie)
\end{align*}
    and therefore we obtain
    \begin{align*}
    \chi'(1,\lambda_2)
        &=  -\sum_{i=1}^{n} \bigg[\lambda_2 \HPhi(\E[\mt F|\wB_{[i-1]}\wY_{[i]}\wOmega_{[i]}]|\wB_{[i-1]}\wY_{[i-1]}\wOmega_{[i-1]})
        \nonumber\\
        &\quad +\lambda_2 \HPhi(\E[\mt F|B_i \Sie]|\Sie)
        -\lambda_2\HPhi(\E[\mt F|B_{i}\Tie S_i^e]|T_i^eS_i^e) + \HPhi(\E[\mt F|A_iB_i\Tie\Sie]|A_i \Tie\Sie)
        \bigg]\nonumber\\
        &\quad + \HPhi(\E[\mt F|\An\Xn\Pin\Bn\Yn\Omegan]|\An\Xn\Pin).
    \end{align*}

    By Lemma \ref{lem:expand} (ii), we have 
    \begin{align*}
       \HPhi(\E[\mt F|\An\Xn\Pin\Bn\Yn\Omegan]|\An\Xn\Pin) = \sum_{i=1}^n &
       \bigg(\HPhi(\E[\mt F|\An\Xn\Pin\wB_{[i-1]}\wY_{[i]}\wOmega_{[i]}]|\An\Xn\Pin\wS_i)\\
       &+\HPhi(\E[\mt F|\An\Xn\Pin B_i\Sie]|\An\Xn\Pin S_i^e)\bigg).
    \end{align*}
    Then 
    \begin{align*}
        \chi'(1,\lambda_2) &= \sum_{i=1}^{n} \bigg[-\lambda_2 \HPhi(\E[\mt F|\wB_{[i-1]}\wY_{[i]}\wOmega_{[i]}]|\wB_{[i-1]}\wY_{[i-1]}\wOmega_{[i-1]})
        \nonumber\\
        &\qquad -\lambda_2 \HPhi(\E[\mt F|B_i \Sie]|\Sie)
        +\lambda_2\HPhi(\E[\mt F|B_{i}\Tie S_i^e]|T_i^eS_i^e) -\HPhi(\E[\mt F|A_iB_i\Tie\Sie]|A_i \Tie\Sie)
        \nonumber\\
        &\qquad + \HPhi(\E[\mt F|\An\Xn\Pin\wB_{[i-1]}\wY_{[i]}\wOmega_{[i]}]|\An\Xn\Pin\wS_i)\\
        &\qquad+\HPhi(\E[\mt F|\An\Xn\Pin B_i\Sie]|\An\Xn\Pin S_i^e)\bigg].
    \end{align*}
    
    The three parts of Corollary \ref{cor:mc} state that
    \begin{align}
        \HPhi(\E[\mt F|B_i\Tie\Sie]|\Tie\Sie)&\geq\HPhi(\E[\mt F|B_i\Sie]|\Sie),\\
        \HPhi(\E[\mt F|\An\Xn\Pin B_i\Sie]|\An\Xn\Pin S_i^e)
        &\geq 
        \HPhi(\E[\mt F|A_iB_i\Tie\Sie]|A_i\Tie\Sie), \label{eq:wirings-ineq-2}\\
        \HPhi(\E[\mt F|\An\Xn\Pin\wB_{[i-1]}\wYi\wOmegai]|\An\Xn\Pin\wS_i)
        &\geq
        \HPhi(\E[\mt F|\wB_{[i-1]}\wY_{[i]}\wOmega_{[i]}]|\wB_{[i-1]}\wY_{[i-1]}\wOmega_{[i-1]}).
    \end{align}
For $\lambda_2=0$, \eqref{eq:wirings-ineq-2} implies that $\chi'(1,\lambda_2)\geq 0$. For $\lambda_2=1$, we can use all of the above three inequalities to see that $\chi'(1,\lambda_2)\geq 0$. This establishes the desired inequality. The proof is complete.

\end{proof}

\begin{proof}[Proof of Lemma \ref{lem:expand}]
    (i) We have 
    \begin{align*}
        \HPhi(\E[\mt F|\An\Xn\Pin])
        &\stackrel{\rm (a)}{=} \HPhi(\E[\mt F|\wAn\wXn\wPin])\\
        &\stackrel{\rm (b)}{=} \sum_{i=1}^n \HPhi(\E[\mt F|\wA_{[i]} \wX_{[i]} \wPi_{[i]}]| \wA_{[i-1]} \wX_{[i-1]} \wPi_{[i-1]}) \\
        &\stackrel{\rm (c)}{=} \sum_{i=1}^{n} 
        \left(
            \HPhi(\E[\mt F|\wA_{[i-1]}\wX_{[i]}\wPi_{[i]}]|\wA_{[i-1]}\wX_{[i-1]}\wPi_{[i-1]}) +\HPhi(\E[\mt F|\wA_{[i]} \wX_{[i]} \wPi_{[i]}]|\wA_{[i-1]}\wX_{[i]}\wPi_{[i]})
        \right) \\
        &\stackrel{\rm (d)}{=} \sum_{i=1}^{n} 
        \left(
            \HPhi(\E[\mt F|\wA_{[i-1]}\wX_{[i]}\wPi_{[i]}]|\wA_{[i-1]}\wX_{[i-1]}\wPi_{[i-1]})
        +\HPhi(\E[\mt F|A_i \Tie]|\Tie)
        \right)
    \end{align*}
    where (a) follows from taking a permutation, (b) follows from the chain rule in \eqref{eqn:matrix-CRn}, (c) follows from the chain rule in \eqref{matrix-Phi:chain-rule}, and (d) follows from the following equation: 
    \begin{align*}
        &\sum_{i=1}^n \HPhi(\E[\mt F|\wA_{[i]} \wX_{[i]} \wPi_{[i]}]|\wA_{[i-1]}\wX_{[i]}\wPi_{[i]}) \\
        =& \sum_{i=1}^n \sum_{j=1}^n \HPhi\left(\E[\mt F|\wA_{[i]} \wX_{[i]} \wPi_{[i]}]| \wT_i \wX_i \wPi_i=j\right) p\left(\wPi_i=j\right)\\
        =& \sum_{i=1}^n \sum_{j=1}^n \HPhi\left(\E[\mt F|A_j T_j X_j \Pi_j]| T_j X_j \Pi_j=i\right) p\left(\wPi_i=j\right) \\
        =& \sum_{i=1}^n \sum_{j=1}^n \HPhi\left(\E[\mt F|A_j T_j X_j \Pi_j]| T_j X_j \Pi_j=i\right) p\left(\Pi_j=i\right) \\
        =& \sum_{j=1}^n \HPhi\left(\E\left[\mt F|A_j T_j^e\right]| T_j^e\right).
    \end{align*}

    The proof for $\HPhi(\E[\mt F|\Bn\Yn\Omegan])$ is similar. 

    (ii) Similarly, we have 
    \begin{align*}
        &\HPhi(\E[\mt F|\An\Xn\Pin\Bn\Yn\Omegan]|\An\Xn\Pin)  \\
        \overset{\rm (a)}{=}&\HPhi(\E[\mt F|\wAn\wXn\wPin\wBn\wYn\wOmegan]|\wAn\wXn\wPin) \\
        \overset{\rm (b)}{=}& \sum_{i=1}^n \HPhi(\E[\mt F|\wAn\wXn\wPin\wB_{[i]}\wY_{[i]}\wOmega_{[i]}]|\wAn\wXn\wPin\wB_{[i-1]}\wY_{[i-1]}\wOmega_{[i-1]}) \\
        \overset{\rm (c)}{=}& \sum_{i=1}^n \bigg( \HPhi(\E[\mt F|\wAn\wXn\wPin\wB_{[i-1]}\wY_{[i]}\wOmega_{[i]}]|\wAn\wXn\wPin\wB_{[i-1]}\wY_{[i-1]}\wOmega_{[i-1]}) \\
        & \quad+ \HPhi(\E[\mt F|\wAn\wXn\wPin\wB_{[i]}\wY_{[i]}\wOmega_{[i]}]|\wAn\wXn\wPin\wB_{[i-1]}\wY_{[i]}\wOmega_{[i]})\bigg) \\
        \overset{\rm (d)}{=}& \sum_{i=1}^n  \HPhi(\E[\mt F|\wAn\wXn\wPin\wB_{[i-1]}\wY_{[i]}\wOmega_{[i]}]|\wAn\wXn\wPin\wB_{[i-1]}\wY_{[i-1]}\wOmega_{[i-1]}) \\
        & \quad+ \sum_{j=1}^n \HPhi(\E[\mt F|\wAn\wXn\wPin B_j S_j Y_j \Omega_j]|\wAn\wXn\wPin S_j Y_j \Omega_j)
    \end{align*}
    where (a) follows from taking a permutation, 
    (b) follows from the chain rule in \eqref{eqn:matrix-CRn}, (c) follows from the chain rule in \eqref{matrix-Phi:chain-rule}, and (d) follows from the following equation: 
    \begin{align*}
        &\sum_{i=1}^n \HPhi(\E[\mt F|\wAn\wXn\wPin\wB_{[i]}\wY_{[i]}\wOmega_{[i]}]|\wAn\wXn\wPin\wB_{[i-1]}\wY_{[i]}\wOmega_{[i]}) \\
        =& \sum_{i=1}^n \sum_{j=1}^n \HPhi(\E[\mt F|\wAn\wXn\wPin\wB_{[i]}\wY_{[i]}\wOmega_{[i]}]|\wAn\wXn\wPin\wS_i\wY_i\wOmega_i=j)p(\wOmega_i=j)\\
        =& \sum_{i=1}^n \sum_{j=1}^n \HPhi(\E[\mt F|\wAn\wXn\wPin B_j S_j Y_j \Omega_j]|\wAn\wXn\wPin S_j Y_j \Omega_j=i) p(\wOmega_i=j)\\
        =& \sum_{j=1}^n \HPhi(\E[\mt F|\wAn\wXn\wPin B_j S_j Y_j \Omega_j]|\wAn\wXn\wPin S_j Y_j \Omega_j). 
    \end{align*}
\end{proof}

\section{Proof of Theorem \ref{thm:matrix-Phi-SDPI-data-processing-tensorization}}
\label{app:proof-matrix-Phi-SDPI-data-processing-tensorization}

We first prove the data-processing property. Suppose that $X\to Y\to Z$
forms a Markov chain, and fix $\mt F:\cX\to\PSD^d$.
Set $\mt G
\defeq
\E[\mt F| Y]$.
The Markov property implies
\begin{align*}
\E[\mt F| Z]
&=
\E\left[
\E[\mt F| Y]
| Z
\right]
\\
&=
\E[\mt G| Z].
\end{align*}
Therefore,
\begin{align*}
\HPhi\left(
\E[\mt F| Z]
\right)
&=
\HPhi\left(
\E[\mt G| Z]
\right)
\\
&\leq
\eta_{\Phi}^{M,d}(Y;Z)
\HPhi(\mt G)
\\
&=
\eta_{\Phi}^{M,d}(Y;Z)
\HPhi\left(
\E[\mt F| Y]
\right)
\\
&\leq
\eta_{\Phi}^{M,d}(Y;Z)
\eta_{\Phi}^{M,d}(X;Y)
\HPhi(\mt F).
\end{align*}
Taking the supremum over all \(\mt F:\cX\to\PSD^d\) satisfying \(\HPhi(\mt F)>0\) proves
\eqref{eq:matrix-Phi-SDPI-data-processing}.

Then
\eqref{eq:matrix-Phi-SDPI-data-processing-2} follows from $0
\leq
\eta_{\Phi}^{M,d}(X;Y),\eta_{\Phi}^{M,d}(Y;Z)
\leq
1$.

We next prove tensorization. It suffices to consider two independent pairs, since the general statement follows by induction.

Let $(X_1,Y_1)$ and $(X_2,Y_2)$
be independent. Define $\eta_i
\defeq
\eta_{\Phi}^{M,d}(X_i;Y_i)$ for $i\in\{1,2\}$ and $\eta
\defeq
\max\{\eta_1,\eta_2\}$.

We first prove that
\begin{align}
\eta_{\Phi}^{M,d}(X_1X_2;Y_1Y_2)
\leq
\eta.
\label{eq:matrix-Phi-SDPI-tensor-upper}
\end{align}
Fix $\mt F:
\cX_1\times\cX_2
\to
\PSD^d$
and set $\mt G
\defeq
\E[\mt F| Y_1,Y_2]$.
By the chain rule,
\begin{align}
\HPhi(\mt G)
=
\HPhi(\mt G| Y_2)
+
\HPhi\left(
\E[\mt G| Y_2]
\right).
\label{eq:matrix-Phi-SDPI-output-chain}
\end{align}
First we consider the first term in \eqref{eq:matrix-Phi-SDPI-output-chain}.
For each \(y_2\in\cY_2\), define
\begin{align*}
\mt F_{y_2}(x_1)
\defeq
\E\left[
\mt F(x_1,X_2)
|
Y_2=y_2
\right].
\end{align*}
Since the pairs \((X_1,Y_1)\) and \((X_2,Y_2)\) are independent,
\begin{align*}
\mt G(y_1,y_2)
=
\E\left[
\mt F_{y_2}(X_1)
|
Y_1=y_1
\right].
\end{align*}
Applying the SDPI to $\mt F_{y_2}(X_1)$ for every $y_2$ and taking the average, we have
\begin{align}
\HPhi(\mt G| Y_2)
\leq
\eta_1
\E_{Y_2}
\left[
\HPhi\left(
\mt F_{Y_2}(X_1)
\right)
\right],
\label{eq:matrix-Phi-SDPI-first-coordinate}
\end{align}
where the matrix \(\Phi\)-entropy inside the expectation is taken with respect to \(X_1\).

For every \(y_2\in\cY_2\),
\begin{align*}
\mt F_{y_2}(x_1)
=
\sum_{x_2\in\cX_2}
p_{X_2| Y_2}(x_2| y_2)
\mt F(x_1,x_2).
\end{align*}
Since \(X_1\) is independent of \((X_2,Y_2)\), the convexity of the matrix \(\Phi\)-entropy functional gives
\begin{align*}
\HPhi\left(
\mt F_{y_2}(X_1)
\right)
\leq
\sum_{x_2\in\cX_2}
p_{X_2| Y_2}(x_2| y_2)
\HPhi\left(
\mt F(X_1,x_2)
\right)
\end{align*}
where the inequality follows from the convexity of $(\mt A_x)_{x\in \cX} \to \HPhi(\mt A)$
for $\Phi\in \mathscr{F}_M^d$.
Averaging over \(Y_2\), we obtain
\begin{align}
\E_{Y_2}
\left[
\HPhi\left(
\mt F_{Y_2}(X_1)
\right)
\right]
&\leq
\E_{X_2}
\left[
\HPhi\left(
\mt F(X_1,X_2)
\right)
\right]
\nonumber\\
&=
\HPhi(\mt F| X_2).
\label{eq:matrix-Phi-SDPI-convexity-average}
\end{align}
Combining \eqref{eq:matrix-Phi-SDPI-first-coordinate} and \eqref{eq:matrix-Phi-SDPI-convexity-average}, it follows that
\begin{align}
\HPhi(\mt G| Y_2)
\leq
\eta_1
\HPhi(\mt F| X_2).
\label{eq:matrix-Phi-SDPI-first-part-final}
\end{align}

For the second term in \eqref{eq:matrix-Phi-SDPI-output-chain}, we have
\begin{align*}
\E[\mt G| Y_2]
&=
\E[\mt F| Y_2]
\nonumber\\
&=
\E\left[
\E[\mt F| X_2]
|
Y_2
\right]
\end{align*}
since $(X_1,Y_1)$ and $(X_2,Y_2)$
are independent.
Since \(\E[\mt F| X_2]\) is a function of \(X_2\), we obtain
\begin{align}
\HPhi\left(
\E[\mt G| Y_2]
\right)
\leq
\eta_2
\HPhi\left(
\E[\mt F| X_2]
\right).
\label{eq:matrix-Phi-SDPI-second-part-final}
\end{align}

Substituting
\eqref{eq:matrix-Phi-SDPI-first-part-final} and
\eqref{eq:matrix-Phi-SDPI-second-part-final} into
\eqref{eq:matrix-Phi-SDPI-output-chain}, we obtain
\begin{align}
\HPhi\left(
\E[\mt F| Y_1,Y_2]
\right)
&\leq
\eta_1
\HPhi(\mt F| X_2)
+
\eta_2
\HPhi\left(
\E[\mt F| X_2]
\right)
\nonumber\\
&\leq
\eta
\left(
\HPhi(\mt F| X_2)
+
\HPhi\left(
\E[\mt F| X_2]
\right)
\right)
\nonumber\\
&=
\eta
\HPhi(\mt F).
\label{eq:matrix-Phi-SDPI-product-upper-calculation}
\end{align}
Thus,
\begin{align} \label{eq:SDPI-tensorize-UB}
\eta_{\Phi}^{M,d}(X_1X_2;Y_1Y_2)
\leq
\max\{\eta_1,\eta_2\}.
\end{align}

For the reverse inequality, fix \(i\in\{1,2\}\) and let $\mt F_i:\cX_i\to\PSD^d$.
Suppose \(\mt F_i(X_i)\) is a function of \((X_1,X_2)\) that depends only on \(X_i\). Since $(X_1,Y_1)$ and $(X_2,Y_2)$
are independent, we have
\begin{align*}
\E\left[
\mt F_i(X_i)
|
Y_1,Y_2
\right]
=
\E\left[
\mt F_i(X_i)
|
Y_i
\right].
\end{align*}
Then
\begin{align*}
    \HPhi
    \left(
        \E\left[
    \mt F_i(X_i)
    |
    Y_i
    \right]
    \right)
    &=
    \HPhi
    \left(
        \E\left[
    \mt F_i(X_i)
    |
    Y_1 Y_2
    \right]
    \right)\\
    &\leq
    \eta_{\Phi}^{M,d}(X_1X_2;Y_1Y_2)
    \HPhi(\mt F_i(X_i)).
\end{align*}
Taking the supremum over admissible \(\mathbf F_i\) with positive entropy gives
\begin{align*}
    \eta_{\Phi}^{M,d}(X_i;Y_i)
    \leq
    \eta_{\Phi}^{M,d}(X_1X_2;Y_1Y_2).
\end{align*}
Taking the maximum over \(i\in\{1,2\}\) gives
\begin{align}
    \max\{\eta_1,\eta_2\}
    \leq
\eta_{\Phi}^{M,d}(X_1X_2;Y_1Y_2).
\label{eq:matrix-Phi-SDPI-tensor-lower}
\end{align}
Combining \eqref{eq:SDPI-tensorize-UB} and \eqref{eq:matrix-Phi-SDPI-tensor-lower}, we obtain
\begin{align*}
\eta_{\Phi}^{M,d}(X_1X_2;Y_1Y_2)
=
\max\left\{
\eta_{\Phi}^{M,d}(X_1;Y_1),
\eta_{\Phi}^{M,d}(X_2;Y_2)
\right\}.
\end{align*}
It finishes the proof.

\section{Proof of Theorem \ref{thm:seperation-SDPI}}
\label{app:proof-seperation-SDPI}

    Construct $p_{XY}$ in the following way.
    Let $U,V,J$ be independent $\operatorname{Bernoulli}(1/2)$ random variables. Define $X \defeq (U,V)$ and $Y \defeq (J,Z)$
    where
    \begin{align*}
    Z
    =
    \begin{cases}
    U, & J=0,\\
    V, & J=1.
    \end{cases}
    \end{align*}
The joint distribution is
\begin{align*}
p_{XY}
=
\frac{1}{8}
\begin{pmatrix}
1 & 0 & 1 & 0\\
1 & 0 & 0 & 1\\
0 & 1 & 1 & 0\\
0 & 1 & 0 & 1
\end{pmatrix}.
\end{align*}

Consider the smooth strictly convex function
\begin{align*}
\Phi(t)
\defeq
\frac{1}{2}t\sqrt{1+t^2}
+
t\ln\left(
\frac{t}{1+\sqrt{1+t^2}}
\right)
-
\frac{1}{2}\operatorname{arsinh}(t),
\qquad
t>0.
\end{align*}
It satisfies
\begin{align*}
\Phi''(t)
=
\frac{\sqrt{1+t^2}}{t},
\qquad
\frac{1}{\Phi''(t)}
=
\frac{t}{\sqrt{1+t^2}},
\end{align*}
and
\begin{align*}
\frac{d^2}{dt^2}
\left(
\frac{1}{\Phi''(t)}
\right)
=
-\frac{3t}{(1+t^2)^{5/2}}
<0.
\end{align*}
Hence $1/\Phi''$ is concave and thus $\Phi\in \mathscr{F}$.

For every positive scalar function $f=f(U,V)$,
\begin{align}
H_\Phi\left(\E[f| Y]\right)
&=
H_\Phi\left(\E[f| JZ]\right)\nonumber\\
&=\frac{1}{2}H_\Phi\left(\E[f| J=0,Z]\right)
+\frac{1}{2}H_\Phi\left(\E[f| J=1,Z]\right)\nonumber\\
&=
\frac{1}{2}
H_\Phi\left(\E[f| U]\right)
+
\frac{1}{2}
H_\Phi\left(\E[f| V]\right). \label{eq:Phi-YUV}
\end{align}
Since $U$ and $V$ are independent, by Proposition~\ref{prop:phi-ribbon-properties}~(ii), we have
\begin{align*}
H_\Phi\left(\E[f| U]\right)
+
H_\Phi\left(\E[f| V]\right)
\leq
H_\Phi(f).
\end{align*}
Combined with \eqref{eq:Phi-YUV}, we have  $H_\Phi\left(\E[f| Y]\right)\leq \frac12 H_\Phi(f)$. Thus,
\begin{align*}
\eta_\Phi^{M,1}(X;Y)
\leq
\frac{1}{2}.
\end{align*}
Taking any nonconstant positive function $f=f(U)$, we have the equality,
\begin{align*}
    \HPhi(\E[f|Y])
    &=\frac{1}{2}
    H_\Phi\left(\E[f| U]\right)
    +
    \frac{1}{2}
    H_\Phi\left(\E[f| V]\right)\\
    &=\frac{1}{2}
    H_\Phi\left(\E[f| U]\right)\\
    &=\frac{1}{2}
    H_\Phi\left(f\right).
\end{align*}
Hence
\begin{align}
\eta_\Phi^{M,1}(X;Y)
=
\frac{1}{2}.
\label{eq:scalar-Phi-SDPI-half}
\end{align}

Now define a $2\times 2$ positive definite matrix-valued function
$\mt F=\mt F(U,V)$ by
\begin{align*}
\mt F_{00}
&=
\begin{pmatrix}
0.51 & 0.23\\
0.23 & 0.87
\end{pmatrix},
&
\mt F_{01}
&=
\begin{pmatrix}
2.20 & 1.12\\
1.12 & 1.34
\end{pmatrix},
\\
\mt F_{10}
&=
\begin{pmatrix}
0.56 & -0.11\\
-0.11 & 3.06
\end{pmatrix},
&
\mt F_{11}
&=
\begin{pmatrix}
2.26 & 0.78\\
0.78 & 3.53
\end{pmatrix}.
\end{align*}
These matrices are not simultaneously diagonalizable. Using the normalized trace convention, direct evaluation gives
\begin{align*}
H_\Phi(\mt F)
&\approx
0.7136850541658435,
\\
H_\Phi\left(\E[\mt F| U]\right)
&\approx
0.3682717697171982,
\\
H_\Phi\left(\E[\mt F| V]\right)
&\approx
0.3454371869114984.
\end{align*}
In particular,
\begin{align*}
&
H_\Phi\left(\E[\mt F| U]\right)
+
H_\Phi\left(\E[\mt F| V]\right)
-
H_\Phi(\mt F)
\\
&\qquad\approx
2.3902462853\times 10^{-5}
>0.
\end{align*}
Since
\begin{align*}
H_\Phi\left(\E[\mt F| Y]\right)
=
\frac{1}{2}
H_\Phi\left(\E[\mt F| U]\right)
+
\frac{1}{2}
H_\Phi\left(\E[\mt F| V]\right),
\end{align*}
we obtain
\begin{align*}
\frac{
H_\Phi\left(\E[\mt F| Y]\right)
}{
H_\Phi(\mt F)
}
\geq
0.5000167458059
>
\frac{1}{2}.
\end{align*}
Consequently,
\begin{align}
\eta_\Phi^{M,2}(X;Y)
\geq
0.5000167458059
>
\frac{1}{2}
=
\eta_\Phi^{M,1}(X;Y).
\end{align}
Thus, the matrix $\Phi$-SDPI constant can depend strictly on the matrix dimension, even for a smooth strictly convex function $\Phi$ with concave $1/\Phi''$.

\section{Matrix $\Phi$-entropy inequalities}
\label{app:proof-matrix-SDPI-ribbon}

We restate Theorem~\ref{thm:matrix-Phi-conditional-stability}. 
\matrixPhiineq*

The following lemma will be useful. 
For $\mt J$ with spectrum in $\interior(\dom(\Phi))$ and
$\mt H\in\sa^d$, define the quadratic form 
\begin{align*}
    Q_{\mt J}(\mt H)
    &\defeq\nortr\left(\mt H \D\Phi'(\mt J)[\mt H]\right).
\end{align*}

\begin{lemma} \label{lem:matrix-quadratic-ub}
    Let $\Phi\in \mathscr{F}_M^d$. Suppose $\mt F:\cX\times \cY\to\PSD^d$ with spectra contained in $\interior(\dom(\Phi))$ and $\mu\defeq \min\{p_{XY}(x,y):p_{XY}(x,y)>0\}$. Write $\mt M\defeq\E[\mt F]$ and $\mt F_X\defeq\E[\mt F|X]$. Then
    \begin{equation}
        \E[Q_{\mt M}(\mt F-\mt M)]
        \leq \frac{2}{\mu}\HPhi(\mt F),
        \label{eq:mean-quadratic-bound}
    \end{equation}
    and
    \begin{equation}
        \E[Q_{\mt F_X}(\mt F-\mt F_X)]
        \leq \frac{2}{\mu}\HPhi(\mt F|X),
        \label{eq:mean-quadratic-bound-conditional}
    \end{equation}
\end{lemma}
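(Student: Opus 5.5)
The plan is to transcribe the scalar proof of Lemma~\ref{lem:Phi-bd-quadratic} into the matrix setting. The scalar pointwise inequality $\Phi''(u)\geq\mu\Phi''(m)$ will be replaced by an inequality between superoperators on $\sa^d$. Put $\Psi\defeq\Phi'$. By Lemma~\ref{lem:sa-psd-DPhi'}, for every $\mt J$ with spectrum in $\interior(\dom(\Phi))$ the linear map $\D\Psi(\mt J)$ is self-adjoint and positive semidefinite with respect to the Hilbert--Schmidt inner product $\langle \mt A,\mt B\rangle=\nortr(\mt A\mt B)$. Definition~\ref{def:class-FM}(i) makes it invertible, so it is positive definite. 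In this language $Q_{\mt J}(\mt H)=\langle \mt H,\D\Psi(\mt J)[\mt H]\rangle$.

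\textbf{Key operator inequality.} Set $Z=(X,Y)$ and assume $0<\mu<1$; the case $\mu=1$ is trivial. Let $\mt U=\sum_z r(z)\mt F(z)$ be any convex combination over $\supp(p_Z)$. Exactly as in the scalar case, write $\mt M=\mu\mt U+(1-\mu)\mt V$, where $\mt V$ is again a convex combination of the values $\mt F(z)$. All of these matrices have spectra in the interior of the domain, because that set is convex. By the concavity of $\mt J\mapsto(\D\Psi(\mt J))^{-1}$ in Definition~\ref{def:class-FM}(i), understood in the Loewner order on HS-self-adjoint superoperators, we get
\[
(\D\Psi(\mt M))^{-1}\succeq \mu(\D\Psi(\mt U))^{-1}+(1-\mu)(\D\Psi(\mt V))^{-1}\succeq \mu(\D\Psi(\mt U))^{-1}.
\]
Inversion is order-reversing on positive definite operators, so $\D\Psi(\mt M)\preceq \mu^{-1}\D\Psi(\mt U)$. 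Equivalently,
\[
Q_{\mt U}(\mt H)\geq \mu\, Q_{\mt M}(\mt H)\quad\text{for all }\mt H\in\sa^d .
\]

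\textbf{Unconditional bound.} For $t\in[0,1]$, the matrix $\mt M+t(\mt F(z)-\mt M)$ is such a convex combination, since $\mt M$ is itself the mean. The operator inequality then gives $Q_{\mt M+t(\mt F(z)-\mt M)}(\mt F(z)-\mt M)\geq\mu Q_{\mt M}(\mt F(z)-\mt M)$. Plug this into the integral representation of Corollary~\ref{cor:integral-HPhi}:
\[
\HPhi(\mt F)=\int_0^1(1-t)\,\E\!\left[Q_{\mt M+t(\mt F-\mt M)}(\mt F-\mt M)\right]dt\geq \frac{\mu}{2}\E[Q_{\mt M}(\mt F-\mt M)].
\]
This is \eqref{eq:mean-quadratic-bound}.

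\textbf{Conditional bound.} For each $x$ with $p_X(x)>0$, the conditional law satisfies $p_{Y|X}(y|x)\geq p_{XY}(x,y)\geq\mu$ on its support. Its minimal positive mass is therefore at least $\mu$, and the unconditional bound applies to $\mt F(x,\cdot)$ with mean $\mt F_X(x)$; the constant only improves when $\mu$ is replaced by a larger value. This yields $\E[Q_{\mt F_X(x)}(\mt F-\mt F_X)\mid X=x]\leq\frac{2}{\mu}\HPhi(\mt F\mid X=x)$. Averaging over $x$ gives \eqref{eq:mean-quadratic-bound-conditional}. No approximation step is needed, because the spectra are assumed interior.

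\textbf{Expected obstacle.} The delicate step is justifying the operator-level manipulations. First, ``concave'' in Definition~\ref{def:class-FM}(i) has to be read as Loewner concavity of superoperators that are self-adjoint for the HS inner product; this self-adjointness is exactly what Lemma~\ref{lem:sa-psd-DPhi'} supplies. Second, one must use that $\mathcal A\preceq\mathcal B$ with $\mathcal A,\mathcal B\succ0$ implies $\mathcal B^{-1}\preceq\mathcal A^{-1}$. If the definition's concavity is only stated through quadratic forms, I would first rewrite it as
\[
\langle \mt H,(\D\Psi(\mt J))^{-1}\mt H\rangle \text{ is concave in } \mt J \text{ for each fixed } \mt H,
\]
which is the same as Loewner concavity, and then apply the order-reversal of inversion.
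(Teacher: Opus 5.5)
Your proof is correct. Its skeleton matches the paper's: first the comparison $Q_{\mt U}\geq\mu\,Q_{\mt M}$ for convex combinations $\mt U$ of the values, then the integral formula of Corollary~\ref{cor:integral-HPhi}, then the unconditional bound applied to each conditional law.

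The difference is in how the comparison inequality is obtained. You transcribe the scalar argument literally. You use characterization (i) of Definition~\ref{def:class-FM}, read as Loewner concavity of $\mt J\mapsto(\D\Psi(\mt J))^{-1}$ on Hilbert--Schmidt self-adjoint superoperators. You combine it with Lemma~\ref{lem:sa-psd-DPhi'} and the fact that inversion reverses the order on positive definite operators.

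The paper instead uses characterization (iii), the joint convexity of the real-valued map $(\mt U,\mt H)\mapsto Q_{\mt U}(\mt H)$, together with a homogeneity trick. It writes $\mt M=\mu\mt U+(1-\mu)\mt V$ and $\mt H=\mu(\mt H/\mu)+(1-\mu)\mt 0$, and Jensen's inequality gives $Q_{\mt M}(\mt H)\leq\mu\, Q_{\mt U}(\mt H/\mu)+(1-\mu)Q_{\mt V}(\mt 0)=\mu^{-1}Q_{\mt U}(\mt H)$.

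The paper's route is more economical. It only manipulates a scalar jointly convex function, so it needs no interpretation of ``concave'' for operator-valued maps. It also needs neither self-adjointness nor invertibility of $\D\Psi$, nor the anti-monotonicity of inversion.

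Your route states the comparison as the operator inequality $\D\Psi(\mt M)\preceq\mu^{-1}\D\Psi(\mt U)$ and makes the parallel with the scalar proof (concavity of $1/\Phi''$) transparent. Its only cost is the interpretive step you flag. That reading is the intended meaning in the Cheng--Hsieh characterization the paper cites, so it is not a gap.
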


\begin{proof}
If $\mu=1$, then $(X,Y)$ is constant and both
inequalities are trivial. Hence assume $\mu<1$.

We first prove \eqref{eq:mean-quadratic-bound}. For all $(x,y)\in \supp(p)$, define
\begin{align*}
    \mt V(x,y)
    \defeq
    \frac{p(x,y)-\mu}{1-\mu}\mt F(x,y)
    +
    \sum_{(x',y')\in\supp(p):(x',y')\neq(x,y)}
    \frac{p(x',y')}{1-\mu}\mt F(x',y').
\end{align*}
Since $p(x,y)\geq\mu$ for all $(x,y)\in \supp(p)$, the coefficients are nonnegative and sum to
one. Moreover,
\begin{align*}
    \mt M
    =
    \mu\mt F(x,y)+(1-\mu)\mt V(x,y).
\end{align*}
Therefore, for every $t\in[0,1]$,
\begin{align*}
    \mt M
    &=
    \mu\bigl((1-t)\mt M+t\mt F(x,y)\bigr)
    +(1-\mu)\bigl((1-t)\mt M+t\mt V(x,y)\bigr).
\end{align*}
Since $\Phi\in \mathscr{F}_M^d$, by definition, $(\mt U,\mt H)\mapsto Q_{\mt U}(\mt H)$ is jointly convex. 
Then by Jensen's inequality, for every
$\mt H\in\sa^d$, we have 
\begin{align*}
    Q_{\mt M}(\mt H)
    &\leq
    \mu
    Q_{(1-t)\mt M+t\mt F(x,y)}
    \left(\frac{\mt H}{\mu}\right)
    +(1-\mu)
    Q_{(1-t)\mt M+t\mt V(x,y)}(\mt 0)\\
    &=
    \frac{1}{\mu}
    Q_{(1-t)\mt M+t\mt F(x,y)}(\mt H),
\end{align*}
where the last equality follows from the fact that $Q_{\mt M}(a\mt H)=a^2 Q_{\mt M}(\mt H)$ for $a\geq 0$. Then 
\begin{align}
    Q_{(1-t)\mt M+t\mt F(x,y)}(\mt H)
    \geq
    \mu Q_{\mt M}(\mt H).
    \label{eq:quadratic-comparison-unconditional}
\end{align}

By Corollary~\ref{cor:integral-HPhi}, 
\begin{align*}
    \HPhi(\mt F)
    &=
    \E \left[
        \int_0^1(1-t)
        Q_{(1-t)\mt M+t\mt F}
        \bigl(\mt F-\mt M\bigr)\,dt
    \right]
    \\
    &\geq
    \mu
    \E \left[
        \int_0^1(1-t)
        Q_{\mt M}\bigl(\mt F-\mt M\bigr)\,dt
    \right]
    \\
    &=
    \frac{\mu}{2}
    \E\left[
        Q_{\mt M}(\mt F-\mt M)
    \right],
\end{align*}
where the second line follows from \eqref{eq:quadratic-comparison-unconditional}. 
It finishes the proof of 
\eqref{eq:mean-quadratic-bound}. The proof of \eqref{eq:mean-quadratic-bound-conditional} follows similarly.
\end{proof}

\begin{proof}[Proof of Theorem~\ref{thm:matrix-Phi-conditional-stability}]
If $\mu=1$, then $(X,Y)$ is constant and the inequality is trivial. Hence assume $0<\mu<1$. We first assume that all values
of $\mt F$ have spectra in $\interior(\dom(\Phi))$.
The boundary case will follow by perturbation and continuity.

Set
\[
    \mt M\defeq\E[\mt F],\qquad
    \mt F_X\defeq\E[\mt F|X],\qquad
    \mt F_Y\defeq\E[\mt F|Y],\qquad
    \widetilde{\mt F}_Y\defeq\E[\mt F_X|Y].
\]
Then
\begin{align*}
    \E[\mt F_X]=\E[\mt F_Y]&=\E[\widetilde{\mt F}_Y]=\mt M,\\
    \mt F_Y-\widetilde{\mt F}_Y&=\E[\mt F-\mt F_X|Y].
\end{align*}

By Lemma~\ref{lem:sa-psd-DPhi'}, $Q_{\mt U}(\mt H)\geq 0$.
Since $\Phi\in\mathscr F_M^d$, by definition, the mapping
$(\mt U,\mt H)\mapsto Q_{\mt U}(\mt H)$ is jointly convex. Equivalently in the definition, $(\mt U,\mt V)\mapsto \nortr\left(
    \Phi(\mt U)-\Phi(\mt V)-\Phi'(\mt V)(\mt U-\mt V)
    \right)$ is jointly convex. 
Applying Jensen's inequality to the jointly convex mapping $(\mt U,\mt V)\to \nortr\left(\Phi(\mt U)-\Phi(\mt V)-\Phi'(\mt V)(\mt U-\mt V)\right)$ 
with $(\mt U,\mt V)=(\mt F,\mt F_X)$ conditioned on $Y$,
and taking expectations, we obtain
\begin{align}
    \E\left[\nortr\left(
    \Phi(\mt F_Y)-\Phi(\widetilde{\mt F}_Y)-\Phi'(\widetilde{\mt F}_Y)(\mt F_Y-\widetilde{\mt F}_Y)
    \right)\right]\nonumber
    &\leq\E\left[\nortr\left(
    \Phi(\mt F)-\Phi(\mt F_X)-\Phi'(\mt F_X)(\mt F-\mt F_X)
    \right)\right]\nonumber\\
    &=\HPhi(\mt F|X),
    \label{eq:second-F|X}
\end{align}
where the last equality follows from
$\E[\mt F-\mt F_X|X]=\mt 0$ and the definition of conditional
matrix $\Phi$-entropy. Since $\E[\mt F_Y]=\E[\widetilde{\mt F}_Y]$, we have 
\begin{align}
    \HPhi(\mt F_Y)-\HPhi(\widetilde{\mt F}_Y)
    &=\E\left[\nortr \left(
        \Phi(\mt F_Y)-\Phi(\widetilde{\mt F}_Y)
    \right)
    \right]\nonumber\\
    &=\E\left[\nortr\left(
    \Phi(\mt F_Y)-\Phi(\widetilde{\mt F}_Y)-\Phi'(\widetilde{\mt F}_Y)(\mt F_Y-\widetilde{\mt F}_Y)+\Phi'(\widetilde{\mt F}_Y)(\mt F_Y-\widetilde{\mt F}_Y)
    \right)\right]
    \nonumber\\
    &\leq\HPhi(\mt F|X)
    +\E\left[\nortr\left(
    \Phi'(\widetilde{\mt F}_Y)(\mt F_Y-\widetilde{\mt F}_Y)
    \right)\right]\nonumber \\
    &=\HPhi(\mt F|X)
    +\E\left[\nortr\left(
    \bigl(\Phi'(\widetilde{\mt F}_Y)-\Phi'(\mt M)\bigr)(\mt F_Y-\widetilde{\mt F}_Y)
    \right)\right].
    \label{eq:matrix-Phi-stability-linear-term}
\end{align}
where the third line follows from \eqref{eq:second-F|X} and the last line follows from $\E \left[\nortr \left(\Phi'(\mt M) \left(\mt F_Y-\widetilde{\mt F}_Y\right)\right)\right]=0$. 
Then it remains to bound $\E\left[\nortr\left(
    \bigl(\Phi'(\widetilde{\mt F}_Y)-\Phi'(\mt M)\bigr)(\mt F_Y-\widetilde{\mt F}_Y)
    \right)\right]$. 

For $0<t<1$, define $\mt G_t(y)\defeq(1-t)\mt M+t\widetilde{\mt F}_Y(y)$ for each $y$. 
\begin{claim}
    For $0<t<1$, we have 
    \begin{align}
        \E\left[\left|\nortr\left(
        \D\Phi'(\mt G_t)[\widetilde{\mt F}_Y-\mt M](\mt F_Y-\widetilde{\mt F}_Y)
        \right)\right|\right]
        &\leq\frac{1}{\mu\sqrt{t(1-t)}}
        \sqrt{\HPhi(\mt F_X)\HPhi(\mt F|X)}.
        \label{eq:matrix-Phi-stability-mixed-Hessian}
    \end{align}
\end{claim}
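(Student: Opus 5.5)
The plan is to prove the claim pointwise in $y$ first. Fix $t\in(0,1)$ and $y$, and set
\[
\mt H\defeq \widetilde{\mt F}_Y(y)-\mt M,\qquad \mt K\defeq \mt F_Y(y)-\widetilde{\mt F}_Y(y),\qquad \mt G\defeq \mt G_t(y).
\]
Define the symmetric bilinear form $B_{\mt G}(\mt A,\mt C)\defeq \nortr\left(\mt A\,\D\Phi'(\mt G)[\mt C]\right)$. By Lemma~\ref{lem:sa-psd-DPhi'}, $B_{\mt G}$ is symmetric, and $Q_{\mt G}(\mt A)=B_{\mt G}(\mt A,\mt A)\geq 0$. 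So for $c,e>0$ we have the polarization identity
\[
4ce\,B_{\mt G}(\mt H,\mt K)=Q_{\mt G}(c\mt H+e\mt K)-Q_{\mt G}(c\mt H-e\mt K).
\]
Dropping the nonnegative subtracted term bounds $B_{\mt G}(\mt H,\mt K)$ from above; replacing $e$ by $-e$ bounds it from below. It therefore suffices to bound $Q_{\mt G}(c\mt H\pm e\mt K)$ from above.

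For this I use the joint convexity of $(\mt U,\mt A)\mapsto Q_{\mt U}(\mt A)$, which holds because $\Phi\in\mathscr F_M^d$. This is the matrix substitute for the concavity of $1/\Phi''$ in the scalar proof. Write
\[
(\mt G,\,c\mt H\pm e\mt K)=(1-t)\left(\mt M,\tfrac{c}{1-t}\mt H\right)+t\left(\widetilde{\mt F}_Y(y),\pm\tfrac{e}{t}\mt K\right).
\]
Jensen's inequality and the $2$-homogeneity of $Q$ then give
\[
Q_{\mt G}(c\mt H\pm e\mt K)\leq \frac{c^2}{1-t}Q_{\mt M}(\mt H)+\frac{e^2}{t}Q_{\widetilde{\mt F}_Y(y)}(\mt K).
\]
Combining this with the polarization identity gives
\[
4ce\,|B_{\mt G}(\mt H,\mt K)|\leq \frac{c^2}{1-t}a+\frac{e^2}{t}b,
\]
where $a=Q_{\mt M}(\mt H)$ and $b=Q_{\widetilde{\mt F}_Y(y)}(\mt K)$. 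Choosing $c^2/(1-t)\,a=e^2/t\,b$ (with a limiting argument if $a$ or $b$ vanishes) yields the matrix analogue of \eqref{eq:scalar-Phi-divergence-mixed-gradient}:
\[
|B_{\mt G}(\mt H,\mt K)|\leq \frac{\sqrt{ab}}{2\sqrt{t(1-t)}}.
\]

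Next I take the expectation over $Y$ and apply Cauchy--Schwarz. This bounds the left side of \eqref{eq:matrix-Phi-stability-mixed-Hessian} by
\[
\frac{1}{2\sqrt{t(1-t)}}\sqrt{\E\left[Q_{\mt M}(\widetilde{\mt F}_Y-\mt M)\right]\,\E\left[Q_{\widetilde{\mt F}_Y}(\mt F_Y-\widetilde{\mt F}_Y)\right]}.
\]
The two factors are bounded as follows.

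First factor. The map $\mt A\mapsto Q_{\mt M}(\mt A)$ is convex (Lemma~\ref{lem:sa-psd-DPhi'}). Since $\widetilde{\mt F}_Y-\mt M=\E[\mt F_X-\mt M|Y]$, Jensen's inequality gives $\E[Q_{\mt M}(\widetilde{\mt F}_Y-\mt M)]\leq \E[Q_{\mt M}(\mt F_X-\mt M)]$. I then apply \eqref{eq:mean-quadratic-bound} of Lemma~\ref{lem:matrix-quadratic-ub} to $\mt F_X$, regarded as a function of $(X,Y)$ with the same $\mu$ and the same mean $\mt M$. This gives the bound $\frac{2}{\mu}\HPhi(\mt F_X)$.

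Second factor. I write $(\widetilde{\mt F}_Y,\mt F_Y-\widetilde{\mt F}_Y)=\E[(\mt F_X,\mt F-\mt F_X)|Y]$ and use joint convexity conditionally on $Y$. This gives $\E[Q_{\widetilde{\mt F}_Y}(\mt F_Y-\widetilde{\mt F}_Y)]\leq \E[Q_{\mt F_X}(\mt F-\mt F_X)]$, which is at most $\frac{2}{\mu}\HPhi(\mt F|X)$ by \eqref{eq:mean-quadratic-bound-conditional}.

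Multiplying the two bounds, taking the square root and including the prefactor $\tfrac12$ gives exactly $\frac{1}{\mu\sqrt{t(1-t)}}\sqrt{\HPhi(\mt F_X)\HPhi(\mt F|X)}$. Throughout, spectra are assumed to lie in the interior of $\dom(\Phi)$, as in the surrounding proof.

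The main obstacle is the pointwise mixed bound. A naive route applies Cauchy--Schwarz for $B_{\mt G}$ and then bounds $Q_{\mt G}(\mt H)\leq Q_{\mt M}(\mt H)/(1-t)$ and $Q_{\mt G}(\mt K)\leq Q_{\widetilde{\mt F}_Y}(\mt K)/t$ separately. That loses the factor $\tfrac12$, which in the scalar case came from AM--GM. Applying joint convexity to the combined arguments $c\mt H\pm e\mt K$ before polarizing recovers the factor, so this is the step I would do first.
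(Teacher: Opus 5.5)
Your proposal is correct and follows the paper's proof essentially step for step. It uses the same polarization identity, the same joint-convexity split $\mt G_t=(1-t)\mt M+t\widetilde{\mt F}_Y$ applied to the combined arguments, and the same two quadratic bounds from Lemma~\ref{lem:matrix-quadratic-ub}; the only difference is that the paper fixes one scaling $r$ (your $c/e$) for all $y$ and optimizes it after taking expectations, while you optimize pointwise and then apply Cauchy--Schwarz, which is the same AM--GM computation in a different order.
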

\begin{proof}
    
Note that $p_X(x)\geq\mu$ and
$p_{Y|X}(y|x)=p_{XY}(x,y)/p_X(x)\geq\mu$ whenever
$p_{XY}(x,y)>0$. By Lemma~\ref{lem:sa-psd-DPhi'}, for fixed $\mt M$, $\mt H\to Q_{\mt M}(\mt H)$ is convex. Then by Jensen's inequality, we have
\begin{align}
    \E\left[Q_{\mt M}(\widetilde{\mt F}_Y-\mt M)\right]
    &=\E\left[Q_{\mt M}\bigl(\E[\mt F_X-\mt M|Y]\bigr)\right]
    \nonumber\\
    &\leq\E\left[Q_{\mt M}(\mt F_X-\mt M)\right]\nonumber\\
    &\leq\frac{2}{\mu}\HPhi(\mt F_X),
    \label{eq:matrix-Phi-stability-first-quadratic}
\end{align}
where the last line follows from Lemma~\ref{lem:matrix-quadratic-ub}. 
Similarly, by the joint convexity of $(\mt J,\mt H)\mapsto Q_{\mt J}(\mt H)$ and Jensen's inequality, we have 
\begin{align*}
    \E\left[Q_{\widetilde{\mt F}_Y}(\mt F_Y-\widetilde{\mt F}_Y)\right]
    &=\E\left[Q_{\E[\mt F_X|Y]}
    \bigl(\E[\mt F-\mt F_X|Y]\bigr)\right]\\
    &\leq\E\left[Q_{\mt F_X}(\mt F-\mt F_X)\right].
\end{align*}
Similarly, applying 
Lemma~\ref{lem:matrix-quadratic-ub} for each $x$ with $p_X(x)>0$, we have
\begin{align*}
    \E\left[Q_{\mt F_X(x)}(\mt F-\mt F_X(x))\,\middle|\,X=x\right]
    &\leq\frac{2}{\mu}\HPhi(\mt F|X=x).
\end{align*}
Averaging over $x$, we obtain
\begin{align}
    \E\left[Q_{\widetilde{\mt F}_Y}(\mt F_Y-\widetilde{\mt F}_Y)\right]
    &\leq\frac{2}{\mu}\HPhi(\mt F|X).
    \label{eq:matrix-Phi-stability-second-quadratic}
\end{align}

For every $r>0$, 
by the joint convexity of $(\mt U,\mt H)\mapsto Q_{\mt U}(\mt H)$ and Jensen's inequality, we have 
\begin{align}
    Q_{\mt G_t}\left(
    r(\widetilde{\mt F}_Y-\mt M)+ r^{-1}(\mt F_Y-\widetilde{\mt F}_Y)
    \right)
    &\leq (1-t) Q_{\mt M}\left(\frac{r}{1-t}(\widetilde{\mt F}_Y-\mt M)\right) 
    +t Q_{\widetilde{\mt F}_Y}\left(\frac{1}{rt}(\mt F_Y-\widetilde{\mt F}_Y)\right)\nonumber \\
    &=
    \frac{r^2}{1-t}Q_{\mt M}(\widetilde{\mt F}_Y-\mt M)
    +\frac{1}{r^2t}Q_{\widetilde{\mt F}_Y}(\mt F_Y-\widetilde{\mt F}_Y),
    \label{eq:matrix-Phi-stability-polarization-bound}
\end{align}
where the second line follows from $Q_{\mt J}(a\mt H)=a^2Q_{\mt J}(\mt H)$ for $a\geq 0$. 
Similarly, 
\begin{align}
    Q_{\mt G_t}\left(
    r(\widetilde{\mt F}_Y-\mt M)- r^{-1}(\mt F_Y-\widetilde{\mt F}_Y)
    \right)
    &\leq\frac{r^2}{1-t}Q_{\mt M}(\widetilde{\mt F}_Y-\mt M)
    +\frac{1}{r^2t}Q_{\widetilde{\mt F}_Y}(-\mt F_Y+\widetilde{\mt F}_Y)\nonumber \\
    &=\frac{r^2}{1-t}Q_{\mt M}(\widetilde{\mt F}_Y-\mt M)
    +\frac{1}{r^2t}Q_{\widetilde{\mt F}_Y}(\mt F_Y-\widetilde{\mt F}_Y)
    \label{eq:matrix-Phi-stability-polarization-bound2}
\end{align}
where the second line uses $Q_{\mt J}(\mt H)=Q_{\mt J}(-\mt H)$. 

By direct calculation, we have the following identity 
\begin{align*}
    4\nortr\left(
    \D\Phi'(\mt G_t)[\widetilde{\mt F}_Y-\mt M](\mt F_Y-\widetilde{\mt F}_Y)
    \right)&=Q_{\mt G_t}\left(
    r(\widetilde{\mt F}_Y-\mt M)+r^{-1}(\mt F_Y-\widetilde{\mt F}_Y)
    \right)\\
    &\quad-Q_{\mt G_t}\left(
    r(\widetilde{\mt F}_Y-\mt M)-r^{-1}(\mt F_Y-\widetilde{\mt F}_Y)
    \right).
\end{align*}
Since $Q_{\mt G_t}(\cdot)\geq 0$ and both items on the right are bounded above by the
right side of \eqref{eq:matrix-Phi-stability-polarization-bound} and \eqref{eq:matrix-Phi-stability-polarization-bound2}, we have 
\begin{align*}
    \bigg|4\nortr\left(
    \D\Phi'(\mt G_t)[\widetilde{\mt F}_Y-\mt M](\mt F_Y-\widetilde{\mt F}_Y)
    \right)\bigg|
    \leq 
    \frac{r^2}{1-t}Q_{\mt M}(\widetilde{\mt F}_Y-\mt M)
    +\frac{1}{r^2t}Q_{\widetilde{\mt F}_Y}(\mt F_Y-\widetilde{\mt F}_Y).
\end{align*}
Dividing both sides by 4 and taking expectations on both sides, we obtain
\begin{align*}
    \E\left[\left|\nortr\left(
    \D\Phi'(\mt G_t)[\widetilde{\mt F}_Y-\mt M](\mt F_Y-\widetilde{\mt F}_Y)
    \right)\right|\right]
    &\leq \frac{1}{4} \left(
        \frac{r^2}{1-t}
    \E\left[Q_{\mt M}(\widetilde{\mt F}_Y-\mt M)\right]
    +\frac{1}{r^2t}\E\left[Q_{\widetilde{\mt F}_Y}(\mt F_Y-\widetilde{\mt F}_Y)\right]
    \right)\\
    &\leq\frac{1}{2\mu}\left(
    \frac{r^2}{1-t}\HPhi(\mt F_X)
    +\frac{1}{r^2t}\HPhi(\mt F|X)\right)
\end{align*}
where the last line follows from Lemma~\ref{lem:matrix-quadratic-ub}. When $\HPhi(\mt F_X)$ and $\HPhi(\mt F|X)$ are positive, 
taking $r^2=\sqrt{\frac{(1-t)\HPhi(\mt F|X)}{t\HPhi(\mt F_X)}}$, the above inequality becomes 
\begin{align*}
    \E\left[\left|\nortr\left(
    \D\Phi'(\mt G_t)[\widetilde{\mt F}_Y-\mt M](\mt F_Y-\widetilde{\mt F}_Y)
    \right)\right|\right]
    &\leq\frac{1}{\mu\sqrt{t(1-t)}}
    \sqrt{\HPhi(\mt F_X)\HPhi(\mt F|X)}.
\end{align*}
When either $\HPhi(\mt F_X)$ or $\HPhi(\mt F|X)$ is zero, this inequality holds trivially. 
It finishes the proof of the claim. 
\end{proof}

By elementary calculus for matrices \cite{hiai2014introduction}, we have 
\begin{align*}
    \Phi'(\widetilde{\mt F}_Y)-\Phi'(\mt M)
    &=\int_0^1\D\Phi'(\mt G_t)[\widetilde{\mt F}_Y-\mt M]\,dt.
\end{align*}
Then 
\begin{align*}
    \left|\E\left[\nortr\left(
    \bigl(\Phi'(\widetilde{\mt F}_Y)-\Phi'(\mt M)\bigr)(\mt F_Y-\widetilde{\mt F}_Y)
    \right)\right]\right|
    &= \left|
    \E\left[\int_0^1\nortr\left(
    \D\Phi'(\mt G_t)[\widetilde{\mt F}_Y-\mt M](\mt F_Y-\widetilde{\mt F}_Y)
    \right)dt\right]\right|\\
    &\leq\int_0^1
    \E\left[\left|\nortr\left(
    \D\Phi'(\mt G_t)[\widetilde{\mt F}_Y-\mt M](\mt F_Y-\widetilde{\mt F}_Y)
    \right)\right|\right]dt\\
    &\leq \int_0^1\frac{1}{\mu\sqrt{t(1-t)}}
    \sqrt{\HPhi(\mt F_X)\HPhi(\mt F|X)}dt\\
    &=\frac{\pi}{\mu}
    \sqrt{\HPhi(\mt F_X)\HPhi(\mt F|X)},
\end{align*}
where the second line follows from the triangle inequality and exchange of expectation and integral, the third line follows from 
\eqref{eq:matrix-Phi-stability-mixed-Hessian} and the last line follows from $\int_0^1(t(1-t))^{-1/2}\,dt=\pi$.
Substituting into \eqref{eq:matrix-Phi-stability-linear-term},
we obtain
\begin{align}
    \HPhi(\mt F_Y)
    &\leq\HPhi(\widetilde{\mt F}_Y)+\HPhi(\mt F|X)
    +\frac{\pi}{\mu}\sqrt{\HPhi(\mt F_X)\HPhi(\mt F|X)}\nonumber\\
    &\leq\HPhi(\widetilde{\mt F}_Y)+\HPhi(\mt F|X)
    +\delta\HPhi(\mt F_X)
    +\frac{\pi^2}{4\delta\mu^2}\HPhi(\mt F|X)
    \label{eq:matrix-Phi-square-root-stability}
\end{align}
where the last line follows from Young's inequality. This proves the two inequalities for interior-valued $\mt F$. 

Finally, for general $\mt F$, choose
$c\in\interior(\dom(\Phi))$ and define
\begin{align*}
    \mt F_\epsilon\defeq(1-\epsilon)\mt F+\epsilon c\mt I_d,
    \qquad 0<\epsilon<1.
\end{align*}
Then all spectra of $\mt F_\epsilon$ lie in
$\interior(\dom(\Phi))$. 
Applying \eqref{eq:matrix-Phi-square-root-stability} to $\mt F_\epsilon$ and letting
$\epsilon\downarrow0$, by the continuity of $\Phi$ and finiteness of
the probability space, both inequalities hold for $\mt F$. It finishes the proof. 

\end{proof}

\section{Proof of Proposition \ref{prop:M-Phi-SDPI}} \label{app:proof-SDPI}

By perturbation and continuity, it suffices to consider matrix functions with spectra contained in $\interior(\dom(\Phi))$. 
Let $\mt F:\cX\to \sa^d$ be a matrix-valued function whose values have spectra contained in $\interior(\dom(\Phi))$. 
It suffices to show that 
\begin{align*}
    \lambda^2 \HPhi(\mt F) \geq \HPhi(\E[\mt F|Y]).
\end{align*}
Suppose $\mt F$ takes values from $\mt S_0$ and $\mt S_1$. Let $\mt M=\E[\mt F]$. Then we write $\mt S_0=\mt M-\mt Z$ and $\mt S_1=\mt M+\mt Z$. Thus we will show
\begin{align*}
    \lambda^2 \tr\left(\Phi(\mt M+\mt Z)+\Phi(\mt M-\mt Z)-2\Phi(\mt M)\right)\geq \tr\left(\Phi(\mt M+\lambda \mt Z)+\Phi(\mt M-\lambda \mt Z)-2\Phi(\mt M)\right).
\end{align*}
Let 
\begin{align*}
    \psi(t)= \tr\left(\Phi(\mt M+\sqrt{t}\mt Z)+\Phi(\mt M-\sqrt{t}\mt Z)-2\Phi(\mt M)\right).
\end{align*}
It is equivalent to showing that
\begin{align*}
    \lambda^2 \psi(1)\geq \psi(\lambda^2).
\end{align*}
Since $\psi(0)=0$, it suffices to show that $\psi(t)$ is convex in $t$. By the chain rule for the Fr\'echet derivative, we compute
\begin{align*}
    \psi'(t)=&\tr\left(\D\Phi(\mt M+\sqrt{t}\mt Z)\left[\frac{1}{2\sqrt{t}}\mt Z\right]+\D\Phi(\mt M-\sqrt{t}\mt Z)\left[-\frac{1}{2\sqrt{t}}\mt Z\right]
    \right). 
\end{align*}
And then
\begin{align*}
    \psi''(t)=\tr\left(
    \begin{aligned}
        &\D^2\Phi(\mt M+\sqrt{t}\mt Z)\left[\frac{1}{2\sqrt{t}}\mt Z,\frac{1}{2\sqrt{t}}\mt Z\right]
        +\D\Phi(\mt M+\sqrt{t}\mt Z)\left[-\frac{1}{4t^{3/2}}\mt Z\right]\\
        &+\D^2\Phi(\mt M-\sqrt{t}\mt Z)\left[-\frac{1}{2\sqrt{t}}\mt Z,-\frac{1}{2\sqrt{t}}\mt Z\right]
        +\D\Phi(\mt M-\sqrt{t}\mt Z)\left[\frac{1}{4t^{3/2}}\mt Z\right]
    \end{aligned}
    \right).
\end{align*}
Let $s=\sqrt{t}$. It is equivalent to showing that
\begin{align*}
    \tr\left(\D^2\Phi(\mt M+s\mt Z)[\mt Z,\mt Z]+\D^2\Phi(\mt M-s\mt Z)[-\mt Z,-\mt Z]\right)\geq
    \frac{1}{s}\tr\left(\D\Phi(\mt M+s\mt Z)[\mt Z]-\D\Phi(\mt M-s\mt Z)[\mt Z]\right).
\end{align*}
Let 
\begin{align*}
    \xi(s)=\tr\left(\D\Phi(\mt M+s\mt Z)[\mt Z]-\D\Phi(\mt M-s\mt Z)[\mt Z]\right).
\end{align*}
Thus it suffices to show that $\xi(s)$ is convex for $s\in [0,1]$. 
\begin{align*}
    \xi'(s)=\tr\left(\D^2\Phi(\mt M+s\mt Z)[\mt Z,\mt Z]+\D^2\Phi(\mt M-s\mt Z)[-\mt Z,-\mt Z]\right).
\end{align*}
\begin{align*}
    \xi''(s)=\tr\left(\D^3\Phi(\mt M+s\mt Z)[\mt Z,\mt Z,\mt Z]-\D^3\Phi(\mt M-s\mt Z)[\mt Z,\mt Z,\mt Z]\right).
\end{align*}
First we show that $\xi'(s)$ is convex in $s\in [0,1]$, i.e., 
\begin{align*}
    &\lambda \tr\left(\D^2\Phi(\mt M+s_1\mt Z)[\mt Z,\mt Z]+\D^2\Phi(\mt M-s_1\mt Z)[-\mt Z,-\mt Z]\right)\\
    +&(1-\lambda) \tr\left(\D^2\Phi(\mt M+s_2\mt Z)[\mt Z,\mt Z]+\D^2\Phi(\mt M-s_2\mt Z)[-\mt Z,-\mt Z]\right)\\
    \geq & \tr\left(\D^2\Phi\left(\mt M+\left(\lambda s_1 +(1-\lambda)s_2\right)\mt Z\right)[\mt Z,\mt Z]+\D^2\Phi\left(\mt M-\left(\lambda s_1 +(1-\lambda)s_2\right)\mt Z\right)[-\mt Z,-\mt Z]\right)
\end{align*}
for $0\leq \lambda \leq 1$ and $s_1,s_2\in [0,1]$. But this is implied by the convexity of 
\begin{align*}
    (\mt A,\mt B) \mapsto \tr\left(\D^2\Phi(\mt A)[\mt B,\mt B]\right), 
\end{align*}
which is an equivalent definition for $\mathscr{F}_M^d$ in Definition~\ref{def:class-FM}. 
Then, $\xi''(s)$ is increasing in $s$ and thus
\begin{align*}
    \xi''(s)\geq \xi''(0)=0
\end{align*}
for $s\in [0,1]$. Thus, $\xi(s)$ is convex. It concludes the proof.

\section{Proof of Theorem \ref{thm:tlogt-arbitrary-independence-converse}}
\label{app:proof-tlogt-arbitrary-independence-converse}

The following lemma is a consequence of Reed--Solomon codes \cite{reed1960polynomial,huffman2003fundamentals}. The proof is by a probabilistic argument, although there is an explicit construction by the Vandermonde matrix. 

\begin{lemma}\label{lem:ext-lnr-map}
    Let $\cE\subseteq 2^{[n]}$ be nonempty. 
    Let $c_i$, $i\in [n]$ be nonnegative integers and $r\defeq \max_{S\in \cE} \left\{\sum_{i\in S} c_i\right\}$. For a sufficiently large prime number $q$, there exist linear maps $\{L_i:\F_q^r\to \F_q^{c_i}\}_{i\in[n]}$ such that for any $S\in \cE$, the vertical concatenation of $L_i$ for $i\in S$ has full rank, i.e.,
    \[
    L_S:\mathbb F_q^r\to \prod_{i\in S}\mathbb F_q^{c_i},
    \qquad
    L_S(z)=(L_i z)_{i\in S},
    \]
    has rank
    \[
        \rank(L_S)=\sum_{i\in S}c_i=\min\left(\sum_{i\in S}c_i,r\right).
    \]
\end{lemma}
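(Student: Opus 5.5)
We will use the standard probabilistic (Schwartz--Zippel) argument: choose the linear maps at random and show that, once $q$ is large, with positive probability every $L_S$, $S\in\cE$, has full row rank.

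First, note that for each $S\in\cE$ we have $\sum_{i\in S}c_i\le r$ by the definition of $r$. Hence $\min(\sum_{i\in S}c_i,r)=\sum_{i\in S}c_i$, which gives the second equality in the statement. The target is therefore that the $(\sum_{i\in S}c_i)\times r$ matrix $L_S$ has full row rank. If $c_i=0$ the map $L_i$ is trivial, and such indices can be ignored.

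Next, we treat the entries of all $L_i$ as $N=r\sum_i c_i$ independent indeterminates. For each $S\in\cE$ with $k_S\defeq\sum_{i\in S}c_i>0$, we want a nonzero polynomial $P_S$ of degree at most $k_S\le r$ whose nonvanishing certifies full rank of $L_S$. Take $P_S$ to be the $k_S\times k_S$ minor of $L_S$ formed from the first $k_S$ columns. This polynomial is not identically zero: if the rows of $L_S$ (stacked in order over $i\in S$) are specialized to the standard basis vectors $\vec e_1,\dots,\vec e_{k_S}$ of $\F_q^r$, the minor equals $1$. This specialization is a valid assignment because $k_S\le r$. Its coefficients are integers $\pm1$, so $P_S$ is nonzero over every $\F_q$.

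Then we apply the Schwartz--Zippel lemma. Draw all entries uniformly from $\F_q$. For each $S$, $\Pr[P_S=0]\le r/q$, and a union bound over the at most $2^n$ sets in $\cE$ gives failure probability at most $2^n r/q$. For any prime $q>2^n r$ this is less than $1$, so some choice of $\{L_i\}$ makes every $P_S$ nonzero, and every $L_S$ then has rank $k_S$. Equivalently, the product $\prod_{S}P_S$ is a nonzero polynomial of degree at most $2^n r<q$, so it has a nonvanishing point.

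The only step needing care is showing that each $P_S$ is a nonzero polynomial independently of $q$. The specialization to standard basis vectors handles this cleanly, so we expect no real obstacle. An explicit alternative uses Vandermonde rows $(1,a,\dots,a^{r-1})$ with distinct $a\in\F_q$ assigned to all $\sum_i c_i$ rows. Any $k_S\le r$ of these rows are linearly independent, which is the Reed--Solomon/MDS property, and this needs only $q\ge\sum_i c_i$.
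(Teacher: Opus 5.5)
Your proof is correct and follows the same route as the paper. Both arguments draw the entries uniformly from $\F_q$, bound each set's failure probability by $r/q$, and take a union bound over $\cE$ (the paper requires $q>|\cE|r$, you require $q>2^n r$). The only difference is how the per-set bound is obtained: you apply Schwartz--Zippel to a nonvanishing minor, while the paper computes the exact rank-deficiency probability $1-\prod_{j=0}^{t-1}(1-q^{j-r})$ of a uniform $t\times r$ matrix and bounds it by $t/q^{r-t+1}\le r/q$. Your Vandermonde remark is the explicit Reed--Solomon construction the paper alludes to.
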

\begin{proof}
    Consider each $L_i$ as a $c_i\times r$ matrix in $\F_q$. Let all entries of $L_i$ be independently and uniformly distributed over $\F_q$. Given $S\in \cE$, let $t=\sum_{i\in S}c_i$. By definition $t\leq r$. Then 
    \begin{align*}
        \Pr \left[\rank(L_S)<\sum_{i\in S}c_i\right]
        &=1-\prod_{j=0}^{t-1}\left(1-\frac{1}{q^{r-j}}\right)\\
        &\leq 1- \left(1-\frac{1}{q^{r-t+1}}\right)^{t}\\
        &\leq \frac{t}{q^{r-t+1}}\\
        &\leq \frac{r}{q}.
    \end{align*}
    By the union bound, 
    \begin{align*}
        \Pr \left[\exists S\in \cE\,\rank(L_S)<\sum_{i\in S}c_i\right]
        &\leq \frac{|\cE|r}{q}.
    \end{align*}
    Take $q>|\cE|r$. Then there exist $\{L_i:\F_q^r\to \F_q^{c_i}\}_{i\in[n]}$ that satisfy the requirements. 
\end{proof}

\begin{proof}[Proof of Theorem \ref{thm:tlogt-arbitrary-independence-converse}]
    Since $\vec\lambda\notin \Conv(J_{\cE})$ and $\Conv(J_{\cE})$ is a compact
    convex set, by the separating hyperplane theorem, there exists 
    $\vec a\in\mathbb R^n$ such that 
    \begin{align}
        \sum_{i=1}^n a_i\lambda_i
        >
        \max_{S\in\cE}\left[\sum_{i\in S}a_i\right] .
    \end{align}
    We first show that we can replace $a_i$ with nonnegative integers. 
    Let $a_i^+=\max\{a_i,0\}$. Since $\vec \lambda$ is nonnegative, 
    \begin{align*}
        \sum_{i=1}^n a_i^+\lambda_i
        \geq
        \sum_{i=1}^n a_i\lambda_i.
    \end{align*}
We claim that
 \begin{align*}
        \max_{S\in\cE}\left[\sum_{i\in S}a_i^+\right]
        =
        \max_{S\in\cE}\left[\sum_{i\in S}a_i\right].
    \end{align*}
    
    Since $a_i^+\geq a_i$ for all $i\in [n]$, 
    \begin{align*}
        \max_{S\in\cE}\left[\sum_{i\in S}a_i^+\right]
        \geq
        \max_{S\in\cE}\left[\sum_{i\in S}a_i\right].
    \end{align*}
    On the other hand, let $T\in \cE$ be a maximizing subset for
    $\max_{S\in\cE}\left[\sum_{i\in S}a_i^+\right]$. Take
    $T'=\{i\in T:a_i\geq 0\}$. By the downward closure of $\cE$, $T'\in \cE$. Then
    \begin{align*}
        \max_{S\in\cE}\left[\sum_{i\in S}a_i^+\right]
        &=\sum_{i\in T}a_i^+\\
        &= \sum_{i\in T'}a_i \\
        &\leq \max_{S\in\cE}\left[\sum_{i\in S}a_i\right].
    \end{align*}
    Thus
    \[
        \max_{S\in\cE}\left[\sum_{i\in S}a_i^+\right]
        =
        \max_{S\in\cE}\left[\sum_{i\in S}a_i\right] .
    \]
    Combining this with the separating inequality, we obtain
    \begin{align*}
        \sum_{i=1}^n a_i^+\lambda_i
        > 
        \max_{S\in\cE}\left[\sum_{i\in S}a_i^+\right].
    \end{align*}
    Since the inequality is strict, there exist nonnegative rational numbers $\{b_i\}_{i\in[n]}$ such that 
    \begin{align*}
        \sum_{i=1}^n b_i\lambda_i
        > 
        \max_{S\in\cE}\left[\sum_{i\in S}b_i\right].
    \end{align*}
    By the proper scaling, there exist nonnegative integers $\{c_i\}_{i\in[n]}$ such that 
    \begin{align*}
        \sum_{i=1}^n c_i\lambda_i
        > 
        \max_{S\in\cE}\left[\sum_{i\in S}c_i\right].
    \end{align*}
    Let $r\defeq \max_{S\in\cE}\left[\sum_{i\in S}c_i\right]$. Then
    \[
        \sum_{i=1}^n c_i\lambda_i>r.
    \]

    By Lemma \ref{lem:ext-lnr-map}, there exist a prime $q$ and linear maps 
    \[
        L_i:\mathbb F_q^r\to \mathbb F_q^{c_i},
        \qquad i\in[n],
    \]
    such that for every $S\in\cE$, the combined map
    \[
        L_S:\mathbb F_q^r\to \prod_{i\in S}\mathbb F_q^{c_i},
        \qquad
        L_S(z)=(L_i (z))_{i\in S},
    \]
    has rank
    \[
        \rank(L_S)=\sum_{i\in S}c_i.
    \]

    Let $U$ be uniform on $\mathbb F_q^r$, and define $Y_i=L_i(U)$ for $i\in [n]$. Note that $L_i$ is a surjective map and hence each $Y_i$ is uniform on $\F_q^{c_i}$. Since $L_S$ has rank $\sum_{i\in S}c_i$, it guarantees that for any $S\in \cE$, $Y_S$ are mutually independent. We further construct random variables $X^n$ based on $Y^n$ such that $X_S$ are not mutually independent for $S\notin \cE$. Let 
    \begin{align*}
        \cM=\{T\notin \cE: \forall S\subsetneq T, S\in \cE\}. 
    \end{align*}
    Then $\cM$ consists of the minimal forbidden sets with respect to  $\cE$.  
    For each \(T\in\mathcal M\), let \((B_i^T)_{i\in T}\) be uniformly distributed on \(\{b\in\mathbb F_q^T:\sum_{i\in T}b_i=0\}\). Choose these vectors independently across \(T\), with their entire collection independent of \(U\). Then for any $T'\subsetneq T$, $(B_i^T)_{i\in T'}$ are mutually independent. Define $C_i=(B_i^T:T\in \cM,i\in T)$. 
    For $S\in \cE$, $T\cap S\subsetneq T$ for every $T\in \cM$ and thus $C_S$ are mutually independent. For $S\notin \cE$, there exists $T\in \cM$ such that $T\subseteq S$ and thus $C_S$ are not mutually independent. 
    
    Define $X_i=(Y_i,C_i)$. Then for every $S\in \cE$, $X_S$ are mutually independent.  For every $S\notin \cE$, $X_S$ are not mutually independent since $C_S$ are not. 

    For each $i\in [n]$, by the independence of $C_i$ and $U$,  
    \begin{align*}
        I(U;X_i)&=I(U;Y_iC_i)\\
        &=I(U;Y_i)\\
        &=H(Y_i)\\
        &=c_i\log q.
    \end{align*}
    Moreover, 
    \[
        I(U;X^n)\leq H(U)=r\log q.
    \]
    Therefore
    \[
    \begin{aligned}
        \sum_{i=1}^n \lambda_i I(U;X_i)-I(U;X^n)
        &\ge
        \left(\sum_{i=1}^n c_i\lambda_i-r\right)\log q\\
        &>0.
    \end{aligned}
    \]
    This completes the proof for the scalar HC ribbon. 
    
    Next, consider the matrix version. We first write the mutual information in the $\Phi$-entropy form. Let $f_u(x^n)=\frac{p(u,x^n)}{p(u)p(x^n)}$ for each $u\in \cU$. Then for $\Phi(t)=t\log t$, one can verify that 
    \begin{align*}
        \sum_{u}p(u)\HPhi(f_u)=I(U;X^n),
    \end{align*}
    and 
    \begin{align*}
        \sum_{u}p(u)\HPhi(\E[f_u|X_i])=I(U;X_i). 
    \end{align*}
    Then 
    \begin{align*}
        \sum_{i=1}^n \lambda_i \left( \sum_{u} p(u)\HPhi(\E[f_u|X_i])\right) 
        >
        \sum_{u}p(u)\HPhi(f_u). 
    \end{align*}
    Then there exists $u\in \cU$ such that 
    \begin{align*}
        \sum_{i=1}^n \lambda_i \HPhi(\E[f_u|X_i])
        >
        \HPhi(f_u). 
    \end{align*}
    Now let $\mt F(x^n)=f_u(x^n)I_d$ where $I_d$ is the identity matrix of dimension $d$. Then we have 
    \begin{align*}
        \HPhi(\mt F)=\HPhi(f_u),
    \end{align*}
    and 
    \begin{align*}
        \HPhi(\E[\mt F|X_i])=\HPhi(\E[f_u|X_i]). 
    \end{align*}
    Then 
    \begin{align*}
        \sum_{i=1}^n \lambda_i \HPhi(\E[\mt F|X_i])
        >
        \HPhi(\mt F). 
    \end{align*}
    Thus $\vec{\lambda}\notin \fR_{\Phi}^{M,d}(X_1;\dots;X_n)$. 
\end{proof}

\end{document}